\def\rakStandalone{1}
\def\@oddfoot{}
\newcommand{\ie}{i.e.\@\xspace}
\newcommand{\eg}{e.g.\@\xspace}
\newcommand*{\etc}{%
  \@ifnextchar{.}%
  {etc}%
  {etc.\@\xspace}%
}
\newcommand{\ms}{\mathsf}
\newcommand{\mb}{\mathbf}
\newcommand{\mt}{\mathtt}
\newcommand{\defin}[1]{{\usefont{T1}{lmss}{b}{n}{#1}}}
\newcommand{\rn}[1]{(\textsc{#1})}
\newcommand{\uscore}{\mbox{\tt\char`\_}}
\newcommand{\limplies}{\supset}
\newcommand{\jisst}[2][{}]{#2\ \ms{type}_{\ms{s}}^{#1}}
\newcommand{\jisft}[1]{#1\ \ms{type}_{\ms{f}}}
\newcommand{\jstype}[3][{}]{{#2} \vdash \jisst[#1]{#3}}
\newcommand{\jtypem}[5]{{#1}\mathrel{;} {#2} \vdash {#3} \mathrel{{:}{:}} {#4} : {#5}}
\newcommand{\jtypef}[3]{#1 \Vdash #2 : #3}
\newcommand{\jcms}[3]{#1 :_{\ms{s}} #2 \leadsto #3}
\newcommand{\jcmf}[3]{#1 :_{\ms{f}} #2 \leadsto #3}
\newcommand{\tSendC}[3]{\mathsf{send}\ #1\ #2;\ #3}
\newcommand{\tSendL}[3]{#1.#2;#3}
\newcommand{\tSendS}[2]{\ms{send}\ #1\ \mathsf{shift}; #2}
\newcommand{\tSendV}[3]{\uscore \leftarrow \ms{output}\ #1\ #2; #3}
\newcommand{\tSendU}[2]{\ms{send}\ #1\ \mathsf{unfold}; #2}
\newcommand{\tRecvC}[3]{#1 \leftarrow \mathsf{recv}\ #2;\ #3}
\newcommand{\tRecvS}[2]{\ms{shift} \leftarrow \ms{recv}\ #1; #2}
\newcommand{\tRecvV}[3]{#1 \leftarrow \ms{input}\ #2; #3}
\newcommand{\tRecvU}[2]{\ms{unfold} \leftarrow \ms{recv}\ #1; #2}
\newcommand{\tClose}[1]{\mathsf{close}\ #1}
\newcommand{\tWait}[2]{\mathsf{wait}\ #1;#2}
\newcommand{\tCase}[2]{\mathsf{case}\ #1\ #2}
\newcommand{\tFwd}[2]{#1 \leftarrow #2}
\newcommand{\tCut}[3]{#1 \leftarrow #2;\ #3}
\newcommand{\tFix}[2]{\ms{fix}\ #1.#2}
\newcommand{\tProc}[3]{#1 \leftarrow \left\{ #2 \right\} \leftarrow #3}
\newcommand{\Tu}{\mathbf{1}}
\newcommand{\Tplus}{\oplus}
\newcommand{\Tot}{\otimes}
\newcommand{\Tamp}{\&}
\newcommand{\Tlolly}{\multimap}
\newcommand{\Tand}[2]{#1 \land #2}
\newcommand{\Timp}[2]{#1 \limplies #2}
\newcommand{\Trec}[2]{\rho #1.#2}
\newcommand{\Tus}[1]{{{\uparrow} #1}}
\newcommand{\Tds}[1]{{{\downarrow} #1}}
\newcommand{\Tproc}[2]{\{#1 \leftarrow #2\}}
\newcommand{\N}{\mathbb{N}}
\newcommand{\FIX}{\mathsf{FIX}}
\newcommand{\fix}{\mathsf{fix}}
\newcommand{\sfix}[1]{\ensuremath{#1^\dag}}
\newcommand{\Tr}[2]{\ensuremath{\ms{Tr}^{#2}#1}}
\newcommand{\Trop}{\ms{Tr}}
\newcommand{\up}{\ms{up}}
\newcommand{\down}{\ms{down}}
\newcommand\operatorupX[1]{\,\ThisStyle{\ensurestackMath{%
      #1\stackengine{-0pt}{\,}{\SavedStyle\!^{\mathord{\uparrow}}}{O}{l}{F}{T}{S}}}\!}
\newcommand\operatorup[1]{\!\mathop{\operatorupX{#1}}\@ifnextchar\{{\,}{\@ifnextchar[{\,}{\@ifnextchar({\,}{}}}}
\newcommand{\dirsqcup}{\operatorup{\bigsqcup}}
\newcommand{\dirsup}{\dirsqcup}
\newcommand{\sto}{\xrightarrow{\bot{!}}}
\DeclareMathOperator{\strictfn}{strict}
\newcommand{\strc}[1]{{#1}_{\bot!}}
\newcommand{\oabc}{$\omega$-\textbf{aBC}}
\newcommand{\oabcs}{\oabc${}_{\bot!}$}
\newcommand{\moabc}{\text{\oabc}}
\newcommand{\moabcs}{\text{\oabcs}}
\newcommand{\Cell}[1]{\mb{Cell}_{#1}}
\newcommand{\CFP}{\mb{CFP}}
\newcommand{\op}[1]{{#1}^{\mathrm{op}}}
\newcommand{\sembr}[1]{\llbracket {#1} \rrbracket}
\newcommand{\upd}[2]{[#1 \mid #2]}
\newcommand{\nto}{\Rightarrow} 
\newcommand{\upim}[1]{\left[#1\right]}
\newcommand{\R}{\mathrel{\mathcal{R}}}
\newcommand{\Rh}{\mathrel{\hat \R}}
\crefname{diagram}{Diagram}{Diagrams}
\crefname{property}{property}{properties}
\crefname{intn}{Equation}{Equations}
\crefname{figure}{Fig.}{Figs.}
\newlist{thmlist}{enumerate}{1}
\setlist[thmlist]{label=(\arabic{thmlisti}), ref=\thetheorem(\arabic{thmlisti}),noitemsep}
\newlist{lemlist}{enumerate}{1}
\setlist[lemlist]{label=(\arabic{lemlisti}), ref=\thelemma(\arabic{lemlisti}),noitemsep}
\newlist{proplist}{enumerate}{1}
\setlist[proplist]{label=(\arabic{proplisti}), ref=\theproposition(\arabic{proplisti}),noitemsep}
\crefname{thmlisti}{theorem}{theorems}
\crefname{lemlisti}{lemma}{lemmas}
\crefname{proplisti}{proposition}{propositions}
\newcommand{\citeN}{\cite}
\newcommand{\Unfold}{\ms{Unfold}}
\newcommand{\Fold}{\ms{Fold}}
\newcommand{\fd}[1]{#1^\degree}
\newcommand{\cons}[2]{\mt{#1} \mt{::} #2}
\title{A Domain Semantics for Higher-Order Recursive Processes}
\author{Ryan Kavanagh}{Carnegie Mellon University}{rkavanagh@cs.cmu.edu}{https://orcid.org/0000-0001-9497-4276}{}
\authorrunning{R.~Kavanagh}
\begin{document}
\maketitle              
\begin{abstract}
  The polarized SILL programming language~\cite{pfenning_griffith_2015:_polar_subst_session_types,toninho_2013:_higher_order_proces_funct_session} uniformly integrates functional programming and session-typed message-passing concurrency.
  It supports general recursion, asynchronous and synchronous communication, and higher-order programs that communicate channels and processes.
  We give polarized SILL a domain-theoretic semantics---the first denotational semantics for a language with this combination of features.
  Session types in polarized SILL denote pairs of domains of unidirectional communications.
  Processes denote continuous functions between these domains, and process composition is interpreted by a trace operator~\cite{joyal_1996:_traced_monoid_categ}.
  We illustrate our semantics by validating expected program equivalences.

  \keywords{Session types, denotational semantics, recursion, higher-order processes.}
\end{abstract}

\section{Introduction}
\label{sec:introduction}

The proofs-as-programs correspondence between linear logic and the session-typed $\pi$-calculus is the foundation of many programming languages for message-passing concurrency.
It has led to an arsenal of techniques for reasoning about processes~\cite{perez_2012:_linear_logic_relat,perez_2014:_linear_logic_relat,toninho_2015:_logic_found_session_concur_comput}.
Denotational semantics~\cite{atkey_2017:_obser_commun_seman_class_proces} and game semantics~\cite{castellan_yoshida_2019:_two_sides_same_coin} have further enriched our understanding.
However, giving a denotational semantics to session-typed languages with recursion has remained~difficult.

To illustrate the difficulty, consider a process that flips bits in a bit stream.
A bit stream is a sequence of labels $\mt{0}$ or $\mt{1}$ sent over a channel.
This communication protocol is specified by a recursive session type ``$\mt{bits}$''.\footnote{Explicitly, $\mt{bits} = \Trec{\beta}{\Tplus\{ \mt{0} : \beta, \mt{1} : \beta \}}$, where $\Trec{\alpha}{A}$ forms recursive types and $\Tplus\{l : A_l\}_{l \in L}$ forms internal choice types.}
The following recursive process, $\mt{flip}$, \textit{uses} a channel \texttt{b} (the input stream) to \textit{provide} a channel \texttt{f} (the flipped stream), both satisfying the type $\mt{bits}$:
\begin{verbatim}
b : bits |- fix F . case b { 0 => f.1; f <- F <- b
                           | 1 => f.0; f <- F <- b } :: f : bits
\end{verbatim}
It cases on the label received over $\mt{b}$ and sends the complementary label over $\mt{f}$.
Subsequent communication is handled by the tail call \verb!f <- F <- b!.
We use the concrete syntax \verb!f <- P <- b! to spawn the process $\mt{P}$ that uses the channel $\mt{b}$ and provides the channel $\mt{f}$.

Operational semantics for session-typed processes generally satisfy a confluence property that implies that their input-output behaviour is deterministic.
This suggests that $\mt{flip}$ denotes a function $\sembr{\mt{flip}}$ from bit streams on $\mt{b}$ to bit streams on~$\mt{f}$.

This processes-as-functions interpretation raises many questions.
The process providing the bit stream on $b$ could get stuck and only send a finite prefix of this bit stream.
How should $\sembr{\mt{flip}}$ handle these finite prefixes?
Computationally, $\sembr{\mt{flip}}$ should be monotone: a longer input prefix should result in no less output.
It should also be continuous: $\sembr{\mt{flip}}$ should not be able to observe an entire infinitely-long bit stream before sending output.

Bidirectional communication further complicates this interpretation.
Indeed, session types specify communication protocols on channels that carry both input to and output from processes.
How do we decompose session-typed communications into the ``inputs'' and ``outputs'' for processes-as-functions in a principled way?
Finally, what does it mean to compose processes $P$ and $Q$ communicating over $c$?
How do we capture the feedback caused by feeding $P$'s output on $c$ into $Q$ and vice-versa?

Our thesis is that a domain-theoretic denotational semantics elucidates the structure of higher-order session-typed languages with recursion.
Domains are well-studied mathematical structures and we leverage their rich theory to study these languages.
A denotational semantics also induces a notion of semantic equivalence.
It is automatically compositional because denotational semantics are defined by recursion on the structure of programs.

\textbf{We give a domain semantics for polarized SILL}~\cite{pfenning_griffith_2015:_polar_subst_session_types,toninho_2013:_higher_order_proces_funct_session}.
Polarized SILL integrates functional and message-passing concurrent programming.
It supports recursive session types and processes, and higher-order programs that communicate channels and processes.
Communication is asynchronous, and synchronous communication is encoded using polarity~shifts.
Though denotational semantics~\cite{atkey_2017:_obser_commun_seman_class_proces,castellan_yoshida_2019:_two_sides_same_coin,kokke_2019:_better_late_than_never} exist for session-typed languages in restricted settings (\eg, in the absence of recursion, functional constructs, or higher-order features), ours is the first denotational semantics for a full-featured session-typed~language.

For readers familiar with session types, we hope they can take away the high-level ideas of their semantic interpretation in the presence of (nonlinear) functions and arbitrary recursion and how it might be used to reason about process equivalence.
A particular phenomenon not usually addressed is that processes may fail to communicate along a given channel in the presence of recursively defined types and processes.
This phenomenon is easily addressed domain-theoretically: because processes denote continuous (so monotone) functions, they uniformly treat complete and incomplete communications.

For readers familiar with denotational semantics, we hope they can take away the ideas behind its application to bidirectional, session-typed communication in the presence of recursion.
The key insights here, when compared to the denotational semantics of functional languages, are that (session) types denote pairs of domains of unidirectional communications instead of domains of values, and that program (process) composition is given by a trace operator instead of by function composition.

We give an overview of polarized SILL and of our semantics in \cref{sec:overview-sill,sec:overview-semantics}.
The details of our semantics are given in \cref{sec:semantic-clauses}.
We revisit the bit-flipping process in \cref{sec:bit-streams}.
All typing rules and semantic clauses can be found in \cref{sec:rules-polarized-sill,sec:summ-interpr}, respectively.


\section{Overview of Polarized SILL}
\label{sec:overview-sill}

The polarized SILL programming language cohesively integrates functional computation and message-passing concurrent computation.
Its concurrent computation layer arises from a proofs-as-processes correspondence between intuitionistic linear logic and the session-typed $\pi$-calculus~\cite{caires_pfenning_2010:_session_types_intuit_linear_propos}.
A session type $A$ prescribes a protocol for communicating over a channel.
A process $P$ \defin{provides} a service $A$ on a channel $c$ and \defin{uses} (is a \defin{client} of) zero or more services $A_i$ on channels $c_i$.
The used services form a linear context $\Delta = c_1 : A_1, \dotsc, c_n : A_n$.
The process $P$ can use values from the functional layer.
These are abstracted by a structural context $\Psi$ of functional variables.
These data are captured by the judgment $\jtypem{\Psi}{\Delta}{P}{c}{A}$.
It is inductively defined by the rules of \cref{sec:rules-proc-form}.

Session types are polarized: they can be partitioned as \defin{positive} or \defin{negative}.
When looking at a judgment $\jtypem{\Psi}{\Delta}{P}{c}{A}$, we can imagine ``positive information'' as flowing left-to-right and ``negative information'' as flowing right-to-left.
For example, when $A$ is positive, communication on $c$ is sent by $P$; when $A$ is negative, it is received by $P$.
As $P$ executes, the type of a channel $b : B$ in $\Delta, c : A$ evolves, sometimes becoming a positive subphrase of $B$, sometimes a negative subphrase of $B$.
It is this evolution that allows for bidirectional communication.
We write $\jisst[+]{B}$ to mean $B$ is positive and $\jisst[-]{B}$ to mean $B$ is negative.
Most session types have a polar-dual session type, where the direction of the communication is reversed.
For brevity, we consider only positive session types below.
The semantics of negative session types follow by symmetry and are given in the appendices.

The functional layer is the simply-typed $\lambda$-calculus with a fixed-point operator and a call-by-value evaluation semantics.
A judgment $\jtypef{\Psi}{M}{\tau}$ means the functional term $M$ has functional type $\tau$ under the structural context $\Psi$ of functional variables $x_i : \tau_i$.
This judgment's definition is standard and it is inductively defined by the rules of \cref{sec:rules-term-formation}.
We use the judgment $\jisft{\tau}$ to mean that $\tau$ is a functional type.
New is the base type $\Tproc{a:A}{\overline{a_i:A_i}}$ of quoted processes, where we abbreviate ordered lists using an overline.
Quoted processes are introduced and eliminated by the rules:
\[
  \begin{adjustbox}{max width=\textwidth}
    \infer[\rn{I-\{\}}]{
      \jtypef{\Psi}{\tProc{a}{P}{\overline{a_i}}}{\Tproc{a:A}{\overline{a_i:A_i}}}
    }{
      \jtypem{\Psi}{\overline{a_i:A_i}}{P}{a}{A}
    }
    \quad
    \infer[\rn{E-\{\}}]{
      \jtypem{\Psi}{\overline{a_i:A_i}}{\tProc{a}{M}{\overline a_i}}{a}{A}
    }{
      \jtypef{\Psi}{M}{\Tproc{a:A}{\overline{a_i:A_i}}}
    }
  \end{adjustbox}
\]

Functional values can be sent over channels of type $\Tand{\tau}{A}$.
It specifies that the sent value has type $\tau$ and that subsequent communication has type $A$.
The process $\tSendV{a}{M}{P}$ evaluates $M$ to a value, sends it over the channel $a$, and continues as $P$.
The process $\tRecvV{x}{a}{Q}$ receives a value on $a$, binds it to the variable $x$, and~continues~as~$Q$.
\[
  \begin{adjustbox}{max width=\textwidth}
    \infer[\rn{$\Tand{}{} R$}]{
      \jtypem{\Psi}{\Delta}{\tSendV{a}{M}{P}}{a}{\Tand{\tau}{A}}
    }{
      \jtypef{\Psi}{M}{\tau}
      &
      \jtypem{\Psi}{\Delta}{P}{a}{A}
    }
    \quad
    \infer[\rn{$\Tand{}{} L$}]{
      \jtypem{\Psi}{\Delta, a:\Tand{\tau}{A}}{\tRecvV{x}{a}{Q}}{c}{C}
    }{
      \jtypem{\Psi,x:\tau}{\Delta, a:A}{Q}{c}{C}
    }
  \end{adjustbox}
\]

The process $\tFwd{b}{a}$ forwards all messages between channels $a$ and $b$ of the same type.
Process composition $\tCut{a}{P}{Q}$ captures Milner's ``parallel composition plus hiding'' operation~\cite[pp.~20f.]{milner_1980:_calcul_commun_system}.
It spawns processes $P$ and $Q$ that communicate over a private channel~$a$.
\[
  \infer[\rn{Fwd}]{
    \jtypem{\Psi}{a:A}{\tFwd{b}{a}}{b}{A}
  }{}
  \quad
  \infer[\rn{Cut}]{
    \jtypem{\Psi}{\Delta_1,\Delta_2}{\tCut{a}{P}{Q}}{c}{C}
  }{
    \jtypem{\Psi}{\Delta_1}{P}{a}{A}
    &
    \jtypem{\Psi}{a:A,\Delta_2}{Q}{c}{C}
  }
\]

Processes can close channels of type $\Tu$.
To do so, the process $\tClose a$ sends a ``close message'' over the channel $a$ and terminates.
The process $\tWait{a}{P}$ blocks on $a$ until it receives the close message and then continues as $P$.
\[
  \infer[\rn{$\Tu R$}]{
    \jtypem{\Psi}{\cdot}{\tClose a}{a}{\Tu}
  }{}
  \quad
  \infer[\rn{$\Tu L$}]{
    \jtypem{\Psi}{\Delta, a : \Tu}{\tWait{a}{P}}{c}{C}
  }{
    \jtypem{\Psi}{\Delta}{P}{c}{C}
  }
\]

Processes can communicate channels over channels.
The protocol $B \Tot A$ prescribes transmitting a channel of type $B$ followed by communication of type $A$.
The process $\tSendC{a}{b}{P}$ sends the channel $b$ over the channel $a$ and continues as $P$.
The process $\tRecvC{b}{a}{P}$ receives a channel over $a$, binds it to the name $b$, and continues as $P$.
\[
  \begin{adjustbox}{max width=\textwidth}
    \infer[\rn{$\Tot R^*$}]{
      \jtypem{\Psi}{\Delta, b : B}{\tSendC{a}{b}{P}}{a}{B \Tot A}
    }{
      \jtypem{\Psi}{\Delta}{P}{a}{A}
    }
    \quad
    \infer[\rn{$\Tot L$}]{
      \jtypem{\Psi}{\Delta, a : B \Tot A}{\tRecvC{b}{a}{P}}{c}{C}
    }{
      \jtypem{\Psi}{\Delta, a : A, b : B}{P}{c}{C}
    }
  \end{adjustbox}
\]

Process communication is asynchronous.
Synchronization is encoded using ``polarity shifts''~\cite{pfenning_griffith_2015:_polar_subst_session_types}.
The positive protocol $\Tds A$ prescribes a synchronization (the shift) followed by communication satisfying the negative type $A$.
The process $\tSendS{a}{P}$ signals that it is ready to receive on $a$ by sending a ``shift message'' on $a$, and continues as $P$.
The process $\tRecvS{a}{Q}$ blocks until it receives the shift message and continues as $Q$.
\[
  \begin{adjustbox}{max width=\textwidth}
    \infer[\rn{$\Tds{} R$}]{
      \jtypem{\Psi}{\Delta}{\tSendS{a}{P}}{a}{\Tds A}
    }{
      \jtypem{\Psi}{\Delta}{P}{a}{A}
    }
    \quad
    \infer[\rn{$\Tds{} L$}]{
      \jtypem{\Psi}{\Delta,a : \Tds A}{\tRecvS{a}{P}}{c}{C}
    }{
      \jtypem{\Psi}{\Delta,a : A}{P}{c}{C}
    }
  \end{adjustbox}
\]

Processes can choose between services.
An internal choice type  $\Tplus \{ l : A_l \}_{l \in L}$ prescribes a choice between session types $\{ A_l \}_{l \in L}$ ($L$ finite).
The process $\tSendL{a}{k}{P}$ chooses to provide the service $A_k$ by sending the label $k$ on $a$, and then continues as $P$.
The process $\tCase{a}{\left\{l \Rightarrow P_l\right\}_{l \in L}}$ blocks until it receives a label $k$ on $a$ and then continues as $P_k$.
\[
  \begin{adjustbox}{max width=\textwidth}
    \infer[\rn{$\Tplus R_k$}]{
      \jtypem{\Psi}{\Delta}{\tSendL{a}{k}{P}}{a}{{\Tplus\{l:A_l\}}_{l \in L}}
    }{
      \jtypem{\Psi}{\Delta}{P}{a}{A_k}\quad(k \in L)
    }
    \quad
    \infer[\rn{$\Tplus L$}]{
      \jtypem{\Psi}{\Delta,a:{\Tplus\{l : A_l\}}_{l \in L}}{\tCase{a}{\left\{l_l \Rightarrow P_l\right\}_{i\in I}}}{c}{C}
    }{
      \jtypem{\Psi}{\Delta,a:A_l}{P_l}{c}{C}\quad(\forall l \in L)
    }
  \end{adjustbox}
\]

Open session types are given by the judgment $\jstype[p]{\Xi}{A}$, where $\Xi$ is a structural context of polarized type variables $\jisst[p_i]{\alpha_i}$ and $p, p_i \in \{{-},{+}\}$.
We abbreviate the judgment as $\Xi \vdash \jisst{A}$ when no ambiguity arises.
It is inductively defined by the rules of \cref{sec:rules-type-formation}.

The recursive type $\Trec{\alpha}{A}$ prescribes an ``unfold'' message followed by communication of type $[\Trec{\alpha}{A}/\alpha]A$.
The process $\tSendU{a}{P}$ sends an unfold message and continues as $P$.
The process $\tRecvU{a}{P}$ receives an unfold message and continues as $P$.
\[
  \begin{adjustbox}{max width=\textwidth}
    \infer[\rn{$\rho^+R$}]{
      \jtypem{\Psi}{\Delta}{\tSendU{a}{P}}{a}{\Trec{\alpha}{A}}
    }{
      \jtypem{\Psi}{\Delta}{P}{a}{[\Trec{\alpha}{A}/\alpha]A}
      &
      \cdot \vdash \jisst[+]{\Trec{\alpha}{A}}
    }
    \,
    \infer[\rn{$\rho^+L$}]{
      \jtypem{\Psi}{\Delta, a : \Trec{\alpha}{A}}{\tRecvU{a}{P}}{c}{C}
    }{
      \jtypem{\Psi}{\Delta, a : [\Trec{\alpha}{A}/\alpha]A}{P}{c}{C}
      &
      \cdot \vdash \jisst[+]{\Trec{\alpha}{A}}
    }
  \end{adjustbox}
\]


\section{Overview of the Semantics}
\label{sec:overview-semantics}

The semantics of the functional layer is standard~\cite{crole_1993:_categ_types,gunter_1992:_seman_progr_languag,reynolds_2009:_theor_progr_languag,tennent_1995:_denot_seman}.
A functional type $\tau$ denotes a Scott domain (an $\omega$-algebraic bounded-complete pointed dcpo) $\sembr{\tau}$.
A structural context $\Psi$ of functional variables $x_i : \tau_i$ is interpreted as the $\Psi$-indexed product $\sembr{\Psi} = \prod_{x_i \in \Psi} \sembr{\tau_i}$.
A functional term $\jtypef{\Psi}{M}{\tau}$ denotes a continuous function $\sembr{\jtypef{\Psi}{M}{\tau}} : \sembr{\Psi} \to \sembr{\tau}$ in the category $\moabc$ of Scott domains and continuous functions.

We cannot interpret processes $\jtypem{\Psi}{\Delta}{P}{c}{A}$ in the same way as functional terms, that is, as functions $\sembr{\Psi} \times \sembr{\Delta} \to \sembr{A}$.
This is due to the fundamental difference between \textit{variables} and \textit{channel names}.
A variable $x : \tau$ in the context $\Psi$ stands for a value of type $\tau$.
A channel name in $\Delta,c:A$ stands not for a value, but for a channel of typed bidirectional communications.
Informally and novelly, we interpret processes as continuous~functions
\begin{equation}
  \label{eq:fossacs:4}
  \sembr{\jtypem{\Psi}{\Delta}{P}{c}{A}} : \sembr{\Psi} \to [\text{``}\ms{inputOn}(\Delta, c:A)\text{''} \to \text{``}\ms{outputOn}(\Delta, c:A)\text{''}]
\end{equation}
where we use the notation $[ {\cdot} \to {\cdot} ]$ for function spaces.

We use polarity to split a bidirectional communication into a pair of unidirectional communications.
Given a protocol $B$, its \textit{positive aspect} prescribes the left-to-right communications, while its \textit{negative aspect} prescribes the right-to-left communications.
These aspects determine Scott domains $\sembr{B}^-$ and $\sembr{B}^+$ that respectively contain the ``negative'' and ``positive'' portions of the bidirectional communications.
Their bottom elements capture the absence of communication, \eg, due to a stuck process.
The informal interpretation \eqref{eq:fossacs:4} is then made precise as the continuous function
\begin{equation}
  \label{eq:fossacs:7}
  \sembr{\jtypem{\Psi}{\Delta}{P}{c}{A}} : \sembr{\Psi} \to [\sembr{\Delta}^+ \times \sembr{c : A}^- \to \sembr{\Delta}^- \times \sembr{c : A}^+]
\end{equation}
in $\moabc$, where $\sembr{c_1 : A_1, \dotsc, c_n : A_n}^p$ is the $\{c_1^p,\dotsc,c_n^p\}$-indexed product $\prod_{c_i^p} \sembr{A_i}^p$.

Operationally, the process composition $\jtypem{\Psi}{\Delta_1,\Delta_2}{\tCut{a}{P}{Q}}{c}{C}$ spawns two processes $\jtypem{\Psi}{\Delta_1}{P}{a}{A}$ and $\jtypem{\Psi}{a : A,\Delta_2}{Q}{c}{C}$ communicating along the private channel $a$ of type $A$.
We follow prior work on the semantics of dataflow~\cite{kahn_1974:_seman_simpl_languag_paral_progr} and on the geometry of interactions~\cite{abramsky_jagadeesan_1994:_new_found_geomet_inter,abramsky_2002:_geomet_inter_linear_combin_algeb}, and capture this feedback using a least fixed point.
Given an environment $u \in \sembr{\Psi}$, and $(\delta_1^+, \delta_2^+, c^-) \in \sembr{\Delta_1,\Delta_2}^+ \times \sembr{c : C}^-$, we define
\begin{equation}
  \label[intn]{eq:fossacs:1}
  \sembr{\jtypem{\Psi}{\Delta_1,\Delta_2}{\tCut{a}{P}{Q}}{c}{C}}u(\delta_1^+, \delta_2^+, c^-) = (\delta_1^-, \delta_2^-, c^+)
\end{equation}
where $\delta_1^-$, $\delta_2^-$, $a^-$, $a^+$, and $c^+$ form the least solution to the equations
\begin{align*}
  (\delta_1^-, a^+) &= \sembr{\jtypem{\Psi}{\Delta_1}{P}{a}{A}}u(\delta_1^+, a^-),\\
  (\delta_2^-, a^-, c^+) &= \sembr{\jtypem{\Psi}{a : A,\Delta_2}{Q}{c}{C}}u(\delta_2^+, a^+, c^-).
\end{align*}
We can compute this least solution as the least upper bound $(\delta_1^-, \delta_2^-, a^-, a^+, c^+)$ of the following inductively-defined ascending chain $(\delta_{1,n}^-, \delta_{2,n}^-, a_n^-,a_n^+,c_n^+)_{n \in \N}$:
\begin{align*}
    (\delta_{1,0}^-, a_0^+) &= \sembr{\jtypem{\Psi}{\Delta_1}{P}{a}{A}}u(\delta_1^+, \bot_{\sembr{A}^-}),\\
    (\delta_{2,0}^-, a_0^-, c_0^+) &= \sembr{\jtypem{\Psi}{a : A,\Delta_2}{Q}{c}{C}}u(\delta_2^+, \bot_{\sembr{A}^+}, c^-),\\
  (\delta_{1,n+1}^-, a_{n+1}^+) &= \sembr{\jtypem{\Psi}{\Delta_1}{P}{a}{A}}u(\delta_1^+, a_n^-),\\
  (\delta_{2,n+1}^-, a_{n+1}^-, c_{n+1}^+) &= \sembr{\jtypem{\Psi}{a : A,\Delta_2}{Q}{c}{C}}u(\delta_2^+, a_n^+, c^-).
\end{align*}
This chain captures the iterative nature of the feedback loop created by $P$ and $Q$ communicating on $a : A$.

This fixed-point operation is an instance of a ``trace operator'' on the traced monoidal category $\moabc$ \cite{abramsky_2002:_geomet_inter_linear_combin_algeb,cazanescu_stefanescu_1990:_towar_new_algeb,joyal_1996:_traced_monoid_categ}.
A trace operator is a family of functions $\Trop^X_{A,B} : [A \times X \to B \times X] \to [A \to B]$ natural in $A$ and $B$, dinatural in $X$, and satisfying a collection of axioms.
It captures Milner's ``parallel composition plus hiding'' operation and it is used by Abramsky~et~al.~\cite{abramsky_1996:_inter_categ_found} to give semantics to concurrent computation.
Trace operators have a rich theory that we can leverage to reason about process composition.

We build on standard domain-theoretic techniques to interpret session types and functional types.
Open session types $\jstype{\Xi}{A}$ denote locally continuous functors on a category of domains.
Let $\moabcs$ be the subcategory of $\moabc$ whose morphisms are strict functions.
Let $\sembr{\Xi}$ be the $\Xi$-indexed product $\prod_{\alpha \in \Xi} \moabcs$.
The positive and negative aspects $\sembr{\jstype{\Xi}{A}}^-$ and  $\sembr{\jstype{\Xi}{A}}^+$ are then locally continuous functors from $\sembr{\Xi}$ to $\moabcs$.
Recursive types are interpreted as solutions to domain equations.

Our semantics does not use $\omega$-algebraicity or bounded-completeness, but we mention this additional structure in case it is useful for future work.
All occurrences of $\moabc$ and $\moabcs$ could respectively be replaced by $\mb{DCPO}_\bot$ and $\strc{\mb{DCPO}}$.


\section{Background and Notation}
\label{sec:background}

Given a locally small category $\mb{C}$, we write $\mb{C}({-}, {-}) : \op{\mb{C}} \times \mb{C} \to \mb{Set}$ for the hom functor.
If $\mb{C}$ has an internal hom, we write $\mb{C}[{-} \to {-}] : \op{\mb{C}} \times \mb{C} \to \mb{C}$ or just $[{-} \to {-}]$ for it.

We write $(d_1 : D_1) \times \dotsb \times (d_n : D_n)$ for the product of the $D_i$ indexed by the $d_i$.
Given $\delta_i \in D_i$, we write $(d_1 : \delta_1,\dotsc,d_n : \delta_n)$ for an element of this product.
Given an indexed product $\prod_{i \in I} D_i$ and a subset $J \subseteq I$, we write $\pi^I_J$ or $\pi_J$ for the projection $\prod_{i \in I} D_i \to \prod_{j \in J} D_j$.

The category $\moabc$ has a continuous \defin{least-fixed-point operator} $\fix : [D \to D] \to D$ for each object $D$.
It also has a continuous fixed-point operator $\sfix{(\cdot)} : [A \times X \to X] \to [A \to X]$ given by $\sfix{f}(a) = \fix(\lambda x.f(a,x))$.
It satisfies the fixed-point identity of \cite{bloom_esik_1996:_fixed_point_operat}: $\sfix f = f \circ \langle \ms{id}, \sfix f \rangle$.

The category $\moabc$ is also equipped with a \defin{trace operator}~\cite{abramsky_2002:_geomet_inter_linear_combin_algeb,cazanescu_stefanescu_1990:_towar_new_algeb,joyal_1996:_traced_monoid_categ}.
It fixes the $X$ component of a morphism $f : A \times X \to B \times X$ to produce a morphism $\Tr{(f)}{X} : A \to B$.
It is given by $\Tr{(f)}{X} = \pi^{B \times X}_B \circ f \circ \langle \ms{id}_A, \sfix{(\pi^{B\times X}_X \circ f)} \rangle$.
It has the following Knaster-Tarski-style formulation: $\Tr{(f)}{X}(a) = \pi_B(\bigsqcap\{(b,x) \mid f(a,x) \sqsubseteq (b,x) \})$.

The \defin{lifting functor} $({-})_\bot : \moabc \to \moabcs$ is left-adjoint~\cite[p.~44]{abramsky_jung_1995:_domain_theor} to the inclusion $\moabcs \hookrightarrow \moabc$.
The domain $D_\bot$ is obtained by adjoining a new bottom element to $D$.
The unit $\up : \ms{id} \to ({-})_\bot$ witnesses the inclusion of $D$ in $D_\bot$.
We write $\upim{d}$ for $\up_D(d) \in D_\bot$ and $\down$ for the counit.
The morphism $f_\bot : D_\bot \to E_\bot$ is given by $f_\bot(\bot) = \bot$ and $f_\bot(\upim{d}) = \upim{f(d)}$ for $d \in D$.
Given a functor $F$, we abbreviate $({-})_\bot \circ F$ as $F_\bot$.

To make a function $f : \prod_{i \in I} A_i \to B$ strict in a component $j \in I$, we use the continuous function $\strictfn_j : [\prod_{i \in I} A_i \to B] \to [\prod_{i \in I} A_i \to B]$:
\begin{equation}
  \label{eq:46}
  \strictfn_j(f)\left(\left(a_i\right)_{i \in I}\right) =
  \begin{cases}
    \bot_B & \text{if } a_j = \bot_{A_j}\\
    f\left(\left(a_i\right)_{i \in I}\right) & \text{otherwise}.
  \end{cases}
\end{equation}


\section{Semantic Clauses}
\label{sec:semantic-clauses}

We define the denotations of judgments by induction on their derivation.
To illustrate our semantics, we show that it validates expected equivalences.

\begin{definition}[Semantic equivalence]%
  \label{def:3}
  Processes $\jtypem{\Psi}{\Delta}{P}{c}{A}$ and $\jtypem{\Psi}{\Delta}{Q}{c}{A}$ are equivalent, $P \equiv Q$, if $\sembr{\jtypem{\Psi}{\Delta}{P}{c}{A}} = \sembr{\jtypem{\Psi}{\Delta}{Q}{c}{A}}$.
  Functional terms $\jtypef{\Psi}{M}{\tau}$ and $\jtypef{\Psi}{N}{\tau}$ are equivalent, $M \equiv N$, if $\sembr{\jtypef{\Psi}{M}{\tau}} = \sembr{\jtypef{\Psi}{N}{\tau}}$.
\end{definition}

\subsection{Functional Programming and Value Transmission}
\label{sec:funct-progr-value}

The functional layer is the simply-typed $\lambda$-calculus with a call-by-value semantics and a fixed-point operator.
Arrow types are formed by the rule \rn{T$\to$}.
They are interpreted as strict function spaces in $\moabcs$ to enforce a call-by-value semantics:
\begin{equation}
  \label[intn]{eq:73}
  \sembr{\jisft{\tau \to \sigma}} = \moabcs[\sembr{\jisft\tau} \to \sembr{\jisft{\sigma}}].
\end{equation}
The typing rules \rn{F-Var}, \rn{F-Fun}, \rn{F-App}, and \rn{F-Fix} for the functional layer are standard.
The call-by-value semantics is as in \cite{stoy_1977:_denot_seman}.
We let $u$ range over $\sembr{\Psi}$.
The environment $\upd{u}{x \mapsto v} \in \sembr{\Psi, x : \tau}$ maps $x$ to $v$ and $y$ to $u(y)$ for all $y \in \Psi$.
The fixed-point operator \rn{F-Fix} is interpreted using the fixed-point operator defined in \cref{sec:background}.
\begin{align}
  \sembr{\jtypef{\Psi,x:\tau}{x}{\tau}}u &= \pi^{\Psi,x}_{x}u\label[intn]{eq:39}\\
  \sembr{\jtypef{\Psi}{\lambda x: \tau.M}{\tau \to \sigma}}u &= \strictfn\left(\lambda v \in \sembr{\tau}.\sembr{\jtypef{\Psi,x:\tau}{M}{\sigma}}\upd{u}{x \mapsto v}\right)\label[intn]{eq:41}\\
  \sembr{\jtypef{\Psi}{MN}{\sigma}}u &= \sembr{\jtypef{\Psi}{M}{\tau \to \sigma}}u(\sembr{\jtypef{\Psi}{N}{\tau}}u)\label[intn]{eq:42}\\
  \sembr{\jtypef{\Psi}{\tFix{x}{M}}{\tau}}u &= \sfix{\sembr{\jtypef{\Psi,x:\tau}{M}{\tau}}}u  \label[intn]{eq:28}
\end{align}

\begin{proposition}[restate=recsubst,name={}]
  \label{cor:122}
  For all $\jtypef{\Psi,x : \tau}{M}{\tau}$, we have $[\tFix{x}{M}/x]M \equiv \tFix{x}{M}$.
\end{proposition}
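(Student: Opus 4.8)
The plan is to unfold both sides using the semantic clauses and reduce the claim to the fixed-point identity recorded in \cref{sec:background}. Write $f = \sembr{\jtypef{\Psi,x:\tau}{M}{\tau}}$ and identify $\sembr{\Psi,x:\tau}$ with the product $\sembr{\Psi} \times \sembr{\tau}$, so that $f(u,v) = \sembr{\jtypef{\Psi,x:\tau}{M}{\tau}}\upd{u}{x \mapsto v}$ for $u \in \sembr{\Psi}$ and $v \in \sembr{\tau}$. Under this identification $f$ has type $\sembr{\Psi} \times \sembr{\tau} \to \sembr{\tau}$, which is exactly the shape required by the operator $\sfix{(\cdot)} : [A \times X \to X] \to [A \to X]$. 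Clause \eqref{eq:28} then reads $\sembr{\jtypef{\Psi}{\tFix{x}{M}}{\tau}}u = \sfix{f}(u)$.

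The key ingredient I would use is a \emph{substitution lemma} for the functional layer: for all $\jtypef{\Psi,x:\tau}{M}{\tau}$ and $\jtypef{\Psi}{N}{\tau}$,
\[
  \sembr{\jtypef{\Psi}{[N/x]M}{\tau}}u = \sembr{\jtypef{\Psi,x:\tau}{M}{\tau}}\upd{u}{x \mapsto \sembr{\jtypef{\Psi}{N}{\tau}}u}.
\]
This is the standard compositionality statement for denotational semantics of the simply-typed $\lambda$-calculus and is proved by a routine structural induction on $M$ using clauses \eqref{eq:39}--\eqref{eq:28}; I would either establish it once here or cite it as folklore (\eg, \cite{stoy_1977:_denot_seman,tennent_1995:_denot_seman}).

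Instantiating the substitution lemma at $N = \tFix{x}{M}$ and rewriting with clause \eqref{eq:28} gives, for every $u \in \sembr{\Psi}$,
\[
  \sembr{\jtypef{\Psi}{[\tFix{x}{M}/x]M}{\tau}}u = f\upd{u}{x \mapsto \sfix{f}(u)} = f\bigl(u,\sfix{f}(u)\bigr) = \bigl(f \circ \langle \ms{id}, \sfix{f}\rangle\bigr)(u).
\]
By the fixed-point identity $\sfix{f} = f \circ \langle \ms{id}, \sfix{f}\rangle$ from \cref{sec:background}, the right-hand side equals $\sfix{f}(u) = \sembr{\jtypef{\Psi}{\tFix{x}{M}}{\tau}}u$. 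Since $u$ was arbitrary, the two denotations coincide, which by \cref{def:3} is precisely $[\tFix{x}{M}/x]M \equiv \tFix{x}{M}$.

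The only real work is the substitution lemma; the rest is bookkeeping. Because the lemma is entirely standard I expect the main (and minor) obstacle to be notational: verifying that the identification $\sembr{\Psi,x:\tau} \cong \sembr{\Psi} \times \sembr{\tau}$ matches the environment-update notation $\upd{u}{x \mapsto v}$ and the pairing $\langle \ms{id}, \sfix{f}\rangle$ used by the fixed-point operator, so that the three displayed equalities compose without a type mismatch.
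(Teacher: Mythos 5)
Your proposal is correct and follows essentially the same route as the paper: interpret $\tFix{x}{M}$ via \cref{eq:28}, apply the parametrized fixed-point identity $\sfix{f} = f \circ \langle \ms{id}, \sfix{f}\rangle$, and recognize the result as the denotation of $[\tFix{x}{M}/x]M$ via the semantic substitution property. The only difference is that you propose to prove or cite the substitution lemma as folklore, whereas the paper already establishes it as \cref{prop:fscd:19} (extending the identity substitution on the variables of $\Psi$ by $\sembr{\tFix{x}{M}}$ in the last component, using \cref{eq:39}), so you can simply invoke it.
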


For simplicity, the only base types are those of quoted processes.
They are formed by \rn{T\{\}}, and its interpretation is:
\begin{equation}
  \label[intn]{eq:72}
  \sembr{\jisft{\Tproc{a:A}{\overline{a_i:A_i}}}} = \sembr{\overline{a_i:A_i} \vdash a : A}_\bot
\end{equation}
where we abbreviate $\moabc\left[\sembr{\Delta}^+ \times \sembr{a:A}^- \to \sembr{\Delta}^- \times \sembr{a:A}^+\right]$ as $\sembr{\Delta \vdash a : A}$.
The distinction between the two ``bottom'' elements in $\sembr{\overline{a_i:A_i} \vdash a : A}_\bot$ is semantically meaningful.
The genuine bottom element denotes the absence of a value of type $\Tproc{a:A}{\overline{a_i:A_i}}$.
The lifted bottom element $\upim{\lambda x.\bot}$ corresponds to a stuck process that produces no output.

The \rn{I-\{\}} introduction rule quotes processes.
Its denotation is:
\begin{equation}
  \label[intn]{eq:191919}
  \sembr{\jtypef{\Psi}{\tProc{a}{P}{\overline{a_i}}}{\Tproc{a:A}{\overline{a_i:A_i}}}} = \up \circ \sembr{\jtypem{\Psi}{\overline{a_i:A_i}}{P}{a}{A}}.
\end{equation}
Because the unit $\up$ is not strict, quoting respects the semantic distinction between quoted stuck processes $\upim{\lambda x.\bot}$ and the absence $\bot$ of a value of type $\Tproc{a:A}{\overline{a_i:A_i}}$.

The \rn{E-\{\}} elimination rule spawns quoted processes.
Its denotation is:
\begin{equation}
  \label[intn]{eq:115}
  \sembr{\jtypem{\Psi}{\overline{a_i:A_i}}{\tProc{a}{M}{\overline a_i}}{a}{A}} = \down \circ \sembr{\jtypef{\Psi}{M}{\Tproc{a:A}{\overline{a_i:A_i}}}}.
\end{equation}
Two cases are possible when unquoting $M$.
If $\sembr{\jtypef{\Psi}{M}{\Tproc{c:A}{\overline{a_i:A_i}}}}u$ is $\bot$, then $\down(\bot)$ is the constant function $\lambda x.\bot$.
This represents the process that never produces output.
If $\sembr{\jtypef{\Psi}{M}{\Tproc{c:A}{\overline{a_i:A_i}}}}u = \upim{p}$, then $\down(\upim{p}) = p$ is the denotation of some quoted process.

\Cref{eq:191919,eq:115} satisfy the following $\eta$-like property:

\begin{proposition}[restate=quoteunquote,name={}]
  \label{prop:fscd:1}
  For all $\jtypem{\Psi}{\overline{a_i:A_i}}{P}{a}{A}$, we have $P \equiv \tProc{a}{\tProc{a}{P}{\overline a_i}}{\overline a_i}$.
\end{proposition}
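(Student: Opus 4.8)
The plan is to reduce the claimed equivalence, via \cref{def:3}, to the equality of the two denotations, which are continuous functions $\sembr{\Psi} \to \sembr{\overline{a_i:A_i} \vdash a : A}$, and then to collapse the right-hand side to $\sembr{P}$ by unfolding the semantic clauses for quoting and spawning. Write $D = \sembr{\overline{a_i:A_i} \vdash a : A}$ for their common codomain. First I would apply the clause \cref{eq:115} for the spawning rule $\rn{E-\{\}}$ to the outer process, treating the inner term $\tProc{a}{P}{\overline a_i}$ as a functional term of type $\Tproc{a:A}{\overline{a_i:A_i}}$; this rewrites the denotation of $\tProc{a}{\tProc{a}{P}{\overline a_i}}{\overline a_i}$ as $\down \circ \sembr{\jtypef{\Psi}{\tProc{a}{P}{\overline a_i}}{\Tproc{a:A}{\overline{a_i:A_i}}}}$. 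Then I would apply the clause \cref{eq:191919} for the quoting rule $\rn{I-\{\}}$ to that inner term, rewriting $\sembr{\jtypef{\Psi}{\tProc{a}{P}{\overline a_i}}{\Tproc{a:A}{\overline{a_i:A_i}}}} = \up \circ \sembr{\jtypem{\Psi}{\overline{a_i:A_i}}{P}{a}{A}}$. Composing the two steps, the right-hand denotation becomes $\down \circ \up \circ \sembr{P}$.

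It then remains to observe that $\down \circ \up = \ms{id}$, after which post-composition with $\sembr{P}$ immediately yields $\sembr{P}$ and hence the equivalence. This identity is precisely one of the triangle identities of the lifting adjunction $({-})_\bot \dashv {\hookrightarrow}$: its counit $\down$ and unit $\up$ satisfy $\down_E \circ \up_E = \ms{id}_E$ for every object $E$ of $\moabcs$. Concretely, $\up_D(d) = \upim d$ and $\down_D(\upim d) = d$, so the composite is the identity on $D$; note that this does not require $\up$ to be strict, and indeed the \emph{other} composite $\up \circ \down$ is not the identity, which is exactly the distinction between a quoted stuck process $\upim{\lambda x.\bot}$ and the absent value $\bot$.

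The only point demanding any care is bookkeeping about the object at which the unit and counit are taken. Since $\up$ is applied to the \emph{denotation} $\sembr{P}$, it is the component $\up_D$ at $D$ that appears, so the relevant identity is the counit-unit triangle $\down_D \circ \up_D = \ms{id}_D$ and not the variant $\down_{D_\bot} \circ (\up_D)_\bot = \ms{id}_{D_\bot}$. For this I must confirm that $D$ really is an object of $\moabcs$, so that $\down_D$ is defined: but $D$ is the internal hom $\moabc\left[\sembr{\overline{a_i:A_i}}^+ \times \sembr{a:A}^- \to \sembr{\overline{a_i:A_i}}^- \times \sembr{a:A}^+\right]$, a Scott domain (pointed, with bottom the everywhere-$\bot$ function), hence automatically an object of $\moabcs$. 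With that in place the argument is a direct substitution, and I expect no genuine obstacle beyond this verification.
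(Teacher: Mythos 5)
Your proposal is correct and is essentially the paper's own proof: both unfold \cref{eq:115} and then \cref{eq:191919} to obtain $\down \circ \up \circ \sembr{P}$ and conclude via the section--retraction identity $\down \circ \up = \ms{id}$. The extra bookkeeping about which component of the unit/counit is involved is sound but not something the paper dwells on.
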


A communication of type $\Tand{\tau}{A}$ is a value $v \in \sembr{\tau}$ followed by a communication satisfying $A$.
We use lifting to account for the potential lack of a communication: the bottom element represents the absence of communication.
The value travels in the positive direction, so it only appears in the positive aspect.
\begin{align}
  \sembr{\Xi\vdash\jisst{\Tand{\tau}{A}}}^- &= \sembr{\Xi\vdash \jisst{A}}^-\label[intn]{eq:15104}\\
  \sembr{\Xi\vdash\jisst{\Tand{\tau}{A}}}^+ &= \left(\sembr{\tau}\times\sembr{\Xi\vdash \jisst{A}}^+\right)_\bot\label[intn]{eq:15106}
\end{align}

The process $\tSendV{a}{M}{P}$ sends a functional value on $a$ and continues as $P$.
To send the term $M$ on $a$, we evaluate it under the current environment $u$ to get an element $\sembr{\jtypef{\Psi}{M}{\tau}}u \in \sembr{\tau}$.
Divergence is represented by $\bot_{\sembr{\tau}}$; the other elements represent values of type $\tau$.
If $\sembr{\jtypef{\Psi}{M}{\tau}}u$ represents a value, then we pair it with the output of the continuation process $P$ on $a^+$.
Otherwise, the process transmits nothing.
This gives the clause:
\begin{equation}
  \label[intn]{eq:1005}
  \begin{aligned}
    &\sembr{\jtypem{\Psi}{\Delta}{\tSendV{a}{M}{P}}{a}{\Tand{\tau}{A}}}u(\delta^+,a^-)\\
    &= \begin{cases}
      \bot & \text{if }\sembr{\jtypef{\Psi}{M}{\tau}}u = \bot\\
      \left(\delta^-,\upim{\left(v,a^+\right)}\right) & \text{if }\sembr{\jtypef{\Psi}{M}{\tau}}u = v \neq \bot
    \end{cases}\\
    &\text{where }\sembr{\jtypem{\Psi}{\Delta}{P}{a}{A}}u(\delta^+,a^-) = (\delta^-, a^+).
  \end{aligned}
\end{equation}

The process $\tRecvV{x}{a}{P}$ blocks until it receives a communication on the channel $a$.
If a communication $\upim{(v,\alpha^+)}$ arrives on $a^+$, then the process binds $v$ to $x$ in the environment and continues as $P$ with the remaining communication $\alpha^+$ on $a^+$.
If it receives no message, then it should produce no output, \ie, it should produce $\bot$ on all channels.
This means that its denotation should be strict in the $a^+$ component.
\begin{equation}
  \label[intn]{eq:1006}
  \begin{aligned}
    &\sembr{\jtypem{\Psi}{\Delta, a:\Tand{\tau}{A}}{\tRecvV{x}{a}{P}}{c}{C}}u\\
    &= \strictfn_{a^+}\left( \lambda \left(\delta^+,a^+ : \upim{\left(v, \alpha^+\right)},c^-\right) . \right.\\
    &\qquad\quad\qquad\qquad \left. \sembr{\jtypem{\Psi,x:\tau}{\Delta, a:A}{P}{c}{C}}\upd{u}{x \mapsto v}(\delta^+, \alpha^+, c^-) \right)
  \end{aligned}
\end{equation}
We abuse notation to pattern match on the component $a^+$.
By strictness, we know that it will be an element of the form $\upim{(v,\alpha^+)}$.

\Cref{eq:1006} illustrates a general principle in our semantics.
The bottom element $\bot$ represents the absence of communication.
When a process waits for input on a channel $a$ it uses (it provides), its denotation is strict in $a^+$ (resp.,~$a^-$).

The $\eta$-property for value transmission is subtle because the functional term $M$ might diverge.
The equivalence $\tCut{a}{P}{[M/x]Q} \; \equiv \; \tCut{a}{{(\tSendV{a}{M}{P})}}{{(\tRecvV{x}{a}{Q})}}$ does not hold in general.
If $x$ does not appear free in $Q$, the substitution on the left has no effect and $\tCut{a}{P}{Q}$ runs as usual.
However, if $M$ diverges, then the process on the right is stuck.
Indeed, $\tRecvV{x}{a}{Q}$ waits on $a$ but $\tSendV{a}{M}{P}$ gets stuck evaluating $M$ and sends nothing.
The two processes in the equivalence have equal denotations whenever $M$ converges.
Process equivalence requires the processes to have equal denotations under all environments $u$.
For the equivalence to hold, $M$ must then converge under every environment $u \in \sembr{\Psi}$.
This justifies the statement of \cref{prop:29}.
Its proof uses the substitution property (\cref{prop:fscd:19}) and the Knaster-Tarski-style formulation of the trace.

\begin{proposition}[restate=etaquote,name={}]%
  \label{prop:29}
  For all $\jtypem{\Psi}{\Delta_1}{P}{a}{A}$, all $\jtypem{\Psi,x : \tau}{a : A, \Delta_2}{Q}{c}{C}$, and all  $\jtypef{\Psi}{M}{\tau}$, we have $\tCut{a}{P}{[M/x]Q} \equiv \tCut{a}{(\tSendV{a}{M}{P})}{(\tRecvV{x}{a}{Q})}$ whenever $\sembr{\jtypef{\Psi}{M}{\tau}}u \neq \bot$ for all $u \in \sembr{\Psi}$.
\end{proposition}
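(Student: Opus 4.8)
The plan is to reduce both composites to their trace forms and then match the two least-fixed-point computations by a decoding bijection on the internal channel $a$. Fix an environment $u \in \sembr\Psi$ and write $v := \sembr{\jtypef\Psi M\tau}u$, which is $\neq \bot$ by hypothesis. First I would use the substitution property (\cref{prop:fscd:19}) to rewrite the denotation of the left-hand composite's second component, $\sembr{\jtypem\Psi{a:A,\Delta_2}{[M/x]Q}c C}u = \sembr{\jtypem{\Psi,x:\tau}{a:A,\Delta_2}Q c C}\upd u{x\mapsto v}$, so that both sides are expressed through $\sembr P$ and $\sembr Q$ evaluated at the extended environment $\upd u{x\mapsto v}$.

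Next I would expand both cuts via \eqref{eq:fossacs:1}: each denotes the projection onto $\sembr{\Delta_1,\Delta_2}^- \times \sembr{c:C}^+$ of the least solution of the feedback equations for its internal channel $a$. On the left this channel carries a pair $(a^+, a^-) \in \sembr A^+ \times \sembr A^-$; on the right, where $a$ has type $\Tand\tau A$, it carries $(\hat a^+, \hat a^-) \in (\sembr\tau\times\sembr A^+)_\bot \times \sembr A^-$. The heart of the argument is the map $\Phi(a^+, a^-) := (\upim{(v, a^+)}, a^-)$, a monotone order-embedding of the left internal state space into the right. Because $v \neq \bot$, clause \eqref{eq:1005} makes $\sembr{\tSendV a M P}u$ emit exactly $\upim{(v, a^+)}$ on $\hat a^+$, never $\bot$, where $a^+$ is $P$'s positive output; dually, by clause \eqref{eq:1006} the receiver is strict in $\hat a^+$ and on a lifted input $\upim{(v, \beta^+)}$ runs as $\sembr Q\upd u{x\mapsto v}$ on $\beta^+$. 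Consequently $\Phi$ carries each fixed point of the left feedback map to a fixed point of the right one with identical negative and $c^+$ outputs, and, conversely, each fixed point of the right map necessarily has $\hat a^+$ of the form $\upim{(v, \cdot)}$ and so is $\Phi$ of a left fixed point with the same visible output.

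By the Knaster--Tarski formulation of the trace, the least solution on each side is the least element of its fixed-point poset, so the order-isomorphism $\Phi$ between these posets identifies the two least solutions and preserves their visible projections. Hence the two denotations agree at every $(\delta_1^+, \delta_2^+, c^-)$; since $u$ and the input were arbitrary, the denotations are equal and the claimed equivalence $\tCut a P {[M/x]Q} \equiv \tCut a {(\tSendV a M P)}{(\tRecvV x a Q)}$ follows.

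I expect the decoding step to be the main obstacle: one must reconcile the mismatch between the internal domains $\sembr A^+$ and $(\sembr\tau\times\sembr A^+)_\bot$ and verify that the fixed-point correspondence is exact rather than merely lax. The crux is that convergence of $M$ makes the send always produce a genuine lifted value, so the receiver's strictness never aborts the handshake and faithfully recovers $(v, \beta^+)$. Were $M$ allowed to diverge, the send would output $\bot$, the strict receiver would collapse every output to $\bot$, and the visible behaviours would differ, which is precisely why the convergence hypothesis is indispensable.
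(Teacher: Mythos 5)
Your proposal is correct and follows essentially the same route as the paper's proof: first apply the substitution property to replace $[M/x]Q$ by $Q$ evaluated at the extended environment $\upd{u}{x \mapsto v}$, then expand both cuts via the Knaster--Tarski formulation of the trace, and use the convergence hypothesis $v \neq \bot$ to show that the internal channel on the right always carries a genuine lifted pair $\upim{(v,\cdot)}$, so the two feedback computations have identical visible projections. The only organizational difference is that you package the correspondence as an order-embedding between the posets of exact fixed points, whereas the paper establishes two inclusions between the projections of the sets of post-fixed points $\{(b,x) \mid f(a,x) \sqsubseteq (b,x)\}$; both are justified because the trace computes the least element of that set, which is itself a fixed point by \cref{lemma:fscd:1}.
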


\subsection{Manipulating channels}
\label{sec:manip-chann}

Forwarding denotes the function that copies data from $a^+$ to $b^+$ and from~$b^-$~to~$a^-$:
\begin{equation}
  \label[intn]{eq:10}
  \begin{aligned}
    &\sembr{\jtypem{\Psi}{a:A}{\tFwd{b}{a}}{b}{A}}u(a^+ : \alpha, b^- : \beta) = (a^- : \beta, b^+ : \alpha).
  \end{aligned}
\end{equation}

Process composition ``connects'' the common channel $a$ of two communicating processes.
As motivated in \cref{sec:overview-semantics}, we use the trace operator to fix $a$'s positive and negative aspects:
\begin{equation}
  \label[intn]{eq:11}
  \begin{aligned}
    &\sembr{\jtypem{\Psi}{\Delta_1, \Delta_2}{\tCut{a}{P}{Q}}{c}{C}}u\\
    &= \Tr{\left(\sembr{\jtypem{\Psi}{\Delta_1}{P}{a}{A}}u \times \sembr{\jtypem{\Psi}{a : A,\Delta_2}{Q}{c}{C}}u\right)}{a^- \times a^+}.
  \end{aligned}
\end{equation}

Associativity of composition follows from the trace operator axioms:

\begin{proposition}[restate=assoccut,name={}]
  \label{prop:1}
  For all ${\jtypem{\Psi}{\Delta_1}{P_1}{c_1}{C_1}}$, all ${\jtypem{\Psi}{c_1:C_1,\Delta_2}{P_2}{c_2}{C_2}}$, and all ${\jtypem{\Psi}{c_2:C_2,\Delta_3}{P_3}{c_3}{C_3}}$, we have $\tCut{c_1}{P_1}{(\tCut{c_2}{P_2}{P_3})} \equiv {\tCut{c_2}{(\tCut{c_1}{P_1}{P_2})}{P_3}}$.
\end{proposition}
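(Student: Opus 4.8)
The plan is to unfold \cref{def:3} and reduce the claim to an equation between two composite traces in the traced symmetric monoidal category $\moabc$. Since process composition is interpreted by \eqref{eq:11} pointwise in the functional environment, it suffices to fix $u \in \sembr{\Psi}$ and prove the corresponding equality of morphisms; write $f_i = \sembr{P_i}u$ for $i \in \{1,2,3\}$. Expanding \eqref{eq:11} twice, the left-hand side denotes $\Tr{\bigl(f_1 \times \Tr{(f_2 \times f_3)}{c_2^- \times c_2^+}\bigr)}{c_1^- \times c_1^+}$ and the right-hand side denotes $\Tr{\bigl(\Tr{(f_1 \times f_2)}{c_1^- \times c_1^+} \times f_3\bigr)}{c_2^- \times c_2^+}$, where in each case the trace fixes the positive and negative aspects of the indicated private channel.

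First I would bring both nested traces into a common normal form: a single trace of $f_1 \times f_2 \times f_3$ over $c_1^- \times c_1^+ \times c_2^- \times c_2^+$. On the left, the superposing axiom lets me absorb the untraced factor $f_1$ inside the inner trace, after which the vanishing axiom collapses the two nested traces into a single trace over $(c_1^- \times c_1^+) \times (c_2^- \times c_2^+)$. The right-hand side is handled symmetrically, absorbing $f_3$ and then collapsing. The remaining discrepancy between the two resulting morphisms is purely a matter of how the tensor factors---the contexts $\sembr{\Delta_i}^{\pm}$, the endpoints $\sembr{c_3:C_3}^{\pm}$, and the traced channels---are ordered inside the product; these are reconciled by the naturality (tightening and sliding) axioms of the trace together with the symmetry isomorphisms of $\moabc$, whose coherence lets me permute wires freely. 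Both sides thus equal the same single trace, giving equality of denotations and hence the claimed equivalence.

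The main obstacle will be the bookkeeping of these symmetry isomorphisms: the private channels $c_1$ and $c_2$ sit at different positions in the domains and codomains of $f_1 \times f_2 \times f_3$ on the two sides, and the untraced contexts must be threaded through the sliding axiom in exactly the right way, so that verifying the coherence data match up---rather than invoking the trace axioms themselves---is where the care is needed. As a cross-check, and an alternative route avoiding the diagram chase, I could instead use the Knaster--Tarski/least-fixed-point formulation of the trace from \cref{sec:background}: each side computes the least simultaneous solution of the three fixed-point equations determined by $f_1, f_2, f_3$ over the variables $c_1^{\pm}, c_2^{\pm}$, and since least fixed points over the disjoint variable blocks $c_1^{\pm}$ and $c_2^{\pm}$ may be taken in either order (Beki\v{c}'s lemma), the two least solutions coincide.
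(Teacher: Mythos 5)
Your proposal is correct and follows essentially the same route as the paper: the paper derives the result as an immediate corollary of its trace-associativity lemma (\cref{prop:fscd:13}), whose proof normalizes both nested traces to a single trace of $f_1 \times f_2 \times f_3$ using exactly the superposing, tightening, and vanishing axioms you invoke. Your Beki\v{c}-style cross-check via the Knaster--Tarski formulation is a reasonable alternative but is not the argument the paper uses.
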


Processes can close channels of type $\Tu$.
The close message is the only communication possible on a channel of type $\Tu$.
The positive aspect of \rn{C$\Tu$} is the constant functor onto the two-element pointed domain $\{\ast\}_\bot = \{ \bot \sqsubseteq \ast \}$.
The element $\ast$ represents the close message, while $\bot$ represents the absence of communication.
All communication on a channel of type $\Tu$ is positive, so the negative aspect is the constant functor onto the terminal object $\{\bot\}$.
\begin{align}
  \sembr{\Xi \vdash \jisst{\Tu}}^- &= \lambda \xi.\{\bot\}\label[intn]{eq:1502}\\
  \sembr{\Xi \vdash \jisst{\Tu}}^+ &= \lambda \xi.\{\ast\}_\bot\label[intn]{eq:1501}
\end{align}

In our asynchronous setting, $\tClose a$ does not wait for a client before sending the close message.
We interpret \rn{$\Tu R$} as the constant function that sends the close message $\ast$ on $a^+$:
\begin{equation}
  \label[intn]{eq:1509}
  \sembr{\jtypem{\Psi}{\cdot}{\tClose a}{a}{\Tu}}u(a^- : \bot) = (a^+ : \ast).
\end{equation}

The process $\tWait{a}{P}$ blocks until it receives the close message, so its denotation is strict in the component $a^+$.
All other communication is handled by $P$.
We interpret \rn{$\Tu L$} by:
\begin{equation}
  \label[intn]{eq:21}
  \begin{aligned}
    &\sembr{\jtypem{\Psi}{\Delta,a : \Tu}{\tWait{a}{P}}{c}{C}}u = \strictfn_{a^+}\left(\lambda (\delta^+,a^+,c^-).(\delta^-,\bot,c^+)\right))\\
    &\text{where } \sembr{\jtypem{\Psi}{\Delta}{P}{c}{C}}u(\delta^+,c^-) = (\delta^-, c^+).
  \end{aligned}
\end{equation}

\begin{proposition}[restate=etaunit,name={}]
  \label{prop:9}
  For all $\jtypem{\Psi}{\Delta}{P}{c}{C}$, we have $P \equiv \tCut{a}{\tClose a}{(\tWait{a}{P})}$.
\end{proposition}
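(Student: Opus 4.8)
The plan is to unfold the right-hand side using the semantics of composition and show it collapses to $\sembr{\jtypem{\Psi}{\Delta}{P}{c}{C}}$. First I would apply \cref{eq:11}, giving
\[
  \sembr{\jtypem{\Psi}{\Delta}{\tCut{a}{\tClose a}{(\tWait{a}{P})}}{c}{C}}u
  = \Tr{\left(\sembr{\jtypem{\Psi}{\cdot}{\tClose a}{a}{\Tu}}u \times \sembr{\jtypem{\Psi}{\Delta,a:\Tu}{\tWait{a}{P}}{c}{C}}u\right)}{a^- \times a^+},
\]
and then substitute the clauses \cref{eq:1509} and \cref{eq:21} for the two component processes. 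The computation is governed by the fact that on a channel of type $\Tu$ the negative aspect $\sembr{\Tu}^- = \{\bot\}$ is the one-point domain \cref{eq:1502}, so $a^-$ carries no information and is forced to be $\bot$, whereas the positive aspect $\sembr{\Tu}^+ = \{\ast\}_\bot$ \cref{eq:1501} carries the close message.

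The crux is to evaluate the trace as the least solution of its defining equations, computed via the ascending chain of \cref{eq:fossacs:1}. The key observation is that $\sembr{\tClose a}u$ is the constant function returning $a^+ = \ast$ independently of $a^-$, so the feedback loop on $a$ is degenerate. Running the chain, the first iterate already sets $a_0^+ = \ast$ from the close process; since $\ast \neq \bot$, the strictness of $\tWait{a}{P}$ in $a^+$ is satisfied at the next iterate, and \cref{eq:21} then yields $(\delta^-, a^- : \bot, c^+)$ with $\sembr{\jtypem{\Psi}{\Delta}{P}{c}{C}}u(\delta^+, c^-) = (\delta^-, c^+)$. Because $\tClose a$ ignores $a^-$, no further change occurs and the chain stabilizes, so the least fixed point has $a^+ = \ast$, $a^- = \bot$, and its $\Delta$- and $c$-components equal to those produced by $\sembr{\jtypem{\Psi}{\Delta}{P}{c}{C}}u(\delta^+, c^-)$.

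Projecting away the traced $a$-components leaves exactly $(\delta^-, c^+) = \sembr{\jtypem{\Psi}{\Delta}{P}{c}{C}}u(\delta^+, c^-)$, and since this holds for every $u$, $\delta^+$, and $c^-$, the two denotations coincide, giving the equivalence. The step I expect to be the main obstacle is the fixed-point bookkeeping: one must check that the least solution selects $a^+ = \ast$ rather than the spurious $a^+ = \bot$. This rests on the close process supplying $\ast$ unconditionally, so that the strictness of $\tWait{a}{P}$ never blocks the loop, together with the triviality of $\sembr{\Tu}^-$, which guarantees the loop introduces no genuine feedback and terminates after a single informative step. Everything else is routine substitution into the semantic clauses.
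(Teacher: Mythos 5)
Your proposal is correct and rests on the same key observations as the paper's proof: the close process emits $\ast$ on $a^+$ unconditionally, so the strictness of $\tWait{a}{P}$ in $a^+$ is discharged and, since $\sembr{\Tu}^-$ is trivial, the feedback on $a$ is degenerate. The only difference is bookkeeping: you compute the trace via the Kleene ascending chain and show it stabilizes after one informative step, whereas the paper uses the Knaster--Tarski formulation \eqref{eq:808} and argues that monotonicity forces every post-fixed point to have $a^+ = \ast$ and hence the form $(a^+ : \ast, c^+ : v, \delta^- : w, a^- : \bot)$ with $(w,v) = \sembr{\jtypem{\Psi}{\Delta}{P}{c}{C}}u(\delta^+,c^-)$; the two formulations are established as equivalent in the paper, so either route is fine.
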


Processes can communicate channels.
We cannot directly observe a channel, only the communications it carries.
For this reason, we treat communications of type $A \Tot B$ as a pair of communications: one for the sent channel and one for the continuation channel.
This is analogous to the denotation of $A \Tot B$ given by Atkey~\citeN{atkey_2017:_obser_commun_seman_class_proces}.
We account for the potential absence of communication by lifting.
\begin{align}
  \sembr{\Xi\vdash\jisst{A \Tot B}}^- &= \sembr{\Xi\vdash\jisst{A}}^- \times \sembr{\Xi\vdash\jisst{B}}^-\label[intn]{eq:13}\\
  \sembr{\Xi\vdash\jisst{A \Tot B}}^+ &= \left(\sembr{\Xi\vdash\jisst{A}}^+ \times \sembr{\Xi\vdash\jisst{B}}^+\right)_\bot\label[intn]{eq:14}
\end{align}

To send the channel $b$ over $a$, the process $\tSendC{a}{b}{P}$ relays the positive communication from $b^+$ to the $\sembr{B}^+$-component of $\sembr{a : B \Tot A}^+$, and the negative communication on the $\sembr{B}^-$-component of $\sembr{a : B \Tot A}^-$ to $b^-$.
The continuation $P$ handles all other communication.
%
%
\begin{equation}
  \label[intn]{eq:17}
  \begin{aligned}
    &\sembr{\jtypem{\Psi}{\Delta, b : B}{\tSendC{a}{b}{P}}{a}{B \Tot A}}u(\delta^+,b^+,(a^-_B,a^-_A)) = \left(\delta^-,a_B^-,\upim{\left(b^+,a_A^+\right)}\right)\\
    &\text{ where }\sembr{\jtypem{\Psi}{\Delta}{P}{a}{A}}u(\delta^+,a_A^-) = (\delta^-,a_A^+).
  \end{aligned}
\end{equation}

The client $\tRecvC{b}{a}{Q}$ blocks until it receives a channel on $a$.
When it receives a communication $\upim{(a_B^+, a_A^+)}$ on $\sembr{a : B \Tot A}^+$, it unpacks it into the two positive channels $\sembr{a : A, b : B}^+$ expected by $Q$.
It then repacks the negative communication $Q$ produces on $\sembr{a : A, b : B}^-$ and relays it over~$\sembr{a : B \Tot A}^-$.
\begin{equation}
  \label[intn]{eq:26}
  \begin{aligned}
    &\sembr{\jtypem{\Psi}{\Delta, a : B \Tot A}{\tRecvC{b}{a}{Q}}{c}{C}}u(\delta^+,a^+,c^-)\\
    &= \strictfn_{a^+}\left(\lambda (\delta^+,a^+ : \upim{(a_B^+, a_A^+)},c^-) . (\delta^-,(b^-,a^-),c^+) \right)\\
    &\text{where }\sembr{\jtypem{\Psi}{\Delta, a : A, b : B}{Q}{c}{C}}u(\delta^+,a_A^+,a_B^+,c^-) = (\delta^-,a^-,b^-,c^+).
  \end{aligned}
\end{equation}

\begin{proposition}[restate=etatensor,name={}]
  \label{prop:522}
  For all $\jtypem{\Psi}{\Delta_1}{P}{a}{A}$ and $\jtypem{\Psi}{a : A, \Delta_2, b : B}{Q}{c}{C}$, we have $\tCut{a}{P}{Q} \equiv \tCut{b}{\left(\tSendC{a}{b}{P}\right)}{\left(\tRecvC{b}{a}{Q}\right)}$.
\end{proposition}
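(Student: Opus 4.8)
The plan is to expand each side via \cref{eq:11}, which renders both compositions as traces, and then to show the two traces denote the same continuous function. Fix an environment $u \in \sembr{\Psi}$. The left-hand side is the trace of $\sembr{\jtypem{\Psi}{\Delta_1}{P}{a}{A}}u \times \sembr{\jtypem{\Psi}{a:A,\Delta_2,b:B}{Q}{c}{C}}u$ over the feedback domain $\sembr{A}^- \times \sembr{A}^+$ carried by the shared channel, while the right-hand side is the trace of $\sembr{\tSendC{a}{b}{P}}u \times \sembr{\tRecvC{b}{a}{Q}}u$ over the larger domain $\sembr{B \Tot A}^- \times \sembr{B \Tot A}^+ = (\sembr{B}^- \times \sembr{A}^-) \times (\sembr{B}^+ \times \sembr{A}^+)_\bot$ of \cref{eq:13,eq:14}. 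I would compute each trace through the Knaster--Tarski-style formulation of \cref{sec:background}, reducing the goal to a comparison of least fixed points.

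The key observation is that routing $b$ through the tensor creates no genuine feedback and that the lifting on $\sembr{B \Tot A}^+$ is transparent inside the loop. By \cref{eq:17} the sender emits a lifted value $\upim{(b^+, a_A^+)}$ on the positive aspect of the tensor channel---never the genuine bottom---whereas by \cref{eq:26} the receiver is strict there. Hence the receiver, which returns bottom only against the initial seed, thereafter always unpacks this value, recovering the external input $b^+$ together with $P$'s positive $A$-output. Splitting the tensor feedback into its $B$- and $A$-parts, I would then check that the $B$-part carries no recursion: its positive component is merely the external $b^+$ relayed by the sender, and its negative component is $Q$'s output on $b$ relayed by the receiver, so neither depends on the $B$-feedback itself.

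With the $B$-part so determined, projecting the right-hand feedback onto its $A$-part reproduces exactly the defining equations of the left-hand trace: the sender passes the negative $A$-component to $P$ and returns $P$'s positive $A$-output (\cref{eq:17}), while the receiver passes that output together with $b^+$ to $Q$ and returns $Q$'s negative $A$-output (\cref{eq:26}). Because the $B$-part neither constrains nor is constrained by the $A$-part, the least solution of the right-hand system restricts on the $A$-part to the least solution of the left-hand system. Reading off the external components through \cref{eq:17,eq:26}---$P$'s output on $\Delta_1$, $Q$'s output on $\Delta_2$ and $c$, and the relayed output on $b$---then shows the two denotations coincide for every $u$, which by \cref{def:3} yields the claimed equivalence.

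I expect the main obstacle to be the least-fixed-point bookkeeping rather than any single clause: one must confirm that the auxiliary $B$-routing and the intervening lift/unlift neither inject spurious values nor shift the stage at which the $A$-feedback stabilises, so that the right-hand least solution truly projects onto the left-hand one. The transparency of the lifting---secured by the sender's non-strict emission together with the receiver's strictness---is precisely what licenses replacing the tensor feedback by the bare $A$-feedback, and I would make this rigorous by inducting on the approximating chains, verifying that they agree on their $A$-components at every stage once the $B$-part has resolved.
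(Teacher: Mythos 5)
Your proposal is correct and follows essentially the same route as the paper's proof: both unfold the two cuts into traces via \cref{eq:11}, pass to the fixed-point characterization of the trace, and hinge on exactly the observation you identify---the sender always emits a lifted pair on the positive aspect of the tensor channel (so the receiver's strictness only bites against the initial seed) while the $B$-component of the feedback is inert, reducing the right-hand feedback to the bare $A$-feedback of the left-hand side. The only cosmetic difference is that the paper carries out the comparison by mutual containment of the projections of the Knaster--Tarski post-fixed-point sets (ruling out the bottom case on the tensor's positive aspect via the contradiction $\upim{(\cdot,\cdot)} \sqsubseteq \bot$) rather than by induction on the Kleene approximating chains, but these are interchangeable ways of comparing the same least solutions.
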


\begin{example}
  \label{ex:fscd:11}
  The process below blocks until it receives a channel $a$ of type $\Tu$ over the channel $b$, at which point the type of $b$ becomes $\Tu$.
  Then, the process waits for the close messages on $a$ and $b$ before closing $c$.
  The element $\upim{(\ast,\ast)} \in \sembr{\Tu \Tot \Tu}^+ = (\sembr{\Tu}^+ \times \sembr{\Tu}^+)_\bot$ corresponds to receiving the channel $a$, the close message on $a$, and the close message on $b$.
  The element $\upim{(\bot,\bot)}$ corresponds to receiving $a$ but no close messages, while the elements $\upim{(\ast,\bot)}$ and $\upim{(\bot,\ast)}$ correspond to receiving $a$ and one close message.
  The element $\bot$ means that $a$ is never received.
  We see that the process only closes $c$ in the first case:
  \begin{align*}
    &\sembr{\jtypem{\cdot}{b : \Tu \Tot \Tu}{\tRecvC{a}{b}{\tWait{a}{\tWait{b}{\tClose{c}}}}}{c}{\Tu}}\bot(b^+ : \beta, c^- : \bot)\\
    &=
    \begin{cases}
      (b^- : (\bot,\bot), c^+ : \ast) & \text{if }\beta = \upim{(\ast, \ast)}\\
      (b^- : (\bot,\bot), c^+ : \bot) & \text{otherwise.}
    \end{cases}
  \end{align*}
\end{example}

\subsection{Shifts in Polarity}
\label{sec:shifts-polarity}

Synchronization is encoded using ``polarity shifts''.
A communication of type $\Tds A$ is a synchronization message (the ``shift'' message) followed by a communication of type $A$.
``Downshifting'' $A$ to $\Tds A$ introduces only \textit{positive} communication (the ``shift'' message), so the negative aspect of $\Tds A$ is the same as the negative aspect of $A$.
We interpret $\sembr{\Tds A}^+$ by lifting $\sembr{A}^+$.
The element $\bot_{\sembr{\Tds A}^+}$ captures the absence of communication.
The elements $\upim{a}$ for $a \in \sembr{A}^+$ capture a $\ms{shift}$ message followed by a communication satisfying $A$.
\begin{align}
  \sembr{\Xi\vdash\jisst{\Tds A}}^- &= \sembr{\Xi\vdash\jisst{A}}^-\label[intn]{eq:1506}\\
  \sembr{\Xi\vdash\jisst{\Tds A}}^+ &= \sembr{\Xi\vdash\jisst{A}}^+_\bot\label[intn]{eq:1507}
\end{align}

In our asynchronous setting, the shift message is always sent.
This corresponds to lifting the output of $P$ on the $a^+$ component.
We interpret \rn{$\Tds{} R$} as:
\begin{equation}
  \label[intn]{eq:59}
  \begin{aligned}
    &\sembr{\jtypem{\Psi}{\Delta}{\tSendS{a}{P}}{a}{\Tds A}}u\\
    &= \left(\ms{id} \times \left(a^+ : \up\right)\right) \circ \sembr{\jtypem{\Psi}{\Delta}{P}{a}{A}}u.
  \end{aligned}
\end{equation}

The client blocks until it receives the shift message on $a^+$.
We lower $\sembr{A}^+_\bot$ to $\sembr{A}^+$ to extract the positive communication expected by $P$.
\begin{equation}
  \label[intn]{eq:56}
  \begin{aligned}
    &\sembr{\jtypem{\Psi}{\Delta,a : \Tds A}{\tRecvS{a}{P}}{c}{C}}u\\
    &= \strictfn_{a^+}\left(\sembr{\jtypem{\Psi}{\Delta, a : A}{P}{c}{C}}u \circ \left(\ms{id} \times \left(a^+ : \down\right)\right)\right)
  \end{aligned}
\end{equation}

\begin{proposition}[restate=etadshift,name={}]
  \label{prop:7}
  For all $\jtypem{\Psi}{\Delta_1}{P}{a}{A}$ and $\jtypem{\Psi}{a : A, \Delta_2}{Q}{c}{C}$, we have $\tCut{a}{P}{Q} \equiv \tCut{a}{(\tSendS{a}{P})}{(\tRecvS{a}{Q})}$.
\end{proposition}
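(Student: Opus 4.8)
The plan is to unfold both sides of the claimed equivalence using the composition clause \cref{eq:11} and to check that the two resulting traces denote the same continuous function; by \cref{def:3} this suffices. Fix $u \in \sembr{\Psi}$ and abbreviate $p = \sembr{\jtypem{\Psi}{\Delta_1}{P}{a}{A}}u$ and $q = \sembr{\jtypem{\Psi}{a : A,\Delta_2}{Q}{c}{C}}u$. The left-hand side is then $\Tr{(p \times q)}{a^- \times a^+}$, where the traced component $a^+$ ranges over $\sembr{A}^+$. On the right-hand side the composition channel $a$ carries type $\Tds A$, so by \cref{eq:1506,eq:1507} the traced component $a^+$ now ranges over the lifted domain $\sembr{A}^+_\bot$; substituting \cref{eq:59,eq:56} gives the two traced morphisms $\tilde p = (\ms{id} \times (a^+ : \up)) \circ p$ and $\tilde q = \strictfn_{a^+}(q \circ (\ms{id} \times (a^+ : \down)))$, so that side equals $\Tr{(\tilde p \times \tilde q)}{a^- \times a^+}$. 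The only differences from the left-hand side are the $\up$ inserted on the output $a^+$, the $\down$ inserted on the input $a^+$, and the strictness wrapper $\strictfn_{a^+}$.

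Next I would pass to the fixed-point characterization of the trace from \cref{sec:background} and exploit the lifting adjunction, in particular $\down \circ \up = \ms{id}$ on $\sembr{A}^+$. At any solution of the right-hand feedback equations we have $(\delta_1^-, a^+) = \tilde p(\delta_1^+, a^-)$, and reading off the second component forces $a^+ = \upim{\alpha^+}$ with $(\delta_1^-, \alpha^+) = p(\delta_1^+, a^-)$. Hence the fixed-point value of $a^+$ always lies in $\ima(\up) = \sembr{A}^+_\bot \setminus \{\bot\}$ and is never the genuine bottom of $\sembr{A}^+_\bot$. Consequently $\strictfn_{a^+}$ is inert there, and the other equation collapses via $\down(\upim{\alpha^+}) = \alpha^+$ to $(a^-, \delta_2^-, c^+) = q(\alpha^+, \delta_2^+, c^-)$. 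These are exactly the feedback equations defining the left-hand trace, with $\alpha^+$ playing the role of the left-hand $a^+$, and since the trace returns only the untraced components $(\delta_1^-, \delta_2^-, c^+)$ the two functions agree on every argument.

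The one point requiring care is that the trace returns the \emph{least} solution, so I must confirm the least solutions correspond rather than merely some solutions; I would do this by mutual domination rather than by matching the ascending chains index-by-index (whose initial $Q$-approximations differ, since the right-hand chain starts from $\bot_{\sembr{A}^+_\bot}$, where $\strictfn_{a^+}$ fires and yields $\bot$). Writing $L$ for the left least fixed point and $R$ for the right one: by the previous paragraph $R$ projects (dropping the redundant lifted datum) to a solution of the left system, so $L \sqsubseteq \pi(R)$; conversely, replacing the $a^+$-component of $L$ by $\upim{a^+_L}$ and keeping the rest yields a solution of the right system—because $\tilde p$ reproduces $\upim{a^+_L}$, the input $\upim{a^+_L} \neq \bot$ keeps $\strictfn_{a^+}$ inert, and $\down$ undoes $\up$—so $R$ is below it and $\pi(R) \sqsubseteq L$. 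Thus $\pi(R) = L$. I expect this leastness bookkeeping, together with verifying that $\strictfn_{a^+}$ never fires on the image-of-$\up$ locus, to be the main obstacle; the remaining $\up/\down$ cancellation is routine. As the argument holds for every $u \in \sembr{\Psi}$, the denotations coincide and $\tCut{a}{P}{Q} \equiv \tCut{a}{(\tSendS{a}{P})}{(\tRecvS{a}{Q})}$.
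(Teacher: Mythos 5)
Your proposal is correct, but it takes a genuinely different route from the paper. The paper's proof is a two-step appeal to general lemmas: first it observes that $\pi_{a^+} \circ \sembr{\tSendS{a}{P}}u$ is not strict (its $a^+$-output is always in the image of $\up$, hence above $\bot$), so \cref{prop:4503} lets it erase the $\strictfn_{a^+}$ wrapper outright; then it factors the remaining $\up$ and $\down$ as a section--retraction pair conjugating the traced morphism and cancels them with \cref{prop:fscd:12}, which is just the sliding (dinaturality) axiom of the trace. Your proof instead works directly with the Knaster--Tarski/least-fixed-point characterization and establishes the equality of the two least solutions by mutual domination: the right solution, unlifted, is a fixed point of the left system, and the left solution, lifted, is a fixed point of the right system. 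This is sound --- in particular your observation that the $a^+$-component of any fixed point of the right system lies in $\ima(\up)$, so that $\strictfn_{a^+}$ never fires and $\down\circ\up$ cancels, is exactly the content the paper packages into \cref{prop:4503}; and your decision to avoid matching the ascending chains index-by-index (whose initial approximants genuinely differ) in favour of domination between least fixed points is the right call and mirrors how the paper itself proves \cref{prop:4503} and the other $\eta$-laws such as \cref{prop:522}. What the paper's route buys is modularity: the same two lemmas dispatch the upshift case and the recursive-type case (\cref{prop:fscd:18}) with no further fixed-point reasoning, whereas your argument re-derives their content inline for this one instance. What your route buys is self-containedness and a more explicit rendering of the operational intuition that the shift message is always sent, so the receiver's strictness guard is never triggered at the fixed point.
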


\begin{example}
  \label{ex:fscd:3}
  Upshifts are the polar duals of downshifts.
  The following process waits for its client to synchronize with it before closing the channel.
  The protocol $\Tus\Tu$ has denotations $\sembr{\Tus\Tu}^- = \sembr{\Tu}^-_{\bot} = \{\bot\}_\bot$ and $\sembr{\Tus\Tu}^+ = \sembr{\Tu}^+ = \{\ast\}_\bot$.
  The element $\upim{\bot} \in \sembr{\Tus\Tu}^-$ captures the synchronizing shift message.
  The process closes $a$ if and only if it receives the shift message:
  \[
    \sembr{\jtypem{\cdot}{\cdot}{\tRecvS{a}{\tClose{a}}}{a}{\Tus{\Tu}}}\bot(a^- : \alpha) =
    \begin{cases}
      (a^+ : \bot) & \text{if }\alpha = \bot\\
      (a^+ : \ast) & \text{if }\alpha = \upim{\bot}.
    \end{cases}
  \]
\end{example}

\subsection{Making Choices}
\label{sec:making-choices}

The internal choice type $\Tplus \{ l : A_l \}_{l \in L}$ prescribes a choice between session types $\{ A_l \}_{l \in L}$ ($L$ finite).
A communication of type $\Tplus \{ l : A_l \}_{l \in L}$ is a label $k \in L$ sent in the positive direction followed by a communication satisfying $A_k$.
Denotationally, this corresponds to tagging a communication $a_k \in \sembr{A_k}^+$ with the label $k$.
Tagged communications $(k, a_k)$ are the elements of the disjoint union $\biguplus_{l \in L} \sembr{A_l}^+$.
To account for the potential lack of communication, we lift this disjoint union.
This lifted disjoint union is isomorphic to the coalesced sum $\bigoplus_{l \in L} \sembr{A_l}^+_\bot$.
Coalesced sums are coproducts in $\moabcs$, and we define the interpretation using a coalesced sum to make this structure evident.
Explicitly, its elements are $\bot$ and $(k,\upim{a_k})$ for $k \in L$ and $a_k \in \sembr{A_k}^+$.
The client does not know a priori which branch it will take: it must be ready to send negative information for every possible branch.
This justifies \cref{eq:202020}.
\begin{align}
  \sembr{\Xi\vdash\jisst{\Tplus\{l:A_l\}_{l \in L}}}^- &= \prod_{l \in L} \sembr{\Xi\vdash\jisst{A_l}}^-\label[intn]{eq:202020}\\
  \sembr{\Xi\vdash\jisst{\Tplus\{l:A_l\}_{l \in L}}}^+ &= \bigoplus_{l \in L} \sembr{\Xi\vdash\jisst{A_l}}^+_\bot\label[intn]{eq:222222}
\end{align}

\begin{samepage}%
To interpret \rn{$\Tplus R_k$}, we extract from $a^-$ the negative information $a_k^-$ required by the continuation process $P$.
Output on $a^+$ is the label $k$ followed by the output of $P$ on $a^+$:
\begin{equation}
  \label[intn]{eq:52}
  \begin{aligned}
    &\sembr{\jtypem{\Psi}{\Delta}{\tSendL{a}{k}{P}}{a}{\Tplus\{l:A_l\}_{l \in L}}}u\left(\delta^+, \left(a_l^-\right)_{l \in L}\right) = \left(\delta^-, \left(k, \upim{a_k^+}\right)\right)\\
    &\text{where }\sembr{\jtypem{\Psi}{\Delta}{P}{a}{A_k}}u\left(\delta^+, a_k^-\right) = \left(\delta^-, a_k^+\right).
  \end{aligned}
\end{equation}

The client $\tCase{a}{\left\{l \Rightarrow P_l\right\}_{l \in L}}$ blocks until it receives a communication $(k, \upim{a_k^+})$ on $a^+$, and then it takes the branch $P_k$.
\end{samepage}%
To transmit $P_k$'s output $a_k^-$ on $a^-$ back to the provider, we place $a_k^-$ in the $k$ component of the product sent on $a^-$.
The other branches were not taken and produced no communication, so their respective components in $a^-$ are filled with $\bot$.
\begin{equation}
  \label[intn]{eq:53}
  \begin{aligned}
    &\sembr{\jtypem{\Psi}{\Delta,a:\Tplus\{l:A_l\}_{l \in L}}{\tCase{a}{\left\{l \Rightarrow P_l\right\}_{l \in L}}}{c}{C}}u\\
    &= \strictfn_{a^+}\left(\lambda \left(\delta^+, a^+ : \left(k, \upim{a_k^+}\right), c^-\right).\left(\delta^-, a^- : \left(k : a_k^-, l \neq k : \bot\right)_{l \in L}, c^+\right)\right)\\
    &\text{where }\sembr{\jtypem{\Psi}{\Delta,a:A_k}{P_k}{c}{C}}u(\delta^+, a_k^+, c^-) = (\delta^-, a_k^-, c^+)
  \end{aligned}
\end{equation}

\begin{proposition}[restate=etaplus,name={}]
  \label{prop:1253}
  Let $k \in L$.
  If ${\jtypem{\Psi}{\Delta_1}{P}{a}{A_k}}$, and ${\jtypem{\Psi}{a : A_l, \Delta_2}{Q_l}{c}{C}}$ for all $l \in L$, then
  $\tCut{a}{P}{Q_k} \equiv \tCut{a}{\left(\tSendL{a}{k}{P}\right)}{\left(\tCase{a}{\left\{l \Rightarrow Q_l\right\}_{l \in L}}\right)}$.
\end{proposition}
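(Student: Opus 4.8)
The plan is to fix an environment $u \in \sembr{\Psi}$ together with an arbitrary input $(\delta_1^+, \delta_2^+, c^-)$, unfold both sides of the claimed equivalence using the composition clause \cref{eq:11}, and reformulate each trace as the least solution of a feedback system as in \cref{sec:overview-semantics}. On the left, the private channel $a$ carries type $A_k$, so the traced variables range over $\sembr{A_k}^- \times \sembr{A_k}^+$ and satisfy
\begin{align*}
  (\delta_1^-, a^+) &= \sembr{\jtypem{\Psi}{\Delta_1}{P}{a}{A_k}}u(\delta_1^+, a^-),\\
  (\delta_2^-, a^-, c^+) &= \sembr{\jtypem{\Psi}{a : A_k, \Delta_2}{Q_k}{c}{C}}u(\delta_2^+, a^+, c^-).
\end{align*}
On the right, $a$ carries type $\Tplus\{l : A_l\}_{l \in L}$, so by \cref{eq:202020,eq:222222} the traced variables $(\hat a^-, \hat a^+)$ range over the larger space $\prod_{l} \sembr{A_l}^- \times \bigoplus_{l} \sembr{A_l}^+_\bot$.

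The key step is to substitute the explicit clauses \cref{eq:52} for $\tSendL{a}{k}{P}$ and \cref{eq:53} for $\tCase{a}{\{l \Rightarrow Q_l\}_{l \in L}}$ into the right-hand system and read off the shape of its fixed points. By \cref{eq:52} the sender's positive output is unconditionally $(k, \upim{a_k^+})$: it is always tagged with $k$ and never $\bot$, and even when $P$ yields $\bot$ the lift $\upim{\cdot}$ keeps the message present. Hence the strictness guard $\strictfn_{a^+}$ in \cref{eq:53} is always discharged and the case process deterministically commits to branch $Q_k$, emitting negative information only in the $k$-component of $\prod_l \sembr{A_l}^-$ and $\bot$ in every other component. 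Thus \emph{every} fixed point of the right-hand system has the form $\hat a^+ = (k, \upim{\alpha^+})$ and $\hat a^- = (k : \alpha^-,\ l \neq k : \bot)$, and substituting this shape collapses the right-hand equations to exactly the left-hand equations in the variables $(\alpha^+, \alpha^-)$, with identical induced outputs $(\delta_1^-, \delta_2^-, c^+)$.

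It remains to match the \emph{least} solutions, and here the plan is a two-sided comparison rather than a step-indexed induction, which sidesteps a genuine one-step lag: the extra strictness barrier of \cref{eq:53} forces the right-hand stage-$0$ iterate to output $\bot$ on $c^+$, whereas the left-hand $Q_k$ may already read $c^-$ and produce output, so the chains are offset by a stage even though their suprema agree. Let $E$ denote the monotone, $\bot$-preserving embedding $(\alpha^+,\alpha^-) \mapsto \bigl((k,\upim{\alpha^+}),\,(k:\alpha^-,\ l\neq k:\bot)\bigr)$. Running the left least fixed point through $E$ produces a right fixed point (by the collapse above), so the right least fixed point lies below it. Conversely, since every right fixed point has the displayed shape, the right least fixed point equals $E(\beta^+,\beta^-)$ for some $(\beta^+,\beta^-)$ that then solves the left-hand system; hence the left least fixed point lies below $(\beta^+,\beta^-)$, and applying the monotone $E$ shows the embedded left least fixed point lies below the right one. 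The two least fixed points therefore coincide, so both compositions return the same $(\delta_1^-, \delta_2^-, c^+)$.

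The main obstacle is exactly this leastness bookkeeping: one must confirm that the non-$k$ components of $\hat a^-$ are genuinely pinned to $\bot$ throughout (the sender in \cref{eq:52} reads only the $k$-component of $\hat a^-$, and the case in \cref{eq:53} writes $\bot$ to all others, so these components never influence the computation), and that the tagging-and-lifting isomorphism between the $k$-indexed summand of $\bigoplus_l \sembr{A_l}^+_\bot$ and $\sembr{A_k}^+$ respects the order-theoretic structure on which the trace depends. No substitution lemma is needed here, since no functional term is duplicated. As $u$ and $(\delta_1^+, \delta_2^+, c^-)$ were arbitrary, the denotations agree and $\tCut{a}{P}{Q_k} \equiv \tCut{a}{(\tSendL{a}{k}{P})}{(\tCase{a}{\{l \Rightarrow Q_l\}_{l \in L}})}$.
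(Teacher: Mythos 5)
Your proof is correct, and it rests on the same essential insight as the paper's: do not chase the Kleene iteration (which, as you rightly note, is offset by one stage because of the strictness guard in \cref{eq:53}), but instead compare the fixed-point characterizations of the two traces directly, using the fact that \cref{eq:52} forces every communication on $a^+$ at a fixed point to be $k$-tagged and non-bottom. The technical vehicle differs, though. The paper works with the Knaster--Tarski formulation \eqref{eq:808} of the trace: it forms the sets $L$ and $R$ of \emph{post}-fixed points (tuples with $f(a,x) \sqsubseteq (b,x)$) for the two systems, shows by a two-sided inclusion that their projections onto $\Delta_1^- \times \Delta_2^- \times c^+$ coincide (the case $a^+ = \bot$ on the right is refuted by deriving the contradiction $(k,\upim{a_P^+}) \sqsubseteq \bot$), and concludes because componentwise infima of equal sets agree. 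You instead establish a correspondence between \emph{exact} fixed points of the two systems via the order-embedding $E$ and sandwich the least fixed points using monotonicity of $E$ and minimality. Both are sound; your version yields a cleaner high-level statement (the fixed points literally correspond and the trace is a function of them), while the paper's set-level argument plugs directly into the trace formula it has already established and avoids the extra step of transporting leastness through $E$. Two small inaccuracies in your write-up, neither fatal: $E$ is not $\bot$-preserving in its first component, since $E(\bot,\bot)$ has first coordinate $(k,\upim{\bot}) \neq \bot$ in the coalesced sum $\bigoplus_{l \in L} \sembr{A_l}^+_\bot$ (you never use this property, so nothing breaks); and the strictness guard of \cref{eq:53} is discharged only \emph{at fixed points}, not ``always'' --- which you in fact concede one sentence later when discussing the stage-$0$ iterate.
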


\begin{example}
  We build on \cref{ex:fscd:3}.
  External choices $\Tamp\{l : A_l\}_{l \in L}$ are the polar duals of internal choices.
  Let $A = \Tamp\{\mt{j} : \Tus\Tu, \mt{k} : \Tus\Tu\}$.
  A provider of $A$ receives a label and a synchronizing shift before closing the channel.
  The elements $(l,\upim{\upim{\bot}}) \in \sembr{A}^-$ correspond to receiving the label $l$ over $a$ followed by a shift, while the elements $(l,\upim{\bot})$ correspond to receiving $l$ but no shift.
  The communication on $a^+$ depends on the label $l$ received: the close message is in the $l$ component of the output on $a^+$.
  \begin{align*}
    &\sembr{\jtypem{\cdot}{\cdot}{\tCase{a}{\left\{l \Rightarrow \tRecvS{a}{\tClose{a}}\right\}_{l \in \{\mt{j},\mt{k}\}}}}{a}{A}}\bot(a^- : \alpha)\\
    &=
      \begin{cases}
        (a^+ : (\mt{j} : \ast, \mt{k} : \bot)) & \text{if } \alpha = (\mt{j},\upim{\upim{\bot}})\\
        (a^+ : (\mt{j} : \bot, \mt{k} : \ast)) & \text{if } \alpha = (\mt{k},\upim{\upim{\bot}})\\
        (a^+ : (\mt{j} : \bot, \mt{k} : \bot)) & \text{if } \alpha = (l,\upim{\bot}) \text{ for }l \in \{\mt{j},\mt{k}\}\text{ or if } \alpha = \bot
      \end{cases}
  \end{align*}
\end{example}

\subsection{Recursive Types}
\label{sec:recursive-types}

The substitution property (\cref{prop:11}) forces the denotation of the variable rule \rn{CVar}:
\begin{align}
  \sembr{\Xi,\jisst[p]{\alpha}\vdash\jisst[p]{\alpha}}^q &= \pi^{\Xi,\alpha}_\alpha \quad (q \in \{{-},{+}\})\label[intn]{eq:fscd:7}
\end{align}

We interpret recursive types by parametrized solutions of recursive domain equations.
Every locally continuous functor $G : \moabcs \to \moabcs$ has a canonical fixed point $\FIX(G)$ in $\moabcs$.
Given a locally continuous functor $F : \sembr{\Xi} \times \moabcs \to \moabcs$, the mapping $D \mapsto \FIX(F(D, {-}))$ extends to a locally continuous functor $\sfix{F} : \sembr{\Xi} \to \moabcs$ \cite[Proposition~5.2.7]{abramsky_jung_1995:_domain_theor}.
The fixed-point property $F(D, \sfix{F} D) \cong \sfix{F} D$ is witnessed by a canonical natural isomorphism $\ms{Fold} : F \circ \langle \ms{id}_{\sembr{\Xi}}, \sfix F \rangle \nto \sfix{F}$ with inverse $\ms{Unfold}$.
The rule \rn{C$\rho$} denotes:
\begin{align}
  \sembr{\Xi\vdash\jisst{\Trec{\alpha}{A}}}^p &= \sfix{\left(\sembr{\Xi,\jisst{\alpha}\vdash \jisst{A}}^p\right)}\quad (p \in \{{-},{+}\})\label[intn]{eq:20051}
\end{align}
Let $\ms{Fold}^p \,{:}\, \sembr{\Xi,\jisst{\alpha}\vdash \jisst{A}}^p \circ \langle \ms{id}_{\sembr{\Xi}}, \!\sfix{\left(\sembr{\Xi,\jisst{\alpha}\vdash \jisst{A}}^p\right)} \rangle \nto \sembr{\jstype{\Xi}{\Trec{\alpha}{A}}}^p$ be the canonical natural isomorphism and let $\ms{Unfold}^p$ be its inverse.
These isomorphisms capture the semantic folding and unfolding of recursive types.
Indeed, by \cref{prop:11,eq:fscd:7,eq:20051}, the domain of $\ms{Fold}^p$ is:
\[
  \sembr{\Xi,\jisst{\alpha}\vdash \jisst{A}}^p \circ \left\langle \ms{id}_{\sembr{\Xi}}, \sfix{\left(\sembr{\jstype{\Xi}{\Trec{\alpha}{A}}}^p\right)} \right\rangle = \sembr{\Xi\vdash [\Trec{\alpha}{A}/\alpha]\jisst{A}}^p.
\]

\begin{samepage}
Processes unfold recursive types by transmitting unfold messages.
Semantically, this denotes pre- and post-composition with $\Fold^p$ and $\Unfold^p$.
We interpret \rn{$\rho^+R$} and \rn{$\rho^+L$}~by:
\begin{align}
  \begin{split}
    &\sembr{\jtypem{\Psi}{\Delta}{\tSendU{a}{P}}{a}{\Trec{\alpha}{A}}}u\\
    &= \left(\ms{id} \times \left(a^+ : \ms{Fold}^+\right)\right)
    \circ \sembr{\jtypem{\Psi}{\Delta}{P}{a}{[\Trec{\alpha}{A}/\alpha]A}}u
    \circ \left(\ms{id} \times \left(a^- : \ms{Unfold}^-\right)\right),\label[intn]{eq:fossacs:2}
  \end{split}\\
  \begin{split}
    &\sembr{\jtypem{\Psi}{\Delta, a : \Trec{\alpha}{A}}{\tRecvU{a}{P}}{c}{C}}u\\
    &= \left(\ms{id} \!\times \!\left(a^- : \ms{Fold}^-\right)\right)
    \circ \sembr{\jtypem{\Psi}{\Delta, a : [\Trec{\alpha}{A}/\alpha]A}{P}{c}{C}}u
    \circ \left(\ms{id} \!\times \!\left(a^+ : \ms{Unfold}^+\right)\right)\!.\label[intn]{eq:fossacs:3}
  \end{split}
\end{align}

\begin{proposition}[restate=etafunfold,name={}]
  \label{prop:fscd:18}
  If $\jtypem{\Psi}{\Delta_1}{P}{a}{[\Trec{\alpha}{A}/\alpha]A}$ and ${\jtypem{\Psi}{a : [\Trec{\alpha}{A}/\alpha]A, \Delta_2}{Q}{c}{C}}$, then $\tCut{a}{P}{Q} \equiv \tCut{a}{(\tSendU{a}{P})}{(\tRecvU{a}{Q})}$.
\end{proposition}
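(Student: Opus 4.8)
The plan is to unfold both denotations, recognize the right-hand side as the left-hand side conjugated by the fold/unfold isomorphisms on the communicated channel, and then cancel these isomorphisms using the sliding axiom of the trace. First I would fix an environment $u \in \sembr{\Psi}$ and abbreviate $f = \sembr{\jtypem{\Psi}{\Delta_1}{P}{a}{[\Trec{\alpha}{A}/\alpha]A}}u$ and $g = \sembr{\jtypem{\Psi}{a : [\Trec{\alpha}{A}/\alpha]A, \Delta_2}{Q}{c}{C}}u$. By \cref{eq:11}, the denotation of $\tCut{a}{P}{Q}$ at $u$ is $\Tr{(f \times g)}{a^- \times a^+}$, whose traced object is assembled from the \emph{unfolded} aspects $\sembr{[\Trec{\alpha}{A}/\alpha]A}^-$ and $\sembr{[\Trec{\alpha}{A}/\alpha]A}^+$. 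Applying \cref{eq:11} again to the right-hand side and expanding $\sembr{\tSendU{a}{P}}u$ and $\sembr{\tRecvU{a}{Q}}u$ via \cref{eq:fossacs:2,eq:fossacs:3}, its denotation at $u$ is $\Tr{(f' \times g')}{a^- \times a^+}$ with
\begin{align*}
  f' &= \left(\ms{id} \times \left(a^+ : \Fold^+\right)\right) \circ f \circ \left(\ms{id} \times \left(a^- : \Unfold^-\right)\right),\\
  g' &= \left(\ms{id} \times \left(a^- : \Fold^-\right)\right) \circ g \circ \left(\ms{id} \times \left(a^+ : \Unfold^+\right)\right),
\end{align*}
where now the traced object is assembled from the \emph{folded} aspects $\sembr{\Trec{\alpha}{A}}^-$ and $\sembr{\Trec{\alpha}{A}}^+$.

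The key observation is that $f'$ and $g'$ are obtained from $f$ and $g$ by inserting the mutually inverse isomorphisms $\Fold^p$, $\Unfold^p$ (recall $\Unfold^p = (\Fold^p)^{-1}$) \emph{only} on the $a$-components, \ie exactly on the components being traced; the identity acts on the $\Delta_i$- and $c$-components, which form the non-traced interface. Writing $X$ for the unfolded traced object $\sembr{[\Trec{\alpha}{A}/\alpha]A}^- \times \sembr{[\Trec{\alpha}{A}/\alpha]A}^+$, $X'$ for the folded one $\sembr{\Trec{\alpha}{A}}^- \times \sembr{\Trec{\alpha}{A}}^+$, and $\iota = \Fold^- \times \Fold^+ : X \to X'$ (so $\iota^{-1} = \Unfold^- \times \Unfold^+$), these clauses assemble---after reorganizing the interfaces with the evident symmetry and associativity isomorphisms so that each side has the form $(\text{interface}) \times (\text{traced object})$---into the single identity
\begin{equation*}
  f' \times g' = \left(\ms{id} \times \iota\right) \circ (f \times g) \circ \left(\ms{id} \times \iota^{-1}\right).
\end{equation*}

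With this in hand, I would conclude using the sliding axiom, \ie dinaturality of the trace in its traced object $X$: for any isomorphism $\iota : X \to X'$,
\begin{equation*}
  \Tr{\left(\left(\ms{id} \times \iota\right) \circ (f \times g) \circ \left(\ms{id} \times \iota^{-1}\right)\right)}{X'} = \Tr{(f \times g)}{X}.
\end{equation*}
This rewrites the folded trace $\Tr{(f' \times g')}{X'}$ into the unfolded trace $\Tr{(f \times g)}{X}$, so the two denotations agree at the arbitrary $u$, whence $\tCut{a}{P}{Q} \equiv \tCut{a}{(\tSendU{a}{P})}{(\tRecvU{a}{Q})}$ by \cref{def:3}. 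Intuitively, around the feedback loop $f'$ emits a positive communication already transported along $\Fold^+$ while $g'$ transports it back along $\Unfold^+ = (\Fold^+)^{-1}$ before consuming it, and symmetrically on the negative wire; the isomorphisms cancel, leaving the unfolded feedback dynamics unchanged.

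The main obstacle is purely the bookkeeping of the monoidal structure: the traced object is the product $a^- \times a^+$, and the two isomorphisms $\Fold^+$ and $\Fold^-$ enter through different factors and from different maps---$\Fold^+$ by post-composition in $f'$ with its matching $\Unfold^+$ by pre-composition in $g'$, and dually for the negative wire. Verifying that, after forming $f' \times g'$ and threading the identities on $\Delta_1, \Delta_2, c$ and the required symmetry isomorphisms through, these four insertions assemble into exactly the conjugation $\left(\ms{id} \times \iota\right) \circ (f \times g) \circ \left(\ms{id} \times \iota^{-1}\right)$ demanded by the sliding axiom is the only delicate step; once it is checked, the inverse-isomorphism cancellation and the trace axiom finish the proof.
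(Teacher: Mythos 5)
Your proposal is correct and follows essentially the same route as the paper: expand both sides via \cref{eq:11} and \cref{eq:fossacs:2,eq:fossacs:3}, reassemble $f' \times g'$ as the conjugation of $f \times g$ by $\Fold^{\pm}/\Unfold^{\pm}$ on the traced components only, and cancel via dinaturality of the trace. The only cosmetic difference is that the paper packages the final sliding-and-cancellation step as \cref{prop:fscd:12} (sliding for section--retraction pairs, which subsumes isomorphisms) rather than invoking the sliding axiom inline.
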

\end{samepage}

\subsection{Structural Properties}
\label{sec:struct-prop}

Our semantics respects the exchange rule because we interpret structural contexts as indexed products.
It also respects weakening (\cref{prop:extended:8,prop:extended:9}) and substitution (\cref{prop:11,prop:fscd:19}).
These propositions follow by induction on the derivations.

\begin{proposition}[restate=cohsestyp,name=Coherence of Session Types]
  \label{prop:extended:8}
  If $\Xi \vdash \jisst[q]{A}$, then $\sembr{\Xi,\Theta\vdash \jisst[q]{A}} = \sembr{\Xi \vdash \jisst[q]{A}}^p\pi^{\Xi,\Theta}_\Xi$ for all $p \in \{{-},{+}\}$.
\end{proposition}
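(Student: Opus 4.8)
The plan is to prove, for both aspects $p \in \{-,+\}$ at once, that $\sembr{\Xi,\Theta \vdash \jisst[q]{A}}^p = \sembr{\Xi \vdash \jisst[q]{A}}^p \circ \rho$, where $\rho = \pi^{\Xi,\Theta}_\Xi : \sembr{\Xi,\Theta} \to \sembr{\Xi}$ is the projection functor, by induction on the derivation of $\Xi \vdash \jisst[q]{A}$. Because no defining clause for a positive type mixes the two polarities, the induction hypothesis for both aspects of every immediate subtype is available in each case.

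For the base cases I would first treat the variable rule. By exchange, which holds because contexts denote indexed products over variable names, I may assume the variable $\alpha$ is the one introduced last, so \cref{eq:fscd:7} gives $\sembr{\Xi \vdash \jisst{\alpha}}^p = \pi^{\Xi}_\alpha$ and $\sembr{\Xi,\Theta \vdash \jisst{\alpha}}^p = \pi^{\Xi,\Theta}_\alpha$; the claim then reduces to the identity $\pi^\Xi_\alpha \circ \rho = \pi^{\Xi,\Theta}_\alpha$ on projections, valid since $\alpha \in \Xi$. For the unit type, \cref{eq:1502,eq:1501} make $\sembr{{-} \vdash \jisst{\Tu}}^p$ a constant functor, which is unchanged by precomposition with $\rho$.

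The connective cases $\Tand{\tau}{A}$, $A \Tot B$, $\Tds A$, and $\Tplus\{l : A_l\}_{l\in L}$ are routine. Each defining clause (\cref{eq:15104,eq:15106,eq:13,eq:14,eq:1506,eq:1507,eq:202020,eq:222222}) builds the denotation from those of the immediate subtypes using operations---finite and indexed products, coalesced sums, the lifting functor $({-})_\bot$, and the context-independent factor $\sembr{\tau}$---that are all defined pointwise and hence commute with precomposition by $\rho$. Applying the induction hypothesis to each subtype and pushing $\rho$ through these operations closes each case.

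The main obstacle is the recursive type $\Trec{\alpha}{A}$, where \cref{eq:20051} defines the denotation through the parametrized fixed point $\sfix{({-})}$. Here I would apply the induction hypothesis to the body $\Xi, \alpha \vdash \jisst{A}$, weakened by $\Theta$; using exchange to identify $\sembr{\Xi,\Theta,\alpha}$ with $\sembr{\Xi,\alpha,\Theta}$ and writing $\sembr{\Xi,\alpha} = \sembr{\Xi} \times \moabcs$, this yields the functor equation $\sembr{\Xi,\Theta,\alpha \vdash \jisst{A}}^p = \sembr{\Xi,\alpha \vdash \jisst{A}}^p \circ (\rho \times \ms{id}_{\moabcs})$. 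It then remains to establish the key lemma that $\sfix{({-})}$ commutes with this reparametrization, namely $\sfix{\left(F \circ (\rho \times \ms{id}_{\moabcs})\right)} = \sfix{F} \circ \rho$ for every locally continuous $F : \sembr{\Xi} \times \moabcs \to \moabcs$. I expect this to be the delicate point, since the statement demands equality of functors rather than mere isomorphism. On objects it is immediate, as $\sfix{F}(D) = \FIX(F(D,{-}))$ depends on $D$ only through $F(D,{-})$ and $\left(F \circ (\rho \times \ms{id}_{\moabcs})\right)(D,E,{-}) = F(\rho(D,E),{-})$. For equality on the nose, including the morphism action, I would argue that the canonical solution is determined pointwise in the parameter by the minimal-invariant colimit construction underlying $\FIX$ and $\sfix{({-})}$ \cite[Proposition~5.2.7]{abramsky_jung_1995:_domain_theor}: because $F \circ (\rho \times \ms{id}_{\moabcs})$ factors through $F$ along a projection, the defining colimits and their induced morphisms coincide with those for $F$ reindexed along $\rho$. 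Alternatively, one whiskers the isomorphism $\ms{Fold} : F \circ \langle \ms{id}, \sfix{F}\rangle \nto \sfix{F}$ by $\rho$ to exhibit $\sfix{F} \circ \rho$ as a canonical fixed point of $F \circ (\rho \times \ms{id}_{\moabcs})$ and invokes uniqueness. Combining the lemma with the functor equation gives $\sembr{\Xi,\Theta \vdash \jisst{\Trec{\alpha}{A}}}^p = \sfix{\left(\sembr{\Xi,\alpha\vdash \jisst{A}}^p\right)} \circ \rho = \sembr{\Xi \vdash \jisst{\Trec{\alpha}{A}}}^p \circ \rho$, completing the induction.
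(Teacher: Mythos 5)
Your proposal is correct and follows essentially the same route as the paper: induction on the typing derivation, with the connective cases closed by the fact that each clause is built from operations commuting with precomposition along $\pi^{\Xi,\Theta}_\Xi$ (the paper packages this as naturality of the interpretation in its environment, \cref{prop:fscd:4}), and the recursive-type case closed by the parameter identity $\sfix{(F \circ (\rho \times \ms{id}))} = \sfix{F} \circ \rho$, which is exactly the naturality of the dagger that the paper invokes (citing Bloom and \'Esik) in its \rn{C$\rho$} case. Your added discussion of why this identity holds on the nose is a reasonable elaboration of what the paper takes as given, but it does not change the structure of the argument.
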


\begin{proposition}[restate=cohterpro,name=Coherence of Terms and Processes]
  \label{prop:extended:9}
  If $\jtypef{\Psi}{M}{\tau}$, then $\sembr{\jtypef{\Phi,\Psi}{M}{\tau}} = \sembr{\jtypef{\Psi}{M}{\tau}} \circ \pi^{\Phi,\Psi}_\Psi$.
  If ${\jtypem{\Psi\!}{\!\Delta\!}{\!P\!}{\!a\!}{\!A}}$, then $\sembr{\jtypem{\Phi,\Psi\!}{\!\Delta\!}{\!P\!}{\!a\!}{\!A}}\,{=}\,\sembr{\jtypem{\Psi\!}{\!\Delta\!}{\!P\!}{\!a\!}{\!A}} \circ \pi^{\Phi,\Psi}_\Psi$.
\end{proposition}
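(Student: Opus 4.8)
The plan is to establish both statements by a single simultaneous induction on the two typing derivations, since the rules \rn{I-\{\}} and \rn{E-\{\}} pass between the functional-term judgment and the process judgment, so neither half can be proved in isolation. Write $\pi$ for the projection $\pi^{\Phi,\Psi}_{\Psi} : \sembr{\Phi,\Psi} \to \sembr{\Psi}$ that drops the $\Phi$-components (and, for a derivation whose context has been extended by a binder, for the corresponding projection on the extended context). The organizing observation is that in every semantic clause the environment $u$ is consumed in one of three ways: (i) it is projected to a single component, as in \rn{F-Var}; (ii) it is handed unchanged to the denotations of the immediate subderivations, which are then combined by an environment-independent operation (function composition and application, products, lifting via $\up$/$\down$, the operator $\strictfn$, the trace $\Trop$, or pre-/post-composition with the isomorphisms $\Fold^p$, $\Unfold^p$); or (iii) it is extended to $\upd{u}{x \mapsto v}$ before being passed to a subderivation. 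For modes (i) and (ii) the result is immediate: projections compose ($\pi^{\Phi,\Psi,x}_{x} = \pi^{\Psi,x}_{x} \circ \pi^{\Phi,\Psi,x}_{\Psi,x}$, and analogously componentwise), and each combining operation is applied only after the environment has been supplied, so that evaluating the weakened denotation at $u$ agrees with evaluating the unweakened one at $\pi u$ once the induction hypotheses are invoked on the subderivations.

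Mode (iii) is where the only genuine work lies; it occurs exactly in \rn{F-Fun}, \rn{F-Fix}, and \rn{$\Tand{}{} L$}, the three rules that extend the functional context by a fresh $x : \tau$. Here I would first record the commutation lemma
\[
  \pi^{\Phi,\Psi,x}_{\Psi,x}\,\upd{u}{x \mapsto v} = \upd{\pi u}{x \mapsto v},
\]
valid for $x$ fresh for $\Phi$ (arranged by $\alpha$-renaming), which says that dropping the $\Phi$-components commutes with extending by the $x$-binding; under the identification $\sembr{\Phi,\Psi,x:\tau} = \sembr{\Phi,\Psi} \times \sembr{\tau}$ it is precisely $\pi^{\Phi,\Psi,x}_{\Psi,x} = \pi \times \ms{id}_{\sembr{\tau}}$. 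For \rn{F-Fun}, applying the induction hypothesis to the body $\jtypef{\Psi,x:\tau}{M}{\sigma}$ (weakened by the same $\Phi$) together with this lemma yields $\sembr{\jtypef{\Phi,\Psi}{\lambda x.M}{\tau\to\sigma}}u = \sembr{\jtypef{\Psi}{\lambda x.M}{\tau\to\sigma}}(\pi u)$, and \rn{$\Tand{}{} L$} is identical in shape. For \rn{F-Fix} one additionally needs $\sfix{(g \circ (\pi \times \ms{id}))} = \sfix{g} \circ \pi$, which is immediate from $\sfix{f}(a) = \fix(\lambda x.f(a,x))$ and is just naturality of the parametrized fixed-point operator in its parameter.

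The remaining cases are bookkeeping within modes (i) and (ii). The cross-rules \rn{I-\{\}} and \rn{E-\{\}} follow from \cref{eq:191919,eq:115} and the induction hypothesis on the embedded process (resp.\ term), which is what closes the simultaneous induction. For \rn{Cut} one applies the induction hypothesis to both premises and uses that the trace in \cref{eq:11} is taken at each fixed environment value; \rn{Fwd} and the rules that bind only channels (\rn{$\Tot L$}, \rn{$\Tu L$}, the shift and choice rules, and the $\rho$-rules) leave $u$ untouched, so mode (ii) applies directly. I expect the main obstacle to be administrative rather than mathematical: maintaining the freshness side-conditions so that the commutation lemma is well-typed in the binder cases, and threading the \emph{same} $\Phi$ through both halves of the induction so that the functional and process statements genuinely support one another across \rn{I-\{\}} and \rn{E-\{\}}.
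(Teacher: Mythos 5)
Your proof is correct and follows essentially the same route as the paper's: an induction over both derivations simultaneously, in which the non-binding rules commute with $\pi^{\Phi,\Psi}_{\Psi}$ because the combining operation is applied only after the environment has been supplied (the paper packages exactly this observation as ``naturality in the environment'', \cref{prop:fscd:2}, and then cites it in a uniform ``standard proof''), while the binding rules are handled by the identity $\pi^{\Phi,\Psi,x}_{\Psi,x} = \pi^{\Phi,\Psi}_{\Psi} \times \ms{id}_{\sembr{\tau}}$ together with naturality of $\sfix{(\cdot)}$ in its parameter, just as in the paper's \rn{F-Fix} case. The only nit is that \rn{${\Timp{}{}} R$} also extends the functional context by $x : \tau$ and so belongs to your mode (iii) alongside \rn{F-Fun}, \rn{F-Fix}, and \rn{$\Tand{}{} L$}, but it is treated identically, so nothing is lost.
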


\begin{proposition}[restate=ssst,name=Semantic Substitution of Session Types]\label{prop:11}
  Let $\Xi = \jisst[p_1]{\alpha_1},\dotsc,\jisst[p_n]{\alpha_n}$.
  For all $p \in \{{-},{+}\}$ and all choices of types $\Theta \vdash \jisst[p_i]{A_i}$ ($1 \leq i \leq n$), if $\Xi \vdash \jisst[q]{B}$, then
  \[
    \sembr{\Theta \vdash \jisst[q]{[\vec A/\vec \alpha]B}}^p = \sembr{\Xi \vdash \jisst[q]{B}}^p \circ \langle \sembr{\Theta \vdash  \jisst[p_i]{A_i}}^p \mid 1 \leq i \leq n \rangle.
  \]
\end{proposition}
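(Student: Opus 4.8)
The plan is to argue by induction on the derivation of $\Xi \vdash \jisst[q]{B}$, equivalently on the structure of $B$. Throughout I abbreviate the tupling functor on the right-hand side as $H = \langle \sembr{\Theta \vdash \jisst[p_i]{A_i}}^p \mid 1 \leq i \leq n \rangle : \sembr{\Theta} \to \sembr{\Xi}$, so that the goal reads $\sembr{\Theta \vdash \jisst[q]{[\vec A/\vec\alpha]B}}^p = \sembr{\Xi \vdash \jisst[q]{B}}^p \circ H$. Since both sides are locally continuous functors $\sembr{\Theta} \to \moabcs$, this is an equality of functors, to be checked on objects and on morphisms; in every case it reduces to the observation that the functorial operations used to interpret a type former commute with precomposition by $H$.

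For the base cases: when $B = \alpha_j$ is a variable we have $[\vec A/\vec\alpha]\alpha_j = A_j$ and $q = p_j$, and the variable clause \cref{eq:fscd:7} gives $\sembr{\Xi \vdash \jisst{\alpha_j}}^p = \pi_{\alpha_j}$, whence $\pi_{\alpha_j} \circ H = \sembr{\Theta \vdash \jisst[p_j]{A_j}}^p$, which is exactly the left-hand side. When $B$ is formed by a non-recursive former ($\Tu$, $\Tand{\tau}{}$, $\Tot$, $\Tds{}$, internal choice, and their value-carrying variants), its denotation is assembled from constant functors (including the closed $\sembr{\tau}$, which substitution leaves untouched), binary and indexed products, coalesced sums $\bigoplus$, and the lifting functor $({-})_\bot$, all applied to the interpretations of the immediate subtypes. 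Because substitution is compositional and each of these operations commutes with precomposition by $H$ (for instance $((G_1 \times G_2) \circ H) = (G_1 \circ H) \times (G_2 \circ H)$, and similarly for $({-})_\bot$ and $\bigoplus$), the inductive hypotheses applied to the subtypes close these cases after a routine rearrangement.

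The main obstacle is the recursive case $B = \Trec{\beta}{A}$. Writing $F = \sembr{\Xi, \jisst{\beta} \vdash \jisst{A}}^p : \sembr{\Xi} \times \moabcs \to \moabcs$, clause \cref{eq:20051} gives $\sembr{\Xi \vdash \jisst{\Trec{\beta}{A}}}^p = \sfix{F}$, and, choosing $\beta$ fresh for the $A_i$, we have $[\vec A/\vec\alpha]\Trec{\beta}{A} = \Trec{\beta}{[\vec A/\vec\alpha]A}$. Applying the inductive hypothesis to the body $\Xi, \jisst{\beta} \vdash \jisst{A}$ under the substitution extended by the identity on $\beta$ yields $\sembr{\Theta, \jisst{\beta} \vdash \jisst{[\vec A/\vec\alpha]A}}^p = F \circ (H \times \ms{id}_{\moabcs})$, so it remains to establish the fixed-point fusion law
\[
  \sfix{\left(F \circ (H \times \ms{id})\right)} = \sfix{F} \circ H,
\]
asserting that the parametrized canonical solution commutes on the nose with reindexing of its parameters. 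I would prove this from the construction of $\FIX$ as the bilimit of an initial sequence: for each object $D$ of $\sembr{\Theta}$ the endofunctors $\left(F \circ (H \times \ms{id})\right)(D, {-})$ and $F(H D, {-})$ are literally equal on $\moabcs$, so their initial sequences, hence their colimiting bilimits, coincide, giving $\sfix{\left(F \circ (H \times \ms{id})\right)}(D) = \FIX(F(H D, {-})) = \sfix{F}(H D)$; equality of the morphism parts follows because the canonical cocones are natural in the parameter and the $\Fold$/$\Unfold$ isomorphisms are chosen uniformly, so precomposition with $H$ carries the data for one side to that of the other. The delicate point is that this must be an \emph{equality} of functors, not merely a natural isomorphism, which is precisely why the argument leans on the explicit colimit construction of $\FIX$ rather than on its universal property up to isomorphism. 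The negative-aspect cases and the omitted polar-dual formers follow by the same induction using their respective clauses, which is what justifies quantifying uniformly over $p \in \{{-},{+}\}$.
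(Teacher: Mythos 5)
Your proposal is correct and follows essentially the same route as the paper: the paper likewise proceeds by induction on the derivation (packaged via the ``natural in its environment'' machinery of \cref{prop:fscd:4} and the context-morphism formulation in \cref{prop:fscd:7}), with every non-recursive former dispatched by the observation that products, coalesced sums, and lifting commute with precomposition, and with the recursive case resting on exactly the fusion law $\sfix{(F \circ (H \times \ms{id}))} = \sfix{F} \circ H$ that you identify. The only differences are bookkeeping: the paper invokes this law as the parameter identity for the dagger (citing Bloom--{\'E}sik and \cite[Proposition~5.2.7]{abramsky_jung_1995:_domain_theor}) rather than re-deriving it from the bilimit construction as you sketch, and it handles your ``substitution extended by the identity on $\beta$'' step explicitly via weakening of context morphisms (\cref{lemma:fscd:14}).
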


\begin{proposition}[restate=ssft,name=Semantic Substitution of Functional Terms]
  \label{prop:fscd:19}
  Let $\Psi = {x_1:\tau_1},\dotsc,{x_n:\tau_n}$.
  For all choices of terms $\jtypef{\Phi}{M_i}{\tau_i}$ ($1 \leq i \leq n$),
  if $\jtypef{\Psi}{N}{\tau}$ and $\jtypem{\Psi}{\Delta}{P}{c}{C}$, then
  \begin{align*}
    \sembr{\jtypef{\Phi}{[\vec{M}/\vec{x}]N}{\tau}} &= \sembr{\jtypef{\Psi}{N}{\tau}} \circ \langle \sembr{\jtypef{\Phi}{M_i}{\tau_i}} \mid 1 \leq i \leq n \rangle,\\
    \sembr{\jtypem{\Phi}{\Delta}{[\vec{M}/\vec{x}]P}{c}{C}} &= \sembr{\jtypem{\Psi}{\Delta}{P}{c}{C}} \circ \langle \sembr{\jtypef{\Phi}{M_i}{\tau_i}} \mid 1 \leq i \leq n \rangle.
  \end{align*}
\end{proposition}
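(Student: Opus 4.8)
The plan is to prove both equalities simultaneously by mutual induction on the derivations of $\jtypef{\Psi}{N}{\tau}$ and $\jtypem{\Psi}{\Delta}{P}{c}{C}$, since quoted processes occur inside functional terms (rule \rn{I-\{\}}) while functional terms occur inside processes (rules \rn{E-\{\}}, \rn{$\Tand{}{} R$}, and \rn{$\Tand{}{} L$}). Writing $s = \langle \sembr{\jtypef{\Phi}{M_i}{\tau_i}} \mid 1 \leq i \leq n \rangle : \sembr{\Phi} \to \sembr{\Psi}$ for the substitution morphism into the product $\sembr{\Psi} = \prod_i \sembr{\tau_i}$, it suffices to show, for every $w \in \sembr{\Phi}$, that $\sembr{\jtypef{\Phi}{[\vec M/\vec x]N}{\tau}}w = \sembr{\jtypef{\Psi}{N}{\tau}}(s(w))$ and $\sembr{\jtypem{\Phi}{\Delta}{[\vec M/\vec x]P}{c}{C}}w = \sembr{\jtypem{\Psi}{\Delta}{P}{c}{C}}(s(w))$. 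The base case is the variable $N = x_j$: then $[\vec M/\vec x]x_j = M_j$, and by \cref{eq:39} the right-hand side is $\pi_{x_j}(s(w)) = \sembr{\jtypef{\Phi}{M_j}{\tau_j}}w$ by the universal property of the product defining $s$.

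The non-binding cases unfold the relevant semantic clause and invoke the induction hypotheses directly, since the environment is threaded unchanged into every subderivation. Application (\cref{eq:42}) uses the two functional hypotheses; value sending (\cref{eq:1005}) and spawning (\cref{eq:115}) use the functional hypothesis on the transmitted or unquoted term together with the process hypothesis on the continuation; and forwarding, closing, waiting, and the various channel-, shift-, choice-, and unfold-transmission rules merely push $s$ through a fixed composite of projections, liftings, $\strictfn$, and $\Fold$/$\Unfold$ isomorphisms, using the subderivation hypotheses. The composition rule \rn{Cut} (\cref{eq:11}) deserves a remark: its trace is computed pointwise in the environment, so the process hypotheses for $P$ and $Q$ show that the traced morphism evaluated at $w$ coincides with the one evaluated at $s(w)$, and the equality follows.

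The cases needing genuine care are the binding constructs---abstraction (\cref{eq:41}), the fixed point (\cref{eq:28}), and value reception (\cref{eq:1006})---each of which binds a fresh functional variable $y : \sigma$ and recurses on a subterm typed in the extended context $\Psi, y:\sigma$. Here I extend $s$ to $s' = s \times \ms{id}_{\sembr{\sigma}} : \sembr{\Phi, y:\sigma} \to \sembr{\Psi, y:\sigma}$, so that $s'(\upd{w}{y \mapsto v}) = \upd{s(w)}{y \mapsto v}$. To apply the induction hypothesis under the extended context I must regard each $M_i$ as typed in $\Phi, y:\sigma$, and the coherence property \cref{prop:extended:9} guarantees that $\sembr{\jtypef{\Phi,y:\sigma}{M_i}{\tau_i}} = \sembr{\jtypef{\Phi}{M_i}{\tau_i}} \circ \pi^{\Phi,y}_{\Phi}$, so the denotations of the $M_i$ are unchanged and $s'$ is precisely the substitution morphism attached to the subderivation. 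With this in hand, abstraction and value reception follow at once from \cref{eq:41,eq:1006}, while the fixed-point case follows from \cref{eq:28} after observing that the functions $v \mapsto \sembr{\jtypef{\Phi,y:\sigma}{[\vec M/\vec x]N'}{\sigma}}\upd{w}{y \mapsto v}$ and $v \mapsto \sembr{\jtypef{\Psi,y:\sigma}{N'}{\sigma}}\upd{s(w)}{y \mapsto v}$ agree pointwise, whence their least fixed points coincide. The main obstacle is thus not any subtle property of the fixed-point operator but the disciplined handling of the extended substitution morphism, and the consistent appeal to coherence, in every binding case carried through the mutual induction.
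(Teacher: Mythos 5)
Your proposal is correct and follows essentially the same route as the paper: the paper proves this statement (as \cref{prop:extended:10}) by induction on the typing derivations, packaging your tuple $s$ as an interpreted context morphism, handling all non-binding rules uniformly via the naturality of their interpretations (\cref{prop:fscd:2}), and treating the binding cases \rn{F-Fix}, \rn{F-Fun}, \rn{$\Tand{}{}L$}, \rn{${\Timp{}{}}R$} exactly as you do, by weakening the context morphism to $s \times \ms{id}_{\sembr{\sigma}}$ and appealing to coherence (\cref{prop:extended:9}). The only difference is organizational: you inline the naturality and weakening arguments case by case rather than factoring them into a ``standard proof'' and a weakening lemma for context morphisms.
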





\section{Illustrative Example: Flipping Bit Streams}
\label{sec:bit-streams}

We illustrate our semantics by studying the recursive bit-flipping process from the introduction.
We will show that flipping the bits in a stream twice is equivalent to forwarding it.
The bit stream protocol is specified by the session type $\mt{bits} = \Trec{\beta}{\Tplus \{ \mt{0} : \beta, \mt{1} : \beta \}}$.
It denotes the domains $\sembr{\mt{bits}}^+ = \FIX\left(X \mapsto \left(\left(\mt 0 : X_\bot\right) \oplus \left(\mt 1 : X_\bot \right)\right)\right)$ and $\sembr{\mt{bits}}^- = \{\bot\}$.
Its unfolding is $\mt{BITS} = \Tplus\{\mt{0} : \mt{bits}, \mt{1} : \mt{bits}\}$.
There are canonical isomorphisms
\begin{align*}
  &\ms{Unfold}^+ : \sembr{\mt{bits}}^+ \to \left(\left(\mt{0}:\sembr{\mt{bits}}^+_\bot\right) \oplus \left(\mt{1}:\sembr{\mt{bits}}^+_\bot\right)\right)\\
  &\ms{Unfold}^- : \{\bot\} \to \{(\mt{0} : \bot, \mt{1} : \bot)\}
\end{align*}
with respective inverses $\ms{Fold}^+$ and $\ms{Fold}^-$.
Write $\cons{0}{\alpha}$ and $\cons{1}{\alpha}$ for $\Fold^+((\mt{0},\upim{\alpha}))$ and $\Fold^+((\mt{1},\upim{\alpha}))$, respectively.

We desugar the bit-flipping process.
Write $\tilde l$ for the complement of $l \in \{\mt{0},\mt{1}\}$.
Desugaring the quoted process gives a term $\jtypef{\cdot}{\mt{flip}}{\Tproc{f:\mt{bits}}{b:\mt{bits}}}$ where $\mt{flip}$ is:
\[
  \tFix{F}{
    \tProc{f}{\tSendU{f}{\tRecvU{b}{\tCase{b}{\{l \Rightarrow \tSendL{f}{\tilde l}{\tProc{f}{F}{b}}\}}_{l \in \{\mt{0},\mt{1}\}}}}}{b}
  }.
\]
Its denotation is $\sembr{\jtypef{\cdot}{\mt{flip}}{\Tproc{a : \mt{bits}}{b : \mt{bits}}}}\bot = \up(\fix(\Phi))$ where the function $\Phi : \sembr{b : \mt{bits} \vdash a : \mt{bits}} \to \sembr{b : \mt{bits} \vdash a : \mt{bits}}$ is given by:
\[
  \Phi(r) = \strictfn_{b^+} \left(\begin{aligned}
      \lambda (b^+,\bot).&
      \begin{cases}
        \left(\bot, \cons{1}{a_{\mt 0}^+}\right) & \text{if } b^+ = \cons{0}{b_{\mt 0}^+}\\
        \left(\bot,  \cons{0}{a_{\mt 1}^+}\right) & \text{if } b^+ = \cons{1}{b_{\mt 1}^+}
      \end{cases}\\
      &\quad\text{where }r(b_l^+, \bot) = (\bot, a_l^+) \text{ for } l \in L
    \end{aligned}
  \right)
\]

Consider the composition $\jtypem{\cdot}{a : \mt{bits}}{\tProc{t}{\mt{flip}}{a}; \tProc{b}{\mt{flip}}{t}}{b}{\mt{bits}}$ of two bit-flipping processes.
By the Knaster-Tarski-style formulation of the trace operator,
$\sembr{\jtypem{\cdot}{a : \mt{bits}}{\tProc{t}{\mt{flip}}{a}; \tProc{b}{\mt{flip}}{t}}{b}{\mt{bits}}}\bot = \fix(\Phi) \circ \sigma \circ \fix(\Phi)$ where $\sigma$ is the obvious relabelling.
We want to show that the composition is equivalent to forwarding.
By \cref{eq:10}, this means showing for all $(a^+,\bot) \in \sembr{a : \mt{bits}}^+ \times \sembr{b : \mt{bits}}^-$ that:
\begin{equation}
  \sembr{\jtypem{\cdot}{a : \mt{bits}}{\tProc{t}{\mt{flip}}{a}; \tProc{b}{\mt{flip}}{t}}{b}{\mt{bits}}}\bot(a^+,\bot) = (\bot, a^+).\label{eq:fscd:10}
\end{equation}

Given an $a \in \sembr{\mt{bits}}^+$, let $\fd a$ be given by $\left(\fix(\Phi) \circ \sigma \circ \fix(\Phi)\right)(a, \bot) = (\bot, \fd a)$.
It is easy to check that $\fd \bot = \bot$, $\fd{(\cons{0}{\alpha})} = \cons{0}{\fd \alpha}$, and $\fd{(\cons{1}{\alpha})} = \cons{1}{\fd \alpha}$ for all $\alpha \in \sembr{\mt{bits}}^+$.

To show \cref{eq:fscd:10}, we must show that $a = \fd a$ for all $a \in \sembr{\mt{bits}}^+$.
We use a co\-induction principle by Pitts~\cite{pitts_1994:_co_induc_princ}.
Given a relation $\R$ on $\sembr{\mt{bits}}^+_\bot$, let $\Rh$ be the relation on $\sembr{\mt{BITS}}^+_\bot$ where $a \Rh b$ if and~only~if for all $\alpha \in \sembr{\mt{bits}}^+$:
\begin{itemize}
\item if $a = \upim{(\mt{0}, \upim{\alpha})}$, then $b = \upim{(\mt{0}, \upim{\beta})}$ for some $\beta \in \sembr{\mt{bits}}^+$ with $\upim{\alpha} \R \upim{\beta}$;
\item if $a = \upim{(\mt{1}, \upim{\alpha})}$, then $b = \upim{(\mt{1}, \upim{\beta})}$ for some $\beta \in \sembr{\mt{bits}}^+$ with $\upim{\alpha} \R \upim{\beta}$.
\end{itemize}
The relation $\R$ is a \defin{simulation} if $a \R b$ implies $\ms{Unfold}^+_\bot(a) \Rh \ms{Unfold}^+_\bot(b)$.
By \cite[Theorem~2.5]{pitts_1994:_co_induc_princ}, $a \sqsubseteq b$ in $\sembr{\mt{bits}}^+_\bot$ if and only if there exists a simulation $\R$ such that $a \R b$.

Let ${\R} = \left\{ (\upim{\bot},\upim{\bot}), (\upim{\cons{0}{\alpha}}, \upim{\cons{0}{\fd \alpha}}), (\upim{\cons{1}{\alpha}}, \upim{\cons{1}{\fd \alpha}}) \mid \alpha \in \sembr{\mt{bits}}^+ \right\} \subseteq \sembr{\mt{bits}}^+_\bot \times \sembr{\mt{bits}}^+_\bot$.
A case analysis shows that $\R$ is a simulation.
To show that $a \sqsubseteq \fd a$ for $a \in \sembr{\mt{bits}}^+$, it is sufficient to show that $\upim{a} \R \upim{\fd a}$.
If $a = \bot$, then $\fd a = \bot$ by strictness.
If $a = \cons{0}{\alpha}$, then $\fd a = \cons{0}{\fd \alpha}$.
If $a = \cons{1}{\alpha}$, then $\fd a = \cons{1}{\fd \alpha}$.
In all cases, $\upim{a} \R \upim{\fd a}$ by definition of $\R$.
It follows that $a \sqsubseteq \fd a$.
The relation $\R$ is clearly a simulation if and only if $\op\R$ is a simulation, so $\fd a \mathrel{\op\R} a$ and $\fd a \sqsubseteq a$.
We conclude $a = \fd a$ as desired.
It follows that flipping a stream twice is equivalent to forwarding (the identity process).


\section{Related and Future Work}
\label{sec:related-work}

Honda~\citeN{honda_1993:_types_dyadic_inter} and Takeuchi et~al.~\citeN{takeuchi_1994:_inter_based_languag} introduced session types to describe sessions of interaction.
Caires and Pfenning~\citeN{caires_pfenning_2010:_session_types_intuit_linear_propos} observed a proofs-as-programs correspondence between the session-typed $\pi$-calculus and intuitionistic linear logic, where the \rn{Cut} rule captures process communication.
Toninho et~al.~\citeN{toninho_2013:_higher_order_proces_funct_session} built on this correspondence and introduced SILL's monadic integration between functional and synchronous message-passing programming.
They specified SILL's operational behaviour using a substructural operational semantics (SSOS).
Gay and Vasconcelos~\citeN{gay_vasconcelos_2009:_linear_type_theor} introduced asynchronous communication for session-typed languages.
They used an operational semantics and buffers to model asynchronicity.
Pfenning and Griffith~\citeN{pfenning_griffith_2015:_polar_subst_session_types} observed that the polarity of a type determines the direction of communication along a channel.
They observed that synchronous communication can be encoded in an asynchronous setting using explicit shift operators.
Polarized SILL's operational behaviour is specified by a SSOS~\citeN{toninho_2013:_higher_order_proces_funct_session,pfenning_griffith_2015:_polar_subst_session_types}.
In ongoing work, we seek to show that our denotational semantics agrees with this SSOS.
Concretely, we extend this SSOS to give polarized SILL an operational notion of observation.
Then, we seek to show that processes are denotationally equivalent if and only if they are observationally equivalent.

Wadler~\citeN{wadler_2014:_propos_as_session} introduced ``Classical Processes'' (CP), a proofs-as-programs interpretation of classical linear logic that builds on the ideas of Caires and Pfenning~\citeN{caires_pfenning_2010:_session_types_intuit_linear_propos}.
CP supports replication but not recursion.
Though CP does not natively support functional programming, Wadler gives a translation for GV, a linear functional language with pairs but no recursion, into CP.
In contrast, polarized SILL uniformly integrates functional programming and message-passing concurrency.
CP has a synchronous communication semantics and does not have an explicit treatment of polarities.
Polarized SILL has an asynchronous communication semantics, and synchronous communication is encoded using polarity shifts.

Atkey~\citeN{atkey_2017:_obser_commun_seman_class_proces} gave a denotational semantics for CP, where types are interpreted as sets and processes are interpreted as relations over these.
Because processes in CP are proof terms for classical linear logic, the interpretation of processes is identical to the relational semantics of proofs in classical linear logic~\cite{barr_1991:_auton_categ_linear_logic}.
Our jump from sets and relations to domains and continuous functions was motivated by two factors.
First, domains provide a natural setting for studying recursion.
Second, we believe that monotonicity and continuity are essential properties for a semantics of processes with infinite data, and it is unclear how to capture these properties in a relational setting.
Atkey interpreted process composition as relational composition.
Our use of traces is more complex, but we believe that known trace identities make it tractable.
We believe that the extra complexity is justified by SILL's more complex behavioural phenomena.

Our semantics generalizes Kahn's stream-based semantics for deterministic networks~\cite{kahn_1974:_seman_simpl_languag_paral_progr}.
A deterministic network is graph whose nodes are deterministic processes, and whose edges are unidirectional channels.
Each channel carries values of a single fixed simple type, e.g., integers or booleans.
Semantically, channels denote domains of sequences of values, and processes denote continuous functions from input channels to output channels.
Our semantics generalizes this to allow for bidirectional, session-typed communication channels.
Satisfactorily generalizing Kahn-style semantics to handle non-determinism is difficult~\cite{broy_1988:_nondet_data_flow_progr,keller_panangaden_1985:_seman_networ_contain_indet_operat,panangaden_1985:_abstr_inter_indet,panangaden_shanbhogue_1992:_expres_power_indet_dataf_primit,stark_1987:_concur_trans_system,stark_1990:_simpl_gener_kahns}, partly due to the Keller~\cite{keller_1977:_denot_model_paral} and Brock-Ackerman~anomalies~\cite{brock_ackerman_1981:_scenar}.

Our process interpretations exist within a ``wave''-style~\cite{abramsky_1996:_retrac_some_paths_proces_algeb} geometry of interaction (GoI) construction~\cite[Definition~2.6]{abramsky_2002:_geomet_inter_linear_combin_algeb}.
Indeed, the objects of the GoI construction $\mathcal{G}(\moabc)$ are pairs $(A^+, A^-)$ of objects $A^+$ and $A^-$ of $\moabc$.
Morphisms $f : (A^+, A^-) \to (B^+, B^-)$ of $\mathcal{G}(\moabc)$ are morphisms $\hat f : A^+ \times B^- \to A^- \times B^+$ of $\moabc$.
Given a morphism $g : (B^+, B^-) \to (C^+, C^-)$, the composition $g \circ f$ is defined by $\Trop^{B^- \times B^+}_{A^+ \times C^-,A^- \times C^+}(\hat g \times \hat f)$.
\Cref{eq:11} is exactly the composition $\sembr{\jtypem{\Psi}{a : A,\Delta_2}{Q}{c}{C}}u \circ \sembr{\jtypem{\Psi}{\Delta_1}{P}{a}{A}}u$ in $\mathcal{G}(\moabc)$.
Though the categorical setting is identical, our goals and approach are different from prior work using the geometry of interaction.
For example, Abramsky and Jagadeesan~\cite{abramsky_jagadeesan_1994:_new_found_geomet_inter} give a type-free interpretation of classical linear logic where all types denote the same ``universal domain''~\cite[p.~69]{abramsky_jagadeesan_1994:_new_found_geomet_inter}.
Abramsky et~al.~\cite{abramsky_2002:_geomet_inter_linear_combin_algeb} use GoI constructions to give an algebraic framework for Girard's Geometry of Interactions~\cite{girard_1989:_geomet_inter,girard_1990:_geomet_inter,girard_1995:_geomet_inter_iii}.
We instead give a semantics that captures the computational aspects of a programming language with recursion.

Castellan and Yoshida~\citeN{castellan_yoshida_2019:_two_sides_same_coin} gave a game semantics interpretation of the session $\pi$-calculus with recursion.
It is fully abstract relative to a barbed congruence notion of behavioural equivalence.
Session types denote event structures that encode games and that are endowed with an $\omega$-cpo structure.
Open types denote continuous maps between these and recursive types denote least fixed points.
Open processes are interpreted as continuous maps that describe strategies.
We conjecture that our semantics could be related via barbed congruence.

Kokke et~al.~\citeN{kokke_2019:_better_late_than_never} introduced ``hypersequent classical processes'' (HCP).
HCP is a revised proofs-as-processes interpretation between classical linear logic and the $\pi$-calculus.
Building on Atkey's~\cite{atkey_2017:_obser_commun_seman_class_proces} semantics for CP, they gave HCP a denotational semantics using Brzozowski derivatives~\cite{brzozowski_1964:_deriv_regul_expres}.
Their semantics is fully abstract relative to their notions of bisimilarity and barbed congruence.
It is unclear how to extend their semantics~to~handle~recursion.




\section*{Acknowledgments}

This work is funded in part by a Natural Sciences and Engineering Research Council of Canada Postgraduate Scholarship.
The author thanks anonymous referees, Robert Atkey, Stephen Brookes, and Frank Pfenning for their comments.

%
%
\bibliographystyle{plainurl}
\bibliography{fscd}

\begin{thebibliography}{10}

\bibitem{abramsky_jagadeesan_1994:_new_found_geomet_inter}
S.~Abramsky and R.~Jagadeesan.
\newblock New foundations for the geometry of interaction.
\newblock {\em Information and Computation}, 111(1):53--119, May 1994.
\newblock \href {https://doi.org/10.1006/inco.1994.1041}
  {\path{doi:10.1006/inco.1994.1041}}.

\bibitem{abramsky_1996:_retrac_some_paths_proces_algeb}
Samson Abramsky.
\newblock Retracing some paths in process algebra.
\newblock In Ugo Montanari and Vladimiro Sassone, editors, {\em Concur '96 :
  7th International Conference on Concurrency Theory}, number 1119 in Lecture
  Notes in Computer Science. Springer-Verlag Berlin Heidelberg, 1996.
\newblock \href {https://doi.org/10.1007/3-540-61604-7}
  {\path{doi:10.1007/3-540-61604-7}}.

\bibitem{abramsky_1996:_inter_categ_found}
Samson Abramsky, Simon~J. Gay, and Rajagopal Nagarajan.
\newblock Interaction categories and the foundation of typed concurrent
  programming.
\newblock In Manfred Broy, editor, {\em Deductive Program Design}, pages
  35--113. Springer-Verlag Berlin Heidelberg, 1996.

\bibitem{abramsky_2002:_geomet_inter_linear_combin_algeb}
Samson Abramsky, Esfandiar Haghverdi, and Philip Scott.
\newblock Geometry of interaction and linear combinatory algebras.
\newblock {\em Mathematical Structures in Computer Science}, 12(5):625--665,
  October 2002.
\newblock \href {https://doi.org/10.1017/s0960129502003730}
  {\path{doi:10.1017/s0960129502003730}}.

\bibitem{abramsky_jung_1995:_domain_theor}
Samson Abramsky and Achim Jung.
\newblock Domain theory.
\newblock In S.~Abramsky, Dov~M. Gabbay, and T.~S.~E. Maibaum, editors, {\em
  Semantic Structures}, volume~3, pages 1--168. Oxford University Press Inc.,
  New York, 1995.

\bibitem{atkey_2017:_obser_commun_seman_class_proces}
Robert Atkey.
\newblock Observed communication semantics for classical processes.
\newblock In Hongseok Yang, editor, {\em Programming Languages and Systems},
  number 10201 in Lecture Notes in Computer Science, pages 56--82, Berlin,
  2017. Springer Berlin Heidelberg.
\newblock \href {https://doi.org/10.1007/978-3-662-54434-1}
  {\path{doi:10.1007/978-3-662-54434-1}}.

\bibitem{barr_1991:_auton_categ_linear_logic}
Michael Barr.
\newblock $\ast$-{A}utonomous categories and linear logic.
\newblock {\em Mathematical Structures in Computer Science}, 1(2):159--178,
  1991.
\newblock \href {https://doi.org/10.1017/s0960129500001274}
  {\path{doi:10.1017/s0960129500001274}}.

\bibitem{barr_wells_1999:_categ_theor_comput_scien}
Michael Barr and Charles Wells.
\newblock {\em Category Theory for Computing Science}.
\newblock Les Publications CRM, 3 edition, 1999.

\bibitem{benton_hyland_2003:_traced_premon_categ}
Nick Benton and Martin Hyland.
\newblock Traced premonoidal categories.
\newblock {\em RAIRO - Theoretical Informatics and Applications},
  37(4):273--299, 2003.
\newblock \href {https://doi.org/10.1051/ita:2003020}
  {\path{doi:10.1051/ita:2003020}}.

\bibitem{bloom_esik_1996:_fixed_point_operat}
Stephen~L. Bloom and Zolt{á}n {É}sik.
\newblock Fixed-point operations on {ccc's}. {P}art i.
\newblock {\em Theoretical Computer Science}, 155(1):1--38, 1996.
\newblock \href {https://doi.org/10.1016/0304-3975(95)00010-0}
  {\path{doi:10.1016/0304-3975(95)00010-0}}.

\bibitem{brock_ackerman_1981:_scenar}
J.~Dean Brock and William~B. Ackerman.
\newblock Scenarios: A model of non-determinate computation.
\newblock In Josep D{\'i}az and Isidro Ramos, editors, {\em Formalization of
  Programming Concepts}, number 107 in Lecture Notes in Computer Science, pages
  252--259. Springer-Verlag Berlin Heidelberg, 1981.
\newblock \href {https://doi.org/10.1007/3-540-10699-5_102}
  {\path{doi:10.1007/3-540-10699-5_102}}.

\bibitem{brookes_1985:_semin_concur}
Stephen~D. Brookes, Andrew~William Roscoe, and Glynn Winskel, editors.
\newblock {\em Seminar on Concurrency}, number 197 in Lecture Notes in Computer
  Science. Springer-Verlag Berlin Heidelberg.
\newblock \href {https://doi.org/10.1007/3-540-15670-4}
  {\path{doi:10.1007/3-540-15670-4}}.

\bibitem{broy_1988:_nondet_data_flow_progr}
Manfred Broy.
\newblock Nondeterministic data flow programs: How to avoid the merge anomaly.
\newblock {\em Science of Computer Programming}, 10(1):65--85, 1988.
\newblock \href {https://doi.org/10.1016/0167-6423(88)90016-0}
  {\path{doi:10.1016/0167-6423(88)90016-0}}.

\bibitem{brzozowski_1964:_deriv_regul_expres}
Janusz~A. Brzozowski.
\newblock Derivatives of regular expressions.
\newblock {\em Journal of the ACM}, 11(4):481--494, 1964.
\newblock \href {https://doi.org/10.1145/321239.321249}
  {\path{doi:10.1145/321239.321249}}.

\bibitem{caires_pfenning_2010:_session_types_intuit_linear_propos}
Lu{í}s Caires and Frank Pfenning.
\newblock Session types as intuitionistic linear propositions.
\newblock In Paul Gastin and François Laroussinie, editors, {\em {CONCUR} 2010
  --- Concurrency Theory}, number 6269 in Lecture Notes in Computer Science,
  pages 222--236. Springer-Verlag Berlin Heidelberg, 2010.
\newblock \href {https://doi.org/10.1007/978-3-642-15375-4_16}
  {\path{doi:10.1007/978-3-642-15375-4_16}}.

\bibitem{castellan_yoshida_2019:_two_sides_same_coin}
Simon Castellan and Nobuko Yoshida.
\newblock Two sides of the same coin: Session types and game semantics: A
  synchronous side and an asynchronous side.
\newblock {\em Proc. ACM Program. Lang.}, 3(POPL):27:1--27:29, January 2019.
\newblock \href {https://doi.org/10.1145/3290340} {\path{doi:10.1145/3290340}}.

\bibitem{crole_1993:_categ_types}
Roy~L. Crole.
\newblock {\em Categories for Types}.
\newblock Cambridge University Press, Cambridge, United Kingdom, 1993.

\bibitem{cazanescu_stefanescu_1990:_towar_new_algeb}
Virgil~Emil Căzănescu and Gheorghe {\c{S}}tefănescu.
\newblock Towards a new algebraic foundation of flowchart scheme theory.
\newblock {\em Fundamenta Informaticae}, 13(2):171--210, June 1990.

\bibitem{etingof_2015:_tensor_categ}
Pavel Etingof, Shlomo Gelaki, Dmitri Nikshych, and Victor Ostrik.
\newblock {\em Tensor Categories}.
\newblock Number 2015 in Mathematical Surveys and Monographs. American
  Mathematical Society, 2015.

\bibitem{fiore_1994:_axiom_domain_theor}
Marcelo~P. Fiore.
\newblock {\em Axiomatic Domain Theory in Categories of Partial Maps}.
\newblock PhD thesis, The University of Edinburgh Department of Computer
  Science, October 1994.

\bibitem{gay_vasconcelos_2009:_linear_type_theor}
Simon~J. Gay and Vasco~T. Vasconcelos.
\newblock Linear type theory for asynchronous session types.
\newblock {\em Journal of Functional Programming}, 20(1):19--50, 2009.
\newblock \href {https://doi.org/10.1017/s0956796809990268}
  {\path{doi:10.1017/s0956796809990268}}.

\bibitem{girard_1989:_geomet_inter}
Jean-Yves Girard.
\newblock Geometry of interation 1: Iterpretation of system {F}.
\newblock In R.~Ferro, C.~Benotto, S.~Valentini, and A.~Zanardo, editors, {\em
  Logic Colloquium '88}, number 127 in Series in Logic and Foundations of
  Mathematics, pages 221--260, Amsterdam, The Netherlands, 1989. Association
  for Symbolic Logic, Elsevier Science Publishers, B.V.

\bibitem{girard_1990:_geomet_inter}
Jean-Yves Girard.
\newblock Geometry of interaction 2: Deadlock-free algorithms.
\newblock In Per Martin-L\"{o}f and Grigori Mints, editors, {\em COLOG-88},
  number 417 in Lecture Notes in Computer Science, pages 76--93.
  Springer-Verlag Berlin Heidelberg, 1990.
\newblock \href {https://doi.org/10.1007/3-540-52335-9}
  {\path{doi:10.1007/3-540-52335-9}}.

\bibitem{girard_1995:_geomet_inter_iii}
Jean-Yves Girard.
\newblock Geometry of interaction {III}: Accommodating the additives.
\newblock In Jean-Yves Girard, Yves Lafont, and Laurent Regnier, editors, {\em
  Advances in Linear Logic}, number 222 in London Mathematical Society Lecture
  Note Series, pages 329--389. Cambridge University Press, Cambridge, United
  Kingdom, 1995.
\newblock \href {https://doi.org/10.1017/CBO9780511629150.017}
  {\path{doi:10.1017/CBO9780511629150.017}}.

\bibitem{gunter_1992:_seman_progr_languag}
Carl~A. Gunter.
\newblock {\em Semantics of Programming Languages}.
\newblock The MIT Press, Cambridge, Massachusetts, 1992.

\bibitem{hasegawa_1999:_recur_cyclic_sharin}
Masahito Hasegawa.
\newblock {\em Recursion from Cyclic Sharing}, chapter~7, pages 83--101.
\newblock Distinguished Dissertations. Springer-Verlag London Limited, June
  1999.
\newblock \href {https://doi.org/10.1007/978-1-4471-0865-8_7}
  {\path{doi:10.1007/978-1-4471-0865-8_7}}.

\bibitem{honda_1993:_types_dyadic_inter}
Kohei Honda.
\newblock Types for dyadic interaction.
\newblock In Eike Best, editor, {\em {CONCUR}'93}, number 715 in Lecture Notes
  in Computer Science, pages 509--523, Berlin, 1993. Springer-Verlag Berlin
  Heidelberg.
\newblock \href {https://doi.org/10.1007/3-540-57208-2_35}
  {\path{doi:10.1007/3-540-57208-2_35}}.

\bibitem{joyal_1996:_traced_monoid_categ}
Andr{\'e} Joyal, Ross Street, and Dominic Verity.
\newblock Traced monoidal categories.
\newblock {\em Mathematical Proceedings of the Cambridge Philosophical
  Society}, 119(3):447--468, April 1996.
\newblock \href {https://doi.org/10.1017/s0305004100074338}
  {\path{doi:10.1017/s0305004100074338}}.

\bibitem{kahn_1974:_seman_simpl_languag_paral_progr}
Gilles Kahn.
\newblock The semantics of a simple language for parallel programming.
\newblock In {\em Information Processing 74}, pages 471--475. North-Holland
  Publishing Company, 1974.

\bibitem{keller_1977:_denot_model_paral}
Robert~M. Keller.
\newblock Denotational models for parallel programs with indeterminate
  operators.
\newblock Technical Report UUCS-77-103, School of Computing, University of
  Utah, 1977.

\bibitem{keller_panangaden_1985:_seman_networ_contain_indet_operat}
Robert~M. Keller and Prakash Panangaden.
\newblock Semantics of networks containing indeterminate operators.
\newblock In Brookes et~al. \cite{brookes_1985:_semin_concur}, pages 479--596.
\newblock \href {https://doi.org/10.1007/3-540-15670-4_23}
  {\path{doi:10.1007/3-540-15670-4_23}}.

\bibitem{kokke_2019:_better_late_than_never}
Wen Kokke, Fabrizio Montesi, and Marco Peressotti.
\newblock Better late than never: A fully-abstract semantics for classical
  processes.
\newblock {\em Proc. ACM Program. Lang.}, 3(POPL):24:1--24:29, January 2019.
\newblock \href {https://doi.org/10.1145/3290337} {\path{doi:10.1145/3290337}}.

\bibitem{milner_1980:_calcul_commun_system}
Robin Milner.
\newblock {\em A Calculus of Communicating Systems}.
\newblock Number~92 in Lecture Notes in Computer Science. Springer-Verlag
  Berlin Heidelberg, 1980.
\newblock \href {https://doi.org/10.1007/3-540-10235-3}
  {\path{doi:10.1007/3-540-10235-3}}.

\bibitem{panangaden_1985:_abstr_inter_indet}
Prakash Panangaden.
\newblock Abstract interpretation and indeterminacy.
\newblock In Brookes et~al. \cite{brookes_1985:_semin_concur}, pages 495--511.
\newblock \href {https://doi.org/10.1007/3-540-15670-4_24}
  {\path{doi:10.1007/3-540-15670-4_24}}.

\bibitem{panangaden_shanbhogue_1992:_expres_power_indet_dataf_primit}
Prakash Panangaden and Vasant Shanbhogue.
\newblock The expressive power of indeterminate dataflow primitives.
\newblock {\em Information and Computation}, 98(1):99--131, 1992.
\newblock \href {https://doi.org/10.1016/0890-5401(92)90043-F}
  {\path{doi:10.1016/0890-5401(92)90043-F}}.

\bibitem{pfenning_griffith_2015:_polar_subst_session_types}
Frank Pfenning and Dennis Griffith.
\newblock Polarized substructural session types.
\newblock In Andrew Pitts, editor, {\em Foundations of Software Science and
  Computation Structures}, number 9034 in Lecture Notes in Computer Science,
  pages 3--32, Berlin Heidelberg, 2015. Springer-Verlag GmbH Berlin Heidelberg.
\newblock \href {https://doi.org/10.1007/978-3-662-46678-0_1}
  {\path{doi:10.1007/978-3-662-46678-0_1}}.

\bibitem{pitts_1994:_co_induc_princ}
Andrew~M Pitts.
\newblock A co-induction principle for recursively defined domains.
\newblock {\em Theoretical Computer Science}, 124(2):195--219, February 1994.
\newblock \href {https://doi.org/10.1016/0304-3975(94)90014-0}
  {\path{doi:10.1016/0304-3975(94)90014-0}}.

\bibitem{perez_2012:_linear_logic_relat}
Jorge~A. P{é}rez, Lu{í}s Caires, Frank Pfenning, and Bernardo Toninho.
\newblock Linear logical relations for session-based concurrency.
\newblock In Helmut Seidl, editor, {\em Programming Languages and Systems},
  number 7211 in Lecture Notes in Computer Science, pages 539--558, Heidelberg,
  2012. Springer-Verlag Berlin Heidelberg.
\newblock \href {https://doi.org/10.1007/978-3-642-28869-2_27}
  {\path{doi:10.1007/978-3-642-28869-2_27}}.

\bibitem{perez_2014:_linear_logic_relat}
Jorge~A. P{é}rez, Lu{í}s Caires, Frank Pfenning, and Bernardo Toninho.
\newblock Linear logical relations and observational equivalences for
  session-based concurrency.
\newblock {\em Information and Computation}, 239:254--302, 2014.
\newblock \href {https://doi.org/10.1016/j.ic.2014.08.001}
  {\path{doi:10.1016/j.ic.2014.08.001}}.

\bibitem{reynolds_2009:_theor_progr_languag}
John~C. Reynolds.
\newblock {\em Theories of Programming Languages}.
\newblock Cambridge University Press, New York, New York, 2009.

\bibitem{schubert_1972:_categ}
Horst Schubert.
\newblock {\em Categories}.
\newblock Springer-Verlag Berlin Heidelberg, 1972.
\newblock \href {https://doi.org/10.1007/978-3-642-65364-3}
  {\path{doi:10.1007/978-3-642-65364-3}}.

\bibitem{simpson_plotkin_2000:_compl_axiom_categ}
Alex Simpson and Gordon Plotkin.
\newblock Complete axioms for categorical fixed-point operators.
\newblock In {\em 15th Annual {IEEE} Symposium on Logic in Computer Science},
  pages 30--41. IEEE Computer Society Technical Committee on Mathematical
  Foundations of Computing, IEEE Computer Society, 2000.
\newblock \href {https://doi.org/10.1109/LICS.2000.855753}
  {\path{doi:10.1109/LICS.2000.855753}}.

\bibitem{stark_1987:_concur_trans_system}
Eugene~W. Stark.
\newblock Concurrent transition system semantics of process networks.
\newblock In {\em POPL'87: 14th {ACM} {SIGACT}-{SIGPLAN} Symposium on
  Principles of Programming Languages}, pages 199--210. ACM SIGPLAN,
  Association for Computing Machinery, 1987.
\newblock \href {https://doi.org/10.1145/41625.41643}
  {\path{doi:10.1145/41625.41643}}.

\bibitem{stark_1990:_simpl_gener_kahns}
Eugene~W. Stark.
\newblock A simple generalization of {K}ahn's principle to indeterminate
  dataflow networks.
\newblock Workshops in Computing, pages 157--174. Springer-Verlag Berlin
  Heidelberg, 1990.
\newblock \href {https://doi.org/10.1007/978-1-4471-3860-0_10}
  {\path{doi:10.1007/978-1-4471-3860-0_10}}.

\bibitem{stoy_1977:_denot_seman}
Joseph~E. Stoy.
\newblock {\em Denotational Semantics: The {Scott-Strachey} Approach to
  Programming Language Theory}.
\newblock MIT Press Series in Computer Science. The MIT Press, Cambridge,
  Massachusetts, 1977.

\bibitem{takeuchi_1994:_inter_based_languag}
Kaku Takeuchi, Kohei Honda, and Makoto Kubo.
\newblock An interaction-based language and its typing system.
\newblock In Costas Halatsis, Dimitrios Maritsas, George Philokyprou, and
  Sergios Theodoridis, editors, {\em {PARLE}'94}, number 10201 in Lecture Notes
  in Computer Science, pages 398--413, Berlin, 1994. Springer-Verlag Berlin
  Heidelberg.
\newblock \href {https://doi.org/10.1007/3-540-58184-7_118}
  {\path{doi:10.1007/3-540-58184-7_118}}.

\bibitem{tennent_1995:_denot_seman}
R.~D. Tennent.
\newblock Denotational semantics.
\newblock In S.~Abramsky, Dov~M. Gabbay, and T.~S.~E. Maibaum, editors, {\em
  Semantic Structures}, volume~3, pages 169--322. Oxford University Press Inc.,
  New York, 1995.

\bibitem{toninho_2015:_logic_found_session_concur_comput}
Bernardo Toninho.
\newblock {\em A Logical Foundation for Session-based Concurrent Computation}.
\newblock PhD thesis, Universidade Nova de Lisboa, 2015.

\bibitem{toninho_2013:_higher_order_proces_funct_session}
Bernardo Toninho, Luis Caires, and Frank Pfenning.
\newblock Higher-order processes, functions, and sessions: A monadic
  integration.
\newblock In Matthias Felleisen and Philippa Gardner, editors, {\em Programming
  Languages and Systems}, number 7792 in Lecture Notes in Computer Science,
  pages 350--369, Berlin, Heidelberg, 2013. Springer Berlin Heidelberg.
\newblock \href {https://doi.org/10.1007/978-3-642-37036-6_20}
  {\path{doi:10.1007/978-3-642-37036-6_20}}.

\bibitem{wadler_2014:_propos_as_session}
Philip Wadler.
\newblock Propositions as sessions.
\newblock {\em Journal of Functional Programming}, 24(2-3):384--418, 2014.
\newblock \href {https://doi.org/10.1017/s095679681400001x}
  {\path{doi:10.1017/s095679681400001x}}.

\end{thebibliography}

\appendix

\section{Rules for Polarized SILL}
\label{sec:rules-polarized-sill}

For ease of reference, we collect all of the rules for polarized SILL in this appendix.

\subsection{Rules for Term Formation}
\label{sec:rules-term-formation}

\[
  \infer[\rn{I-\{\}}]{
    \jtypef{\Psi}{\tProc{a}{P}{\overline{a_i}}}{\Tproc{a:A}{\overline{a_i:A_i}}}
  }{
    \jtypem{\Psi}{\overline{a_i:A_i}}{P}{a}{A}
  }
\]
\[
  \infer[\rn{F-Var}]{\jtypef{\Psi, x: \tau}{x}{\tau}}{\mathstrut}
  \quad
  \infer[\rn{F-Fix}]{
    \jtypef{\Psi}{\tFix{x}{M}}{\tau}
  }{
    \jtypef{\Psi,x:\tau}{M}{\tau}
  }
\]
\[
  \infer[\rn{F-Fun}]{
    \jtypef{\Psi}{\lambda x : \tau.M}{\tau \to \sigma}
  }{
    \jtypef{\Psi, x:\tau}{M}{\sigma}
  }
  \quad
  \infer[\rn{F-App}]{
    \jtypef{\Psi}{MN}{\sigma}
  }{
    \jtypef{\Psi}{M}{\tau \to \sigma}
    &
    \jtypef{\Psi}{N}{\tau}
  }
\]

\subsection{Rules for Process Formation}
\label{sec:rules-proc-form}

\[
  \begin{adjustbox}{max width=\textwidth}
    \infer[\rn{Fwd}]{
      \jtypem{\Psi}{a:A}{\tFwd{b}{a}}{b}{A}
    }{}
    \quad
    \infer[\rn{Cut}]{
      \jtypem{\Psi}{\Delta_1,\Delta_2}{\tCut{a}{P}{Q}}{c}{C}
    }{
      \jtypem{\Psi}{\Delta_1}{P}{a}{A}
      &
      \jtypem{\Psi}{a:A,\Delta_2}{Q}{c}{C}
    }
  \end{adjustbox}
\]
\[
  \begin{adjustbox}{max width=\textwidth}
    \infer[\rn{$\Tu R$}]{
      \jtypem{\Psi}{\cdot}{\tClose a}{a}{\Tu}
    }{}
    \quad
    \infer[\rn{$\Tu L$}]{
      \jtypem{\Psi}{\Delta, a : \Tu}{\tWait{a}{P}}{c}{C}
    }{
      \jtypem{\Psi}{\Delta}{P}{c}{C}
    }
  \end{adjustbox}
\]
\[
  \begin{adjustbox}{max width=\textwidth}
    \infer[\rn{$\Tds{} R$}]{
      \jtypem{\Psi}{\Delta}{\tSendS{a}{P}}{a}{\Tds A}
    }{
      \jtypem{\Psi}{\Delta}{P}{a}{A}
    }
    \quad
    \infer[\rn{$\Tds{} L$}]{
      \jtypem{\Psi}{\Delta,a : \Tds A}{\tRecvS{a}{P}}{c}{C}
    }{
      \jtypem{\Psi}{\Delta,a : A}{P}{c}{C}
    }
  \end{adjustbox}
\]
\[
  \begin{adjustbox}{max width=\textwidth}
    \infer[\rn{$\Tus R$}]{
      \jtypem{\Psi}{\Delta}{\tRecvS{a}{P}}{a}{\Tus{A}}
    }{
      \jtypem{\Psi}{\Delta}{P}{a}{}
    }
    \quad
    \infer[\rn{$\Tus L$}]{
      \jtypem{\Psi}{\Delta,a : \Tus A}{\tSendS{a}{P}}{c}{C}
    }{
      \jtypem{\Psi}{\Delta,a : A}{P}{c}{C}
    }
  \end{adjustbox}
\]
\[
  \begin{adjustbox}{max width=\textwidth}
    \infer[\rn{$\Tplus R_k$}]{
      \jtypem{\Psi}{\Delta}{\tSendL{a}{k}{P}}{a}{{\Tplus\{l:A_l\}}_{l \in L}}
    }{
      \jtypem{\Psi}{\Delta}{P}{a}{A_k}\quad(k \in L)
    }
    \quad
    \infer[\rn{$\Tplus L$}]{
      \jtypem{\Psi}{\Delta,a:{\Tplus\{l : A_l\}}_{l \in L}}{\tCase{a}{\left\{l_l \Rightarrow P_l\right\}_{i\in I}}}{c}{C}
    }{
      \jtypem{\Psi}{\Delta,a:A_l}{P_l}{c}{C}\quad(\forall l \in L)
    }
  \end{adjustbox}
\]
\[
  \begin{adjustbox}{max width=\textwidth}
    \infer[\rn{$\Tamp R$}]{
      \jtypem{\Psi}{\Delta}{\tCase{a}{\left\{l \Rightarrow P_l\right\}_{l \in L}}}{a}{{\Tamp\{l :A_l \}}_{l \in L}}
    }{
      \jtypem{\Psi}{\Delta}{P_l}{a}{A_l}\quad(\forall l \in L)
    }
    \quad
    \infer[\rn{$\Tamp L_k$}]{
      \jtypem{\Psi}{\Delta,a:{\Tamp\{l : A_l\}}_{l \in L}}{\tSendL{a}{k}{P}}{c}{C}
    }{
      \jtypem{\Psi}{\Delta,a:A_k}{P}{c}{C}
      &
      (k \in L)
    }
  \end{adjustbox}
\]
\[
  \begin{adjustbox}{max width=\textwidth}
    \infer[\rn{$\Tot R^*$}]{
      \jtypem{\Psi}{\Delta, b : B}{\tSendC{a}{b}{P}}{a}{B \Tot A}
    }{
      \jtypem{\Psi}{\Delta}{P}{a}{A}
    }
    \quad
    \infer[\rn{$\Tot L$}]{
      \jtypem{\Psi}{\Delta, a : B \Tot A}{\tRecvC{b}{a}{P}}{c}{C}
    }{
      \jtypem{\Psi}{\Delta, a : A, b : B}{P}{c}{C}
    }
  \end{adjustbox}
\]
\[
  \begin{adjustbox}{max width=\textwidth}
    \infer[\rn{${\Tlolly}R$}]{
      \jtypem{\Psi}{\Delta}{\tRecvC{b}{a}{P}}{a}{B \Tlolly A}
    }{
      \jtypem{\Psi}{\Delta, b : B}{P}{a}{A}
    }
    \quad
    \infer[\rn{${\Tlolly}L$}]{
      \jtypem{\Psi}{\Delta, b : B, a : B \Tlolly A}{\tSendC{a}{b}{P}}{c}{C}
    }{
      \jtypem{\Psi}{\Delta,a : A}{P}{c}{C}
    }
  \end{adjustbox}
\]
\[
  \begin{adjustbox}{max width=\textwidth}
    \infer[\rn{$\Tand{}{} R$}]{
      \jtypem{\Psi}{\Delta}{\tSendV{a}{M}{P}}{a}{\Tand{\tau}{A}}
    }{
      \jtypef{\Psi}{M}{\tau}
      &
      \jtypem{\Psi}{\Delta}{P}{a}{A}
    }
    \quad
    \infer[\rn{$\Tand{}{} L$}]{
      \jtypem{\Psi}{\Delta, a:\Tand{\tau}{A}}{\tRecvV{x}{a}{P}}{c}{C}
    }{
      \jtypem{\Psi,x:\tau}{\Delta, a:A}{P}{c}{C}
    }
  \end{adjustbox}
\]
\[
  \begin{adjustbox}{max width=\textwidth}
    \infer[\rn{${\Timp{}{}} R$}]{
      \jtypem{\Psi}{\Delta}{\tRecvV{x}{a}{P}}{a}{\Timp{\tau}{A}}
    }{
      \jtypem{\Psi,x:\tau}{\Delta}{P}{a}{A}
    }
    \quad
    \infer[\rn{${\Timp{}{}} L$}]{
      \jtypem{\Psi}{\Delta,a : \Timp{\tau}{A}}{\tSendV{a}{M}{P}}{c}{C}
    }{
      \jtypef{\Psi}{M}{\tau}
      &
      \jtypem{\Psi}{\Delta, a : A}{P}{c}{C}
    }
  \end{adjustbox}
\]
\[
  \begin{adjustbox}{max width=\textwidth}
    \infer[\rn{$\rho^+R$}]{
      \jtypem{\Psi}{\Delta}{\tSendU{a}{P}}{a}{\Trec{\alpha}{A}}
    }{
      \jtypem{\Psi}{\Delta}{P}{a}{[\Trec{\alpha}{A}/\alpha]A}
      &
      \cdot \vdash \jisst[+]{\Trec{\alpha}{A}}
    }
    \,
    \infer[\rn{$\rho^+L$}]{
      \jtypem{\Psi}{\Delta, a : \Trec{\alpha}{A}}{\tRecvU{a}{P}}{c}{C}
    }{
      \jtypem{\Psi}{\Delta, a : [\Trec{\alpha}{A}/\alpha]A}{P}{c}{C}
      &
      \cdot \vdash \jisst[+]{\Trec{\alpha}{A}}
    }
  \end{adjustbox}
\]
\[
  \begin{adjustbox}{max width=\textwidth}
    \infer[\rn{$\rho^-R$}]{
      \jtypem{\Psi}{\Delta}{\tRecvU{a}{P}}{a}{\Trec{\alpha}{A}}
    }{
      \jtypem{\Psi}{\Delta}{P}{a}{[\Trec{\alpha}{A}/\alpha]A}
      &
      \cdot \vdash \jisst[-]{\Trec{\alpha}{A}}
    }
    \,
    \infer[\rn{$\rho^-L$}]{
      \jtypem{\Psi}{\Delta, a : \Trec{\alpha}{A}}{\tSendU{a}{P}}{c}{C}
    }{
      \jtypem{\Psi}{\Delta, a : [\Trec{\alpha}{A}/\alpha]A}{P}{c}{C}
      &
      \cdot \vdash \jisst[-]{\Trec{\alpha}{A}}
    }
  \end{adjustbox}
\]
\[
  \infer[\rn{E-\{\}}]{
    \jtypem{\Psi}{\overline{a_i:A_i}}{\tProc{a}{M}{\overline a_i}}{a}{A}
  }{
    \jtypef{\Psi}{M}{\Tproc{a:A}{\overline{a_i:A_i}}}
  }
\]

\subsection{Rules for Type Formation}
\label{sec:rules-type-formation}

\[
  \infer[\rn{C$\Tu$}]{\Xi\vdash\jisst[+]{\Tu}}{}
  \quad
  \infer[\rn{CVar}]{\Xi,\jisst[p]{\alpha}\vdash\jisst[p]{\alpha}}{}
  \quad
  \infer[\rn{C$\rho$}]{
    \Xi \vdash \jisst[p]{\Trec{\alpha}{A}}
  }{
    \Xi, \jisst[p]{\alpha} \vdash \jisst[p]{A}
  }
\]
\[
  \infer[\rn{C$\Tds{}$}]{
    \Xi\vdash\jisst[+]{\Tds A}
  }{
    \Xi\vdash\jisst[-]{A}
  }
  \quad
  \infer[\rn{C$\Tus{}$}]{
    \Xi\vdash\jisst[-]{\Tus A}
  }{
    \Xi\vdash\jisst[+]{A}
  }
\]
\[
  \infer[\rn{C$\Tplus$}]{
    \Xi\vdash\jisst[+]{{\Tplus\{l : A_l\}}_{l \in L}}
  }{
    \Xi\vdash\jisst[+]{A_l}\quad(\forall l \in L)
  }
  \quad
  \infer[\rn{C$\Tamp$}]{
    \Xi\vdash\jisst[-]{{\Tamp\{l :A_l \}}_{l \in L}}
  }{
    \Xi\vdash\jisst[-]{A_l}\quad(\forall l \in L)
  }
\]
\[
  \infer[\rn{C$\Tot$}]{
    \Xi\vdash\jisst[+]{A \Tot B}
  }{
    \Xi\vdash\jisst[+]{A}
    &
    \Xi\vdash\jisst[+]{B}
  }
  \quad
  \infer[\rn{C$\Tlolly$}]{
    \Xi\vdash\jisst[-]{A \Tlolly B}
  }{
    \Xi\vdash\jisst[+]{A}
    &
    \Xi\vdash\jisst[-]{B}
  }
\]
\[
  \infer[\rn{C$\Tand{}{}$}]{
    \Xi\vdash\jisst[+]{\Tand{\tau}{A}}
  }{
    \jisft{\tau}
    &
    \Xi\vdash\jisst[+]{A}
  }
  \quad
  \infer[\rn{C$\Timp{}{}$}]{
    \Xi\vdash\jisst[-]{\Timp{\tau}{A}}
  }{
    \jisft{\tau}
    &
    \Xi\vdash\jisst[-]{A}
  }
\]
\[
  \infer[\rn{T\{\}}]{
    \jisft{\Tproc{a_0:A_0}{a_1:A_1,\dotsc,a_n:A_n}}
  }{
    \cdot\vdash\jisst{A_i} \quad(0\leq i \leq n)
  }
  \quad
  \infer[\rn{T$\to$}]{
    \jisft{\tau \to \sigma}
  }{
    \jisft{\tau}
    &
    \jisft{\sigma}
  }
\]


\section{Summary of Interpretations}
\label{sec:summ-interpr}

For ease of reference, we give all of the semantic clauses (including omitted clauses).

\subsection{Clauses for Term Formation (\cref{sec:rules-term-formation})}
\label{sec:claus-term-form}

\begin{description}
\item[Rule \rn{I-\{\}}.] \Cref{eq:191919}:
  \begin{equation*}
    \sembr{\jtypef{\Psi}{\tProc{a}{P}{\overline{a_i}}}{\Tproc{a:A}{\overline{a_i:A_i}}}} = \up \circ \sembr{\jtypem{\Psi}{\overline{a_i:A_i}}{P}{a}{A}}
  \end{equation*}
\item[Rule \rn{F-Var}.] \Cref{eq:39}:
  \begin{align*}
    \sembr{\jtypef{\Psi,x:\tau}{x}{\tau}}u &= \pi^{\Psi,x}_{x}u
  \end{align*}
\item[Rule \rn{F-Fix}.] \Cref{eq:28}:
  \begin{align*}
    \sembr{\jtypef{\Psi}{\tFix{x}{M}}{\tau}}u &= \sfix{\sembr{\jtypef{\Psi,x:\tau}{M}{\tau}}}u
  \end{align*}
\item[Rule \rn{F-Fun}.] \Cref{eq:41}:
  \begin{align*}
    \sembr{\jtypef{\Psi}{\lambda x: \tau.M}{\tau \to \sigma}}u &= \strictfn\left(\lambda v \in \sembr{\tau}.\sembr{\jtypef{\Psi,x:\tau}{M}{\sigma}}\upd{u}{x \mapsto v}\right)
  \end{align*}
\item[Rule \rn{F-App}.] \Cref{eq:42}:
  \begin{align*}
    \sembr{\jtypef{\Psi}{MN}{\sigma}}u &= \sembr{\jtypef{\Psi}{M}{\tau \to \sigma}}u(\sembr{\jtypef{\Psi}{N}{\tau}}u)
  \end{align*}
\end{description}

\subsection{Clauses for Process Formation (\cref{sec:rules-proc-form})}
\label{sec:claus-proc-form}

\begin{description}
\item[Rule \rn{Fwd}.] \Cref{eq:10}:
  \[
    \sembr{\jtypem{\Psi}{a:A}{\tFwd{b}{a}}{b}{A}}u(a^+ : \alpha, b^- : \beta) = (a^- : \beta, b^+ : \alpha)
  \]
\item[Rule \rn{Cut}.] \Cref{eq:11}:
  \begin{equation*}
    \begin{aligned}
      &\sembr{\jtypem{\Psi}{\Delta_1, \Delta_2}{\tCut{a}{P}{Q}}{c}{C}}u\\
      &= \Tr{\left(\sembr{\jtypem{\Psi}{\Delta_1}{P}{a}{A}}u \times \sembr{\jtypem{\Psi}{a : A,\Delta_2}{Q}{c}{C}}u\right)}{a^- \times a^+}
    \end{aligned}
  \end{equation*}
\item[Rule \rn{$\Tu R$}.] \Cref{eq:1509}:
  \[
    \sembr{\jtypem{\Psi}{\cdot}{\tClose a}{a}{\Tu}}u(a^- : \bot) = (a^+ : \ast)
  \]
\item[Rule \rn{$\Tu L$}.] \Cref{eq:21}:
  \begin{equation*}
    \begin{aligned}
      &\sembr{\jtypem{\Psi}{\Delta,a : \Tu}{\tWait{a}{P}}{c}{C}}u = \strictfn_{a^+}\left(\lambda (\delta^+,a^+,c^-).(\delta^-,\bot,c^+)\right)\\
      &\text{where }\sembr{\jtypem{\Psi}{\Delta}{P}{c}{C}}u(\delta^+,c^-) = (\delta^-, c^+)
    \end{aligned}
  \end{equation*}
\item[Rule \rn{$\Tds{} R$}.] \Cref{eq:59}:
  \begin{equation*}
    \begin{aligned}
      &\sembr{\jtypem{\Psi}{\Delta}{\tSendS{a}{P}}{a}{\Tds A}}u\\
      &= \left(\ms{id} \times \left(a^+ : \up\right)\right) \circ \sembr{\jtypem{\Psi}{\Delta}{P}{a}{A}}u
    \end{aligned}
  \end{equation*}
\item[Rule \rn{$\Tds{} L$}.] \Cref{eq:56}:
  \begin{equation*}
    \begin{aligned}
      &\sembr{\jtypem{\Psi}{\Delta,a : \Tds A}{\tRecvS{a}{P}}{c}{C}}u\\
      &= \strictfn_{a^+}\left(\sembr{\jtypem{\Psi}{\Delta, a : A}{P}{c}{C}}u \circ \left(\ms{id} \times \left(a^+ : \down\right)\right)\right)
    \end{aligned}
  \end{equation*}
\item[Rule \rn{$\Tus{} R$}.] Omitted.
  \begin{equation}
    \label[intn]{eq:fscd:3}
    \begin{aligned}
      &\sembr{\jtypem{\Psi}{\Delta}{\tRecvS{a}{P}}{a}{\Tus{A}}}u\\
      &= \strictfn_{a^-}\left(\sembr{\jtypem{\Psi}{\Delta}{P}{a}{A}}u \circ \left(\ms{id} \times \left(a^- : \down\right)\right)\right)
    \end{aligned}
  \end{equation}
\item[Rule \rn{$\Tus{} L$}.] Omitted.
  \begin{equation}
    \label[intn]{eq:fscd:4}
    \begin{aligned}
      &\sembr{\jtypem{\Psi}{\Delta,a : \Tus A}{\tSendS{a}{P}}{c}{C}}u\\
      &= \left(\ms{id} \times \left(a^- : \up\right)\right) \circ \sembr{\jtypem{\Psi}{\Delta,a : A}{P}{c}{C}}u
    \end{aligned}
  \end{equation}
\item[Rule \rn{$\Tplus R_k$}.] \Cref{eq:52}:
  \begin{equation*}
    \begin{aligned}
      &\sembr{\jtypem{\Psi}{\Delta}{\tSendL{a}{k}{P}}{a}{\Tplus\{l:A_l\}_{l \in L}}}u\left(\delta^+, \left(a_l^-\right)_{l \in L}\right) = \left(\delta^-, \left(k, \upim{a_k^+}\right)\right)\\
      &\text{where }\sembr{\jtypem{\Psi}{\Delta}{P}{a}{A_k}}u\left(\delta^+, a_k^-\right) = \left(\delta^-, a_k^+\right)
    \end{aligned}
  \end{equation*}
\item[Rule \rn{$\Tplus L$}.] \Cref{eq:53}:
  \begin{equation*}
    \begin{aligned}
      &\sembr{\jtypem{\Psi}{\Delta,a:\Tplus\{l:A_l\}_{l \in L}}{\tCase{a}{\left\{l \Rightarrow P_l\right\}_{l \in L}}}{c}{C}}u\\
      &= \strictfn_{a^+}\left(\lambda \left(\delta^+, a^+ : \left(k, \upim{a_k^+}\right), c^-\right).\left(\delta^-, a^- : \left(k : a_k^-, l \neq k : \bot\right)_{l \in L}, c^+\right)\right)\\
      &\text{where }\sembr{\jtypem{\Psi}{\Delta,a:A_k}{P_k}{c}{C}}u(\delta^+, a_k^+, c^-) = (\delta^-, a_k^-, c^+)
    \end{aligned}
  \end{equation*}
\item[Rule \rn{$\Tamp R$}.] Omitted.
  \begin{equation}
    \label[intn]{eq:fscd:5}
    \begin{aligned}
      &\sembr{\jtypem{\Psi}{\Delta}{\tCase{a}{\left\{l \Rightarrow P_l\right\}_{l \in L}}}{a}{{\Tamp\{l :A_l \}}_{l \in L}}}u\\
      &=  \strictfn_{a^-}\left(\lambda \left(\delta^+, a^- : \left(k, \upim{a_k^-}\right)\right).\left(\delta^-, a^+ : \left(k : a_k^+, l \neq k : \bot\right)_{l \in L}\right)\right)\\
      &\text{where }\sembr{\jtypem{\Psi}{\Delta}{P_k}{a}{A_k}}u(\delta^+, a_k^-) = (\delta^-, a_k^+)
    \end{aligned}
  \end{equation}
\item[Rule \rn{$\Tamp L_k$}.] Omitted.
  \begin{equation}
    \label[intn]{eq:fscd:6}
    \begin{aligned}
      &\sembr{\jtypem{\Psi}{\Delta,a:{\Tamp\{l : A_l\}}_{l \in L}}{\tSendL{a}{k}{P}}{c}{C}}u\left(\delta^+, \left(a_l^+\right)_{l \in L}, c^-\right) = \left(\delta^-, \left(k, \upim{a_k^-}\right), c^+\right)\\
      &\text{where }\sembr{\jtypem{\Psi}{\Delta,a:A_k}{P}{c}{C}}u\left(\delta^+, a_k^+, c^-\right) = \left(\delta^-, a_k^-, c^+\right)
    \end{aligned}
  \end{equation}
\item[Rule \rn{$\Tot R^*$}.] \Cref{eq:17}:
  \begin{equation*}
    \begin{aligned}
      &\sembr{\jtypem{\Psi}{\Delta, b : B}{\tSendC{a}{b}{P}}{a}{B \Tot A}}u(\delta^+,b^+,(a^-_B,a^-_A)) = \left(\delta^-,a_B^-,\upim{\left(b^+,a_A^+\right)}\right)\\
      &\text{where }\sembr{\jtypem{\Psi}{\Delta}{P}{a}{A}}u(\delta^+,a_A^-) = (\delta^-,a_A^+)
    \end{aligned}
  \end{equation*}
\item[Rule \rn{$\Tot L$}.] \Cref{eq:26}:
  \begin{equation*}
    \begin{aligned}
      &\sembr{\jtypem{\Psi}{\Delta, a : B \Tot A}{\tRecvC{b}{a}{Q}}{c}{C}}u(\delta^+,a^+,c^-)\\
      &= \strictfn_{a^+}\left(\lambda (\delta^+,a^+ : \upim{(a_B^+, a_A^+)},c^-) . (\delta^-,(b^-,a^-),c^+) \right)\\
      &\text{where }\sembr{\jtypem{\Psi}{\Delta, a : A, b : B}{Q}{c}{C}}u(\delta^+,a_A^+,a_B^+,c^-) = (\delta^-,a^-,b^-,c^+)
    \end{aligned}
  \end{equation*}
\item[Rule \rn{${\Tlolly}R$}.] Omitted.
  \begin{equation}
    \label[intn]{eq:fscd:12}
    \begin{aligned}
      &\sembr{\jtypem{\Psi}{\Delta}{\tRecvC{b}{a}{P}}{a}{B \Tlolly A}}u(\delta^+,a^-)\\
      &= \strictfn_{a^-}\left(\lambda (\delta^+,a^- : \upim{(a_B^+, a_A^-)}) . (\delta^+,(b^-,a^+)) \right)\\
      &\text{where }\sembr{\jtypem{\Psi}{\Delta, b : B}{P}{a}{A}}u(\delta^+,a_B^+,a_A^-) = (\delta^-,b^-,a^+)
    \end{aligned}
  \end{equation}
\item[Rule \rn{${\Tlolly}L$}.] Omitted.
  \begin{equation}
    \label[intn]{eq:fscd:28}
    \begin{aligned}
      &\sembr{\jtypem{\Psi}{\Delta, b : B, a : B \Tlolly A}{\tSendC{a}{b}{P}}{c}{C}}u(\delta^+,b^+,(a^-_B,a^+_A),c^-)\\
      &= \left(\delta^-,a_B^-,\upim{\left(b^+,a_A^-\right)},c^+\right)\\
      &\text{where }\sembr{\jtypem{\Psi}{\Delta,a : A}{P}{c}{C}}u(\delta^+,a_A^+,c^-) = (\delta^-,a_A^-,c^+)
    \end{aligned}
  \end{equation}
\item[Rule \rn{$\Tand{}{} R$}.] \Cref{eq:1005}:
  \begin{equation*}
    \begin{aligned}
      &\sembr{\jtypem{\Psi}{\Delta}{\tSendV{a}{M}{P}}{a}{\Tand{\tau}{A}}}u(\delta^+,a^-)\\
      &= \begin{cases}
        \bot & \text{if }\sembr{\jtypef{\Psi}{M}{\tau}}u = \bot\\
        \left(\delta^-,\upim{\left(v,a^+\right)}\right) & \text{if }\sembr{\jtypef{\Psi}{M}{\tau}}u = v \neq \bot
      \end{cases}\\
      &\text{where }\sembr{\jtypem{\Psi}{\Delta}{P}{a}{A}}u(\delta^+,a^-) = (\delta^-, a^+)
    \end{aligned}
  \end{equation*}
\item[Rule \rn{$\Tand{}{} L$}.] \Cref{eq:1006}:
  \begin{equation*}
    \begin{aligned}
      &\sembr{\jtypem{\Psi}{\Delta, a:\Tand{\tau}{A}}{\tRecvV{x}{a}{P}}{c}{C}}u\\
      &= \strictfn_{a^+}\left( \lambda \left(\delta^+,a^+ : \upim{\left(v, \alpha^+\right)},c^-\right) . \right.\\
      &\qquad\quad\qquad\qquad \left. \sembr{\jtypem{\Psi,x:\tau}{\Delta, a:A}{P}{c}{C}}\upd{u}{x \mapsto v}(\delta^+, \alpha^+, c^-) \right)
    \end{aligned}
  \end{equation*}
\item[Rule \rn{${\Timp{}{}} R$}.] Omitted.
  \begin{equation}
    \label[intn]{eq:fscd:30}
    \begin{aligned}
      &\sembr{\jtypem{\Psi}{\Delta}{\tRecvV{x}{a}{P}}{a}{\Timp{\tau}{A}}}u\\
      &= \strictfn_{a^-}\left( \lambda \left(\delta^+,a^- : \upim{\left(v, \alpha^-\right)}\right) .  \sembr{\jtypem{\Psi,x:\tau}{\Delta}{P}{a}{A}}\upd{u}{x \mapsto v}(\delta^+, \alpha^-) \right)
    \end{aligned}
  \end{equation}
\item[Rule \rn{${\Timp{}{}} L$}.] Omitted.
  \begin{equation}
    \label[intn]{eq:fscd:31}
    \begin{aligned}
      &\sembr{\jtypem{\Psi}{\Delta,a : \Timp{\tau}{A}}{\tSendV{a}{M}{P}}{c}{C}}u(\delta^+,a^+,c^-)\\
      &= \begin{cases}
        \bot & \text{if }\sembr{\jtypef{\Psi}{M}{\tau}}u = \bot\\
        \left(\delta^-,\upim{\left(v,a^-\right)}, c^+\right) & \text{if }\sembr{\jtypef{\Psi}{M}{\tau}}u = v \neq \bot
      \end{cases}\\
      &\text{where }\sembr{\jtypem{\Psi}{\Delta, a : A}{P}{c}{C}}u(\delta^+,a^+,c^-) = (\delta^-, a^-,c^+)
    \end{aligned}
  \end{equation}
\item[Rule \rn{$\rho^+R$}.] \Cref{eq:fossacs:2}:
  \begin{equation*}
    \begin{aligned}
      &\sembr{\jtypem{\Psi}{\Delta}{\tSendU{a}{P}}{a}{\Trec{\alpha}{A}}}u\\
      &= \left(\ms{id} \times \left(a^+ : \ms{Fold}\right)\right)
      \circ \sembr{\jtypem{\Psi}{\Delta}{P}{a}{[\Trec{\alpha}{A}/\alpha]A}}u
      \circ \left(\ms{id} \times \left(a^- : \ms{Unfold}\right)\right)
    \end{aligned}
  \end{equation*}
\item[Rule \rn{$\rho^+L$}.] \Cref{eq:fossacs:3}:
  \begin{equation*}
    \begin{aligned}
      &\sembr{\jtypem{\Psi}{\Delta, a : \Trec{\alpha}{A}}{\tRecvU{a}{P}}{c}{C}}u\\
      &= \left(\ms{id} \times \left(a^- : \ms{Fold}\right)\right)
      \circ \sembr{\jtypem{\Psi}{\Delta, a : [\Trec{\alpha}{A}/\alpha]A}{P}{c}{C}}u
      \circ \left(\ms{id} \times \left(a^+ : \ms{Unfold}\right)\right)
    \end{aligned}
  \end{equation*}
\item[Rule \rn{$\rho^-R$}.] Omitted.
  \begin{equation}
    \label[intn]{eq:fscd:50}
    \begin{aligned}
      &\sembr{\jtypem{\Psi}{\Delta}{\tRecvU{a}{P}}{a}{\Trec{\alpha}{A}}}u\\
      &= \left(\ms{id} \times \left(a^+ : \ms{Fold}\right)\right)
      \circ \sembr{\jtypem{\Psi}{\Delta}{P}{a}{[\Trec{\alpha}{A}/\alpha]A}}u
      \circ \left(\ms{id} \times \left(a^- : \ms{Unfold}\right)\right)
    \end{aligned}
  \end{equation}
\item[Rule \rn{$\rho^-L$}.] Omitted.
  \begin{equation}
    \label[intn]{eq:fscd:51}
    \begin{aligned}
      &\sembr{\jtypem{\Psi}{\Delta, a : \Trec{\alpha}{A}}{\tSendU{a}{P}}{c}{C}}u\\
      &= \left(\ms{id} \times \left(a^- : \ms{Fold}\right)\right)
      \circ \sembr{\jtypem{\Psi}{\Delta, a : [\Trec{\alpha}{A}/\alpha]A}{P}{c}{C}}u
      \circ \left(\ms{id} \times \left(a^+ : \ms{Unfold}\right)\right)
    \end{aligned}
  \end{equation}
\item[Rule \rn{E-\{\}}] \Cref{eq:115}:
  \begin{equation*}
    \begin{aligned}
      &\sembr{\jtypem{\Psi}{\overline{a_i:A_i}}{\tProc{a}{M}{\overline a_i}}{a}{A}} = \down \circ \sembr{\jtypef{\Psi}{M}{\Tproc{a:A}{\overline{a_i:A_i}}}}
    \end{aligned}
  \end{equation*}
\end{description}

\subsection{Clauses for Type Formation (\cref{sec:rules-type-formation})}
\label{sec:claus-type-form}

\begin{description}
\item[Rule \rn{C$\Tu$}.] \Cref{eq:1502,eq:1501}:
  \begin{align*}
    \sembr{\Xi \vdash \jisst{\Tu}}^- &= \lambda \xi.\{\bot\}\\
    \sembr{\Xi \vdash \jisst{\Tu}}^+ &= \lambda \xi.\{\ast\}_\bot
  \end{align*}
\item[Rule \rn{CVar}.] \Cref{eq:fscd:7}:
  \begin{align*}
    \sembr{\Xi,\jisst[p]{\alpha}\vdash\jisst[p]{\alpha}}^q &= \pi^{\Xi,\alpha}_\alpha \quad (q \in \{{-},{+}\})
  \end{align*}
\item[Rule \rn{C$\rho$}.] \Cref{eq:20051}:
  \begin{align*}
    \sembr{\Xi\vdash\jisst{\Trec{\alpha}{A}}}^p &= \sfix{\left(\sembr{\Xi,\jisst{\alpha}\vdash \jisst{A}}^p\right)}\quad (p \in \{{-},{+}\})
  \end{align*}
\item[Rule \rn{C$\Tds{}$}.] \Cref{eq:1506,eq:1507}:
  \begin{align*}
    \sembr{\Xi\vdash\jisst{\Tds A}}^- &= \sembr{\Xi\vdash\jisst{A}}^-\\
    \sembr{\Xi\vdash\jisst{\Tds A}}^+ &= \sembr{\Xi\vdash\jisst{A}}^+_\bot
  \end{align*}
\item[Rule \rn{C$\Tus{}$}.] Omitted.
  \begin{align}
    \sembr{\Xi\vdash\jisst{\Tus A}}^- &= \sembr{\Xi\vdash\jisst{A}}^-_\bot\label[intn]{eq:fscd:14}\\
    \sembr{\Xi\vdash\jisst{\Tus A}}^+ &= \sembr{\Xi\vdash\jisst{A}}^+\label[intn]{eq:fscd:15}
  \end{align}
\item[Rule \rn{C$\Tplus$}.] \Cref{eq:202020,eq:222222}:
  \begin{align*}
    \sembr{\Xi\vdash\jisst{\Tplus\{l:A_l\}_{l \in L}}}^- &= \prod_{l \in L} \sembr{\Xi\vdash\jisst{A_l}}^-\\
    \sembr{\Xi\vdash\jisst{\Tplus\{l:A_l\}_{l \in L}}}^+ &= \bigoplus_{l \in L} \sembr{\Xi\vdash\jisst{A_l}}^+_\bot
  \end{align*}
\item[Rule \rn{C$\Tamp$}.] Omitted.
  \begin{align}
    \sembr{\Xi\vdash\jisst{\Tamp\{l:A_l\}_{l \in L}}}^- &= \bigoplus_{l \in L} \sembr{\Xi\vdash\jisst{A_l}}^-_\bot\label[intn]{eq:fscd:19}\\
    \sembr{\Xi\vdash\jisst{\Tamp\{l:A_l\}_{l \in L}}}^+ &= \prod_{l \in L} \sembr{\Xi\vdash\jisst{A_l}}^+\label[intn]{eq:fscd:20}
  \end{align}
\item[Rule \rn{C$\Tot$}.] \Cref{eq:13,eq:14}:
  \begin{align*}
    \sembr{\Xi\vdash\jisst{A \Tot B}}^- &= \sembr{\Xi\vdash\jisst{A}}^- \times \sembr{\Xi\vdash\jisst{B}}^-\\
    \sembr{\Xi\vdash\jisst{A \Tot B}}^+ &= \left(\sembr{\Xi\vdash\jisst{A}}^+ \times \sembr{\Xi\vdash\jisst{B}}^+\right)_\bot
  \end{align*}
\item[Rule \rn{C$\Tlolly$}.] Omitted.
  \begin{align}
    \sembr{\Xi\vdash\jisst{A \Tlolly B}}^- &= \left(\sembr{\Xi\vdash\jisst{A}}^+ \times \sembr{\Xi\vdash\jisst{B}}^-\right)_\bot\label[intn]{eq:1001}\\
    \sembr{\Xi\vdash\jisst{A \Tlolly B}}^+ &= \sembr{\Xi\vdash\jisst{A}}^- \times \sembr{\Xi\vdash\jisst{B}}^+\label[intn]{eq:1002}
  \end{align}
\item[Rule \rn{C$\Tand{}{}$}.] \Cref{eq:15104,eq:15106}:
  \begin{align*}
    \sembr{\Xi\vdash\jisst{\Tand{\tau}{A}}}^- &= \sembr{\Xi\vdash \jisst{A}}^-\\
    \sembr{\Xi\vdash\jisst{\Tand{\tau}{A}}}^+ &= \left(\sembr{\tau}\times\sembr{\Xi\vdash \jisst{A}}^+\right)_\bot
  \end{align*}
\item[Rule \rn{C$\Timp{}{}$}.] Omitted.
  \begin{align}
    \sembr{\Xi\vdash\jisst{\Timp{\tau}{A}}}^- &=  \left(\sembr{\tau}\times\sembr{\Xi\vdash \jisst{A}}^-\right)_\bot\label[intn]{eq:fscd:24}\\
    \sembr{\Xi\vdash\jisst{\Timp{\tau}{A}}}^+ &=  \sembr{\Xi\vdash \jisst{A}}^+\label[intn]{eq:fscd:25}
  \end{align}
\item[Rule \rn{T\{\}}.] \Cref{eq:72}:
  \begin{align*}
    \sembr{\jisft{\Tproc{a:A}{\overline{a_i:A_i}}}} &= \sembr{\overline{a_i:A_i} \vdash a : A}_\bot
  \end{align*}
\item[Rule \rn{T$\to$}.] \Cref{eq:73}:
  \begin{align*}
    \sembr{\jisft{\tau \to \sigma}} &= [\sembr{\jisft\tau} \sto \sembr{\jisft{\sigma}}]
  \end{align*}
\end{description}


\section{Properties of Trace and Parametrized Fixed-Point Operators}
\label{sec:prop-trace-oper}

Traces were first discovered Căzănescu and {\c{S}}tefănescu~\cite[\S~4.3]{cazanescu_stefanescu_1990:_towar_new_algeb} and then independently rediscovered by Joyal et~al.~\cite{joyal_1996:_traced_monoid_categ} in the setting of balanced monoidal categories.
In this section, we define traces in the setting of symmetric monoidal categories.
We related traces to the continuous parametrized fixed-point operator $\sfix{(\cdot)}$ given in \cref{sec:background}.
We then prove various identities that are useful for computing semantic equivalences.

\begin{definition}
A \defin{monoidal category} is a sextuple $(\mb{M}, \otimes, I, \lambda, \rho, \alpha)$ satisfying the pentagon axiom~\cite[diagram~(2.2)]{etingof_2015:_tensor_categ} and the triangle axiom~\cite[diagram (2.10)]{etingof_2015:_tensor_categ}, where
\begin{itemize}
\item $\mb{M}$ is a category
\item $\otimes : \mb{M} \times \mb{M} \to \mb{M}$ is a tensor product on $\mb{M}$
\item $I$ is the unit of the tensor
\item $\lambda : I \times A \nto A$ is a natural isomorphism witnessing that $I$ is the left unit
\item $\rho : A \otimes I \nto A$ is a natural isomorphism witnessing that $I$ is the right unit
\item $\alpha : (A \otimes B) \otimes C \nto A \otimes (B \otimes C)$ is a natural isomorphism witnessing the associativity of the tensor $\otimes$.
\end{itemize}
A monoidal category is \defin{symmetric} if it is additionally equipped by a natural isomorphism $\sigma : A \otimes B \to B \otimes A$ satisfying various other axioms given on \cite[p.~404]{barr_wells_1999:_categ_theor_comput_scien}.\label{def:fscd:1}
\end{definition}

\begin{example}
  The category $\moabc$ is a traced symmetric monoidal category.
  Its tensor product is given by the Cartesian product $\times$ and the terminal object $\top$ is the unit.
\end{example}

\begin{definition}[{\cite[Definition~2.4]{benton_hyland_2003:_traced_premon_categ}}]
  \label{def:proposal:1}
  A \defin{trace} on a symmetric monoidal category $(\mb{M}, \otimes, I, \lambda, \rho, \\\alpha, \sigma)$ is a family of functions
  \[
    \Trop^U_{A,B} : \mb{M}(A \otimes U, B \otimes U) \to \mb{M}(A, B)
  \]
  satisfying the following conditions:
  \begin{enumerate}
  \item Naturality in $A$ (left tightening): if $f : A' \otimes U \to B \otimes U$ and $g : A \to A'$, then
    \[
      \Trop^U_{A,B}\left(f \circ \left(g \otimes \ms{id}_U\right)\right) = \Trop^U_{A',B}(f) \circ g : A \to B.
    \]
  \item Naturality in $B$ (right tightening): if $f : A \otimes U \to B' \otimes U$ and $g : B' \to B$, then
    \[
      \Trop^U_{A,B}\left(\left(g \otimes \ms{id}_U\right) \circ f\right) = g \circ \Trop^U_{A,B'}(f) : A \to B.
    \]
  \item Dinaturality (sliding): if $f : A \otimes U \to B \otimes V$ and $g : V \to U$, then
    \[
      \Trop^U_{A,B}\left(\left(\ms{id}_B \otimes g\right) \circ f\right) = \Trop^V_{A,B}\left(f \circ \left(\ms{id}_A \otimes g\right)\right) : A \to B.
    \]
  \item Action (vanishing): if $f : A \to B$, then
    \[
      \Trop^I_{A,B}\left(\rho^{-1} \circ f \circ \rho\right) = f : A \to B,
    \]
    and if $f : A \otimes (U \otimes V) \to B \otimes (U \otimes V)$, then
    \[
      \Trop^{U \otimes V}_{A,B}(f) = \Trop^U_{A,B}\left(\Trop^V_{A \otimes U, B \otimes U}\left(\alpha^{-1} \circ f \circ \alpha\right)\right).
    \]
  \item Superposing: if $f : A \otimes U \to B \otimes U$, then
    \[
      \Trop^{U}_{C \otimes A, C \otimes B}\left(\alpha^{-1} \circ \left(\ms{id}_C \otimes f\right) \circ \alpha\right) = \ms{id}_C \otimes \Trop^{U}_{A, B}(f) : C \otimes A \to C \otimes B.
    \]
  \item Yanking: for all $U$,
    \[
      \Trop^U_{U,U}\left(\sigma_{U,U}\right) = \ms{id}_U : U \to U.
    \]
  \end{enumerate}
\end{definition}

\begin{example}
  The following defines a trace on the category $\moabc$~\cite[\S~5.6]{abramsky_2002:_geomet_inter_linear_combin_algeb}:
  \[
    \Trop^X_{A,B}(f) = \pi^{B\times X}_B \circ f \circ \left\langle \ms{id}_A, \sfix{\left( \pi^{B\times X}_X \circ f \right)} \right\rangle.
  \]
\end{example}

This trace is defined in terms of the parametrized fixed-point operator of \cref{sec:background}.
Formally, a parametrized fixed-point operator is:

\begin{definition}[{\cite[Definitions~2.2 and~2.4]{simpson_plotkin_2000:_compl_axiom_categ}}]
  \label{def:fscd:2}
  A \defin{parametrized fixed-point operator} on a category $\mb{M}$ is a family of morphisms $\sfix{({\cdot})} : \mb{M}(X \times A \to A) \nto \mb{M}(X \to A)$ satisfying:
  \begin{enumerate}
  \item Naturality: for any $g : X \to Y$ and $f : Y \times A \to A$,
    \[
      \sfix{f} \circ g = \sfix{(f \circ (g \times \ms{id}_A))} : X \to A.
    \]
  \item The parametrized fixed-point property: for any $f : X \times A \to A$,
    \[
      f \circ \langle \ms{id}_X, \sfix{f} \rangle = \sfix{f} : X \to A.
    \]
  \end{enumerate}
  It is a \defin{Conway operator} if it additionally satisfies:
  \begin{enumerate}[resume]
  \item Parameterized dinaturality: for any $f : X \times B \to A$ and $g : X \times A \to B$,
    \[
      f \circ \langle \ms{id}_X, \sfix{\left(g \circ \langle \pi_X, f \rangle \right)} \rangle = \sfix{(f \circ \langle \pi_1, g \rangle)} : X \to A.
    \]
  \item The diagonal property: for any $f : X \times A \times A \to A$,
    \[
      \sfix{(f \circ (\ms{id}_X \times \Delta))} = \sfix{\left(\sfix{f}\right)} : X \to A,
    \]
    where $\Delta : A \to A \times A$ is the diagonal map.
  \end{enumerate}
\end{definition}

\begin{example}
  Recall that $\moabc$ is equipped with a continuous least-fixed-point operator $\fix : [D \to D] \to D$ for each object $D$.
  The operator $\sfix{(\cdot)} : [X \times A \to A] \to [X \to A]$ from \cref{sec:background} given by $\sfix{f}(x) = \fix(\lambda a.f(x,a))$ is a Conway operator.
\end{example}

Our goal is to relate this Conway operator to the trace operator on $\moabc$.
To do so, we will use explicit definitions of $\fix$ and $\sfix{(\cdot)}$.
Recall that given a continuous $g : X \to X$, we can compute $\fix(g)$ in two ways:
\begin{itemize}
\item using the Kleene fixed-point theorem, with $\fix(g) = \dirsup_{n \in \N} g^n(\bot_X)$;
\item using a variant of the Knaster-Tarski theorem, with $\fix(g) = \bigsqcap \{ x \in X \mid g(x) \sqsubseteq x \}$.
\end{itemize}
Given a continuous $g : X \times A \to A$, we can then compute $\sfix{g}$ in two ways:
\begin{itemize}
\item using the Kleene fixed-point theorem, with $\sfix{g}(x) = \dirsup_{n \in \N} (\lambda a \in A.g(x,a))^n(\bot_A)$;
\item using a variant of the Knaster-Tarski theorem, with $\sfix{g}(x) = \bigsqcap \{ a \in A \mid g(x,a) \sqsubseteq a \}$.
\end{itemize}

Hasegawa~\cite[Theorem~7.1]{hasegawa_1999:_recur_cyclic_sharin} and Hyland independently discovered~\cite[p.~281]{benton_hyland_2003:_traced_premon_categ} that a cartesian category has a trace if and only if it has a Conway operator.
The following is a special case of the proof of the Hasegawa-Hyland theorem.

\begin{proposition}
  \label{prop:2}
  For all $f : A \times X \to B \times X$, we have $f \circ \langle \ms{id}_A, \sfix{(\pi^{B\times X}_X \circ f)} \rangle = \sfix{(f \circ \pi^{B\times X\times A}_{A \times X})}$.
  Consequently, $\Tr{(f)}{X} = \pi^{B\times X}_B \circ \sfix{(f \circ \pi^{B\times X\times A}_{A \times X})}$.
\end{proposition}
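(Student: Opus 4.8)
The plan is to recognize the first identity as a direct instance of the parametrized dinaturality axiom for the Conway operator $\sfix{(\cdot)}$ on $\moabc$ (item~3 of \cref{def:fscd:2}, which holds because $\sfix{(\cdot)}$ is a Conway operator); this is precisely the sense in which the result is a special case of the Hasegawa--Hyland correspondence. Parametrized dinaturality states that for $\phi : X \times B \to A$ and $\gamma : X \times A \to B$ one has $\phi \circ \langle \ms{id}_X, \sfix{(\gamma \circ \langle \pi_X, \phi \rangle)} \rangle = \sfix{(\phi \circ \langle \pi_1, \gamma \rangle)}$. I would instantiate this with the parameter object set to $A$, the recursion object set to $B \times X$, and the auxiliary object set to $X$, taking $\phi := f : A \times X \to B \times X$ and $\gamma := \pi^{A \times B \times X}_{X}$, the projection onto the recursion variable $X$.

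With these choices I would compute the two auxiliary composites appearing in the axiom. On one side, $\gamma \circ \langle \pi_A, f \rangle$ sends $(a,x)$ to $\pi_X(f(a,x))$, so it equals $\pi^{B\times X}_X \circ f$; hence the left-hand side of the dinaturality equation becomes $f \circ \langle \ms{id}_A, \sfix{(\pi^{B\times X}_X \circ f)} \rangle$, which is exactly the left-hand side of the claimed identity. On the other side, $f \circ \langle \pi_1, \gamma \rangle$ sends $(a,b,x)$ to $f(a,x)$, the same map as $f \circ \pi^{B\times X\times A}_{A\times X}$ after the evident reindexing of the product; hence the right-hand side of the axiom is $\sfix{(f \circ \pi^{B\times X\times A}_{A \times X})}$. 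This yields the first identity, and the ``consequently'' is then immediate: post-composing both sides with $\pi^{B\times X}_B$ and unfolding the definition of the trace from the example above gives $\Tr{(f)}{X} = \pi^{B\times X}_B \circ \sfix{(f \circ \pi^{B\times X\times A}_{A \times X})}$.

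The main obstacle I anticipate is bookkeeping rather than mathematics: the projections $\pi_1$, $\pi_X$, and $\pi^{B\times X\times A}_{A\times X}$ live over differently ordered and associated products (e.g.\ $B \times X \times A$ versus $A \times (B\times X)$), so I would insert the appropriate symmetry and associativity isomorphisms and use naturality of $\sfix{(\cdot)}$ in its parameter to check that reassociating the product leaves the fixed point unchanged. As an independent sanity check I could verify the identity pointwise via the Kleene formulation: writing $x^\ast = \sfix{(\pi^{B\times X}_X \circ f)}(a) = \fix(\lambda x.\pi^{B\times X}_X f(a,x))$, both sides evaluate at $a$ to $(\pi^{B\times X}_B f(a,x^\ast),\, x^\ast)$ — the second coordinate because $x^\ast$ is a fixed point of $\lambda x.\pi^{B\times X}_X f(a,x)$, and the first by continuity of $f$ on the ascending chain $\dirsup_{n \in \N}$ defining $x^\ast$. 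The dinaturality argument is the cleaner route and matches the stated provenance of the result, so I would present that as the proof and relegate the computational check to a remark.
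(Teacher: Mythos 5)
Your proof is correct and takes exactly the route the paper intends: the paper offers no explicit argument, only the remark that the proposition is a special case of the proof of the Hasegawa--Hyland theorem, and your instantiation of the parametrized dinaturality axiom (with parameter $A$, recursion object $B \times X$, $\phi = f$, and $\gamma$ the projection onto $X$) is precisely that special case, with the product-reindexing caveat and the pointwise Kleene check both handled correctly. No gaps.
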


As a consequence of \cref{prop:2}, we can compute the trace $\Tr{(f)}{X}$ of a morphism $f : A \times X \to B \times X$ in two ways:
\begin{itemize}
\item using the Kleene fixed-point theorem, with
  \[
    \Tr{(f)}{X}(a) = \pi^{B \times X}_B \left(\dirsup_{n \in \N} \left(\lambda \left(b,x\right).f\left(a,x\right)\right)^n\left(\bot_B,\bot_X\right)\right);
  \]
\item using a variant of the Knaster-Tarski theorem, with
  \begin{equation}\label{eq:808}
    \Tr{(f)}{X}(a) = \pi^{B \times X}_B \left(\bigsqcap \left\{(b,x) \in B \times X \mid f(a,x) \sqsubseteq (b, x) \right\}\right).
  \end{equation}
\end{itemize}

\Cref{lemma:fscd:1} further relates the Conway operator and the trace on $\moabc$.

\begin{lemma}
  \label{lemma:fscd:1}
  Let $f : A \times X \to B \times X$ be continuous and let $a \in A$ be arbitrary.
  Take $P = \{(b, x) \in B \times X \mid f(a, x) \sqsubseteq (b, x) \}$.
  Then $\bigsqcap P = \sfix{\left(f \circ \pi^{B \times A \times X}_{A \times X}\right)}(a)$ and $\bigsqcap P \in P$.
\end{lemma}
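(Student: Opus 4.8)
The plan is to unwind the definition of the parametrized fixed-point operator so that both claims become immediate consequences of the two standard characterizations of $\fix$ recorded in the excerpt. Write $g = f \circ \pi^{B \times A \times X}_{A \times X}$, so that $g(b,a,x) = f(a,x)$, and for the fixed $a$ introduce the continuous endofunction $h_a : B \times X \to B \times X$, $h_a(b,x) = f(a,x)$, which discards its $B$-argument. Unfolding the definition of $\sfix{(\cdot)}$ in terms of $\fix$ from \cref{sec:background} gives $\sfix{g}(a) = \fix(h_a)$. The key observation is that $(b,x) \in P$ holds exactly when $f(a,x) \sqsubseteq (b,x)$, that is, exactly when $h_a(b,x) \sqsubseteq (b,x)$; hence $P$ is precisely the set of pre-fixed points of $h_a$.

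For the displayed identity, I would apply the Knaster--Tarski formulation of $\fix$ recalled just after \cref{prop:2}: since $P$ is the pre-fixed-point set of $h_a$,
\[
  \fix(h_a) = \bigsqcap\{(b,x) \mid h_a(b,x) \sqsubseteq (b,x)\} = \bigsqcap P,
\]
and combining this with $\sfix{g}(a) = \fix(h_a)$ yields $\bigsqcap P = \sfix{(f \circ \pi^{B \times A \times X}_{A \times X})}(a)$.

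For the membership $\bigsqcap P \in P$, I would instead use the Kleene formulation $\fix(h_a) = \dirsup_{n \in \N} h_a^n(\bot)$: by continuity of $h_a$ this least upper bound is a genuine fixed point, so $h_a(\fix(h_a)) = \fix(h_a)$. Setting $(b^\ast, x^\ast) = \bigsqcap P = \fix(h_a)$, this gives $f(a, x^\ast) = h_a(b^\ast, x^\ast) = (b^\ast, x^\ast) \sqsubseteq (b^\ast, x^\ast)$, which is exactly the condition for $(b^\ast, x^\ast) \in P$.

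I expect no real obstacle: the only thing requiring care is the bookkeeping with the reindexing projection and the observation that $h_a$ ignores its $B$-component, which is what makes $P$ coincide with its pre-fixed-point set. Once this identification is made, both statements are direct appeals to the Knaster--Tarski and Kleene descriptions of $\fix$ already available in the text, so no new domain-theoretic work is needed.
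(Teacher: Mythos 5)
Your proposal is correct and follows essentially the same route as the paper's proof: both arguments establish the infimum identity via the Knaster--Tarski characterization of the (parametrized) fixed point and establish membership in $P$ by observing that the fixed point is a genuine fixed point of the specialized map, hence a pre-fixed point. Your reduction of $\sfix{(f \circ \pi^{B \times A \times X}_{A \times X})}(a)$ to $\fix(h_a)$ is just an unfolding of the definition $\sfix{g}(a) = \fix(\lambda x.\,g(a,x))$ that the paper leaves implicit, so the two proofs differ only in bookkeeping.
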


\begin{proof}
  Let $(\hat b, \hat x) = \sfix{\left(f \circ \pi^{B \times A \times X}_{A \times X}\right)}(a)$.
  We begin by showing that $(\hat b, \hat x) \in P$.
  By the parametrized fixed-point property (\cref{def:fscd:2}),
  \begin{align*}
    f(a, \hat x) &= \left(f \circ \pi^{B \times A \times X}_{A \times X}\right)(\hat b, a, \hat x)\\
                 &= \sfix{\left(f \circ \pi^{B \times A \times X}_{A \times X}\right)}(a)\\
                 &= (\hat b, \hat x).
  \end{align*}
  So $(\hat b, \hat x) \in P$.
  Next, we show that $(\hat b, \hat x) = \bigsqcap P$.
  But this is immediate by the Knaster-Tarski formulation of $\sfix{\left(f \circ \pi^{B \times A \times X}_{A \times X}\right)}$.
  Indeed,
  \begin{align*}
    (\hat b, \hat x) &=\sfix{\left(f \circ \pi^{B \times A \times X}_{A \times X}\right)}(a)\\
                     &= \bigsqcap \left\{ (b, x) \in B \times X \mid \left(f \circ \pi^{B \times A \times X}_{A \times X}\right)(b, a, x) \sqsubseteq (b, x) \right\}\\
                     &= \bigsqcap \left\{ (b, x) \in B \times X \mid f(a, x) \sqsubseteq (b, x) \right\}\\
                     &= \bigsqcap P.\qedhere
  \end{align*}
\end{proof}






\Cref{prop:fscd:12} allows us to drop section-retraction pairs from fixed expressions.
We call a morphism $s : Y \to X$ a \defin{section} if there exists a morphism $r : X \to Y$ with $r \circ s = \ms{id}_Y$.
In this case, we call $r$ the \defin{retract}.
We frequently encounter the section-retraction pair $(\up, \down)$ when computing semantic equivalences.

\begin{proposition}
  \label{prop:fscd:12}
  Let $s : Y \to X$ be a section with retract $r : X \to Y$ in a traced category.
  For all $f : A \otimes Y \to B \otimes Y$, we have $\Trop^X_{A,B}\left(\left(\ms{id}_B \otimes s\right) \circ f \circ \left(\ms{id}_A \otimes r \right)\right) = \Trop^Y_{A,B}\left(f\right)$.
\end{proposition}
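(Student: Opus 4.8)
The plan is to reduce the claim to a single application of the sliding (dinaturality) axiom of the trace from \cref{def:proposal:1}, combined with the defining property $r \circ s = \ms{id}_Y$ of a section and the bifunctoriality of $\otimes$. The only real content is choosing the correct bracketing so that the given morphism matches the shape required by sliding.

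First I would regroup the morphism whose trace is to be computed as $(\ms{id}_B \otimes s) \circ f'$, where $f' = f \circ (\ms{id}_A \otimes r) : A \otimes X \to B \otimes Y$. With this bracketing, the left-hand side of the proposition is $\Trop^X_{A,B}\left((\ms{id}_B \otimes s) \circ f'\right)$, which is exactly the left-hand side of the sliding axiom instantiated at $U = X$, $V = Y$, and the "crossing" morphism $g = s : Y \to X$. Applying sliding then rewrites this as $\Trop^Y_{A,B}\left(f' \circ (\ms{id}_A \otimes s)\right)$, sliding $s$ around the feedback loop until it meets its retract.

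To finish, I would expand $f'$ and use bifunctoriality of $\otimes$: the composite $(\ms{id}_A \otimes r) \circ (\ms{id}_A \otimes s)$ equals $\ms{id}_A \otimes (r \circ s) = \ms{id}_A \otimes \ms{id}_Y = \ms{id}_{A \otimes Y}$, since $s$ is a section with retract $r$. Hence $f' \circ (\ms{id}_A \otimes s) = f$, and the trace collapses to $\Trop^Y_{A,B}(f)$, as required. The computation is essentially immediate once the bracketing is fixed, so I do not anticipate a substantial obstacle; the only step requiring care is recognizing that $s$ is the morphism to be slid so that it cancels against $r$ inside the loop.
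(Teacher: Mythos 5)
Your proposal is correct and is essentially identical to the paper's proof: both regroup the composite as $(\ms{id}_B \otimes s) \circ \bigl(f \circ (\ms{id}_A \otimes r)\bigr)$, apply the sliding axiom with $g = s$, and then cancel $r \circ s = \ms{id}_Y$ via bifunctoriality of $\otimes$. No gaps.
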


\begin{proof}
  The result follows by associativity of composition, the sliding axiom, and the fact that $r \circ s = \ms{id}_Y$:
  \begin{align*}
    &\Trop^X_{A,B}\left(\left(\ms{id}_B \otimes s\right) \circ f \circ \left(\ms{id}_A \otimes r \right)\right)\\
    &= \Trop^X_{A,B}\left(\left(\ms{id}_B \otimes s\right) \circ \left(f \circ \left(\ms{id}_A \otimes r \right)\right)\right)\\
    &= \Trop^Y_{A,B}\left(\left(f \circ \left(\ms{id}_A \otimes r \right)\right) \circ \left(\ms{id}_A \otimes s\right) \right)\\
    &= \Trop^Y_{A,B}\left(f \circ \left(\ms{id}_A \otimes \left(r \circ s\right) \right)\right)\\
    &= \Trop^Y_{A,B}\left(f \circ \left(\ms{id}_A \otimes \ms{id}_Y \right)\right)\\
    &= \Trop^Y_{A,B}\left(f\right).\qedhere
  \end{align*}
\end{proof}

\Cref{prop:fscd:13} gives an associativity property for traces.

\begin{proposition}
  \label{prop:fscd:13}
  Consider the following morphisms in a traced monoidal category where the associator and symmetry natural isomorphisms are the identity:
  \begin{align*}
    f_1 &: A_1 \otimes X_1 \to B_1 \otimes Y_1\\
    f_2 &: A_2 \otimes Y_1 \otimes X_2 \to B_2 \otimes X_1 \otimes Y_2\\
    f_2 &: A_3 \otimes Y_2 \to B_3 \otimes X_2.
  \end{align*}
  Then
  \begin{align*}
    &\Trop^{X_1 \otimes Y_1 \otimes X_2 \otimes Y_2}_{A_1 \otimes A_2 \otimes A_3, B_1 \otimes B_2 \otimes B_3}(f_1 \otimes f_2 \otimes f_3)\\
    &= \Trop^{X_1 \otimes Y_1}_{A_1 \otimes A_2 \otimes A_3, B_1 \otimes B_2 \otimes B_3}\left(f_1 \otimes \Trop^{X_2 \otimes Y_2}_{A_2 \otimes A_3, B_2 \otimes B_3}\left(f_2 \otimes f_3\right)\right)\\
    &= \Trop^{X_2 \otimes Y_2}_{A_1 \otimes A_2 \otimes A_3, B_1 \otimes B_2 \otimes B_3}\left(\Trop^{X_1 \otimes Y_1}_{A_1 \otimes A_2, B_1 \otimes B_2}\left(f_1 \otimes f_2\right) \otimes f_3\right).
  \end{align*}
\end{proposition}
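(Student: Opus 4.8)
The plan is to prove both equalities by the same two-move strategy: first use the action (vanishing) axiom of \cref{def:proposal:1} to factor the combined trace over $X_1 \otimes Y_1 \otimes X_2 \otimes Y_2$ into nested traces, and then use the superposing axiom to extract from the inner trace whichever of $f_1, f_2, f_3$ does not meet the inner feedback object. Throughout, the hypothesis that the associator and symmetry are the identity is what lets me freely reassociate and reorder tensor factors (in $\moabc$ this is the literal equality of indexed products over disjoint sets of names), so I will regroup the domain and codomain of $f_1 \otimes f_2 \otimes f_3$ without comment. Notably, I expect to need only the action, superposing, and right-tightening axioms, not dinaturality.

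Before the main argument I would record an auxiliary ``superposing with morphisms'' identity: for $g : C \to D$ and $h : A \otimes U \to B \otimes U$,
\[
  \Trop^{U}_{C \otimes A, D \otimes B}(g \otimes h) = g \otimes \Trop^{U}_{A,B}(h).
\]
This follows from the axioms by writing $g \otimes h = (g \otimes \ms{id}_{B \otimes U}) \circ (\ms{id}_C \otimes h)$, using naturality in $B$ (right tightening) to pull $g \otimes \ms{id}_B$ out in front, and then applying the superposing axiom to the remaining $\Trop^{U}(\ms{id}_C \otimes h)$. Since the symmetry is the identity, the same identity holds with the superposed factor on the right.

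For the first equality I would apply the action axiom with $U = X_1 \otimes Y_1$ and $V = X_2 \otimes Y_2$ (the associator is the identity, so no conjugation by $\alpha^{\pm 1}$ survives), writing $A = A_1 \otimes A_2 \otimes A_3$ and $B = B_1 \otimes B_2 \otimes B_3$, to obtain
\[
  \Trop^{X_1 \otimes Y_1 \otimes X_2 \otimes Y_2}_{A,B}(f_1 \otimes f_2 \otimes f_3) = \Trop^{X_1 \otimes Y_1}_{A,B}\left(\Trop^{X_2 \otimes Y_2}_{A \otimes X_1 \otimes Y_1,\, B \otimes X_1 \otimes Y_1}(f_1 \otimes f_2 \otimes f_3)\right).
\]
In the inner trace the factor $f_1 : A_1 \otimes X_1 \to B_1 \otimes Y_1$ touches none of the fed-back $X_2, Y_2$, so after reordering it to the front I can apply the auxiliary identity with $g = f_1$ to get $f_1 \otimes \Trop^{X_2 \otimes Y_2}(f_2 \otimes f_3)$; substituting back yields the first right-hand side. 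The second equality is entirely symmetric: I reorder the feedback object to $X_2 \otimes Y_2 \otimes X_1 \otimes Y_1$, split it by the action axiom with outer variable $X_2 \otimes Y_2$ and inner variable $X_1 \otimes Y_1$, and then superpose out the factor $f_3 : A_3 \otimes Y_2 \to B_3 \otimes X_2$, which is the one not meeting $X_1, Y_1$.

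The main obstacle I anticipate is bookkeeping rather than any conceptual difficulty: the middle morphism $f_2$ straddles both feedback loops (its domain contains $Y_1$ and $X_2$, its codomain contains $X_1$ and $Y_2$), so at each application of the action and superposing axioms I must track which tensor factors are parameters to be kept and which are fed back, and confirm that the residual $Y_1$ and $X_1$ (respectively $X_2$ and $Y_2$) left attached to the inner trace are exactly those consumed by the outer trace. The abbreviated subscripts in the statement suppress these residual factors, so part of the write-up is simply making the reindexing explicit and verifying that the types of both nested expressions agree with the combined trace.
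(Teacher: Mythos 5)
Your proposal is correct and follows essentially the same route as the paper's proof: both rest on exactly the combination of right tightening, superposing, and the action axiom (with the identity associator/symmetry assumption absorbing all reindexing), and both handle the second equality by reordering the feedback object and repeating the argument. The only cosmetic differences are that you run the calculation from the combined trace down to the nested one rather than the reverse, and you package the right-tightening-plus-superposing step as a reusable ``superposing with morphisms'' lemma where the paper inlines those two moves.
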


\begin{proof}
  \begin{align*}
    &\Trop^{X_1 \otimes Y_1}_{A_1 \otimes A_2 \otimes A_3, B_1 \otimes B_2 \otimes B_3}\left(f_1 \otimes \Trop^{X_2 \otimes Y_2}_{A_2 \otimes A_3, B_2 \otimes B_3}\left(f_2 \otimes f_3\right)\right)\\
    &= \Trop^{X_1 \otimes Y_1}_{A_1 \otimes A_2 \otimes A_3, B_1 \otimes B_2 \otimes B_3}\left(
      \left(f_1 \otimes \ms{id}_{B_2 \otimes B_3}\right)
      \circ
      \left(
      \ms{id}_{A_1 \otimes X_1}
      \otimes
      \Trop^{X_2 \otimes Y_2}_{A_2 \otimes A_3, B_2 \otimes B_3}\left(f_2 \otimes f_3\right)
      \right)
      \right)\\
    \shortintertext{by superposing:}
    &= \Trop^{X_1 \otimes Y_1}_{A_1 \otimes A_2 \otimes A_3, B_1 \otimes B_2 \otimes B_3}\left(
      \left(f_1 \otimes \ms{id}_{B_2 \otimes B_3}\right)
      \circ
      \Trop^{X_2 \otimes Y_2}_{A_1 \otimes A_2 \otimes A_3, B_1 \otimes B_2 \otimes B_3}\left(\ms{id}_{A_1 \otimes X_1} \otimes f_2 \otimes f_3\right)
      \right)\\
    \shortintertext{by right tightening:}
    &= \Trop^{X_1 \otimes Y_1}_{A_1 \otimes A_2 \otimes A_3, B_1 \otimes B_2 \otimes B_3}\left(
      \Trop^{X_2 \otimes Y_2}_{A_1 \otimes A_2 \otimes A_3, B_1 \otimes B_2 \otimes B_3}\left(f_1 \otimes f_2 \otimes f_3\right)
      \right)\\
    \shortintertext{by action:}
    &= \Trop^{X_1 \otimes Y_1 \otimes X_2 \otimes Y_2}_{A_1 \otimes A_2 \otimes A_3, B_1 \otimes B_2 \otimes B_3}\left(
      f_1 \otimes f_2 \otimes f_3
      \right)\\
    \shortintertext{by assumption on the symmetry isomorphism:}
    &= \Trop^{X_2 \otimes Y_2 \otimes X_1 \otimes Y_1}_{A_1 \otimes A_2 \otimes A_3, B_1 \otimes B_2 \otimes B_3}\left(
      f_1 \otimes f_2 \otimes f_3
      \right)\\
    \shortintertext{by the above argument in reverse:}
    &= \Trop^{X_2 \otimes Y_2}_{A_1 \otimes A_2 \otimes A_3, B_1 \otimes B_2 \otimes B_3}\left(\Trop^{X_1 \otimes Y_1}_{A_1 \otimes A_2, B_1 \otimes B_2}\left(f_1 \otimes f_2\right) \otimes f_3\right).\qedhere
  \end{align*}
\end{proof}

\Cref{prop:4503} lets us drop the strictness operator under certain circumstances.
This is useful when computing semantic equivalences because we often need to take the trace of functions that are strict in the component we are fixing.

\begin{proposition}
  \label{prop:4503}
  Let $f : A \times X \to C \times Y$ and $g : B \times Y \to D \times X$.
  The equality
  \[
    \Tr{\left(\strictfn_X(f)\times g\right)}{X \times Y} = \Tr{\left(f\times g\right)}{X\times Y} : A\times B \to C\times D
  \]
  holds whenever $\pi^{D\times X}_X\circ g$ is not strict or $f = \strictfn_X(f)$.
\end{proposition}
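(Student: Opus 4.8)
The plan is to prove the equality pointwise, fixing an arbitrary input $(a,b) \in A \times B$, and to unfold each trace into a joint least fixed point over $X \times Y$ using \cref{prop:2} together with the Knaster--Tarski formulation of \cref{lemma:fscd:1}. Writing $\pi_X, \pi_Y, \pi_C, \pi_D$ for the evident projections, the morphism $f \times g$ consumes $x$ through $f$ and produces $x$ through $g$, and dually for $y$; so the traced output at $(a,b)$ is $\left(\pi_C(f(a,\hat x)), \pi_D(g(b,\hat y))\right)$, where $(\hat x, \hat y)$ is the least solution of the feedback equations $\hat x = \pi_X(g(b, \hat y))$ and $\hat y = \pi_Y(f(a,\hat x))$. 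Replacing $f$ by $\strictfn_X(f)$ yields the analogous description with least solution $(\hat x', \hat y')$ of $\hat x' = \pi_X(g(b, \hat y'))$ and $\hat y' = \pi_Y(\strictfn_X(f)(a,\hat x'))$.

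If $f = \strictfn_X(f)$ there is nothing to prove, so I would assume instead that $\pi_X \circ g$ is not strict, i.e. $\pi_X(g(\bot_B, \bot_Y)) \neq \bot_X$. The first, easy inclusion is then recorded as follows: since $\strictfn_X(f) \sqsubseteq f$ pointwise, the two feedback maps are ordered pointwise (their $X$-components agree, and the $Y$-component of the strict map is below that of $f$), and monotonicity of the least-fixed-point operator, hence of the trace via \cref{prop:2}, gives $(\hat x', \hat y') \sqsubseteq (\hat x, \hat y)$.

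The crux is the reverse inclusion, for which I would show that $(\hat x', \hat y')$ is already a fixed point of the \emph{non-strict} feedback map. Because $(\hat x', \hat y')$ solves its own feedback equations, $\hat x' = \pi_X(g(b, \hat y'))$; monotonicity of $g$ and $(\bot_B, \bot_Y) \sqsubseteq (b, \hat y')$ then give $\hat x' \sqsupseteq \pi_X(g(\bot_B, \bot_Y)) \neq \bot_X$, so $\hat x' \neq \bot_X$. Consequently $\strictfn_X(f)(a, \hat x') = f(a, \hat x')$, so the equation $\hat y' = \pi_Y(\strictfn_X(f)(a, \hat x'))$ becomes $\hat y' = \pi_Y(f(a, \hat x'))$; thus $(\hat x', \hat y')$ satisfies the feedback equations of $f \times g$. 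As $(\hat x, \hat y)$ is the \emph{least} such solution, $(\hat x, \hat y) \sqsubseteq (\hat x', \hat y')$, and combining this with the previous inclusion yields $(\hat x, \hat y) = (\hat x', \hat y')$. Since $\hat x = \hat x' \neq \bot_X$ forces $\strictfn_X(f)(a, \hat x) = f(a, \hat x)$, the two output projections coincide, proving the equality at $(a,b)$ and hence everywhere.

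The step I expect to be the main obstacle is the very first one: faithfully turning $\Tr{(f \times g)}{X \times Y}$ into the joint least fixed point described above. This requires threading $f \times g$ through the associativity and symmetry isomorphisms that rearrange $(A \times X) \times (B \times Y)$ into $(A \times B) \times (X \times Y)$ and identifying which component of the feedback each of $f$ and $g$ produces. I would keep this bookkeeping light by exploiting that, for the indexed products used throughout the paper, these coherence isomorphisms are mere relabellings (as in the hypothesis of \cref{prop:fscd:13}), so that the concrete least-fixed-point reading of \cref{eq:808} applies directly.
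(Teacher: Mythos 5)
Your proposal is correct and follows essentially the same route as the paper's proof: both argue pointwise via the Knaster--Tarski/least-fixed-point reading of the trace, obtain one inequality from $\strictfn_X(f) \sqsubseteq f$ and monotonicity, and obtain the other by using non-strictness of $\pi_X \circ g$ to conclude that the fed-back $X$-component is non-bottom (via the same chain $\bot_X \sqsubsetneq \pi_X(g(\bot_B,\bot_Y)) \sqsubseteq \pi_X(g(b,\hat y'))$), so that $\strictfn_X(f)$ and $f$ agree where it matters. The only cosmetic difference is that the paper phrases the second direction as ``$\bigsqcap L$ lies in the pre-fixed-point set $R$'' while you phrase it as ``the strict solution is already a fixed point of the non-strict feedback map''; these coincide by \cref{lemma:fscd:1}.
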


\begin{proof}
  If $f$ is strict in $X$, then the result immediate.
  Assume now that $\pi^{D\times X}_X\circ g$ is not strict.
  We use the Knaster-Tarski formulation \eqref{eq:808} of the trace operator.
  Let $(a,b) \in A \times B$ be arbitrary.
  We calculate that
  \begin{align*}
    \Tr{\left(\strictfn_X(f)\times g\right)}{X \times Y}(a,b) &= \pi_{C \times D}\left(\bigsqcap L\right)\\
    \Tr{\left(f\times g\right)}{X\times Y}(a,b) &= \pi_{C \times D}\left(\bigsqcap R\right)
  \end{align*}
  where
  \begin{align*}
    l &= \strictfn_X(f) \times g,\\
    L &= \{ (c,y,d,x) \in C \times Y \times D \times X \mid l(a,x,b,y) \sqsubseteq (c,y,d,x) \},\\
    r &= f \times g,\\
    R &= \{ (c,y,d,x) \in C \times Y \times D \times X \mid r(a,x,b,y) \sqsubseteq (c,y,d,x) \}.
  \end{align*}
  To show
  \[
    \Tr{\left(\strictfn_X(f)\times g\right)}{X \times Y}(a,b) = \Tr{\left(f\times g\right)}{X\times Y}(a,b),
  \]
  it is sufficient to show that $\bigsqcap L = \bigsqcap R$.


  We begin by showing that $\bigsqcap L \sqsubseteq \bigsqcap R$.
  To do so, we show that $R \subseteq L$.
  Let $(c_r,y_r,d_r,x_r) \in R$ be arbitrary.
  By definition of $R$, $r(a,x_r,b,y_r) \sqsubseteq (c_r,y_r,d_r,x_r)$.
  We have $l \sqsubseteq r$ because $\strictfn_X(f) \sqsubseteq f$ and products are monotone.
  Then
  \[
    l(a,x_r,b,y_r) \sqsubseteq r(a,x_r,b,y_r) \sqsubseteq (c_r,y_r,d_r,x_r),
  \]
  so $(c_r,y_r,d_r,x_r) \in L$.
  We conclude $R \subseteq L$ and $\bigsqcap L \sqsubseteq \bigsqcap R$.

  We now show that $\bigsqcap R \sqsubseteq \bigsqcap L$.
  To do so, we show that $\bigsqcap L \in R$.
  Let $(c_l,y_l,d_l,x_l) = \bigsqcap L$.
  We must show that
  \[
    r(a,x_l,d,y_l) \sqsubseteq (c_l,y_l,d_l,x_l).
  \]
  Because $\pi^{D \times X}_X \circ g$ is not strict, we know that $x_l \neq \bot_X$.
  Indeed, by monotonicity we have
  \[
    \bot_X \sqsubsetneq \left(\pi^{D \times X}_X \circ g\right)(\bot_B, \bot_Y) \sqsubseteq \left(\pi^{D \times X}_X \circ g\right)(b, y_l) = x_l
  \]
  Observe that whenever $x \neq \bot_X$, we have for all $y$ that $r(a,x,b,y) = l(a,x,b,y)$.
  So $r(a,x_l,d,x_l) = l(a,x_l,d,x_l)$.
  By \cref{lemma:fscd:1}, $\bigsqcap L \in L$, so $l(a,x_l,b,y_l) \sqsubseteq (c_l,y_l,d_l,x_l)$.
  These facts imply
  \[
    r(a,x_l,d,x_l)  = l(a,x_l,d,x_l) \sqsubseteq  (c_l,y_l,d_l,x_l),
  \]
  \ie, that $\bigsqcap L \in R$.
  We conclude that $\bigsqcap R \sqsubseteq \bigsqcap L$.

  Because $(a,b)$ was arbitrary in the domain, we conclude that the functions are equal.\qedhere
\end{proof}

\Cref{prop:fscd:16} tells us that the trace in $\moabc$ interacts well with currying.

\begin{proposition}
  \label{prop:fscd:16}
  Let $f : A \times B \times X \to C \times X$.
  Let $\Lambda f : A \to [B \times X \to C \times X]$ be given by Currying.
  Then $\Trop^{X}_{A \times B,C}(f) = \lambda (a, b) \in A \times B.\Trop^X_{B,C}\left(\Lambda f(a)\right)(b)$.
\end{proposition}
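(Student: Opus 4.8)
The plan is to compute both sides pointwise using the Knaster--Tarski-style formulation \eqref{eq:808} of the trace on $\moabc$, and to observe that the two sets over which the infima are taken coincide by the very definition of Currying.

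First I would fix an arbitrary $(a,b) \in A \times B$. Regarding $f$ as a morphism $(A \times B) \times X \to C \times X$, \eqref{eq:808} gives
\[
  \Trop^X_{A \times B,C}(f)(a,b) = \pi^{C \times X}_C\left(\bigsqcap\left\{(c,x) \in C \times X \mid f(a,b,x) \sqsubseteq (c,x)\right\}\right).
\]
For the right-hand side, for each fixed $a$ the function $\Lambda f(a) : B \times X \to C \times X$ is continuous (it is the partial application of the continuous $f$), so \eqref{eq:808} applies and yields
\[
  \Trop^X_{B,C}(\Lambda f(a))(b) = \pi^{C \times X}_C\left(\bigsqcap\left\{(c,x) \in C \times X \mid \Lambda f(a)(b,x) \sqsubseteq (c,x)\right\}\right).
\]

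The key step is that Currying is defined pointwise, so $\Lambda f(a)(b,x) = f(a,b,x)$ for every $x \in X$. Consequently the two sets appearing above are literally the same subset of $C \times X$, their infima in $C \times X$ agree, and applying $\pi^{C \times X}_C$ produces the same element of $C$. Hence $\Trop^X_{A \times B,C}(f)(a,b) = \Trop^X_{B,C}(\Lambda f(a))(b)$. Since $(a,b) \in A \times B$ was arbitrary, the two functions agree everywhere, which is exactly the claimed identity; continuity of $\lambda(a,b).\Trop^X_{B,C}(\Lambda f(a))(b)$ then follows a posteriori from its equality with the morphism $\Trop^X_{A \times B,C}(f)$.

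There is no substantial obstacle here: the result is essentially immediate once the Knaster--Tarski formulation is invoked. The only point requiring care is the bookkeeping of the implicit Cartesian reassociation $(A \times B) \times X \cong A \times (B \times X)$ and the matching of the projections $\pi^{C \times X}_C$, so that the two trace operators---over the same object $X$ but with different ``input'' objects $A \times B$ and $B$---are compared against the same defining condition $f(a,b,x) \sqsubseteq (c,x)$.
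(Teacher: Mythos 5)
Your proof is correct. The only difference from the paper's argument is which characterization of the trace you unfold: the paper uses the explicit definition $\Trop^X_{A\times B,C}(f) = \pi_C \circ f \circ \langle \ms{id}, \sfix{(\pi_X \circ f)}\rangle$ together with the formula $\sfix{g}(x) = \fix(\lambda a.g(x,a))$, and observes that the least fixed point being computed depends on $(a,b)$ only through the function $x \mapsto f(a,b,x)$, which is exactly $x \mapsto \Lambda f(a)(b,x)$; you instead invoke the Knaster--Tarski formulation \eqref{eq:808} and observe that the two defining sets $\{(c,x) \mid f(a,b,x) \sqsubseteq (c,x)\}$ coincide for the same reason. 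Both formulations are established in the paper (the Knaster--Tarski one via \cref{prop:2}), so either route is legitimate, and in both cases the substantive content is the single observation that Currying is pointwise application. Your closing remark about the implicit reassociation $(A\times B)\times X \cong A\times(B\times X)$ is the right thing to flag; in $\moabc$ this associator is the identity on the underlying indexed products, so no further work is needed.
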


\begin{proof}
  We use the explicit definition of the trace and Conway operators.
  Let $(a, b) \in A \times B$ be arbitrary and write $f_a$ for $\Lambda f (a)$:
  \begin{align*}
    &\Trop^{X}_{A \times B,C}(f)(a,b)\\
    &= \left(\pi_C \circ f \circ \langle \ms{id}_{A \times B}, \sfix{(\pi_X \circ f)} \rangle\right)(a,b)\\
    &= \pi_C\left(f\left(a, b, \sfix{(\pi_X \circ f)}(a,b) \right)\right)\\
    &= \pi_C\left(f\left(a, b, \fix{\left(\lambda x \in X . \pi_X(f(a,b,x))\right)} \right)\right)\\
    &= \pi_C\left(f_a\left(b, \fix{\left(\lambda x \in X . \pi_X(f_a(b,x))\right)} \right)\right)\\
    &= \pi_C\left(f_a\left(b, \sfix{(\pi_X \circ f_a)}(b) \right)\right)\\
    &= \left(\pi_C \circ f_a \circ \langle \ms{id}_{B}, \sfix{(\pi_X \circ f_a)} \rangle\right)(b)\\
    &= \Trop^{X}_{B,C}(f_a)(b)\\
    &= \left(\lambda (a, b) \in A \times B.\Trop^X_{B,C}\left(\Lambda f(a)\right)(b)\right)(a,b).
  \end{align*}
  Because $(a,b)$ was arbitrary, we conclude that the two functions are equal.
\end{proof}

\Cref{lemma:fscd:2} will be used by the proof of \cref{prop:fscd:15}.

\begin{lemma}
  \label{lemma:fscd:2}
  Let $X$ and $Y$ be partial orders with bottom elements.
  Let $s : X \to Y$ and $r : Y \to X$ monotone morphisms.
  If $(s, r)$ is a section-retraction pair, then $r$ is strict.
\end{lemma}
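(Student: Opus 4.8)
The plan is to exploit monotonicity of $r$ together with the defining equation $r \circ s = \ms{id}_X$ of a section-retraction pair. I want to show that $r(\bot_Y) = \bot_X$, and since $\bot_X$ is already the least element of $X$, it suffices to establish the single inequality $r(\bot_Y) \sqsubseteq \bot_X$; the reverse inequality is automatic.

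First I would observe that $\bot_Y \sqsubseteq s(\bot_X)$, simply because $\bot_Y$ is by hypothesis the bottom element of $Y$ and $s(\bot_X)$ is some element of $Y$. Applying the monotone map $r$ to both sides yields $r(\bot_Y) \sqsubseteq r(s(\bot_X))$. Now $r \circ s = \ms{id}_X$ by the definition of a section-retraction pair, so the right-hand side equals $\ms{id}_X(\bot_X) = \bot_X$. Hence $r(\bot_Y) \sqsubseteq \bot_X$, and combining this with $\bot_X \sqsubseteq r(\bot_Y)$ (again because $\bot_X$ is least in $X$) gives $r(\bot_Y) = \bot_X$, which is exactly the strictness of $r$.

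There is no substantial obstacle here: the argument is a short chase through monotonicity and the retraction identity, and it uses neither the section map $s$ being monotone nor any structure beyond the existence of bottom elements. The only point requiring care is the orientation of the section-retraction pair. I must confirm that, with $s : X \to Y$ and $r : Y \to X$, the phrase ``$(s,r)$ is a section-retraction pair'' unfolds to $r \circ s = \ms{id}_X$ (the composite landing back in the domain of the section $s$), in keeping with the convention fixed earlier in the excerpt, rather than the reverse composite $s \circ r$, which need not be the identity and would not suffice for the argument.
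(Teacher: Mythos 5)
Your proof is correct and is essentially the same as the paper's: both observe $\bot_Y \sqsubseteq s(\bot_X)$, apply monotonicity of $r$, and use the retraction identity $r \circ s = \ms{id}_X$ to conclude $r(\bot_Y) \sqsubseteq \bot_X$. Your reading of the convention is also right — the paper's definition of a section $s : Y \to X$ with retract $r$ requires $r \circ s = \ms{id}_Y$, so for $s : X \to Y$ the composite landing back in $X$ is indeed the identity.
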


\begin{proof}
  We must show that $r(\bot_Y) = \bot_X$.
  Observe that $\bot_Y \sqsubseteq s(\bot_X)$.
  By monotonicity of $r$, $r(\bot_Y) \sqsubseteq r(s(\bot_X)) = \ms{id}_X(\bot_X) = \bot_X$.
\end{proof}

\Cref{prop:fscd:15} is useful when computing fixed points of quoted processes.

\begin{proposition}
  \label{prop:fscd:15}
  Let $X$ and $Y$ be dcpos.
  Let $s : X \to Y$ and $r : Y \to X$ be a continuous section-retraction pair and let $f : X \to X$ be continuous.
  Then $\fix(s \circ f \circ r) = s(\fix(f))$.
\end{proposition}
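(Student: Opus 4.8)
The plan is to use the Kleene fixed-point characterization recalled in this section, namely $\fix(h) = \dirsup_{n \in \N} h^n(\bot)$ for a continuous endomap $h$ of a (pointed) dcpo. Writing $g = s \circ f \circ r : Y \to Y$, the first step is to compute the iterates of $g$. Since $r \circ s = \ms{id}_X$, every inner $r \circ s$ pair collapses to the identity, so a routine induction gives $g^n = s \circ f^n \circ r$ for all $n \geq 1$ (the base case is the definition of $g$, and $g^{n+1} = s \circ f \circ (r \circ s) \circ f^n \circ r = s \circ f^{n+1} \circ r$).

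Next I would evaluate these iterates at $\bot_Y$. This is exactly where \cref{lemma:fscd:2} enters: it tells us the retract $r$ is strict, so $r(\bot_Y) = \bot_X$, and hence $g^n(\bot_Y) = s(f^n(\bot_X))$ for every $n \geq 1$. The leftover term $g^0(\bot_Y) = \bot_Y$ sits below $g(\bot_Y)$ in the ascending Kleene chain, so it does not affect the supremum and may be discarded. Continuity of $s$ then lets me pull it through the directed supremum, yielding $\fix(g) = \dirsup_{n \geq 1} s(f^n(\bot_X)) = s\bigl(\dirsup_{n \geq 1} f^n(\bot_X)\bigr) = s(\fix(f))$, where the final equality again drops the harmless $n = 0$ term $\bot_X$ from the Kleene chain for $f$.

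I do not expect a serious obstacle here: the only points requiring care are the appeal to \cref{lemma:fscd:2} to guarantee $r(\bot_Y) = \bot_X$ (without strictness of $r$, the $n$-th iterate would not start from $f^n(\bot_X)$) and the bookkeeping that the $n = 0$ terms are dominated and can be dropped. If one preferred a lemma-free argument, there is an alternative route verifying directly that $s(\fix(f))$ is the least fixed point of $g$: it is a fixed point since $g(s(\fix(f))) = s(f(r(s(\fix(f))))) = s(f(\fix(f))) = s(\fix(f))$, and it is least because any fixed point $y = g(y)$ satisfies $r(y) = f(r(y))$ and $y = s(r(y))$, whence $\fix(f) \sqsubseteq r(y)$ and so $s(\fix(f)) \sqsubseteq y$. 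I would present the Kleene-iteration version, as it is the one that makes use of \cref{lemma:fscd:2}.
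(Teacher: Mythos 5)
Your proposal is correct and follows essentially the same route as the paper's proof: establish $g^n = s \circ f^n \circ r$ for $n \geq 1$ by induction, invoke the strictness of the retract (\cref{lemma:fscd:2}) to reduce $g^n(\bot_Y)$ to $s(f^n(\bot_X))$, drop the harmless $n=0$ term, and pull $s$ through the directed supremum by continuity. The alternative least-fixed-point verification you sketch is also sound, but the Kleene-iteration argument you chose to present is exactly the paper's.
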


\begin{proof}
  Let $l = s \circ f \circ r$.
  We claim for all $n \geq 1$ that $l^n = s \circ f^n \circ r$.
  We proceed by induction on $n$.
  The result is immediate for $n = 1$.
  Assuming the result for some $n$, we get:
  \[
    l^{n+1} = l \circ l^n = (s \circ f \circ r) \circ (s \circ f^n \circ r) = s \circ f \circ f^n \circ r = s \circ f^{n+1} \circ r.
  \]
  We then compute:
  \begin{align*}
    \fix(s \circ f \circ r) &= \dirsup_{n = 0}^\infty l^n(\bot_Y)\\
                            &= \dirsup_{n = 1}^\infty l^n(\bot_Y)\\
    \shortintertext{by the claim:}
                            &= \dirsup_{n = 1}^\infty (s \circ f^n \circ r)(\bot_Y)\\
    \shortintertext{by \cref{lemma:fscd:2}:}
                            &= \dirsup_{n = 1}^\infty (s \circ f^n)(\bot_X)\\
    \shortintertext{by continuity:}
                            &= s\left(\dirsup_{n = 1}^\infty f^n(\bot_X)\right)\\
                            &= s(\fix(f)).\qedhere
  \end{align*}
\end{proof}

\Cref{lemma:fscd:4} is useful when reasoning about stream transducers.

\begin{lemma}
  \label{lemma:fscd:4}
  Consider continuous functions
  \begin{align*}
    &f : (c_1^+ : A) \times (c_2^- : \top) \to (c_1^- : \top) \times (c_2^+ : A),\\
    &g : (c_2^+ : A) \times (c_3^- : \top) \to (c_2^- : \top) \times (c_3^+ : A).
  \end{align*}
  Let
  \begin{align*}
    \rho(c_1^+ : a, c_3^- : \bot) &= (c_1^+ : a, c_2^- : \bot) : (c_1^+ : A) \times (c_3^- : \top) \to (c_1^+ : A) \times (c_2^- : \top)\\
    \sigma(c_1^- : \bot, c_2^+ : a) &= (c_2^+ : a, c_3^- : \bot) : (c_1^- : \top) \times (c_2^+ : A) \to (c_2^+ : A) \times (c_3^- : \top)\\
    \tau(c_2^- : \bot, c_3^+ : a) &= (c_1^- : \bot, c_3^+ : a) : (c_2^- : \top) \times (c_3^+ : A) \to (c_1^- : \top) \times (c_3^+ : A)
  \end{align*}
  be relabelling isomorphisms.
  Then $\Trop^{c_2^- \times c_2^+}(f \times g) = \tau \circ g \circ \sigma \circ f \circ \rho$.
\end{lemma}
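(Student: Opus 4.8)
The plan is to reduce the trace to the stated sequential composite by observing that the feedback introduced by the trace is degenerate: the negative channels $c_1^-$, $c_2^-$, and $c_3^-$ all carry values in the terminal domain $\top = \{\bot\}$, so no genuine information flows backwards. Because $\top$ is terminal, the morphisms $f$ and $g$ are forced to output $\bot$ on their negative channels, and $f$'s output on $c_2^+$ cannot depend on the $c_2^+$ value that the trace feeds back, since $f$ does not even read $c_2^+$. This acyclicity of the dependency graph (the only real flow is $c_1^+ \to f \to c_2^+ \to g \to c_3^+$) is exactly what collapses the fixed point, and no appeal to the deeper trace axioms will be needed, only the explicit $\sfix$ formula.

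First I would unfold the trace using the explicit definition $\Trop^X_{A,B}(h) = \pi^{B \times X}_B \circ h \circ \langle \ms{id}_A, \sfix{(\pi^{B\times X}_X \circ h)}\rangle$ from \cref{sec:background}, with $h = f \times g$, external input object $A = (c_1^+ : A) \times (c_3^- : \top)$, external output object $B = (c_1^- : \top) \times (c_3^+ : A)$, and traced object $X = (c_2^- : \top) \times (c_2^+ : A)$. Here the domain copy of $X$ supplies the $c_2^-$ input of $f$ and the $c_2^+$ input of $g$, while the codomain copy receives the $c_2^+$ output of $f$ and the $c_2^-$ output of $g$; I would spell out this identification carefully, as it is the one place where bookkeeping can go wrong.

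The key step is to compute $\sfix{(\pi_X \circ h)}$. Fixing an external input $(c_1^+ : a, c_3^- : \bot)$ and a candidate feedback value $(c_2^- : y, c_2^+ : z) \in X$, I would evaluate $\pi_X \circ h$: its $c_2^+$ component is produced by $f$ from $(c_1^+ : a, c_2^- : y)$ and is therefore independent of $z$ (and of $y$, since $y = \bot$ is forced), while its $c_2^-$ component is produced by $g$ into $\top$ and is hence $\bot$. Thus $\lambda x.\,(\pi_X \circ h)((c_1^+ : a, c_3^- : \bot), x)$ is a constant function, so its least fixed point equals its value, and $\sfix{(\pi_X \circ h)}(c_1^+ : a, c_3^- : \bot) = (c_2^- : \bot, c_2^+ : \pi_{c_2^+}(f(c_1^+ : a, c_2^- : \bot)))$. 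Substituting this back and projecting onto $B$ feeds $f$'s $c_2^+$ output into $g$, giving output $c_1^- : \bot$ and $c_3^+$ equal to $g$ applied to $f$'s $c_2^+$ output.

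Finally I would chase an arbitrary $(c_1^+ : a, c_3^- : \bot)$ through $\tau \circ g \circ \sigma \circ f \circ \rho$: since $\rho$, $\sigma$, and $\tau$ merely relabel the trivial or $A$-valued components and route $f$'s $c_2^+$ output into $g$'s $c_2^+$ input, the composite yields exactly the same pair computed above. Equality of the two continuous functions then follows because they agree at every point of their domain. The main obstacle I anticipate is purely the index bookkeeping around the two occurrences of $X = (c_2^- : \top) \times (c_2^+ : A)$ and confirming that the $f$-to-$g$ dependency runs only forwards through $c_2^+$; once that is set up, the degeneracy of $\top$ makes $\pi_X \circ h$ constant in the traced variable and the remaining computation is immediate.
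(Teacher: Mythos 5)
Your proposal is correct, but it takes a different route from the paper's proof. The paper works with the Knaster--Tarski formulation of the trace, $\Trop^{c_2^-\times c_2^+}(f\times g)(a,\bot)=\pi_{c_1^-,c_3^+}(\bigsqcap F)$ where $F$ is the set of post-fixed points, and establishes the two inequalities separately: one direction by chasing the defining inequalities of $\bigsqcap F$ (via \cref{lemma:fscd:1}) through $\rho,f,\sigma,g,\tau$ using monotonicity and transitivity, the other by exhibiting the composite's value as an element of $F$. You instead unfold the explicit $\sfix$ definition of the trace and observe that $\lambda x.(\pi_X\circ(f\times g))((a,\bot),x)$ is \emph{constant}: the $c_2^-$ component of the feedback lands in the terminal domain $\top$ and is forced to be $\bot$, and the $c_2^+$ component is produced by $f$, which never reads the fed-back $c_2^+$ value. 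Hence the least fixed point is just that constant value and the trace collapses to the sequential composite by direct substitution. Both arguments are sound; yours is more direct and makes the acyclicity of the dataflow explicit as the reason the feedback is degenerate, while the paper's Knaster--Tarski style is uniform with the other $\eta$-equivalence proofs in the appendix (which genuinely need the order-theoretic inequalities because their feedback is not trivially constant). The only point requiring care in your version is the bookkeeping identifying the two occurrences of $X=(c_2^-:\top)\times(c_2^+:A)$ in the domain and codomain of $f\times g$, which you flag and handle correctly.
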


\begin{proof}
  Let $(a,\bot) \in (c_1^+ : A) \times (c_3^- : \top)$ be arbitrary.
  Let
  \begin{multline*}
    F = \left\{(\bot,a_2,\bot,a_3) \in (c_1^- : \top) \times (c_2^+ : A) \times (c_2^- : \top) \times (c_3^+ : A) \mid {} \right.\\
    \left. {} \mid (f \times g)(a,\bot,a_2,\bot) \sqsubseteq (\bot,a_2,\bot,a_3) \right\}.
  \end{multline*}
  By the Knaster-Tarski formulation of the trace (\eqref{eq:808}):
  \[
    \Trop^{c_2^- \times c_2^+}(f \times g)(a,\bot) = \pi_{c_1^-,c_3^+}\left(\bigsqcap F\right).
  \]
  To prove the proposition, it is sufficient to show that:
  \[
    \pi_{c_1^-,c_3^+}\left(\bigsqcap F\right) = (\tau \circ g \circ \sigma \circ f \circ \rho)(a,\bot).
  \]

  We begin by showing $(\tau \circ g \circ \sigma \circ f \circ \rho)(a,\bot) \sqsubseteq \pi_{c_1^-,c_3^+}\left(\bigsqcap F\right)$.
  Let $(\bot,a_2,\bot,a_3) = \bigsqcap F$.
  By \cref{lemma:fscd:1}, $\bigsqcap F \in F$, so $(f \times g)(a,\bot,a_2,\bot) \sqsubseteq (\bot,a_2,\bot,a_3)$, \ie,
  \begin{align}
    f(c_1^+ : a, c_2^- : \bot) &\sqsubseteq (c_1^- : \bot, c_2^+ : a_2)\label{eq:fscd:54}\\
    g(c_2^+ : a_2, c_3^- : \bot) &\sqsubseteq (c_2^- : \bot,c_3^+ : a_3).\label{eq:fscd:55}
  \end{align}
  We generate a sequence of inequalities using monotonicity.
  First, by definition of $\rho$:
  \begin{align*}
    \rho(c_1^+ : a, c_3^- : \bot) &= (c_1^+ : a, c_2^- : \bot)\\
    \shortintertext{applying $f$ to both sides:}
    (f \circ \rho)(c_1^+ : a, c_3^- : \bot) &= f(c_1^+ : a, c_2^- : \bot)\\
    \shortintertext{by transitivity with \eqref{eq:fscd:54}:}
    (f \circ \rho)(c_1^+ : a, c_3^- : \bot) &\sqsubseteq (c_1^- : \bot, c_2^+ : a_2)\\
    \shortintertext{applying $\sigma$ to both sides and computing on the right:}
    (\sigma \circ f \circ \rho)(c_1^+ : a, c_3^- : \bot) &\sqsubseteq (c_3^- : \bot, c_2^+ : a_2)\\
    \shortintertext{applying $f$ to both sides:}
    (g \circ \sigma \circ f \circ \rho)(c_1^+ : a, c_3^- : \bot) &\sqsubseteq g(c_3^- : \bot, c_2^+ : a_2)\\
    \shortintertext{by transitivity with \eqref{eq:fscd:55}:}
    (g \circ \sigma \circ f \circ \rho)(c_1^+ : a, c_3^- : \bot) &\sqsubseteq (c_2^- : \bot,c_3^+ : a_3)\\
    \shortintertext{applying $\tau$ to both sides and computing on the right:}
    (\tau \circ g \circ \sigma \circ f \circ \rho)(c_1^+ : a, c_3^- : \bot) &\sqsubseteq (c_1^- : \bot,c_3^+ : a_3)\\
    \shortintertext{but $(c_1^- : \bot,c_3^+ : a_3) = \pi_{c_1^-,c_3^+}\left(\bigsqcap F\right)$, so:}
    (\tau \circ g \circ \sigma \circ f \circ \rho)(c_1^+ : a, c_3^- : \bot) &\sqsubseteq \pi_{c_1^-,c_3^+}\left(\bigsqcap F\right).
  \end{align*}

  Now we show that $\pi_{c_1^-,c_3^+}\left(\bigsqcap F\right) \sqsubseteq (\tau \circ g \circ \sigma \circ f \circ \rho)(a,\bot)$.
  Let $a_2, a_3 \in A$ be such that
  \begin{align*}
    f(c_1^+ : a, c_2^- : \bot) &= (c_1^- : \bot, c_2^+ : a_2),\\
    g(c_2^+ : a_2, c_3^- : \bot) &= (c_2^- : \bot, c_3^+ : a_3).
  \end{align*}
  Then $(\tau \circ g \circ \sigma \circ f \circ \rho)(c_1^+ : a, c_3^- : \bot) = (c_1^- : \bot, c_3^+ : a_3)$.
  But the above two equations imply that
  \[
    (f \times g)(c_1^+ : a,  c_2^- : \bot, c_2^+ : a_2, c_3^- : \bot) = (c_1^- : \bot, c_2^+ : a_2, c_2^- : \bot,c_3^+ : a_3),
  \]
  so $(c_1^- : \bot, c_2^+ : a_2, c_2^- : \bot,c_3^+ : a_3) \in F$.
  It follows that
  \[
    \pi_{c_1^-,c_3^+}\left(\bigsqcap F\right) \sqsubseteq (c_1^- : \bot, c_3^+ : a_3) = (\tau \circ g \circ \sigma \circ f \circ \rho)(c_1^+ : a, c_3^- : \bot). \qedhere
  \]
\end{proof}


\section{Semantic Results for Terms and Processes}
\label{sec:semant-results-terms}

We show various supporting results for terms and processes.
To simplify the proofs of these results, we adopt the following slogan:
\begin{quote}
  The sound categorical interpretation of notion the of term formation amounts to requiring that certain naturality conditions hold in the categorical model.~\cite[p.~165]{crole_1993:_categ_types}:
\end{quote}
To make this explicit, assume that judgments $\jtypef{\Psi}{M}{\tau}$ are interpreted as morphisms $\sembr{\Psi} \to \sembr{\tau}$ in some category $\mb{C}$.
Consider a term-forming rule
\[
  \adjustbox{width=\linewidth,keepaspectratio}\bgroup
  \infer{
    \jtypef{\Psi}{F_{\mathcal{J}_1,\dotsc,\mathcal{J}_l}(P_1,\dotsc,P_n,M_1,\dotsc,M_m)}{\tau}
  }{
    \jtypem{\Psi}{\Delta_1}{P_1}{c_1}{C_1}
    &
    \cdots
    &
    \jtypem{\Psi}{\Delta_n}{P_n}{c_n}{C_n}
    &
    \jtypef{\Psi}{M_1}{\tau_1}
    &
    \cdots
    &
    \jtypef{\Psi}{M_m}{\tau_m}
    &
    \mathcal{J}_1
    &
    \cdots
    &
    \mathcal{J}_l
  }
  \egroup
\]
Assume its interpretation is given by
\begin{equation}
  \label[intn]{eq:fscd:37}
  \begin{aligned}
    &\sembr{\jtypem{\Psi}{\Delta}{F_{\mathcal{J}_1,\dotsc,\mathcal{J}_l}(P_1,\dotsc,P_n,M_1,\dotsc,M_m)}{c}{C}}\\
    &= \sembr{F_{\mathcal{J}_1,\dotsc,\mathcal{J}_l}}_{\sembr{\Psi}}\left(
      \sembr{\jtypem{\Psi}{\Delta_1}{P_1}{c_1}{C_1}},
      \dotsc,
      \sembr{\jtypem{\Psi}{\Delta_n}{P_n}{c_n}{C_n}},\right.\\
    &\qquad\qquad\left.
      \sembr{\jtypef{\Psi}{M_1}{\tau_1}},
      \dotsc,
      \sembr{\jtypef{\Psi}{M_m}{\tau_m}}
    \right),
  \end{aligned}
\end{equation}
where $\sembr{F_{\mathcal{J}_1,\dotsc,\mathcal{J}_l}}$ is a family of morphisms
\begin{equation}
  \label{eq:fscd:2}
  \sembr{F_{\mathcal{J}_1,\dotsc,\mathcal{J}_l}}_{\sembr{\Psi}} :
  \left(\prod_{i = 1}^n \mb{C}\left( \sembr{\Psi}, \sembr{\Delta_i \vdash c_i : C_i} \right)\right)
  \times
  \left(\prod_{i = 1}^m \mb{C}\left( \sembr{\Psi}, \sembr{\tau_i} \right) \right)
  \to
  \mb{C}\left( \sembr{\Psi}, \sembr{\tau} \right).
\end{equation}
We say that \cref{eq:fscd:37} is \defin{natural in its environment} if the family \eqref{eq:fscd:2} is natural in $\sembr{\Psi}$.
In this case, we call $\sembr{F_{\mathcal{J}_1,\dotsc,\mathcal{J}_l}}$ a \defin{natural interpretation} of the rule.
The definition for process-forming rules is analogous.

\begin{proposition}
  \label{prop:fscd:2}
  If $\jtypef{\Psi}{M}{\tau}$, then the interpretation $\sembr{\jtypef{\Psi}{M}{\tau}}$ is natural in its environment.
  If $\jtypem{\Psi}{\Delta}{P}{a}{A}$, then the interpretation $\sembr{\jtypem{\Psi}{\Delta}{P}{a}{A}}$ is natural in its environment.
\end{proposition}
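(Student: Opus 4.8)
The plan is to prove both statements simultaneously by mutual induction on the derivations of $\jtypef{\Psi}{M}{\tau}$ and $\jtypem{\Psi}{\Delta}{P}{a}{A}$, exhibiting for each rule the operation $\sembr{F}$ implicit in its semantic clause and checking that this $\sembr F$ is a natural interpretation in the sense of \eqref{eq:fscd:2}. The mutual induction is forced by \rn{I-\{\}} and \rn{E-\{\}}, whose clauses \eqref{eq:191919} and \eqref{eq:115} cross between the functional and process layers.

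The workhorse would be a single observation. Almost every clause presents the conclusion in the ``pointwise'' form $\sembr{\text{conclusion}}_{\sembr{\Psi}} = \Phi \circ \langle \sembr{\text{premise}_1}, \dotsc, \sembr{\text{premise}_k} \rangle$, where $\Phi$ is a fixed morphism of $\moabc$ assembled from structural maps (projections, pairings, symmetry), the liftings $\up$, $\down$, the isomorphisms $\ms{Fold}^{\pm}$, $\ms{Unfold}^{\pm}$, the operators $\strictfn_j$ and $\Trop$, and case analyses, none of which mention the environment $\sembr{\Psi}$. For such a clause naturality is immediate: for $g : \sembr{\Phi'} \to \sembr{\Psi}$ the universal property of the product gives $\langle f_i \circ g \rangle_i = \langle f_i \rangle_i \circ g$, whence $\Phi \circ \langle f_i \circ g \rangle_i = (\Phi \circ \langle f_i \rangle_i) \circ g$. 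This disposes of every rule whose premises are typed in the same functional context $\Psi$ as the conclusion: the value-sending rules \rn{$\Tand{}{} R$} and \rn{${\Timp{}{}} L$}, where the case analysis on $\sembr{M}u$ is absorbed into $\Phi$; the shift, choice, and channel-passing rules, where $\strictfn$, coproduct tags, and the tensor (un)packings are part of $\Phi$; the recursive-type rules, where $\ms{Fold}^{\pm}$ and $\ms{Unfold}^{\pm}$ enter as fixed maps; and crucially \rn{Cut}, whose clause \eqref{eq:11} applies the fixed continuous operator $\Trop$ pointwise to $\sembr{P}_{\sembr\Psi} \times \sembr{Q}_{\sembr\Psi}$, so that the pointwise principle applies with $\Phi(p,q) = \Trop(p \times q)$. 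The axiom rules have no premises: \rn{Fwd} and \rn{$\Tu R$} interpret as environment-independent (constant) families, which are natural; and \rn{F-Var} interprets as the projection onto the bound component, which is natural because it is unaffected by a change of the remaining environment.

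The only rules not of pointwise form are those that extend the functional context with the variable being bound: \rn{F-Fun} and \rn{F-Fix} on the term side, and the value-receiving rules \rn{$\Tand{}{} L$} and \rn{${\Timp{}{}} R$} on the process side. Here the premise is interpreted over $\sembr{\Psi, x : \tau} = \sembr{\Psi} \times \sembr{\tau}$ and evaluated at $\upd{u}{x \mapsto v} = \langle u, v \rangle$. The point I would exploit is that environment extension is functorial: precomposition by $g$ corresponds to precomposition by $g \times \ms{id}_{\sembr{\tau}}$ on the extended environment, since $\upd{g(u')}{x \mapsto v} = (g \times \ms{id}_{\sembr\tau})\upd{u'}{x \mapsto v}$. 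I would then instantiate the induction hypothesis for the premise at the morphism $g \times \ms{id}_{\sembr\tau}$ and combine it with naturality of currying. For \rn{F-Fun} this gives $\strictfn \circ \curry(\sembr{M} \circ (g \times \ms{id})) = (\strictfn \circ \curry(\sembr M)) \circ g$, the post-composed $\strictfn$ acting only on the codomain. For \rn{F-Fix} it is exactly the naturality axiom of the Conway operator, $\sfix{h} \circ g = \sfix{(h \circ (g \times \ms{id}))}$, from \cref{def:fscd:2}. For \rn{$\Tand{}{} L$} and \rn{${\Timp{}{}} R$} the same pattern applies once one notes that $\strictfn_{a^{\pm}}$ and the pattern match on the lifted component commute with precomposition because they act on the codomain side.

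I expect the context-extending rules to be the main obstacle, as they are the only place where the environment morphism must be pushed underneath a binder rather than factored out of a pointwise expression. The crux is to phrase the induction hypothesis so that it may be instantiated at the extended morphism $g \times \ms{id}_{\sembr\tau}$; once this is set up, \rn{F-Fun} follows from the exponential adjunction, \rn{F-Fix} from the Conway naturality identity, and the two value-receiving process rules follow by the same argument after discharging the routine bookkeeping of $\strictfn$ and the lifting isomorphisms.
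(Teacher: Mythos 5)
Your proposal is correct and follows essentially the same route as the paper: a rule-by-rule analysis exhibiting each clause's natural family, with the pointwise/Yoneda computation ($\Phi \circ \langle f_i \rangle_i \circ g = \Phi \circ \langle f_i \circ g \rangle_i$) handling most rules, constant families handling the axioms, and naturality of currying (equivalently, for \rn{F-Fix}, the naturality axiom of the Conway operator) handling the context-extending rules. The only cosmetic difference is that you frame it as a mutual induction; since naturality of each rule's family must be (and is) checked against arbitrary elements of the premise hom-sets rather than only against denotations of derivable premises, no induction hypothesis is ever actually consumed, and the paper accordingly presents the argument as a plain case analysis on the last rule.
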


\begin{proof}
  By case analysis on the last rule in the derivation of $\jtypef{\Psi}{M}{\tau}$ and $\jtypem{\Psi}{\Delta}{P}{a}{A}$.
  We explain notation and styles of arguments in the first case in which they are used.
  Because many of the cases are similar, we adopt a concise style in subsequent cases.

  \begin{description}
  \item[Case \rn{I-\{\}}.]
    \[
      \infer[\rn{I-\{\}}]{
        \jtypef{\Psi}{\tProc{a}{P}{\overline{a_i}}}{\Tproc{a:A}{\overline{a_i:A_i}}}
      }{
        \jtypem{\Psi}{\overline{a_i:A_i}}{P}{a}{A}
      }
    \]
    Recall \cref{eq:191919}:
    \[
      \sembr{\jtypef{\Psi}{\tProc{a}{P}{\overline{a_i}}}{\Tproc{a:A}{\overline{a_i:A_i}}}} = \up \circ \sembr{\jtypem{\Psi}{\overline{a_i:A_i}}{P}{a}{A}}.
    \]
    The corresponding natural interpretation is:
    \[
      \moabc\left( {-} ,  \up \right) : \moabc\left( {-} ,  \sembr{\tProc{a}{P}{\overline{a_i}} \vdash a : A} \right) \nto \moabc\left( {-} ,  \sembr{\Tproc{a:A}{\overline{a_i:A_i}}} \right)
    \]
    where the component at $\sembr{\Psi}$ is given by
    \[
      \moabc\left( \sembr{\Psi} ,  \up \right)(p) = \up \circ p.
    \]

  \item[Case \rn{F-Var}.]
    \[
      \infer[\rn{F-Var}]{\jtypef{\Psi, x: \tau}{x}{\tau}}{\mathstrut}
    \]
    Recall \cref{eq:39}:
    \[
      \sembr{\jtypef{\Psi,x:\tau}{x}{\tau}}u = \pi^{\Psi,x}_{x}u.
    \]
    The corresponding natural interpretation is:
    \[
      \left(\lambda \bot . \lambda u \in \sembr{\Psi,x:\tau}.\pi^{\Psi,x}_{x}u\right)_{\sembr{\Psi}} : \{ \bot \} \nto \moabc\left( \sembr{\Psi,x:\tau} ,  \sembr{\tau} \right)
    \]

  \item[Case \rn{F-Fix}.]
    \[
      \infer[\rn{F-Fix}]{
        \jtypef{\Psi}{\tFix{x}{M}}{\tau}
      }{
        \jtypef{\Psi,x:\tau}{M}{\tau}
      }
    \]
    Recall \cref{eq:28}:
    \[
      \sembr{\jtypef{\Psi}{\tFix{x}{M}}{\tau}}u = \sfix{\sembr{\jtypef{\Psi,x:\tau}{M}{\tau}}}u.
    \]
    This is equivalent to
    \[
      \sembr{\jtypef{\Psi}{\tFix{x}{M}}{\tau}}u = \fix\left(\lambda v \in \sembr{\tau}.\sembr{\jtypef{\Psi,x:\tau}{M}{\tau}}\upd{u}{x \mapsto v}\right),
    \]
    which is itself equivalent to
    \[
      \sembr{\jtypef{\Psi}{\tFix{x}{M}}{\tau}}u = \fix\left(\Lambda\left(\sembr{\jtypef{\Psi,x:\tau}{M}{\tau}}\right)\right),
    \]
    where
    \[
      \Lambda : \moabc\left( {-} \times \sembr{x : \tau} ,  \sembr{\tau} \right) \nto \moabc\left( {-}, \moabc\left( \sembr{\tau} ,  \sembr{\tau} \right) \right)
    \]
    is the currying natural isomorphism given by the adjunction for the exponential.
    The corresponding natural interpretation for \rn{F-Fix} is then:
    \[
      \moabc\left( {-} ,  \fix \right)  \circ \Lambda : \moabc\left( {-} \times \sembr{x : \tau} ,  \sembr{\tau} \right) \nto \moabc\left( {-} ,  \sembr{\tau} \right).
    \]

  \item[Case \rn{F-Fun}.]
    \[
      \infer[\rn{F-Fun}]{
        \jtypef{\Psi}{\lambda x : \tau.M}{\tau \to \sigma}
      }{
        \jtypef{\Psi, x:\tau}{M}{\sigma}
      }
    \]
    Recall \cref{eq:41}:
    \[
      \sembr{\jtypef{\Psi}{\lambda x: \tau.M}{\tau \to \sigma}}u = \strictfn\left(\lambda v \in \sembr{\tau}.\sembr{\jtypef{\Psi,x:\tau}{M}{\sigma}}\upd{u}{x \mapsto v}\right).
    \]
    This is equivalent to
    \[
      \sembr{\jtypef{\Psi}{\lambda x: \tau.M}{\tau \to \sigma}}u = \strictfn(\Lambda(\sembr{\jtypef{\Psi,x:\tau}{M}{\sigma}})).
    \]
    The corresponding natural interpretation is:
    \[
      \moabc\left({-} ,  \strictfn\right) \circ \Lambda : \moabc\left( {-} \times \sembr{x : \tau} ,  \sembr{\sigma} \right) \nto \moabc\left( {-} ,  \sembr{\sigma} \right).
    \]

  \item[Case \rn{F-App}.]
    \[
      \infer[\rn{F-App}]{
        \jtypef{\Psi}{MN}{\sigma}
      }{
        \jtypef{\Psi}{M}{\tau \to \sigma}
        &
        \jtypef{\Psi}{N}{\tau}
      }
    \]
    Recall \cref{eq:42}:
    \[
      \sembr{\jtypef{\Psi}{MN}{\sigma}}u = \sembr{\jtypef{\Psi}{M}{\tau \to \sigma}}u(\sembr{\jtypef{\Psi}{N}{\tau}}u).
    \]
    We must show that this is the image $(\eta_{\sembr{\Psi}}(\sembr{\jtypef{\Psi}{M}{\tau \to \sigma}}, \sembr{\jtypef{\Psi}{N}{\tau}}))u$ of some natural transformation
    \[
      \eta : \moabc\left( {-}, \sembr{\tau \to \sigma} \right) \times \moabc\left({-} ,  \sembr{\tau} \right) \nto \moabc\left({-} ,  \sembr{\sigma} \right).
    \]
    There exists a canonical natural isomorphism
    \[
      \alpha : \moabc\left( {-}, \sembr{\tau \to \sigma} \right) \times \moabc\left({-} ,  \sembr{\tau} \right) \nto \moabc\left( {-}, \sembr{\tau \to \sigma} \times \sembr{\tau} \right)
    \]
    whose $D$-component is $\alpha_D(m,n)(u) = (mu, nu)$.
    The counit $\ms{ev}$ of the exponential adjunction is a natural transformation whose $\sembr{\tau},\sembr{\sigma}$ component is
    \[
      \ms{ev}_{\sembr{\tau},\sembr{\sigma}} : \sembr{\tau \to \sigma} \times \sembr{\tau} \to \sembr{\sigma}.
    \]
    sending $(f,v)$ to $f(v)$.
    This morphism induces a natural transformation
    \[
      \moabc\left( {-} ,  \ms{ev}_{\sembr{\tau},\sembr{\sigma}} \right) : \moabc\left( {-}, \sembr{\tau \to \sigma} \times \sembr{\tau} \right) \nto \moabc\left( {-} ,  \sembr{\sigma} \right).
    \]
    The corresponding natural interpretation for \rn{F-App} is then $\eta = \moabc\left( {-} ,  \ms{ev}_{\sembr{\tau},\sembr{\sigma}} \right) \circ \alpha$.

  \item[Case \rn{Fwd}.]
    \[
      \infer[\rn{Fwd}]{
        \jtypem{\Psi}{a:A}{\tFwd{b}{a}}{b}{A}
      }{}
    \]
    Recall \cref{eq:10}:
    \[
      \sembr{\jtypem{\Psi}{a:A}{\tFwd{b}{a}}{b}{A}}u(a^+,b^-) = (b^-,a^+)
    \]
    Let $f = \lambda (a^+,b^-) . (b^-,a^+)$.
    The corresponding natural interpretation is
    \[
      \left(\lambda \bot . \lambda u \in \sembr{\Psi} . f \right)_{\sembr{\Psi}} : \{\bot\} \nto \moabc\left( \sembr{\Psi} ,  \sembr{a : A \vdash b : A} \right).
    \]
    This family is natural is because it is a constant family.

  \item[Case \rn{Cut}.]
    \[
      \infer[\rn{Cut}]{
        \jtypem{\Psi}{\Delta_1,\Delta_2}{\tCut{a}{P}{Q}}{c}{C}
      }{
        \jtypem{\Psi}{\Delta_1}{P}{a}{A}
        &
        \jtypem{\Psi}{a:A,\Delta_2}{Q}{c}{C}
      }
    \]
    Recall \cref{eq:11}:
    \begin{equation*}
      \begin{aligned}
        &\sembr{\jtypem{\Psi}{\Delta_1, \Delta_2}{\tCut{a}{P}{Q}}{c}{C}}u\\
        &= \Tr{\left(\sembr{\jtypem{\Psi}{\Delta_1}{P}{a}{A}}u \times \sembr{\jtypem{\Psi}{a : A,\Delta_2}{Q}{c}{C}}u\right)}{a^- \times a^+}.
      \end{aligned}
    \end{equation*}
    We must show that there is a corresponding natural interpretation
    \begin{multline*}
      \eta : \moabc\left( {-} ,  \sembr{\Delta_1 \vdash a : A} \right) \times \moabc\left( {-} ,  \sembr{a : A, \Delta_2 \vdash c : C} \right) \nto {}\\
      {} \nto \moabc\left( {-} ,  \sembr{\Delta_1,\Delta_2 \vdash c : C} \right).
    \end{multline*}
    There exists a natural isomorphism
    \begin{multline*}
      \alpha : \moabc\left( {-} ,  \sembr{\Delta_1 \vdash a : A} \right) \times \moabc\left( {-} ,  \sembr{a : A, \Delta_2 \vdash c : C} \right) \nto {}\\
      {} \nto \moabc\left( {-} ,  \sembr{\Delta_1 \vdash a : A} \times \sembr{a : A, \Delta_2 \vdash c : C} \right)
    \end{multline*}
    whose $D$-component is $\alpha_D(p,q)(u) = (pu, qu)$.
    There exists an injection
    \begin{multline*}
      \psi : \sembr{\Delta_1 \vdash a : A} \times \sembr{a: A, \Delta_2 \vdash c : C} \to {}\\
      {} \to \moabc\left( \sembr{\Delta_1, \Delta_2, a : A}^+ \times \sembr{a : A, c : C}^- ,  \sembr{\Delta_1, \Delta_2, a : A}^- \times \sembr{a : A, c : C}^+ \right)
    \end{multline*}
    where $\psi(p,q) = p \times q$.
    The trace operator defines a morphism
    \begin{multline*}
      \Tr{}{a^- \times a^+} :  \moabc\left( \sembr{\Delta_1, \Delta_2, a : A}^+ \times \sembr{a : A, c : C}^-, {} \right.\\
      \left.  \sembr{\Delta_1, \Delta_2, a : A}^- \times \sembr{a : A, c : C}^+ \right) \to \sembr{\Delta_1, \Delta_2 \vdash c : C}.
    \end{multline*}
    By the Yoneda lemma, the morphism $\Tr{}{a^- \times a^+} \circ \psi$ uniquely determines a natural transformation
    \begin{multline*}
      \moabc\left( {-} ,  \Tr{}{a^- \times a^+} \circ \psi \right) : \moabc\left( {-} ,  \sembr{\Delta_1 \vdash a : A} \times \sembr{a : A, \Delta_2 \vdash c : C} \right) \nto {}\\
      {} \nto \moabc\left( {-} ,  \sembr{\Delta_1, \Delta_2 \vdash c : C} \right)
    \end{multline*}
    given by post-composition.
    The corresponding natural interpretation is then:
    \[
      \eta = \moabc\left( {-} ,  \Tr{}{b^- \times b^+} \circ \psi \right) \circ \alpha.
    \]
    Indeed, for all $u \in \sembr{\Psi}$,
    \begin{align*}
      &\eta_{\sembr{\Psi}}\left(\sembr{\jtypem{\Psi}{\Delta_1}{P}{b}{B}}, \sembr{\jtypem{\Psi}{b:B,\Delta_2}{Q}{c}{C}}\right)u\\
      &= \left(\Tr{}{b^- \times b^+} \circ \psi\right)\left(\sembr{\jtypem{\Psi}{\Delta_1}{P}{b}{B}}u, \sembr{\jtypem{\Psi}{b:B,\Delta_2}{Q}{c}{C}}u\right)\\
      &= \Tr{}{b^- \times b^+}\left(\sembr{\jtypem{\Psi}{\Delta_1}{P}{b}{B}}u \times \sembr{\jtypem{\Psi}{b:B,\Delta_2}{Q}{c}{C}}u\right).
    \end{align*}

  \item[Case \rn{$\Tu R$}.]
    \[
      \infer[\rn{$\Tu R$}]{
        \jtypem{\Psi}{\cdot}{\tClose a}{a}{\Tu}
      }{}
    \]
    Recall \cref{eq:1509}:
    \[
      \sembr{\jtypem{\Psi}{\cdot}{\tClose a}{a}{\Tu}}u(a^- : \bot) = (a^+ : \ast)
    \]
    Let $f = \lambda(a^- : \bot).(a^+ : \ast)$.
    The corresponding natural interpretation is
    \[
      \left(\lambda \bot . \lambda u \in \sembr{\Psi} . f \right)_{\sembr{\Psi}} : \{\bot\} \nto \moabc\left( \sembr{\Psi} ,  \sembr{\cdot \vdash a : \Tu} \right).
    \]
    This family is natural is because it is a constant family.

  \item[Case \rn{$\Tu L$}.]
    \[
      \infer[\rn{$\Tu L$}]{
        \jtypem{\Psi}{\Delta, a : \Tu}{\tWait{a}{P}}{c}{C}
      }{
        \jtypem{\Psi}{\Delta}{P}{c}{C}
      }
    \]
    Recall \cref{eq:21}:
    \begin{equation*}
      \begin{aligned}
        &\sembr{\jtypem{\Psi}{\Delta,a : \Tu}{\tWait{a}{P}}{c}{C}}u = \strictfn_{a^+}\left(\lambda (\delta^+,a^+,c^-).(\delta^-,\bot,c^+)\right))\\
        &\text{where }(\delta^-, c^+) = \sembr{\jtypem{\Psi}{\Delta}{P}{c}{C}}u(\delta^+,c^-).
      \end{aligned}
    \end{equation*}
    We must show that there exists a natural interpretation
    \[
      \eta : \moabc\left( {-} ,  \sembr{\Delta \vdash c : C} \right) \nto \moabc\left( {-} ,  \sembr{\Delta, a : \Tu \vdash c : C} \right).
    \]
    By the Yoneda lemma, these natural transformations are in natural bijection with morphisms
    \[
      f : \sembr{\Delta \vdash c : C} \to \sembr{\Delta, a : \Tu \vdash c : C},
    \]
    where each such $f$ uniquely determines the natural transformation $\hat f = \moabc\left({-} ,  f\right)$ whose component at $D$ is
    \[
      \hat f_D(g : D \to \sembr{\Delta \vdash c : C}) = f \circ g.
    \]
    In particular, the natural interpretation is induced by the morphism
    \[
      f(p) = \strictfn_{c^+}\left(\lambda (\delta^+,c^+,a^-).(\delta^-,\bot,a^+)\right)
    \]
    where $(\delta^-, a^+) = p(\delta^+,a^-)$.

  \item[Case \rn{$\Tds{} R$}.]
    \[
      \infer[\rn{$\Tds{} R$}]{
        \jtypem{\Psi}{\Delta}{\tSendS{a}{P}}{a}{\Tds A}
      }{
        \jtypem{\Psi}{\Delta}{P}{a}{A}
      }
    \]
    Recall \cref{eq:59}:
    \begin{equation*}
      \begin{aligned}
        &\sembr{\jtypem{\Psi}{\Delta}{\tSendS{a}{P}}{a}{\Tds A}}u\\
        &= \left(\ms{id} \times \left(a^+ : \up\right)\right) \circ \sembr{\jtypem{\Psi}{\Delta}{P}{a}{A}}u.
      \end{aligned}
    \end{equation*}
    Let
    \[
      f(p) = \left(\ms{id} \times \left(a^+ : \up\right)\right) \circ p : \sembr{\Delta \vdash a : A} \to \sembr{\Delta \vdash a : \Tds A}.
    \]
    The natural interpretation is then
    \[
      \moabc\left( {-} ,  f \right) : \moabc\left( {-} ,  \sembr{\Delta \vdash a : A} \right) \nto \moabc\left( {-} ,  \sembr{\Delta \vdash a : \Tds A} \right).
    \]
    Indeed, given a $u \in \sembr{\Psi}$, we get
    \begin{align*}
      &\moabc\left( \sembr{\Psi} ,  f \right)\left(\sembr{\jtypem{\Psi}{\Delta}{P}{a}{A}}\right)(u)\\
      &=\left( f \circ \sembr{\jtypem{\Psi}{\Delta}{P}{a}{A}}\right)(u)\\
      &= f(\sembr{\jtypem{\Psi}{\Delta}{P}{a}{A}}u)\\
      &= \left(\ms{id} \times \left(a^+ : \up\right)\right) \circ \sembr{\jtypem{\Psi}{\Delta}{P}{a}{A}}u.
    \end{align*}

  \item[Case \rn{$\Tds{} L$}.]
    \[
      \infer[\rn{$\Tds{} L$}]{
        \jtypem{\Psi}{\Delta,a : \Tds A}{\tRecvS{a}{P}}{c}{C}
      }{
        \jtypem{\Psi}{\Delta,a : A}{P}{c}{C}
      }
    \]
    Recall \cref{eq:56}:
    \begin{equation*}
      \begin{aligned}
        &\sembr{\jtypem{\Psi}{\Delta,a : \Tds A}{\tRecvS{a}{P}}{c}{C}}u\\
        &= \strictfn_{a^+}\left(\sembr{\jtypem{\Psi}{\Delta, a : A}{P}{c}{C}}u \circ \left(\ms{id} \times \left(a^+ : \down\right)\right)\right).
      \end{aligned}
    \end{equation*}
    Let
    \[
      f(p) = p \circ \left(\ms{id} \times \left(a^+ : \down\right)\right) : \sembr{\Delta, a : A \vdash c : C} \to \sembr{\Delta,a : \Tds A \vdash c : C}.
    \]
    The natural interpretation is then
    \[
      \moabc\left({-} ,  f\right) : \moabc\left( {-} ,  \sembr{\Delta, a : A \vdash c : C}\right) \nto \moabc\left({-} ,  \sembr{\Delta,a : \Tds A \vdash c : C}\right).
    \]

  \item[Case \rn{$\Tus R$}.]
    \[
      \infer[\rn{$\Tus R$}]{
        \jtypem{\Psi}{\Delta}{\tRecvS{a}{P}}{a}{\Tus{A}}
      }{
        \jtypem{\Psi}{\Delta}{P}{a}{A}
      }
    \]
    Analogous to case \rn{$\Tds L$}.

  \item[Case \rn{$\Tus L$}.]
    \[
      \infer[\rn{$\Tus L$}]{
        \jtypem{\Psi}{\Delta,a : \Tus A}{\tSendS{a}{P}}{c}{C}
      }{
        \jtypem{\Psi}{\Delta,a : A}{P}{c}{C}
      }
    \]
    Analogous to case \rn{$\Tds R$}.

  \item[Case \rn{$\Tplus R_k$}.]
    \[
      \infer[\rn{$\Tplus R_k$}]{
        \jtypem{\Psi}{\Delta}{\tSendL{a}{k}{P}}{a}{{\Tplus\{l:A_l\}}_{l \in L}}
      }{
        \jtypem{\Psi}{\Delta}{P}{a}{A_k}\quad(k \in L)
      }
    \]
    Recall \cref{eq:52}:
    \begin{equation*}
      \begin{aligned}
        &\sembr{\jtypem{\Psi}{\Delta}{\tSendL{a}{k}{P}}{a}{\Tplus\{l:A_l\}_{l \in L}}}u\left(\delta^+, \left(a_l^-\right)_{l \in L}\right) = \left(\delta^-, \left(k, \upim{a_k^+}\right)\right)\\
        &\text{where }\sembr{\jtypem{\Psi}{\Delta}{P}{a}{A_k}}u\left(\delta^+, a_k^-\right) = \left(\delta^-, a_k^+\right).
      \end{aligned}
    \end{equation*}
    Let $f : \sembr{\Delta \vdash a : A_k} \to \sembr{\Delta \vdash a : {\Tplus\{l:A_l\}}_{l \in L}}$ be given by
    \[
      f(p)\left(\delta^+, \left(a_l^-\right)_{l \in L}\right) = \left(\delta^-, \left(k, \upim{a_k^+}\right)\right)
    \]
    where $p\left(\delta^+, a_k^-\right) = \left(\delta^-, a_k^+\right)$.
    The natural interpretation is then
    \[
      \moabc\left( {-} ,  f \right) : \moabc\left( {-} ,  \sembr{\Delta \vdash a : A_k} \right) \nto \moabc\left( {-} ,  \sembr{\Delta \vdash a : {\Tplus\{l:A_l\}}_{l \in L}} \right).
    \]

  \item[Case \rn{$\Tplus L$}.]%
    \bgroup
    \newcommand{\thedom}{\sembr{\Delta, a : A_l \vdash c : C}}
    \newcommand{\thecodom}{\sembr{\Delta, a : {\Tplus\{l : A_l\}}_{l \in L} \vdash c : C}}
    \[
      \infer[\rn{$\Tplus L$}]{
        \jtypem{\Psi}{\Delta,a:{\Tplus\{l : A_l\}}_{l \in L}}{\tCase{a}{\left\{l_l \Rightarrow P_l\right\}_{i\in I}}}{c}{C}
      }{
        \jtypem{\Psi}{\Delta,a:A_l}{P_l}{c}{C}\quad(\forall l \in L)
      }
    \]
    Recall \cref{eq:53}:
    \begin{equation*}
      \begin{aligned}
        &\sembr{\jtypem{\Psi}{\Delta,a:\Tplus\{l:A_l\}_{l \in L}}{\tCase{a}{\left\{l \Rightarrow P_l\right\}_{l \in L}}}{c}{C}}u\\
        &= \strictfn_{a^+}\left(\lambda \left(\delta^+, a^+ : \left(k, \upim{a_k^+}\right), c^-\right).\left(\delta^-, a^- : \left(k : a_k^-, l \neq k : \bot\right)_{l \in L}, c^+\right)\right)\\
        &\text{where }(\delta^-, a_k^-, c^+) = \sembr{\jtypem{\Psi}{\Delta,a:A_k}{P_k}{c}{C}}u(\delta^+, a_k^+, c^-).
      \end{aligned}
    \end{equation*}
    We must find a corresponding natural interpretation
    \[
      \eta : \left(\prod_{l \in L} \moabc\left({-} ,  \thedom \right)\right) \nto \moabc\left({-} ,  \thecodom \right).
    \]
    There exists a canonical natural isomorphism
    \[
      \alpha : \left(\prod_{l \in L} \moabc\left({-} ,  \thedom \right)\right) \nto \moabc\left({-} ,  \prod_{l \in L} \thedom \right)
    \]
    whose $D$-component is $\alpha_D\left(\left(p_l\right)_{l \in L}\right)(u) = (\left(p_l(u)\right)_{l \in L}$, so it is sufficient to show that there exists a natural transformation
    \[
      \beta : \moabc\left({-} ,  \prod_{l \in L} \thedom \right) \nto \moabc\left({-} ,  \thecodom \right)
    \]
    such that $\eta = \beta \circ \alpha$.
    Consider the $\beta$ induced by
    \begin{equation*}
      \begin{aligned}
        &f :  \prod_{l \in L} \thedom \to \thecodom\\
        &f\left(\left(p_l\right)_{l \in L}\right) = \strictfn_{a^+}\left(\lambda \left(\delta^+, a^+ : \left(k, \upim{a_k^+}\right), c^-\right).\left(\delta^-, a^- : \left(k : a_k^-, l \neq k : \bot\right)_{l \in L}, c^+\right)\right)\\
        &\text{where } (\delta^-, a_k^-, c^+) = p_k(\delta^+, a_k^+, c^-).
      \end{aligned}
    \end{equation*}
    Recall that we abuse notation to pattern match on the component $a^+$.
    By strictness, we know that it will be an element of the form $\left(k, \upim{a_k^+}\right)$.
    We then get that $\eta = \moabc\left({-} ,  f\right) \circ \alpha$ is the desired natural interpretation.
    \egroup

  \item[Case \rn{$\Tamp R$}.]
    \[
      \infer[\rn{$\Tamp R$}]{
        \jtypem{\Psi}{\Delta}{\tCase{a}{\left\{l \Rightarrow P_l\right\}_{l \in L}}}{a}{{\Tamp\{l :A_l \}}_{l \in L}}
      }{
        \jtypem{\Psi}{\Delta}{P_l}{a}{A_l}\quad(\forall l \in L)
      }
    \]
    Analogous to case \rn{$\Tplus L$}.

  \item[Case \rn{$\Tamp L_k$}.]
    \[
      \infer[\rn{$\Tamp L_k$}]{
        \jtypem{\Psi}{\Delta,a:{\Tamp\{l : A_l\}}_{l \in L}}{\tSendL{a}{k}{P}}{c}{C}
      }{
        \jtypem{\Psi}{\Delta,a:A_k}{P}{c}{C}
        &
        (k \in L)
      }
    \]
    Analogous to case \rn{$\Tplus R_k$}.

  \item[Case \rn{$\Tot R^*$}.]
    \[
      \infer[\rn{$\Tot R^*$}]{
        \jtypem{\Psi}{\Delta, b : B}{\tSendC{a}{b}{P}}{a}{B \Tot A}
      }{
        \jtypem{\Psi}{\Delta}{P}{a}{A}
      }
    \]
    Recall \cref{eq:17}:
    \begin{equation*}
      \begin{aligned}
        &\sembr{\jtypem{\Psi}{\Delta, b : B}{\tSendC{a}{b}{P}}{a}{B \Tot A}}u(\delta^+,b^+,(a^-_B,a^-_A))\\
        &= \left(\delta^-,a_B^-,\upim{\left(b^+,a_A^+\right)}\right)\text{ where }\sembr{\jtypem{\Psi}{\Delta}{P}{a}{A}}u(\delta^+,a_A^-) = (\delta^-,a_A^+).
      \end{aligned}
    \end{equation*}
    Let $f : \sembr{\Delta \vdash a : A} \to \sembr{\Delta, b : B \vdash a : B \Tot A}$ be given by
    \[
      f(p)(\delta^+,b^+,(a^-_B,a^-_A)) = \left(\delta^-,a_B^-,\upim{\left(b^+,a_A^+\right)}\right)
    \]
    where $p(\delta^+,a_A^-) = (\delta^-,a_A^+)$.
    The corresponding natural interpretation is then
    \[
      \moabc\left( {-} ,  f \right) : \moabc\left( {-} ,  \sembr{\Delta \vdash a : A} \right) \nto \moabc\left( {-} ,  \sembr{\Delta, b : B \vdash a : B \Tot A} \right).
    \]

  \item[Case \rn{$\Tot L$}.]
    \[
      \infer[\rn{$\Tot L$}]{
        \jtypem{\Psi}{\Delta, a : B \Tot A}{\tRecvC{b}{a}{P}}{c}{C}
      }{
        \jtypem{\Psi}{\Delta, a : A, b : B}{P}{c}{C}
      }
    \]
    Recall \cref{eq:26}:
    \begin{equation*}
      \begin{aligned}
        &\sembr{\jtypem{\Psi}{\Delta, a : B \Tot A}{\tRecvC{b}{a}{Q}}{c}{C}}u(\delta^+,a^+,c^-)\\
        &= \strictfn_{a^+}\left(\lambda (\delta^+,a^+ : \upim{(a_B^+, a_A^+)},c^-) . (\delta^-,(b^-,a^-),c^+) \right)\\
        &\text{where }\sembr{\jtypem{\Psi}{\Delta, a : A, b : B}{Q}{c}{C}}u(\delta^+,a_A^+,a_B^+,c^-) = (\delta^-,a^-,b^-,c^+).
      \end{aligned}
    \end{equation*}
    Let $f : \sembr{\Delta, a : A, b : B \vdash c : C} \to \sembr{\Delta; a : B \Tot A \vdash c : C}$ be given by
    \[
      f(p) = \strictfn_{a^+}\left(\lambda (\delta^+,a^+ : \upim{(a_B^+, a_A^+)},c^-) . (\delta^-,(b^-,a^-),c^+) \right)
    \]
    where $p(\delta^+,a_A^+,a_B^+,c^-) = (\delta^-,a^-,b^-,c^+)$.
    The corresponding natural interpretation is then
    \[
      \moabc\left({-} ,  f\right) : \moabc\left( {-} ,   \sembr{\Delta, a : A, b : B \vdash c : C} \right) \nto \moabc\left( {-} ,  \sembr{\Delta; a : B \Tot A \vdash c : C} \right).
    \]

  \item[Case \rn{${\Tlolly}R$}.]
    \[
      \infer[\rn{${\Tlolly}R$}]{
        \jtypem{\Psi}{\Delta}{\tRecvC{b}{a}{P}}{a}{B \Tlolly A}
      }{
        \jtypem{\Psi}{\Delta, b : B}{P}{a}{A}
      }
    \]
    Analogous to case \rn{$\Tot L$}.

  \item[Case \rn{${\Tlolly}L$}.]
    \[
      \infer[\rn{${\Tlolly}L$}]{
        \jtypem{\Psi}{\Delta, b : B, a : B \Tlolly A}{\tSendC{a}{bb}{P}}{c}{C}
      }{
        \jtypem{\Psi}{\Delta,a : A}{P}{c}{C}
      }
    \]
    Analogous to case \rn{$\Tot R$}.

  \item[Case \rn{$\Tand{}{} R$}.]
    \[
      \infer[\rn{$\Tand{}{} R$}]{
        \jtypem{\Psi}{\Delta}{\tSendV{a}{M}{P}}{a}{\Tand{\tau}{A}}
      }{
        \jtypef{\Psi}{M}{\tau}
        &
        \jtypem{\Psi}{\Delta}{P}{a}{A}
      }
    \]
    Recall \cref{eq:1005}:
    \begin{equation*}
      \begin{aligned}
        &\sembr{\jtypem{\Psi}{\Delta}{\tSendV{a}{M}{P}}{a}{\Tand{\tau}{A}}}u(\delta^+,a^-)\\
        &= \begin{cases}
          \bot & \text{if }\sembr{\jtypef{\Psi}{M}{\tau}}u = \bot\\
          \left(\delta^-,\upim{\left(v,a^+\right)}\right) & \text{if }\sembr{\jtypef{\Psi}{M}{\tau}}u = v \neq \bot
        \end{cases}\\
        &\text{where }\sembr{\jtypem{\Psi}{\Delta}{P}{a}{A}}u(\delta^+,a^-) = (\delta^-, a^+).
      \end{aligned}
    \end{equation*}
    We must find a corresponding natural interpretation
    \[
      \eta : \moabc\left({-} ,  \sembr{\tau} \right) \times \moabc\left( {-} ,  \sembr{\Delta \vdash a : A} \right) \nto \moabc\left({-} ,  \sembr{\Delta \vdash a : \Tand{\tau}{A}} \right).
    \]
    Let
    \[
      \alpha : \moabc\left({-} ,  \sembr{\tau} \right) \times \moabc\left( {-} ,  \sembr{\Delta \vdash a : A} \right) \nto \moabc\left( {-} ,  \sembr{\tau} \times \sembr{\Delta \vdash a : A} \right)
    \]
    be the canonical natural isomorphism whose $D$-component is $\alpha_D(m,p)(u) = (mu, pu)$.
    Let $f : \sembr{\tau} \times \sembr{\Delta \vdash a : A} \to \sembr{\Delta \vdash a : \Tand{\tau}{A}}$ be the morphism given by
    \[
      f(v,p)(\delta^+,a^-) = \begin{cases}
        \bot & \text{if }v = \bot\\
        \left(\delta^-,\upim{\left(v,a^+\right)}\right) & \text{if }v \neq \bot
      \end{cases}
    \]
    where $p(\delta^+,a^-) = (\delta^-, a^+)$.
    We claim that the corresponding natural interpretation is given by
    \begin{multline*}
      \eta = \moabc\left( {-} ,  f \right) \circ \alpha : \moabc\left({-} ,  \sembr{\tau} \right) \times \moabc\left( {-} ,  \sembr{\Delta \vdash a : A} \right) \nto {} \\
      {} \nto \moabc\left({-} ,  \sembr{\Delta \vdash a : \Tand{\tau}{A}} \right).
    \end{multline*}
    Indeed, given a $u \in \sembr{\Psi}$, we get
    \begin{align*}
      &\eta_{\sembr{\Psi}}\left(\sembr{\jtypef{\Psi}{M}{\tau}}, \sembr{\jtypem{\Psi}{\Delta}{P}{a}{A}}\right)(u)\\
      &= f\left(\sembr{\jtypef{\Psi}{M}{\tau}}u,  \sembr{\jtypem{\Psi}{\Delta}{P}{a}{A}}u\right)\\
      &= \sembr{\jtypem{\Psi}{\Delta}{\tSendV{a}{M}{P}}{a}{\Tand{\tau}{A}}}u.
    \end{align*}

  \item[Case \rn{$\Tand{}{} L$}.]
    \[
      \infer[\rn{$\Tand{}{} L$}]{
        \jtypem{\Psi}{\Delta, a:\Tand{\tau}{A}}{\tRecvV{x}{a}{P}}{c}{C}
      }{
        \jtypem{\Psi,x:\tau}{\Delta, a:A}{P}{c}{C}
      }
    \]
    Recall \cref{eq:1006}:
    \begin{equation*}
      \begin{aligned}
        &\sembr{\jtypem{\Psi}{\Delta, a:\Tand{\tau}{A}}{\tRecvV{x}{a}{P}}{c}{C}}u\\
        &= \strictfn_{a^+}\left( \lambda \left(\delta^+,a^+ : \upim{\left(v, \alpha^+\right)},c^-\right) . \right.\\
        &\qquad\quad\qquad\qquad \left. \sembr{\jtypem{\Psi,x:\tau}{\Delta, a:A}{P}{c}{C}}\upd{u}{x \mapsto v}(\delta^+, \alpha^+, c^-) \right)
      \end{aligned}
    \end{equation*}
    We must find a corresponding natural interpretation
    \[
      \eta : \moabc\left({-} \times (x : \sembr{\tau}) ,  \sembr{\Delta, a : A \vdash c : C } \right) \nto \moabc\left({-} ,  \sembr{\Delta, a : \Tand{\tau}{A} \vdash c : C} \right).
    \]
    Let
    \begin{multline*}
      \Lambda : \moabc\left({-} \times (x : \sembr{\tau}) ,  \sembr{\Delta, a : A \vdash c : C } \right) \nto {}\\
      {} \nto \moabc\left({-} \to \left[ (x : \sembr{\tau}) ,  \sembr{\Delta, a : A \vdash c : C } \right) \right]
    \end{multline*}
    be the natural isomorphism given by the adjunction for the exponential in $\moabc$.
    Let
    \begin{align*}
      f &: \moabc\left( (x : \sembr{\tau}) ,  \sembr{\Delta, a : A \vdash c : C } \right) \to \sembr{\Delta, a : \Tand{\tau}{A} \vdash c : C}\\
      f(p) &= \strictfn_{a^+}\left( \lambda \left(\delta^+,a^+ : \upim{\left(v, \alpha^+\right)},c^-\right).pv(\delta^+, \alpha^+, c^-) \right).
    \end{align*}
    We claim that the corresponding natural interpretation is $\eta = \moabc\left( {-} ,  f \right) \circ \Lambda$.
    To see that this is so, set $p = \sembr{\jtypem{\Psi,x:\tau}{\Delta, a:A}{P}{c}{C}}$ and note that:
    \[
      \Lambda_{\sembr{\Psi}}(p) = \lambda u \in \sembr{\Psi}.\lambda v \in \sembr{\tau}.p\upd{u}{x \mapsto v}.
    \]
    Let $u \in \sembr{\Psi}$ be arbitrary and observe that:
    \begin{align*}
      &\left(\moabc\left( {-} ,  f \right) \circ \Lambda\right)_{\sembr{\Psi}}\left(\sembr{\jtypem{\Psi,x:\tau}{\Delta, a:A}{P}{c}{C}}\right)(u)\\
      &= f \left(\lambda v \in \sembr{\tau}.p\upd{u}{x \mapsto v}\right)\\
      &=  \strictfn_{a^+}\left( \lambda \left(\delta^+,a^+ : \upim{\left(v, \alpha^+\right)},c^-\right). \left(\lambda w \in \sembr{\tau}.p\upd{u}{x \mapsto w}\right)v(\delta^+, \alpha^+, c^-) \right)\\
      &=  \strictfn_{a^+}\left( \lambda \left(\delta^+,a^+ : \upim{\left(v, \alpha^+\right)},c^-\right). p\upd{u}{x \mapsto v}(\delta^+, \alpha^+, c^-) \right)\\
      &= \sembr{\jtypem{\Psi}{\Delta, a:\Tand{\tau}{A}}{\tRecvV{x}{a}{P}}{c}{C}}u.
    \end{align*}

  \item[Case \rn{${\Timp{}{}} R$}.]
    \[
      \infer[\rn{${\Timp{}{}} R$}]{
        \jtypem{\Psi}{\Delta}{\tRecvV{x}{a}{P}}{a}{\Timp{\tau}{A}}
      }{
        \jtypem{\Psi,x:\tau}{\Delta}{P}{a}{A}
      }
    \]
    Analogous to case \rn{${\Tand{}{}} L$}.

  \item[Case \rn{${\Timp{}{}} L$}.]
    \[
      \infer[\rn{${\Timp{}{}} L$}]{
        \jtypem{\Psi}{\Delta,a : \Timp{\tau}{A}}{\tSendV{a}{M}{P}}{c}{C}
      }{
        \jtypef{\Psi}{M}{\tau}
        &
        \jtypem{\Psi}{\Delta, a : A}{P}{c}{C}
      }
    \]
    Analogous to case \rn{${\Tand{}{}} R$}.

  \item[Case \rn{$\rho^+R$}.]
    \[
      \infer[\rn{$\rho^+R$}]{
        \jtypem{\Psi}{\Delta}{\tSendU{a}{P}}{a}{\Trec{\alpha}{A}}
      }{
        \jtypem{\Psi}{\Delta}{P}{a}{[\Trec{\alpha}{A}/\alpha]A}
        &
        {-} \vdash \jisst[+]{\Trec{\alpha}{A}}
      }
    \]
    Recall \cref{eq:fossacs:2}:
    \begin{equation*}
      \begin{aligned}
        &\sembr{\jtypem{\Psi}{\Delta}{\tSendU{a}{P}}{a}{\Trec{\alpha}{A}}}u\\
        &= \left(\ms{id} \times \left(a^+ : \ms{Fold}\right)\right)
        \circ \sembr{\jtypem{\Psi}{\Delta}{P}{a}{[\Trec{\alpha}{A}/\alpha]A}}u
        \circ \left(\ms{id} \times \left(a^- : \ms{Unfold}\right)\right)
      \end{aligned}
    \end{equation*}
    Let $f : \sembr{\jtypem{\Psi}{\Delta}{P}{a}{[\Trec{\alpha}{A}/\alpha]A}} \to \sembr{\jtypem{\Psi}{\Delta}{\tSendU{a}{P}}{a}{\Trec{\alpha}{A}}}$ be given by
    \[
      f(p) = \left(\ms{id} \times \left(a^+ : \ms{Fold}\right)\right)
      \circ p
      \circ \left(\ms{id} \times \left(a^- : \ms{Unfold}\right)\right).
    \]
    This is well-defined by semantic substitution of session types (\cref{prop:11}) and the definitions of $\ms{Fold}$ and $\ms{Unfold}$.
    The corresponding natural interpretation is then
    \begin{multline*}
      \moabc\left( {-} ,  f \right) : \moabc\left( {-} , \sembr{\jtypem{\Psi}{\Delta}{P}{a}{[\Trec{\alpha}{A}/\alpha]A}} \right) \nto {}\\
      {} \nto \moabc\left( {-}, \sembr{\jtypem{\Psi}{\Delta}{\tSendU{a}{P}}{a}{\Trec{\alpha}{A}}} \right).
    \end{multline*}

  \item[Case \rn{$\rho^+L$}.]
    \[
      \infer[\rn{$\rho^+L$}]{
        \jtypem{\Psi}{\Delta, a : \Trec{\alpha}{A}}{\tRecvU{a}{P}}{c}{C}
      }{
        \jtypem{\Psi}{\Delta, a : [\Trec{\alpha}{A}/\alpha]A}{P}{c}{C}
        &
        {-} \vdash \jisst[+]{\Trec{\alpha}{A}}
      }
    \]
    Analogous to case \rn{$\rho^+R$}.

  \item[Case \rn{$\rho^-R$}.]
    \[
      \infer[\rn{$\rho^-R$}]{
        \jtypem{\Psi}{\Delta}{\tRecvU{a}{P}}{a}{\Trec{\alpha}{A}}
      }{
        \jtypem{\Psi}{\Delta}{P}{a}{[\Trec{\alpha}{A}/\alpha]A}
        &
        {-} \vdash \jisst[-]{\Trec{\alpha}{A}}
      }
    \]
    Analogous to case \rn{$\rho^+R$}.

  \item[Case \rn{$\rho^-L$}.]
    \[
      \infer[\rn{$\rho^-L$}]{
        \jtypem{\Psi}{\Delta, a : \Trec{\alpha}{A}}{\tSendU{a}{P}}{c}{C}
      }{
        \jtypem{\Psi}{\Delta, a : [\Trec{\alpha}{A}/\alpha]A}{P}{c}{C}
        &
        {-} \vdash \jisst[-]{\Trec{\alpha}{A}}
      }
    \]
    Analogous to case \rn{$\rho^+R$}.

  \item[Case \rn{E-\{\}}.]
    \[
      \infer[\rn{E-\{\}}]{
        \jtypem{\Psi}{\overline{a_i:A_i}}{\tProc{a}{M}{\overline a_i}}{a}{A}
      }{
        \jtypef{\Psi}{M}{\Tproc{a:A}{\overline{a_i:A_i}}}
      }
    \]
    Recall \cref{eq:115}:
    \begin{equation*}
      \begin{aligned}
        &\sembr{\jtypem{\Psi}{\overline{a_i:A_i}}{\tProc{a}{M}{\overline a_i}}{a}{A}}u\\
        &= \down\left(\sembr{\jtypef{\Psi}{M}{\Tproc{a:A}{\overline{a_i:A_i}}}}u\right)
      \end{aligned}
    \end{equation*}
    We must find a corresponding natural interpretation
    \[
      \eta : \moabc\left( {-} ,  \sembr{\Tproc{a:A}{\overline{a_i:A_i}}} \right)
      \nto
      \moabc\left( {-} ,  \sembr{\overline{a_i:A_i}, \Delta \vdash a : A} \right).
    \]
    It is $\eta = \moabc\left( {-} ,  \down \right)$.\qedhere
  \end{description}
\end{proof}

\subsection{Semantic Weakening}
\label{sec:semantic-weakening}

We show that weakening is semantically well-behaved, \ie, that the semantic clauses are coherent~\cite[p.~218]{tennent_1995:_denot_seman}.

\cohterpro*

\begin{proof}
  By induction on the derivation of $\jtypef{\Psi}{M}{\tau}$ and $\jtypem{\Psi}{\Delta}{P}{a}{A}$.

  Except where stated otherwise, each of rule cases uses the same proof outline.
  We refer to it below as the ``standard proof''.
  We give this proof for a generic rule for forming functional terms.
  The proof for rules forming processes is analogous.
  Consider the rule
  \[
    \adjustbox{width=\linewidth,keepaspectratio}\bgroup
    \infer{
      \jtypef{\Psi}{F_{\mathcal{J}_1,\dotsc,\mathcal{J}_l}(P_1,\dotsc,P_n,M_1,\dotsc,M_m)}{\tau}
    }{
      \jtypem{\Psi}{\Delta_1}{P_1}{c_1}{C_1}
      &
      \cdots
      &
      \jtypem{\Psi}{\Delta_n}{P_n}{c_n}{C_n}
      &
      \jtypef{\Psi}{M_1}{\tau_1}
      &
      \cdots
      &
      \jtypef{\Psi}{M_m}{\tau_m}
      &
      \mathcal{J}_1
      &
      \cdots
      &
      \mathcal{J}_l
    }
    \egroup
  \]
  Assume its interpretation is given by
  \begin{equation}
    \label[intn]{eq:fscd:47}
    \begin{aligned}
      &\sembr{\jtypef{\Psi}{F_{\mathcal{J}_1,\dotsc,\mathcal{J}_l}(P_1,\dotsc,P_n,M_1,\dotsc,M_m)}{\tau}}\\
      &= \sembr{F_{\mathcal{J}_1,\dotsc,\mathcal{J}_l}}_{\sembr{\Psi}}\left(
        \sembr{\jtypem{\Psi}{\Delta_1}{P_1}{c_1}{C_1}},
        \dotsc,
        \sembr{\jtypem{\Psi}{\Delta_n}{P_n}{c_n}{C_n}},\right.\\
      &\qquad\qquad\left.
        \sembr{\jtypef{\Psi}{M_1}{\tau_1}},
        \dotsc,
        \sembr{\jtypef{\Psi}{M_m}{\tau_m}}
      \right)
    \end{aligned}
  \end{equation}
  where $\sembr{F_{\mathcal{J}_1,\dotsc,\mathcal{J}_l}}$ is a natural interpretation
  \begin{equation*}
    \sembr{F_{\mathcal{J}_1,\dotsc,\mathcal{J}_l}} :
    \left(\prod_{i = 1}^n \moabc\left( {-} ,  \sembr{\Delta_i \vdash c_i : C_i} \right)\right)
    \times
    \left(\prod_{i = 1}^m \moabc\left( {-} ,  \sembr{\tau_i} \right) \right)
    \nto
    \moabc\left( {-} ,  \sembr{\tau} \right).
  \end{equation*}

  Given any other functional context $\Phi$ disjoint from $\Psi$, we would like to show that
  \begin{align*}
    &\sembr{\jtypef{\Phi,\Psi}{F_{\mathcal{J}_1,\dotsc,\mathcal{J}_l}(P_1,\dotsc,P_n,M_1,\dotsc,M_m)}{\tau}}\\
    &= \sembr{\jtypef{\Psi}{F_{\mathcal{J}_1,\dotsc,\mathcal{J}_l}(P_1,\dotsc,P_n,M_1,\dotsc,M_m)}{\tau}} \circ \pi^{\Phi,\Psi}_\Psi.
  \end{align*}
  By the induction hypothesis, we know that for $1 \leq i \leq n$ and $1 \leq j \leq m$,
  \begin{align*}
    \sembr{\jtypem{\Phi,\Psi}{\Delta_i}{P_i}{c_i}{C_i}} &= \sembr{\jtypem{\Psi}{\Delta_i}{P_i}{c_i}{C_i}} \circ \pi^{\Phi,\Psi}_\Psi\\
    \sembr{\jtypef{\Phi,\Psi}{M_j}{\tau_j}} &= \sembr{\jtypef{\Psi}{M_j}{\tau_j}} \circ \pi^{\Phi,\Psi}_\Psi.
  \end{align*}
  Using these facts we get:
  \begin{align*}
    &\sembr{\jtypef{\Phi,\Psi}{F_{\mathcal{J}_1,\dotsc,\mathcal{J}_l}(P_1,\dotsc,P_n,M_1,\dotsc,M_m)}{\tau}}\\
    &= \sembr{F_{\mathcal{J}_1,\dotsc,\mathcal{J}_l}}_{\sembr{\Phi}}\left(
      \sembr{\jtypem{\Phi,\Psi}{\Delta_1}{P_1}{c_1}{C_1}},
      \dotsc,
      \sembr{\jtypem{\Phi,\Psi}{\Delta_n}{P_n}{c_n}{C_n}},\right.\\
    &\qquad\qquad\left.
      \sembr{\jtypef{\Phi,\Psi}{M_1}{\tau_1}},
      \dotsc,
      \sembr{\jtypef{\Phi,\Psi}{M_m}{\tau_m}}
      \right)\\
    \shortintertext{which by the induction hypothesis,}
    &= \sembr{F_{\mathcal{J}_1,\dotsc,\mathcal{J}_l}}_{\sembr{\Phi}}\left(
      \sembr{\jtypem{\Psi}{\Delta_1}{P_1}{c_1}{C_1}} \circ \pi^{\Phi,\Psi}_\Psi,
      \dotsc,
      \sembr{\jtypem{\Psi}{\Delta_n}{P_n}{c_n}{C_n}}  \circ \pi^{\Phi,\Psi}_\Psi,\right.\\
    &\qquad\qquad\left.
      \sembr{\jtypef{\Psi}{M_1}{\tau_1}} \circ \pi^{\Phi,\Psi}_\Psi,
      \dotsc,
      \sembr{\jtypef{\Psi}{M_m}{\tau_m}} \circ \pi^{\Phi,\Psi}_\Psi
      \right)\\
    \shortintertext{which by naturality of $\sembr{F_{\mathcal{J}_1,\dotsc,\mathcal{J}_l}}$,}
    &= \sembr{F_{\mathcal{J}_1,\dotsc,\mathcal{J}_l}}_{\sembr{\Psi}}\left(
      \sembr{\jtypem{\Psi}{\Delta_1}{P_1}{c_1}{C_1}}, \dotsc,
      \sembr{\jtypem{\Psi}{\Delta_n}{P_n}{c_n}{C_n}},\right.\\
    &\qquad\qquad\left.
      \sembr{\jtypef{\Psi}{M_1}{\tau_1}},
      \dotsc,
      \sembr{\jtypef{\Psi}{M_m}{\tau_m}}
      \right) \circ \pi^{\Phi,\Psi}_\Psi\\
    &= \sembr{\jtypef{\Psi}{F_{\mathcal{J}_1,\dotsc,\mathcal{J}_l}(P_1,\dotsc,P_n,M_1,\dotsc,M_m)}{\tau}} \circ \pi^{\Phi,\Psi}_\Psi.
  \end{align*}
  This is what we wanted to show.

  \begin{description}
  \item[Case \rn{I-\{\}}.]
    By the standard proof and \cref{prop:fscd:2}.

  \item[Case \rn{F-Var}.]
    \[
      \infer[\rn{F-Var}]{\jtypef{\Psi, x: \tau}{x}{\tau}}{\mathstrut}
    \]
    Recall \cref{eq:39}:
    \begin{align*}
      \sembr{\jtypef{\Psi,x:\tau}{x}{\tau}}u &= \pi^{\Psi,x}_{x}u.
    \end{align*}
    We use identities of products and projections to compute:
    \begin{align*}
      &\sembr{\jtypef{\Phi,\Psi,x : \tau}{x}{\tau}}\\
      &= \pi^{\Phi,\Psi,x}_x\\
      &= \pi^{\Psi,x}_x \circ \pi^{\Phi,\Psi,x}_{\Psi,x}\\
      &= \sembr{\jtypef{\Psi,x : \tau}{x}{\tau}} \circ \pi^{\Phi,\Psi,x}_{\Psi,x}.
    \end{align*}

  \item[Case \rn{F-Fix}.]
    \[
      \infer[\rn{F-Fix}]{
        \jtypef{\Psi}{\tFix{x}{M}}{\tau}
      }{
        \jtypef{\Psi,x:\tau}{M}{\tau}
      }
    \]
    Recall \cref{eq:28}:
    \begin{align*}
      \sembr{\jtypef{\Psi}{\tFix{x}{M}}{\tau}}u &= \sfix{\sembr{\jtypef{\Psi,x:\tau}{M}{\tau}}}u
    \end{align*}
    The result follows from a tweak of the standard proof and \cref{prop:fscd:2}.
    Let $\eta : \moabc\left( {-} \times \sembr{x : \tau} ,  \sembr{\tau} \right) \nto \moabc\left( {-} ,  \sembr{\tau} \right)$ be the natural interpretation given by \cref{prop:fscd:2}.
    Then
    \begin{align*}
      &\sembr{\jtypef{\Phi,\Psi}{\tFix{x}{M}}{\tau}}\\
      &= \eta_{\sembr{\Phi}}\left(\sembr{\jtypef{\Phi,\Psi,x:\tau}{M}{\tau}}\right)\\
      \shortintertext{which by the induction hypothesis:}
      &= \eta_{\sembr{\Phi}}\left(\sembr{\jtypef{\Psi,x:\tau}{M}{\tau}} \circ \pi^{\Phi,\Psi,x}_{\Psi,x}\right)\\
      \shortintertext{which by an identity of projections:}
      &= \eta_{\sembr{\Phi}}\left(\sembr{\jtypef{\Psi,x:\tau}{M}{\tau}} \circ \left(\pi^{\Phi,\Psi}_{\Psi} \times \left(x : \ms{id}_{\sembr{\tau}}\right)\right)\right)\\
      \shortintertext{which by naturality:}
      &= \eta_{\sembr{\Psi}}\left(\sembr{\jtypef{\Psi,x:\tau}{M}{\tau}}\right) \circ \pi^{\Phi,\Psi}_{\Psi}\\
      &= \sembr{\jtypef{\Psi}{\tFix{x}{M}}{\tau}} \circ \pi^{\Phi,\Psi}_{\Psi}.
    \end{align*}

  \item[Case \rn{F-Fun}.]
    \[
      \infer[\rn{F-Fun}]{
        \jtypef{\Psi}{\lambda x : \tau.M}{\tau \to \sigma}
      }{
        \jtypef{\Psi, x:\tau}{M}{\sigma}
      }
    \]
    This case is analogous to the case \rn{F-Fix}.

  \item[Case \rn{F-App}.]
    By the standard proof and \cref{prop:fscd:2}.

  \item[Case \rn{Fwd}.]
    By the standard proof and \cref{prop:fscd:2}.

  \item[Case \rn{Cut}.]
    By the standard proof and \cref{prop:fscd:2}.

  \item[Case \rn{$\Tu R$}.]
    By the standard proof and \cref{prop:fscd:2}.

  \item[Case \rn{$\Tu L$}.]
    By the standard proof and \cref{prop:fscd:2}.

  \item[Case \rn{$\Tds{} R$}.]
    By the standard proof and \cref{prop:fscd:2}.

  \item[Case \rn{$\Tds{} L$}.]
    By the standard proof and \cref{prop:fscd:2}.

  \item[Case \rn{$\Tus R$}.]
    By the standard proof and \cref{prop:fscd:2}.

  \item[Case \rn{$\Tus L$}.]
    By the standard proof and \cref{prop:fscd:2}.

  \item[Case \rn{$\Tplus R_k$}.]
    By the standard proof and \cref{prop:fscd:2}.

  \item[Case \rn{$\Tplus L$}.]
    By the standard proof and \cref{prop:fscd:2}.

  \item[Case \rn{$\Tamp R$}.]
    By the standard proof and \cref{prop:fscd:2}.

  \item[Case \rn{$\Tamp L_k$}.]
    By the standard proof and \cref{prop:fscd:2}.

  \item[Case \rn{$\Tot R^*$}.]
    By the standard proof and \cref{prop:fscd:2}.

  \item[Case \rn{$\Tot L$}.]
    By the standard proof and \cref{prop:fscd:2}.

  \item[Case \rn{${\Tlolly}R$}.]
    By the standard proof and \cref{prop:fscd:2}.

  \item[Case \rn{${\Tlolly}L$}.]
    By the standard proof and \cref{prop:fscd:2}.

  \item[Case \rn{$\Tand{}{} R$}.]
    By the standard proof and \cref{prop:fscd:2}.

  \item[Case \rn{$\Tand{}{} L$}.]
    \[
      \infer[\rn{$\Tand{}{} L$}]{
        \jtypem{\Psi}{\Delta, a:\Tand{\tau}{A}}{\tRecvV{x}{a}{P}}{c}{C}
      }{
        \jtypem{\Psi,x:\tau}{\Delta, a:A}{P}{c}{C}
      }
    \]
    This case is analogous to the case \rn{F-Fix}.

  \item[Case \rn{${\Timp{}{}} R$}.]
    \[
      \infer[\rn{${\Timp{}{}} R$}]{
        \jtypem{\Psi}{\Delta}{\tRecvV{x}{a}{P}}{a}{\Timp{\tau}{A}}
      }{
        \jtypem{\Psi,x:\tau}{\Delta}{P}{a}{A}
      }
    \]
    This case is analogous to the case \rn{F-Fix}.

  \item[Case \rn{${\Timp{}{}} L$}.]
    By the standard proof and \cref{prop:fscd:2}.

  \item[Case \rn{$\rho^+R$}.]
    By the standard proof and \cref{prop:fscd:2}.

  \item[Case \rn{$\rho^+L$}.]
    By the standard proof and \cref{prop:fscd:2}.

  \item[Case \rn{$\rho^-R$}.]
    By the standard proof and \cref{prop:fscd:2}.

  \item[Case \rn{$\rho^-L$}.]
    By the standard proof and \cref{prop:fscd:2}.

  \item[Case \rn{E-\{\}}.]
    By the standard proof and \cref{prop:fscd:2}.\qedhere
  \end{description}
\end{proof}

\subsection{Semantic Substitution}
\label{sec:semant-subst}

Our goal in this section is to show that substitution is given by composition.
\defin{Context morphisms} give us a semantic account of substitutions.
The judgment $\jcmf{\sigma}{\Phi}{\Psi}$ means the substitution $\sigma$ is a morphism of functional variable contexts from $\Phi$ to $\Psi$.
It is inductively defined by the rules
\[
  \infer[\rn{S-T-Empty}]{
    \jcmf{\cdot}{\Phi}{\cdot}
  }{
  }
  \qquad
  \infer[\rn{S-T-F}]{
    \jcmf{\sigma,M}{\Phi}{\Psi,x:\tau}
  }{
    \jcmf{\sigma}{\Phi}{\Psi}
    &
    \jtypef{\Phi}{M}{\tau}
  }
\]
Consider a morphism $\sigma = N_1,\dotsc,N_n$ with $n \geq 0$ satisfying $\jcmf{\sigma{}}{\Phi}{{x_1:\tau_1},\dotsc,{x_n:\tau_n}}$.
Given a functional term $\jtypef{x_1:\tau_1,\dotsc,x_n:\tau_n}{M}{\tau}$ or a process $\jtypem{x_1:\tau_1,\dotsc,x_n:\tau_n}{\Delta}{P}{c}{C}$, we write $\sigma M$ and $\sigma P$ for the results of the simultaneous substitutions $[N_1,\dotsc,N_n/x_1,\dotsc,x_n] M$ and $[N_1,\dotsc, N_n/x_1,\dotsc,x_n] P$, respectively.

The judgments $\jtypef{\Psi}{M}{\tau}$ and $\jtypem{\Psi}{\Delta}{P}{a}{A}$ satisfy the following syntactic substitution property:

\begin{proposition}[Syntactic Substitution of Terms]
  \label{prop:fscd:3}
  Let $\jcmf{\sigma}{\Phi}{\Psi}$ be arbitrary.
  \begin{proplist}
  \item If $\jtypef{\Psi}{N}{\tau}$, then $\jtypef{\Phi}{\sigma N}{\tau}$.
  \item If $\jtypem{\Psi}{\Delta}{P}{c}{C}$, then $\jtypem{\Phi}{\Delta}{\sigma P}{c}{C}$.
  \end{proplist}
\end{proposition}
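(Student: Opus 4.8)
The plan is to proceed by simultaneous induction on the derivations of $\jtypef{\Psi}{N}{\tau}$ and $\jtypem{\Psi}{\Delta}{P}{c}{C}$, establishing both parts of the proposition together. A mutual induction is forced because the term-forming and process-forming judgments are themselves mutually defined: \rn{I-\{\}} embeds a process into a term (a quoted process), \rn{E-\{\}} embeds a term into a process, and rules such as \rn{$\Tand{}{} R$} and \rn{$\Tand{}{} L$} splice functional terms into processes. The first crucial observation is that a context morphism $\jcmf{\sigma}{\Phi}{\Psi}$ acts only on the functional context, leaving the linear context $\Delta$ and all channel names untouched. Consequently, among all the formation rules, only \rn{F-Fun}, \rn{F-Fix}, \rn{$\Tand{}{} L$}, and \rn{${\Timp{}{}} R$} enlarge the functional context $\Psi$; every other rule either fixes $\Psi$ or extends the linear context alone.

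For the non-binding cases I would run what I will call the \emph{standard argument}. The substitution $\sigma$ commutes with each term- and process-forming constructor, so $\sigma$ applied to the conclusion is that same constructor applied to $\sigma$ applied to each premise's subject. Inverting the derivation yields typing derivations for the immediate subterms and subprocesses under $\Psi$; the induction hypotheses retype each of these under $\Phi$ with $\sigma$ applied; and reapplying the very same rule reassembles the desired conclusion under $\Phi$. The variable case \rn{F-Var}, where $N = x_j$, is settled instead by inversion on the derivation of $\jcmf{\sigma}{\Phi}{\Psi}$: the rules \rn{S-T-Empty} and \rn{S-T-F} guarantee that the $j$-th component $N_j$ of $\sigma = N_1, \dotsc, N_n$ satisfies $\jtypef{\Phi}{N_j}{\tau_j}$, and by definition $\sigma x_j = N_j$, which is exactly what is required.

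The main obstacle, as is typical for substitution lemmas, lies in the four binding cases. There the relevant premise is typed under an enlarged functional context $\Psi, x : \tau$, so before applying the induction hypothesis I must extend $\sigma$ to a context morphism $\jcmf{(\sigma, x)}{\Phi, x : \tau}{\Psi, x : \tau}$ that behaves as $\sigma$ on the old variables and fixes the freshly bound $x$. Assuming, by the usual variable convention, that $x$ is fresh for $\Phi$, the new final component is the variable $x$ itself, typed by $\jtypef{\Phi, x : \tau}{x}{\tau}$ via \rn{F-Var}. The genuine difficulty is that the earlier components $N_i$, originally typed under $\Phi$, must be retyped under $\Phi, x : \tau$; this is precisely a \emph{syntactic weakening} property for functional terms, which I would either assume or discharge by a routine separate induction on their derivations. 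Granting it, the extended morphism $\jcmf{(\sigma, x)}{\Phi, x : \tau}{\Psi, x : \tau}$ is well-formed by \rn{S-T-F}, the induction hypothesis applies to the premise to type $(\sigma, x)$ applied to its subject under $\Phi, x : \tau$, and since $(\sigma, x)$ agrees with $\sigma$ on every free variable that can actually occur, this coincides with $\sigma$ applied to the original subject. Reapplying the binding rule then closes each of the four cases and completes the induction.
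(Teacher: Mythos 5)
Your proposal is correct and follows the same route as the paper, which proves this proposition simply ``by induction on the derivation'' of the two judgments; your elaboration of the binding cases---extending $\sigma$ to $\jcmf{(\sigma,x)}{\Phi,x:\tau}{\Psi,x:\tau}$ via weakening and \rn{S-T-F}---matches exactly how the paper handles the analogous cases in its semantic substitution proof (\cref{prop:extended:10}, case \rn{F-Fix}). Nothing further is needed.
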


\begin{proof}
  By induction on the derivation of $\jtypef{\Psi}{M}{\tau}$ and $\jtypem{\Psi}{\Delta}{P}{a}{A}$.
\end{proof}

A context morphism $\jcmf{\sigma}{\Phi}{\Psi}$ is interpreted as a continuous morphism $\sembr{\jcmf{\sigma}{\Phi}{\Psi}} : \sembr{\Phi} \to \sembr{\Psi}$.
It is recursively defined on the derivation of $\jcmf{\sigma}{\Phi}{\Psi}$.
The interpretations of \rn{S-T-Empty} and \rn{S-T-F} are respectively
\begin{align}
  \sembr{\jcmf{\cdot}{\Phi}{\cdot}} &= \top_{\sembr{\Phi}}\label[intn]{eq:fscd:32}\\
  \sembr{\jcmf{\sigma, M}{\Phi}{\Psi,x:\tau}} &= \langle \sembr{\jcmf{\sigma}{\Phi}{\Psi}}, x : \sembr{\jtypef{\Phi}{M}{\tau}} \rangle\label[intn]{eq:fscd:33}
\end{align}
where $\top_{\sembr{\Phi}}$ is the unique morphism from $\sembr{\Phi}$ to the terminal object $\top$ of $\moabc$.

\begin{lemma}[Weakening of Context Morphisms]
  \label{lemma:fscd:13}
  Let $\jcmf{\sigma}{\Phi}{\Psi}$ be arbitrary and $\Gamma,\Phi$ a context.
  Then $\jcmf{\sigma}{\Gamma,\Phi}{\Psi}$ and
  \[
    \sembr{\jcmf{\sigma}{\Gamma,\Phi}{\Psi}} = \sembr{\jcmf{\sigma}{\Phi}{\Psi}} \circ \pi^{\Gamma,\Phi}_\Phi.
  \]
\end{lemma}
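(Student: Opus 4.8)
The plan is to induct on the derivation of $\jcmf{\sigma}{\Phi}{\Psi}$, establishing the typing judgment $\jcmf{\sigma}{\Gamma,\Phi}{\Psi}$ and the displayed semantic equation simultaneously. The typing judgment follows in each case by reapplying the same context-morphism rule over the enlarged context $\Gamma,\Phi$, where the obligation $\jtypef{\Gamma,\Phi}{M}{\tau}$ on functional terms is discharged by the standard structural weakening admissible for the functional layer. The semantic equation carries the real content, and I would handle it through the two defining clauses \cref{eq:fscd:32,eq:fscd:33}.

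In the base case \rn{S-T-Empty}, we have $\sigma = \cdot$ and $\Psi = \cdot$. By \cref{eq:fscd:32}, both $\sembr{\jcmf{\cdot}{\Gamma,\Phi}{\cdot}}$ and $\sembr{\jcmf{\cdot}{\Phi}{\cdot}} \circ \pi^{\Gamma,\Phi}_\Phi$ are morphisms from $\sembr{\Gamma,\Phi}$ into the terminal object $\top$ of $\moabc$; since $\top$ admits a unique such morphism, the two sides coincide.

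In the inductive case \rn{S-T-F}, we have $\sigma = \sigma',M$ and $\Psi = \Psi',x:\tau$, with subderivations $\jcmf{\sigma'}{\Phi}{\Psi'}$ and $\jtypef{\Phi}{M}{\tau}$. Expanding by \cref{eq:fscd:33}, rewriting the first component with the induction hypothesis and the second with Coherence of Terms and Processes (\cref{prop:extended:9}, instantiated to weaken from $\Phi$ to $\Gamma,\Phi$), and then pulling the common factor $\pi^{\Gamma,\Phi}_\Phi$ out of the pairing via the universal product identity $\langle f,g\rangle \circ h = \langle f \circ h, g \circ h\rangle$, gives
\[
  \sembr{\jcmf{\sigma',M}{\Gamma,\Phi}{\Psi',x:\tau}} = \langle \sembr{\jcmf{\sigma'}{\Phi}{\Psi'}}, x : \sembr{\jtypef{\Phi}{M}{\tau}} \rangle \circ \pi^{\Gamma,\Phi}_\Phi,
\]
and the bracketed morphism is exactly $\sembr{\jcmf{\sigma',M}{\Phi}{\Psi',x:\tau}}$ by \cref{eq:fscd:33} again. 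This closes the induction.

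The argument is essentially bookkeeping, so I do not expect a deep obstacle. The one point to get right is invoking \cref{prop:extended:9} with the correct instantiation, so that the projection appearing there genuinely matches $\pi^{\Gamma,\Phi}_\Phi$, and applying the product identity to the pairing of the two already-weakened components rather than attempting to commute projections through the pairing directly.
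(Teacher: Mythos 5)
Your proposal is correct and follows essentially the same route as the paper's proof: induction on the derivation, terminality of $\top$ in the \rn{S-T-Empty} case, and in the \rn{S-T-F} case the induction hypothesis for $\sigma'$, \cref{prop:extended:9} for $M$, and the product identity $\langle f, g\rangle \circ h = \langle f \circ h, g \circ h\rangle$ to factor out $\pi^{\Gamma,\Phi}_\Phi$. No gaps.
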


\begin{proof}
  By induction on the derivation of $\jcmf{\sigma}{\Phi}{\Psi}$.

  \begin{description}[listparindent=\parindent]
  \item[Case \rn{S-T-Empty}.]
    Then $\jcmf{\sigma}{\Gamma,\Phi}{\cdot}$ by \rn{S-T-Empty}.
    By terminality,
    \[
      \sembr{\jcmf{\sigma}{\Gamma,\Phi}{\cdot}} = \top_{\sembr{\Gamma,\Phi}} = \top_{\sembr{\Phi}} \circ \pi^{\Gamma,\Phi}_\Phi = \sembr{\jcmf{\sigma}{\Phi}{\cdot}} \circ \pi^{\Gamma,\Phi}_\Phi.
    \]
  \item[Case \rn{S-T-F}.]
    \[
      \infer[\rn{S-T-F}]{
        \jcmf{\sigma,M}{\Phi}{\Psi,x:\tau}
      }{
        \jcmf{\sigma}{\Phi}{\Psi}
        &
        \jtypef{\Phi}{M}{\tau}
      }
    \]
    By the induction hypothesis, $\jcmf{\sigma}{\Gamma,\Phi}{\Psi}$ and
    \[
      \sembr{\jcmf{\sigma}{\Gamma,\Phi}{\Psi}} = \sembr{\jcmf{\sigma}{\Phi}{\Psi}} \circ \pi^{\Gamma,\Phi}_\Phi.
    \]
    By weakening, $\jtypef{\Gamma,\Phi}{M}{\tau}$, and by \cref{prop:extended:9},
    \[
      \sembr{\jtypef{\Gamma,\Phi}{M}{\tau}} = \sembr{\jtypef{\Phi}{M}{\tau}} \circ \pi^{\Gamma,\Phi}_\Phi.
    \]
    So $\jcmf{\sigma,M}{\Gamma,\Phi}{\Psi,x:\tau}$ by \rn{S-T-F}.
    By \cref{eq:fscd:33},
    \begin{align*}
      &\sembr{\jcmf{\sigma,M}{\Gamma,\Phi}{\Psi,x:\tau}}\\
      &= \langle \sembr{\jcmf{\sigma}{\Gamma,\Phi}{\Psi}}, x : \sembr{\jtypef{\Gamma,\Phi}{M}{\tau}} \rangle\\
      &= \langle \sembr{\jcmf{\sigma}{\Phi}{\Psi}} \circ \pi^{\Gamma,\Phi}_\Phi, x : \sembr{\jtypef{\Phi}{M}{\tau}} \circ \pi^{\Gamma,\Phi}_\Phi \rangle\\
      &= \langle \sembr{\jcmf{\sigma}{\Phi}{\Psi}}, x : \sembr{\jtypef{\Phi}{M}{\tau}} \rangle  \circ \pi^{\Gamma,\Phi}_\Phi\\
      &= \sembr{\jcmf{\sigma,M}{\Phi}{\Psi,x:\tau}}  \circ \pi^{\Gamma,\Phi}_\Phi.
    \end{align*}
  \end{description}
  We conclude the result by induction.
\end{proof}

\begin{proposition}[Semantic Substitution of Terms]\label{prop:extended:10}
  Let $\jcmf{\sigma}{\Phi}{\Psi}$ be arbitrary.
  \begin{proplist}
  \item If $\jtypef{\Psi}{N}{\tau}$, then $\sembr{\jtypef{\Phi}{\sigma N}{\tau}} = \sembr{\jtypef{\Psi}{N}{\tau}} \circ \sembr{\jcmf{\sigma}{\Phi}{\Psi}}$. \label{item:extended:2}
  \item If $\jtypem{\Psi}{\Delta}{P}{c}{C}$, then $\sembr{\jtypem{\Phi}{\Delta}{\sigma P}{c}{C}} = \sembr{\jtypem{\Psi}{\Delta}{P}{c}{C}} \circ \sembr{\jcmf{\sigma}{\Phi}{\Psi}}$. \label{item:extended:3}
  \end{proplist}
\end{proposition}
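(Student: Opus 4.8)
The plan is to prove both clauses simultaneously by induction on the derivations of $\jtypef{\Psi}{N}{\tau}$ and $\jtypem{\Psi}{\Delta}{P}{c}{C}$, following exactly the template of the semantic-weakening proof of \cref{prop:extended:9}. The only structural difference is that the role played there by the projection $\pi^{\Phi,\Psi}_\Psi$ is here played by the context-morphism interpretation $\sembr{\jcmf{\sigma}{\Phi}{\Psi}} : \sembr{\Phi} \to \sembr{\Psi}$. Because process rules have functional premises and functional rules have process premises (\eg, \rn{Cut}, \rn{$\Tand{}{} R$}, \rn{I-\{\}}, \rn{E-\{\}}), the induction is genuinely mutual. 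The driving engine is \cref{prop:fscd:2}: every term- and process-forming rule has an interpretation $\sembr{F_{\mathcal{J}_1,\dotsc,\mathcal{J}_l}}$ that is natural in its environment, and that naturality holds against an arbitrary morphism of $\moabc$ --- in particular against $\sembr{\jcmf{\sigma}{\Phi}{\Psi}}$, not merely against a projection.

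For each rule $F$ that does not bind a new functional variable, I would run the ``standard proof''. Writing $\eta = \sembr{F_{\mathcal{J}_1,\dotsc,\mathcal{J}_l}}$, syntactic substitution commutes with the rule, so $\sembr{\jtypef{\Phi}{\sigma(F(\vec P, \vec M))}{\tau}} = \eta_{\sembr{\Phi}}(\ldots, \sembr{\jtypem{\Phi}{\Delta_i}{\sigma P_i}{c_i}{C_i}}, \ldots, \sembr{\jtypef{\Phi}{\sigma M_j}{\tau_j}}, \ldots)$. Applying the induction hypotheses rewrites each argument as $\sembr{\cdots} \circ \sembr{\jcmf{\sigma}{\Phi}{\Psi}}$, and then naturality of $\eta$ pulls the common factor $\sembr{\jcmf{\sigma}{\Phi}{\Psi}}$ out to the right, yielding $\eta_{\sembr{\Psi}}(\ldots) \circ \sembr{\jcmf{\sigma}{\Phi}{\Psi}} = \sembr{\jtypef{\Psi}{F(\vec P,\vec M)}{\tau}} \circ \sembr{\jcmf{\sigma}{\Phi}{\Psi}}$. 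This disposes uniformly of \rn{Cut}, \rn{Fwd}, the shift, choice, tensor, lolli, and value-transmission sender rules, the fold/unfold rules, and the quotation rules \rn{I-\{\}} and \rn{E-\{\}}.

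The interesting work is concentrated in \rn{F-Var} and in the variable-binding rules \rn{F-Fix}, \rn{F-Fun}, \rn{$\Tand{}{} L$}, and \rn{${\Timp{}{}} R$}. For \rn{F-Var} on $\jtypef{\Psi,x:\tau}{x}{\tau}$, the morphism must have the form $\jcmf{\sigma',M}{\Phi}{\Psi, x:\tau}$, so $\sigma x = M$; then $\sembr{\jtypef{\Psi,x:\tau}{x}{\tau}} \circ \sembr{\jcmf{\sigma',M}{\Phi}{\Psi, x:\tau}} = \pi^{\Psi,x}_x \circ \langle \sembr{\jcmf{\sigma'}{\Phi}{\Psi}}, x : \sembr{\jtypef{\Phi}{M}{\tau}}\rangle = \sembr{\jtypef{\Phi}{M}{\tau}}$ by \cref{eq:fscd:33,eq:39} and a projection identity. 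For the binding rules, capture-avoiding substitution forces me to push $\sigma$ under the binder as the extended morphism $\jcmf{\sigma, x}{\Phi,x:\tau}{\Psi,x:\tau}$, and the crux is a short sublemma that this extension denotes $\sembr{\jcmf{\sigma}{\Phi}{\Psi}} \times \ms{id}_{\sembr{\tau}}$. I would establish it from \cref{eq:fscd:33}, the weakening of context morphisms (\cref{lemma:fscd:13}), and the \rn{F-Var} clause: $\sembr{\jcmf{\sigma,x}{\Phi,x:\tau}{\Psi,x:\tau}} = \langle \sembr{\jcmf{\sigma}{\Phi}{\Psi}}\circ\pi^{\Phi,x}_\Phi, x : \pi^{\Phi,x}_x\rangle = \sembr{\jcmf{\sigma}{\Phi}{\Psi}} \times \ms{id}_{\sembr{\tau}}$. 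With this identity in hand, each binding case proceeds exactly as the \rn{F-Fix} case of the weakening proof: apply the induction hypothesis to the premise with the extended morphism, rewrite $\sembr{\hat\sigma}$ as $\sembr{\sigma}\times\ms{id}$, and invoke naturality of the rule's interpretation to extract $\sembr{\jcmf{\sigma}{\Phi}{\Psi}}$.

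The main obstacle I anticipate is the bookkeeping around binders: verifying the extension sublemma $\sembr{\jcmf{\sigma,x}{\Phi,x:\tau}{\Psi,x:\tau}} = \sembr{\jcmf{\sigma}{\Phi}{\Psi}} \times \ms{id}_{\sembr{\tau}}$, and ensuring that the weakening used to form $\hat\sigma$ (so that $x$ is fresh for $\sigma$) is precisely the one supplied by \cref{lemma:fscd:13}. Once that identity is isolated as a reusable fact, every binding case collapses to the same short naturality argument, and all remaining cases are mechanical instances of the standard proof.
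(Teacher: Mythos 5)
Your proposal is correct and follows essentially the same route as the paper's proof: a mutual induction driven by the naturality of the rule interpretations (\cref{prop:fscd:2}), a direct projection computation for \rn{F-Var}, and, for the binding rules, pushing $\sigma$ under the binder and reducing $\sembr{\jcmf{\sigma,x}{\Phi,x:\tau}{\Psi,x:\tau}}$ to $\sembr{\jcmf{\sigma}{\Phi}{\Psi}} \times \ms{id}_{\sembr{\tau}}$ via \cref{eq:fscd:33}, \cref{eq:39}, and \cref{lemma:fscd:13} before invoking naturality. The only difference is presentational: you isolate that extension identity as a reusable sublemma, whereas the paper inlines the same calculation in each binding case.
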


\begin{proof}
  By induction on the derivation of $\jtypef{\Psi}{M}{\tau}$ and $\jtypem{\Psi}{\Delta}{P}{a}{A}$.

  Except where stated otherwise, each of rule cases uses the same proof outline.
  We refer to it below as the ``standard proof''.
  We give this proof for a generic rule.
  Consider the rule
  \[
    \adjustbox{width=\linewidth,keepaspectratio}\bgroup
    \infer{
      \jtypef{\Psi}{F_{\mathcal{J}_1,\dotsc,\mathcal{J}_l}(P_1,\dotsc,P_n,M_1,\dotsc,M_m)}{\tau}
    }{
      \jtypem{\Psi}{\Delta_1}{P_1}{c_1}{C_1}
      &
      \cdots
      &
      \jtypem{\Psi}{\Delta_n}{P_n}{c_n}{C_n}
      &
      \jtypef{\Psi}{M_1}{\tau_1}
      &
      \cdots
      &
      \jtypef{\Psi}{M_m}{\tau_m}
      &
      \mathcal{J}_1
      &
      \cdots
      &
      \mathcal{J}_l
    }
    \egroup
  \]
  Assume its interpretation is given by
  \begin{equation}
    \label[intn]{eq:fscd:1}
    \begin{aligned}
      &\sembr{\jtypem{\Psi}{\Delta}{F_{\mathcal{J}_1,\dotsc,\mathcal{J}_l}(P_1,\dotsc,P_n,M_1,\dotsc,M_m)}{c}{C}}\\
      &= \sembr{F_{\mathcal{J}_1,\dotsc,\mathcal{J}_l}}_{\sembr{\Psi}}\left(
        \sembr{\jtypem{\Psi}{\Delta_1}{P_1}{c_1}{C_1}},
        \dotsc,
        \sembr{\jtypem{\Psi}{\Delta_n}{P_n}{c_n}{C_n}},\right.\\
      &\qquad\qquad\left.
        \sembr{\jtypef{\Psi}{M_1}{\tau_1}},
        \dotsc,
        \sembr{\jtypef{\Psi}{M_m}{\tau_m}}
      \right)
    \end{aligned}
  \end{equation}
  where $\sembr{F_{\mathcal{J}_1,\dotsc,\mathcal{J}_l}}$ is a natural interpretation
  \begin{equation*}
    \sembr{F_{\mathcal{J}_1,\dotsc,\mathcal{J}_l}} :
    \left(\prod_{i = 1}^n \moabc\left( {-} ,  \sembr{\Delta_i \vdash c_i : C_i} \right)\right)
    \times
    \left(\prod_{i = 1}^m \moabc\left( {-} ,  \sembr{\tau_i} \right) \right)
    \nto
    \moabc\left( {-} ,  \sembr{\tau} \right).
  \end{equation*}
  Given any context morphism $\jcmf{\sigma}{\Phi}{\Psi}$, we would like to show that
  \begin{align*}
    &\sembr{\jtypem{\Phi}{\Delta}{\sigma\left(F_{\mathcal{J}_1,\dotsc,\mathcal{J}_l}(P_1,\dotsc,P_n,M_1,\dotsc,M_m)\right)}{c}{C}}\\
    &= \sembr{\jtypem{\Psi}{\Delta}{F_{\mathcal{J}_1,\dotsc,\mathcal{J}_l}(P_1,\dotsc,P_n,M_1,\dotsc,M_m)}{c}{C}} \circ \sembr{\jcmf{\sigma}{\Phi}{\Psi}}.
  \end{align*}
  By the definition of syntactic substitution, we know that
  \[
    \sigma\left(F_{\mathcal{J}_1,\dotsc,\mathcal{J}_l}(P_1,\dotsc,P_n,M_1,\dotsc,M_m)\right) = F_{\mathcal{J}_1,\dotsc,\mathcal{J}_l}(\sigma P_1,\dotsc,\sigma P_n,\sigma M_1,\dotsc,\sigma M_m).
  \]
  By the induction hypothesis, we know that for $1 \leq i \leq n$ and $1 \leq j \leq m$,
  \begin{align*}
    \sembr{\jtypem{\Phi}{\Delta_i}{\sigma P_i}{c_i}{C_i}} &= \sembr{\jtypem{\Psi}{\Delta_i}{P_i}{c_i}{C_i}} \circ \sembr{\jcmf{\sigma}{\Phi}{\Psi}}\\
    \sembr{\jtypef{\Phi}{\sigma M_j}{\tau_j}} &= \sembr{\jtypef{\Psi}{M_j}{\tau_j}} \circ \sembr{\jcmf{\sigma}{\Phi}{\Psi}}.
  \end{align*}
  Using these facts we get:
  \begin{align*}
    &\sembr{\jtypem{\Phi}{\Delta}{\sigma\left(F_{\mathcal{J}_1,\dotsc,\mathcal{J}_l}(P_1,\dotsc,P_n,M_1,\dotsc,M_m)\right)}{c}{C}}\\
    &= \sembr{\jtypem{\Phi}{\Delta}{F_{\mathcal{J}_1,\dotsc,\mathcal{J}_l}(\sigma P_1,\dotsc,\sigma P_n,\sigma M_1,\dotsc,\sigma M_m)}{c}{C}}\\
    &= \sembr{F_{\mathcal{J}_1,\dotsc,\mathcal{J}_l}}_{\sembr{\Phi}}\left(
      \sembr{\jtypem{\Phi}{\Delta_1}{\sigma P_1}{c_1}{C_1}},
      \dotsc,
      \sembr{\jtypem{\Phi}{\Delta_n}{\sigma P_n}{c_n}{C_n}},\right.\\
    &\qquad\qquad\left.
      \sembr{\jtypef{\Phi}{\sigma M_1}{\tau_1}},
      \dotsc,
      \sembr{\jtypef{\Phi}{\sigma M_m}{\tau_m}}
      \right)\\
    \shortintertext{which by the induction hypothesis,}
    &= \sembr{F_{\mathcal{J}_1,\dotsc,\mathcal{J}_l}}_{\sembr{\Phi}}\left(
      \sembr{\jtypem{\Psi}{\Delta_1}{P_1}{c_1}{C_1}} \circ \sembr{\jcmf{\sigma}{\Phi}{\Psi}},\dotsc,\right.\\
    &\qquad\qquad\sembr{\jtypem{\Psi}{\Delta_n}{P_n}{c_n}{C_n}} \circ \sembr{\jcmf{\sigma}{\Phi}{\Psi}},\\
    &\qquad\qquad\left.
      \sembr{\jtypef{\Psi}{M_1}{\tau_1}} \circ \sembr{\jcmf{\sigma}{\Phi}{\Psi}},
      \dotsc,
      \sembr{\jtypef{\Psi}{M_m}{\tau_m}} \circ \sembr{\jcmf{\sigma}{\Phi}{\Psi}}
      \right)\\
    \shortintertext{which by naturality of $\sembr{F_{\mathcal{J}_1,\dotsc,\mathcal{J}_l}}$,}
    &= \sembr{F_{\mathcal{J}_1,\dotsc,\mathcal{J}_l}}_{\sembr{\Psi}}\left(
      \sembr{\jtypem{\Psi}{\Delta_1}{P_1}{c_1}{C_1}}, \dotsc,
      \sembr{\jtypem{\Psi}{\Delta_n}{P_n}{c_n}{C_n}},\right.\\
    &\qquad\qquad\left.
      \sembr{\jtypef{\Psi}{M_1}{\tau_1}},
      \dotsc,
      \sembr{\jtypef{\Psi}{M_m}{\tau_m}}
      \right) \circ \sembr{\jcmf{\sigma}{\Phi}{\Psi}}\\
    &= \sembr{\jtypem{\Psi}{\Delta}{F_{\mathcal{J}_1,\dotsc,\mathcal{J}_l}(P_1,\dotsc,P_n,M_1,\dotsc,M_m)}{c}{C}} \circ \sembr{\jcmf{\sigma}{\Phi}{\Psi}}.
  \end{align*}

  \begin{description}
  \item[Case \rn{I-\{\}}.]
    \[
      \infer[\rn{I-\{\}}]{
        \jtypef{\Psi}{\tProc{a}{P}{\overline{a_i}}}{\Tproc{a:A}{\overline{a_i:A_i}}}
      }{
        \jtypem{\Psi}{\overline{a_i:A_i}}{P}{a}{A}
      }
    \]
    By the standard proof and \cref{prop:fscd:2}.

  \item[Case \rn{F-Var}.]
    \[
      \infer[\rn{F-Var}]{\jtypef{\Psi, x: \tau}{x}{\tau}}{\mathstrut}
    \]
    Recall \cref{eq:39}:
    \begin{align*}
      \sembr{\jtypef{\Psi,x:\tau}{x}{\tau}}u &= \pi^{\Psi,x}_{x}u.
    \end{align*}
    Let $\jcmf{\sigma, M}{\Phi}{\Psi, x : \tau}$ be arbitrary.
    Then
    \begin{align*}
      &\sembr{\jtypef{\Phi}{(\sigma,M) x}{\tau}}\\
      &= \sembr{\jtypef{\Phi}{M}{\tau}}\\
      &= \pi^{\Psi,x}_x \circ \langle \sembr{\jcmf{\sigma}{\Phi}{\Psi}}, x : \sembr{\jtypef{\Phi}{M}{\tau}} \rangle\\
      &= \pi^{\Psi,x}_x \circ \sembr{\jcmf{\sigma, M}{\Phi}{\Psi, x : \tau}}\\
      &= \sembr{\jtypef{\Psi,x : \tau}{x}{\tau}} \circ \sembr{\jcmf{\sigma, M}{\Phi}{\Psi, x : \tau}}.
    \end{align*}

  \item[Case \rn{F-Fix}.]
    \[
      \infer[\rn{F-Fix}]{
        \jtypef{\Psi}{\tFix{x}{M}}{\tau}
      }{
        \jtypef{\Psi,x:\tau}{M}{\tau}
      }
    \]
    Recall \cref{eq:28}:
    \begin{align*}
      \sembr{\jtypef{\Psi}{\tFix{x}{M}}{\tau}}u &= \sfix{\sembr{\jtypef{\Psi,x:\tau}{M}{\tau}}}u
    \end{align*}
    The result follows from a tweak of the standard proof and \cref{prop:fscd:2}.
    Let $\jcmf{\sigma}{\Phi}{\Psi}$ be arbitrary.
    By \cref{lemma:fscd:13}, we can weaken it to $\jcmf{\sigma,x}{\Phi,x : \tau}{\Psi, x : \tau}$.
    Let $\eta : \moabc\left( \cdot \times \sembr{x : \tau} ,  \sembr{\tau} \right) \nto \moabc\left( \cdot ,  \sembr{\tau} \right)$ be the natural interpretation given by \cref{prop:fscd:2}.
    Then
    \begin{align*}
      &\sembr{\jtypef{\Phi}{\sigma(\tFix{x}{M})}{\tau}}\\
      &= \sembr{\jtypef{\Phi}{\tFix{x}{(\sigma,x)M}}{\tau}}\\
      &= \eta_{\sembr{\Phi}}\left(\sembr{\jtypef{\Phi,x:\tau}{(\sigma,x)M}{\tau}}\right)\\
      \shortintertext{which by the induction hypothesis:}
      &= \eta_{\sembr{\Phi}}\left(\sembr{\jtypef{\Psi,x:\tau}{M}{\tau}} \circ \sembr{\jcmf{\sigma,x}{\Phi, x : \tau}{\Psi,x:\tau}}\right)\\
      \shortintertext{which by \cref{eq:fscd:33}:}
      &= \eta_{\sembr{\Phi}}\left(\sembr{\jtypef{\Psi,x:\tau}{M}{\tau}} \circ \langle \sembr{\jcmf{\sigma}{\Phi, x : \tau}{\Psi}}, x : \sembr{\jtypef{\Phi,x : \tau}{x}{\tau}} \rangle\right)\\
      \shortintertext{which by \cref{eq:39}:}
      &= \eta_{\sembr{\Phi}}\left(\sembr{\jtypef{\Psi,x:\tau}{M}{\tau}} \circ \langle \sembr{\jcmf{\sigma}{\Phi, x : \tau}{\Psi}}, x : \pi^{\Phi,x}_x \rangle\right)\\
      \shortintertext{which by \cref{lemma:fscd:13}:}
      &= \eta_{\sembr{\Phi}}\left(\sembr{\jtypef{\Psi,x:\tau}{M}{\tau}} \circ \langle \sembr{\jcmf{\sigma}{\Phi}{\Psi}} \circ \pi^{\Phi, x}_\Phi, x : \pi^{\Phi,x}_x \rangle\right)\\
      \shortintertext{which by a property of products and projections:}
      &= \eta_{\sembr{\Phi}}\left(\sembr{\jtypef{\Psi,x:\tau}{M}{\tau}} \circ \left(\sembr{\jcmf{\sigma}{\Phi}{\Psi}} \times \sembr{x : \tau}\right)\right)\\
      &= \left(\eta_{\sembr{\Phi}} \circ \moabc\left( \sembr{\jcmf{\sigma}{\Phi}{\Psi}} \times \sembr{x : \tau} ,  \sembr{\tau} \right)\right) \left(\sembr{\jtypef{\Psi,x:\tau}{M}{\tau}}\right)\\
      \shortintertext{which by naturality:}
      &= \left(\moabc\left( \sembr{\jcmf{\sigma}{\Phi}{\Psi}} ,  \sembr{\tau} \right) \circ \eta_{\sembr{\Psi}}\right)\left(\sembr{\jtypef{\Psi,x:\tau}{M}{\tau}}\right)\\
      \shortintertext{which by definition of natural interpretation:}
      &= \moabc\left( \sembr{\jcmf{\sigma}{\Phi}{\Psi}} ,  \sembr{\tau} \right)\left(\sembr{\jtypef{\Psi}{\tFix{x}{M}}{\tau}}\right)\\
      &= \sembr{\jtypef{\Psi}{\tFix{x}{M}}{\tau}} \circ \sembr{\jcmf{\sigma}{\Phi}{\Psi}}.
    \end{align*}

  \item[Case \rn{F-Fun}.]
    \[
      \infer[\rn{F-Fun}]{
        \jtypef{\Psi}{\lambda x : \tau.M}{\tau \to \sigma}
      }{
        \jtypef{\Psi, x:\tau}{M}{\sigma}
      }
    \]
    This case is analogous to the case \rn{F-Fix}.

  \item[Case \rn{F-App}.]
    \[
      \infer[\rn{F-App}]{
        \jtypef{\Psi}{MN}{\sigma}
      }{
        \jtypef{\Psi}{M}{\tau \to \sigma}
        &
        \jtypef{\Psi}{N}{\tau}
      }
    \]
    By the standard proof and \cref{prop:fscd:2}.

  \item[Case \rn{Fwd}.]
    \[
      \infer[\rn{Fwd}]{
        \jtypem{\Psi}{a:A}{\tFwd{b}{a}}{b}{A}
      }{}
    \]
    By the standard proof and \cref{prop:fscd:2}.

  \item[Case \rn{Cut}.]
    \[
      \infer[\rn{Cut}]{
        \jtypem{\Psi}{\Delta_1,\Delta_2}{\tCut{a}{P}{Q}}{c}{C}
      }{
        \jtypem{\Psi}{\Delta_1}{P}{a}{A}
        &
        \jtypem{\Psi}{a:A,\Delta_2}{Q}{c}{C}
      }
    \]
    By the standard proof and \cref{prop:fscd:2}.

  \item[Case \rn{$\Tu R$}.]
    \[
      \infer[\rn{$\Tu R$}]{
        \jtypem{\Psi}{\cdot}{\tClose a}{a}{\Tu}
      }{}
    \]
    By the standard proof and \cref{prop:fscd:2}.

  \item[Case \rn{$\Tu L$}.]
    \[
      \infer[\rn{$\Tu L$}]{
        \jtypem{\Psi}{\Delta, a : \Tu}{\tWait{a}{P}}{c}{C}
      }{
        \jtypem{\Psi}{\Delta}{P}{c}{C}
      }
    \]
    By the standard proof and \cref{prop:fscd:2}.

  \item[Case \rn{$\Tds{} R$}.]
    \[
      \infer[\rn{$\Tds{} R$}]{
        \jtypem{\Psi}{\Delta}{\tSendS{a}{P}}{a}{\Tds A}
      }{
        \jtypem{\Psi}{\Delta}{P}{a}{A}
      }
    \]
    By the standard proof and \cref{prop:fscd:2}.

  \item[Case \rn{$\Tds{} L$}.]
    \[
      \infer[\rn{$\Tds{} L$}]{
        \jtypem{\Psi}{\Delta,a : \Tds A}{\tRecvS{a}{P}}{c}{C}
      }{
        \jtypem{\Psi}{\Delta,a : A}{P}{c}{C}
      }
    \]
    By the standard proof and \cref{prop:fscd:2}.

  \item[Case \rn{$\Tus R$}.]
    \[
      \infer[\rn{$\Tus R$}]{
        \jtypem{\Psi}{\Delta}{\tRecvS{a}{P}}{a}{\Tus{A}}
      }{
        \jtypem{\Psi}{\Delta}{P}{a}{}
      }
    \]
    By the standard proof and \cref{prop:fscd:2}.

  \item[Case \rn{$\Tus L$}.]
    \[
      \infer[\rn{$\Tus L$}]{
        \jtypem{\Psi}{\Delta,a : \Tus A}{\tSendS{a}{P}}{c}{C}
      }{
        \jtypem{\Psi}{\Delta,a : A}{P}{c}{C}
      }
    \]
    By the standard proof and \cref{prop:fscd:2}.

  \item[Case \rn{$\Tplus R_k$}.]
    \[
      \infer[\rn{$\Tplus R_k$}]{
        \jtypem{\Psi}{\Delta}{\tSendL{a}{k}{P}}{a}{{\Tplus\{l:A_l\}}_{l \in L}}
      }{
        \jtypem{\Psi}{\Delta}{P}{a}{A_k}\quad(k \in L)
      }
    \]
    By the standard proof and \cref{prop:fscd:2}.

  \item[Case \rn{$\Tplus L$}.]
    \[
      \infer[\rn{$\Tplus L$}]{
        \jtypem{\Psi}{\Delta,a:{\Tplus\{l : A_l\}}_{l \in L}}{\tCase{a}{\left\{l_l \Rightarrow P_l\right\}_{i\in I}}}{c}{C}
      }{
        \jtypem{\Psi}{\Delta,a:A_l}{P_l}{c}{C}\quad(\forall l \in L)
      }
    \]
    By the standard proof and \cref{prop:fscd:2}.

  \item[Case \rn{$\Tamp R$}.]
    \[
      \infer[\rn{$\Tamp R$}]{
        \jtypem{\Psi}{\Delta}{\tCase{a}{\left\{l \Rightarrow P_l\right\}_{l \in L}}}{a}{{\Tamp\{l :A_l \}}_{l \in L}}
      }{
        \jtypem{\Psi}{\Delta}{P_l}{a}{A_l}\quad(\forall l \in L)
      }
    \]
    By the standard proof and \cref{prop:fscd:2}.

  \item[Case \rn{$\Tamp L_k$}.]
    \[
      \infer[\rn{$\Tamp L_k$}]{
        \jtypem{\Psi}{\Delta,a:{\Tamp\{l : A_l\}}_{l \in L}}{\tSendL{a}{k}{P}}{c}{C}
      }{
        \jtypem{\Psi}{\Delta,a:A_k}{P}{c}{C}
        &
        (k \in L)
      }
    \]
    By the standard proof and \cref{prop:fscd:2}.

  \item[Case \rn{$\Tot R^*$}.]
    \[
      \infer[\rn{$\Tot R^*$}]{
        \jtypem{\Psi}{\Delta, b : B}{\tSendC{a}{b}{P}}{a}{B \Tot A}
      }{
        \jtypem{\Psi}{\Delta}{P}{a}{A}
      }
    \]
    By the standard proof and \cref{prop:fscd:2}.

  \item[Case \rn{$\Tot L$}.]
    \[
      \infer[\rn{$\Tot L$}]{
        \jtypem{\Psi}{\Delta, a : B \Tot A}{\tRecvC{b}{a}{P}}{c}{C}
      }{
        \jtypem{\Psi}{\Delta, a : A, b : B}{P}{c}{C}
      }
    \]
    By the standard proof and \cref{prop:fscd:2}.

  \item[Case \rn{${\Tlolly}R$}.]
    \[
      \infer[\rn{${\Tlolly}R$}]{
        \jtypem{\Psi}{\Delta}{\tRecvC{b}{a}{P}}{a}{B \Tlolly A}
      }{
        \jtypem{\Psi}{\Delta, b : B}{P}{a}{A}
      }
    \]
    By the standard proof and \cref{prop:fscd:2}.

  \item[Case \rn{${\Tlolly}L$}.]
    \[
      \infer[\rn{${\Tlolly}L$}]{
        \jtypem{\Psi}{\Delta, b : B, a : B \Tlolly A}{\tSendC{a}{bb}{P}}{c}{C}
      }{
        \jtypem{\Psi}{\Delta,a : A}{P}{c}{C}
      }
    \]
    By the standard proof and \cref{prop:fscd:2}.

  \item[Case \rn{$\Tand{}{} R$}.]
    \[
      \infer[\rn{$\Tand{}{} R$}]{
        \jtypem{\Psi}{\Delta}{\tSendV{a}{M}{P}}{a}{\Tand{\tau}{A}}
      }{
        \jtypef{\Psi}{M}{\tau}
        &
        \jtypem{\Psi}{\Delta}{P}{a}{A}
      }
    \]
    By the standard proof and \cref{prop:fscd:2}.

  \item[Case \rn{$\Tand{}{} L$}.]
    \[
      \infer[\rn{$\Tand{}{} L$}]{
        \jtypem{\Psi}{\Delta, a:\Tand{\tau}{A}}{\tRecvV{x}{a}{P}}{c}{C}
      }{
        \jtypem{\Psi,x:\tau}{\Delta, a:A}{P}{c}{C}
      }
    \]
    This case is analogous to the case \rn{F-Fix}.

  \item[Case \rn{${\Timp{}{}} R$}.]
    \[
      \infer[\rn{${\Timp{}{}} R$}]{
        \jtypem{\Psi}{\Delta}{\tRecvV{x}{a}{P}}{a}{\Timp{\tau}{A}}
      }{
        \jtypem{\Psi,x:\tau}{\Delta}{P}{a}{A}
      }
    \]
    This case is analogous to the case \rn{F-Fix}.

  \item[Case \rn{${\Timp{}{}} L$}.]
    \[
      \infer[\rn{${\Timp{}{}} L$}]{
        \jtypem{\Psi}{\Delta,a : \Timp{\tau}{A}}{\tSendV{a}{M}{P}}{c}{C}
      }{
        \jtypef{\Psi}{M}{\tau}
        &
        \jtypem{\Psi}{\Delta, a : A}{P}{c}{C}
      }
    \]
    By the standard proof and \cref{prop:fscd:2}.

  \item[Case \rn{$\rho^+R$}.]
    \[
      \infer[\rn{$\rho^+R$}]{
        \jtypem{\Psi}{\Delta}{\tSendU{a}{P}}{a}{\Trec{\alpha}{A}}
      }{
        \jtypem{\Psi}{\Delta}{P}{a}{[\Trec{\alpha}{A}/\alpha]A}
        &
        \cdot \vdash \jisst[+]{\Trec{\alpha}{A}}
      }
    \]
    By the standard proof and \cref{prop:fscd:2}.

  \item[Case \rn{$\rho^+L$}.]
    \[
      \infer[\rn{$\rho^+L$}]{
        \jtypem{\Psi}{\Delta, a : \Trec{\alpha}{A}}{\tRecvU{a}{P}}{c}{C}
      }{
        \jtypem{\Psi}{\Delta, a : [\Trec{\alpha}{A}/\alpha]A}{P}{c}{C}
        &
        \cdot \vdash \jisst[+]{\Trec{\alpha}{A}}
      }
    \]
    By the standard proof and \cref{prop:fscd:2}.

  \item[Case \rn{$\rho^-R$}.]
    \[
      \infer[\rn{$\rho^-R$}]{
        \jtypem{\Psi}{\Delta}{\tRecvU{a}{P}}{a}{\Trec{\alpha}{A}}
      }{
        \jtypem{\Psi}{\Delta}{P}{a}{[\Trec{\alpha}{A}/\alpha]A}
        &
        \cdot \vdash \jisst[-]{\Trec{\alpha}{A}}
      }
    \]
    By the standard proof and \cref{prop:fscd:2}.

  \item[Case \rn{$\rho^-L$}.]
    \[
      \infer[\rn{$\rho^-L$}]{
        \jtypem{\Psi}{\Delta, a : \Trec{\alpha}{A}}{\tSendU{a}{P}}{c}{C}
      }{
        \jtypem{\Psi}{\Delta, a : [\Trec{\alpha}{A}/\alpha]A}{P}{c}{C}
        &
        \cdot \vdash \jisst[-]{\Trec{\alpha}{A}}
      }
    \]
    By the standard proof and \cref{prop:fscd:2}.

  \item[Case \rn{E-\{\}}.]
    \[
      \infer[\rn{E-\{\}}]{
        \jtypem{\Psi}{\overline{a_i:A_i}}{\tProc{a}{M}{\overline a_i}}{a}{A}
      }{
        \jtypef{\Psi}{M}{\Tproc{a:A}{\overline{a_i:A_i}}}
      }
    \]
    By the standard proof and \cref{prop:fscd:2}.\qedhere
  \end{description}
\end{proof}

\section{Semantic Results for Type Formers}
\label{sec:semant-results-type}

We show various semantic results for type interpretations.
These results have a similar structure as \cref{sec:semant-results-terms}.
We ignore size issues: they can be addressed using a hierarchy of universes~\cite[\S~3]{schubert_1972:_categ}.

Recall that we interpret judgments $\jstype{\Xi}{A}$ as locally continuous functors from $\sembr{\Xi}$ to $\moabcs$.
These categories and functors are respectively objects and morphisms in the category $\CFP$ of $\mb{O}$-categories that support canonical fixed points ($\CFP$ is also known as $\mb{Kind}$~\cite[\S~7.3.2]{fiore_1994:_axiom_domain_theor}).
Consider a type-forming rule
\[
  \infer{
    \jstype{\Xi}{F_{\mathcal{J}_1,\dotsc,\mathcal{J}_m}(A_1,\dotsc,A_n)}
  }{
    \jstype{\Xi}{A_1}
    &
    \cdots
    &
    \jstype{\Xi}{A_n}
    &
    \mathcal{J}_1
    &
    \cdots
    &
    \mathcal{J}_m
  }
\]
Assume the polarized aspect $\sembr{\jstype{\Xi}{F_{\mathcal{J}_1,\dotsc,\mathcal{J}_m}(A_1,\dotsc,A_n)}}^p$ is given by
\begin{equation}
  \label[intn]{eq:fscd:35}
  \begin{aligned}
    &\sembr{\jstype{\Xi}{F_{\mathcal{J}_1,\dotsc,\mathcal{J}_m}(A_1,\dotsc,A_n)}}^p\\
    &= \sembr{F_{\mathcal{J}_1,\dotsc,\mathcal{J}_m}}_{\sembr{\Xi}}\left(\sembr{\jstype{\Xi}{A_1}}^p, \dotsc, \sembr{\jstype{\Xi}{A_n}}^p\right).
  \end{aligned}
\end{equation}
where $\sembr{F_{\mathcal{J}_1,\dotsc,\mathcal{J}_m}}^p$ is a family of morphisms
\begin{equation}
  \label{eq:fscd:36}
  \sembr{F_{\mathcal{J}_1,\dotsc,\mathcal{J}_m}}_{\sembr{\Xi}} : \left( \prod_{i = 1}^n \CFP\left( \sembr{\Xi}, \moabcs \right) \right) \to \CFP\left( \sembr{\Xi} , \moabcs \right).
\end{equation}
We say that \cref{eq:fscd:35} is \defin{natural in its environment} if the family $\sembr{F_{\mathcal{J}_1,\dotsc,\mathcal{J}_m}}^p_{\sembr{\Xi}}$ is natural in $\sembr{\Xi}$, \ie, if for all functors $\sigma : \sembr{\Xi} \to \sembr{\Phi}$, the following commutes:
\[
  \begin{tikzcd}[column sep=7em]
    \prod_{i = 1}^n \CFP\left( \sembr{\Phi}, \moabcs \right) \ar[r, "{\sembr{F_{\mathcal{J}_1,\dotsc,\mathcal{J}_m}}_{\sembr{\Phi}}}"] \ar[d, swap, "{\prod_{i = 1}^n \CFP\left( \sigma, \moabcs \right)}"] & \CFP\left( \sembr{\Phi}, \moabcs \right) \ar[d, "{\CFP\left( \sigma, \moabcs \right)}"]\\
  \prod_{i = 1}^n \CFP\left( \sembr{\Xi}, \moabcs \right) \ar[r, "{\sembr{F_{\mathcal{J}_1,\dotsc,\mathcal{J}_m}}_{\sembr{\Xi}}}"] & \CFP\left( \sembr{\Xi}, \moabcs \right)
\end{tikzcd}
\]
Concretely, this means that for all $n$-tuples of locally continuous functors $(G_i : \sembr{\Phi} \to \moabcs)_{1 \leq i \leq n}$,
\[
  \sembr{F_{\mathcal{J}_1,\dotsc,\mathcal{J}_m}}_{\sembr{\Xi}}\left((G_i\sigma)_{1 \leq i \leq n}\right) = \sembr{F_{\mathcal{J}_1,\dotsc,\mathcal{J}_m}}_{\sembr{\Phi}}\left((G_i)_{1 \leq i \leq n}\right) \sigma.
\]
In this case, we call $\sembr{F_{\mathcal{J}_1,\dotsc,\mathcal{J}_l}}^p$ a \defin{natural interpretation} of the rule.

\begin{proposition}
  \label{prop:fscd:4}
  If $\jstype{\Xi}{A}$, then $\sembr{\jstype{\Xi}{A}}^-$ and $\sembr{\jstype{\Xi}{A}}^+$ are natural in their environments.
\end{proposition}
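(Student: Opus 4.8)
The plan is to prove \cref{prop:fscd:4} by induction on the derivation of $\jstype{\Xi}{A}$, with a case analysis on the last type-formation rule, exactly paralleling the proof of \cref{prop:fscd:2}. In each case I would exhibit the family $\sembr{F_{\mathcal{J}_1,\dotsc,\mathcal{J}_m}}^p$ realizing the interpretation as in \eqref{eq:fscd:35} and then verify the naturality equation $\sembr{F_{\mathcal{J}_1,\dotsc,\mathcal{J}_m}}^p_{\sembr{\Xi}}\bigl((G_i\sigma)_i\bigr) = \sembr{F_{\mathcal{J}_1,\dotsc,\mathcal{J}_m}}^p_{\sembr{\Phi}}\bigl((G_i)_i\bigr)\sigma$ for every locally continuous $\sigma : \sembr{\Xi} \to \sembr{\Phi}$.

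The routine cases are the connective rules \rn{C$\Tds{}$}, \rn{C$\Tus{}$}, \rn{C$\Tplus$}, \rn{C$\Tamp$}, \rn{C$\Tot$}, \rn{C$\Tlolly$}, \rn{C$\Tand{}{}$}, and \rn{C$\Timp{}{}$}, together with the base rules \rn{C$\Tu$} and \rn{CVar}. Their interpretations are assembled from four functor-level operations: the binary product $\times$, post-composition with the lifting functor $({-})_\bot$, the coalesced sum $\bigoplus_{l}$, and the insertion of the constant factors $\sembr{\tau}$ (for the value-passing rules). For example, for \rn{C$\Tot$} I would take $\sembr{F}^-_{\sembr{\Xi}}(G,H) = G \times H$ and $\sembr{F}^+_{\sembr{\Xi}}(G,H) = (G \times H)_\bot$. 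Each of these operations is computed pointwise on functors, so it commutes with precomposition by $\sigma$: we have $(G\sigma)\times(H\sigma) = (G\times H)\sigma$, $\bigl(({-})_\bot \circ G\bigr)\circ\sigma = ({-})_\bot\circ(G\circ\sigma)$ by associativity, $\bigl(\bigoplus_l G_l\bigr)\sigma = \bigoplus_l(G_l\sigma)$, and the constant functor $\Delta_D$ at an object $D$ satisfies $\Delta_D \circ \sigma = \Delta_D$. The rules \rn{C$\Tu$} and \rn{CVar} are immediate, the former because its aspects are constant functors and the latter because it denotes a projection. These observations give the naturality equation directly in each of these cases.

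The main obstacle is the recursive-type rule \rn{C$\rho$}, whose interpretation \eqref{eq:20051} is $\sembr{\Xi\vdash\jisst{\Trec{\alpha}{A}}}^p = \sfix{(\sembr{\Xi,\jisst{\alpha}\vdash \jisst{A}}^p)}$. Here naturality in the environment amounts to the identity
\[
  \sfix{\bigl(F \circ (\sigma \times \ms{id}_{\moabcs})\bigr)} = \bigl(\sfix{F}\bigr) \circ \sigma,
\]
for every locally continuous $F : \sembr{\Phi} \times \moabcs \to \moabcs$ and $\sigma : \sembr{\Xi} \to \sembr{\Phi}$. This is the functor-level analogue of the naturality axiom for a parametrized fixed-point operator (\cref{def:fscd:2}), now holding in the category $\CFP$. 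I expect this to be the crux: it follows from the canonicity of the fixed point $\FIX(G)$ furnished by \cite[Proposition~5.2.7]{abramsky_jung_1995:_domain_theor}, since that construction is built from the limit–colimit coincidence and is therefore preserved under precomposition by the locally continuous functor $\sigma$. Concretely, I would show that $\langle \ms{id}, \bigl(\sfix{F}\bigr)\circ\sigma\rangle$ exhibits $\bigl(\sfix{F}\bigr)\circ\sigma$ as a solution to the domain equation determined by $F\circ(\sigma\times\ms{id})$, transporting the canonical $\Fold$ and $\Unfold$ isomorphisms along $\sigma$, and then invoke uniqueness of the canonical fixed point to conclude the displayed equation. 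With this identity in hand, the \rn{C$\rho$} case, and hence the induction, is complete.
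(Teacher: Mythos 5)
Your proposal is correct and follows essentially the same route as the paper's proof: induction on the derivation with the connective and base cases handled by observing that products, lifting, coalesced sums, constants, and projections are computed pointwise and hence commute with precomposition, and with \rn{C$\rho$} as the sole nontrivial case reducing to the identity $\sfix{(F \circ (\sigma \times \ms{id}))} = \sfix{F} \circ \sigma$. The only difference is that the paper discharges that identity by citing it as the parameter identity for the dagger $\sfix{(\cdot)}$ (following Bloom and \'Esik), whereas you propose to re-derive it from the canonicity of $\FIX$ via \cite[Proposition~5.2.7]{abramsky_jung_1995:_domain_theor} --- a sound but strictly optional elaboration.
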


\begin{proof}
  By case analysis on the last rule in the derivation of $\jstype{\Xi}{A}$.

  \begin{description}[listparindent=\parindent]
  \item[Case \rn{C$\Tu$}.]
    \[
      \infer[\rn{C$\Tu$}]{\jstype[+]{\Xi}{\Tu}}{}
    \]
    Recall \cref{eq:1502,eq:1501}:
    \begin{align*}
      \sembr{\jstype{\Xi}{\Tu}}^- &= \lambda \xi.\top_{\moabcs}\\
      \sembr{\jstype{\Xi}{\Tu}}^+ &= \lambda \xi.\{\ast\}_\bot.
    \end{align*}
    We must show that there exist natural transformations
    \begin{align*}
      \eta^- &: \top_{\mb{Set}} \nto \CFP\left( {-} , \moabcs \right)\\
      \eta^+ &: \top_{\mb{Set}} \nto \CFP\left( {-} , \moabcs \right)
    \end{align*}
    that are the respective natural interpretations.
    Given a category $\mb{C}$, let
    \[
      \Delta_{\mb{C}} : \moabcs \to [\mb{C} \to \moabcs]
    \]
    be the diagonal functor.
    Take
    \begin{align*}
      \eta_{\mb{C}}^-(\ast) &= \Delta_{\mb{C}}(\top_\moabcs) : \mb{C} \to \moabcs\\
      \eta_{\mb{C}}^+(\ast) &= \Delta_{\mb{C}}(\{\ast\}_\bot) : \mb{C} \to \moabcs.
    \end{align*}
    These are obviously natural.
    When $\mb{C} = \sembr{\Xi}$, we recover:
    \begin{align*}
      \eta_{\sembr{\Xi}}^-(\ast) &= \sembr{\jstype{\Xi}{\Tu}}^-\\
      \eta_{\sembr{\Xi}}^+(\ast) &= \sembr{\jstype{\Xi}{\Tu}}^+.
    \end{align*}

  \item[Case \rn{CVar}.]
    \[
      \infer[\rn{CVar}]{\Xi,\jisst[p]{\alpha}\vdash\jisst[p]{\alpha}}{}
    \]
    Recall \cref{eq:fscd:7}:
    \begin{align*}
      \sembr{\Xi,\jisst[p]{\alpha}\vdash\jisst[p]{\alpha}}^q &= \pi^{\Xi,\alpha}_\alpha \quad (q \in \{{-},{+}\})
    \end{align*}
    We must show that there exist natural transformations
    \begin{align*}
      \eta^- &: \top_{\mb{Set}} \nto \CFP\left( {-} \times \sembr{\alpha}, \moabcs \right)\\
      \eta^+ &: \top_{\mb{Set}} \nto \CFP\left( {-} \times \sembr{\alpha}, \moabcs \right)
    \end{align*}
    that are the respective natural interpretations.
    Take
    \begin{align*}
      \eta_{\mb{C}}^-(\ast) &= \pi_\alpha : \mb{C} \times \sembr{\alpha} \to \moabcs\\
      \eta_{\mb{C}}^+(\ast) &= \pi_\alpha : \mb{C} \times \sembr{\alpha} \to \moabcs.
    \end{align*}
    These are are obviously natural.
    When $\mb{C} = \sembr{\Xi}$, we recover:
    \begin{align*}
      \eta_{\sembr{\Xi}}^-(\ast) &= \sembr{\jstype{\Xi, \alpha}{\alpha}}^-\\
      \eta_{\sembr{\Xi}}^+(\ast) &= \sembr{\jstype{\Xi, \alpha}{\alpha}}^+.
    \end{align*}

  \item[Case \rn{C$\rho$}.]
    \[
      \infer[\rn{C$\rho$}]{
        \Xi \vdash \jisst[p]{\Trec{\alpha}{A}}
      }{
        \Xi, \jisst[p]{\alpha} \vdash \jisst[p]{A}
      }
    \]
    Recall \cref{eq:20051}:
    \begin{align*}
      \sembr{\jstype{\Xi}{\Trec{\alpha}{A}}}^p &= \sfix{\left(\sembr{\Xi,\jisst{\alpha}\vdash \jisst{A}}^p\right)}\quad (p \in \{{-},{+}\})
    \end{align*}
    We must show that there exist natural transformations
    \begin{align*}
      \eta^p &: \CFP\left( {-} \times \sembr{\alpha}, \moabcs \right) \nto \CFP\left( {-}, \moabcs \right) \quad (p \in \{{-},{+}\})
    \end{align*}
    that are the respective natural interpretations.
    Take
    \[
      \eta_{\mb{C}}^p(G) = \sfix{G} : \mb{C} \to \moabcs
    \]
    To show the naturality of $\eta^p$, we must show that for all functors $\sigma : \mb{D} \to \mb{C}$,
    \[
      \sfix{(G \circ (\sigma \times \ms{id}))} = \sfix{G} \circ \sigma : \mb{D} \to \moabcs.
    \]
    This is exactly the parameter identify~\cite{bloom_esik_1996:_fixed_point_operat} satisfied by the dagger $\sfix{(\cdot)}$, so we are done.

  \item[Case \rn{C$\Tds{}$}.]
    \[
      \infer[\rn{C$\Tds{}$}]{
        \jstype[+]{\Xi}{\Tds A}
      }{
        \jstype[-]{\Xi}{A}
      }
    \]
    Recall \cref{eq:1506,eq:1507}:
    \begin{align*}
      \sembr{\jstype{\Xi}{\Tds A}}^- &= \sembr{\jstype{\Xi}{A}}^-\\
      \sembr{\jstype{\Xi}{\Tds A}}^+ &= \sembr{\jstype{\Xi}{A}}^+_\bot
    \end{align*}
    We must show that there exist natural transformations
    \begin{align*}
      \eta^- &: \CFP\left( {-}, \moabcs \right) \nto \CFP\left( {-}, \moabcs \right)\\
      \eta^+ &: \CFP\left( {-}, \moabcs \right) \nto \CFP\left( {-}, \moabcs \right)
    \end{align*}
    that are the respective natural interpretations.
    Recall that we write $F_\bot$ for the composition ${({-})_\bot}F$ and $\sigma_\bot$ for the natural transformation ${({-})}_\bot\sigma$.
    Take
    \begin{align*}
      \eta_{\mb{C}}^-(F) &= F : \mb{C} \to \moabcs\\
      \eta_{\mb{C}}^+(F) &= F_\bot : \mb{C} \to \moabcs.
    \end{align*}
    To show the naturality of $\eta^-$ and $\eta^+$, we must show for all functors $F : \mb{C} \to \moabcs$ and $\sigma : \mb{D} \to \mb{C}$ that
    \begin{align*}
      \eta_{\mb{D}}^-(F\sigma) &= \eta_{\mb{C}}^-(F)\sigma\\
      \eta_{\mb{D}}^+(F\sigma) &= \eta_{\mb{C}}^+(F)\sigma.
    \end{align*}
    In the negative case,
    \[
      \eta_{\mb{D}}^-(F\sigma) = F\sigma = \eta_{\mb{C}}^-(F)\sigma.
    \]
    In the positive case, we have by associativity of composition:
    \[
      \eta_{\mb{D}}^+(F\sigma) = ({-})_\bot (F\sigma) =  (({-})_\bot F)\sigma = \eta_{\mb{C}}^+(F)\sigma.
    \]

  \item[Case \rn{C$\Tus{}$}.]
    \[
      \infer[\rn{C$\Tus{}$}]{
        \jstype[-]{\Xi}{\Tus A}
      }{
        \jstype[+]{\Xi}{A}
      }
    \]
    This case is analogous to the case \rn{C$\Tds{}$}.

  \item[Case \rn{C$\Tplus$}.]
    \[
      \infer[\rn{C$\Tplus$}]{
        \jstype[+]{\Xi}{{\Tplus\{l : A_l\}}_{l \in L}}
      }{
        \jstype[+]{\Xi}{A_l}\quad(\forall l \in L)
      }
    \]
    Recall \cref{eq:202020,eq:222222}:
    \begin{align*}
      \sembr{\jstype{\Xi}{\Tplus\{l:A_l\}_{l \in L}}}^- &= \prod_{l \in L} \sembr{\jstype{\Xi}{A_l}}^-\\
      \sembr{\jstype{\Xi}{\Tplus\{l:A_l\}_{l \in L}}}^+ &= \bigoplus_{l \in L} \sembr{\jstype{\Xi}{A_l}}^+_\bot
    \end{align*}
    We must show that there exist natural transformations
    \begin{align*}
      \eta^- &: \left(\prod_{l \in L} \CFP\left( {-}, \moabcs \right)\right) \nto \CFP\left( {-}, \moabcs \right)\\
      \eta^+ &: \left(\prod_{l \in L} \CFP\left( {-}, \moabcs \right)\right) \nto \CFP\left( {-}, \moabcs \right)
    \end{align*}
    that are the respective natural interpretations.
    Take
    \begin{align*}
      \eta^-_{\mb{C}}\left(\left(F_l\right)_{l \in L}\right) &= \prod_{l \in L} F_l : \mb{C} \to \moabcs\\
      \eta^+_{\mb{C}}\left(\left(F_l\right)_{l \in L}\right) &= \bigoplus_{l \in L} ({-})_\bot F_l : \mb{C} \to \moabcs.
    \end{align*}
    These are both easily seen to be natural.

  \item[Case \rn{C$\Tamp$}.]
    \[
      \infer[\rn{C$\Tamp$}]{
        \jstype[-]{\Xi}{{\Tamp\{l :A_l \}}_{l \in L}}
      }{
        \jstype[-]{\Xi}{A_l}\quad(\forall l \in L)
      }
    \]
    This case is analogous to the case \rn{C$\Tplus$}.

  \item[Case \rn{C$\Tot$}.]
    \[
      \infer[\rn{C$\Tot$}]{
        \jstype[+]{\Xi}{A \Tot B}
      }{
        \jstype[+]{\Xi}{A}
        &
        \jstype[+]{\Xi}{B}
      }
    \]
    Recall \cref{eq:13,eq:14}:
    \begin{align*}
      \sembr{\jstype{\Xi}{A \Tot B}}^- &= \sembr{\jstype{\Xi}{A}}^- \times \sembr{\jstype{\Xi}{B}}^-\\
      \sembr{\jstype{\Xi}{A \Tot B}}^+ &= \left(\sembr{\jstype{\Xi}{A}}^+ \times \sembr{\jstype{\Xi}{B}}^+\right)_\bot
    \end{align*}
    We must show that there exist natural transformations
    \begin{align*}
      \eta^- &: \CFP\left({-}, \moabcs \right) \times \CFP\left( {-}, \moabcs \right) \nto \CFP\left( {-}, \moabcs \right)\\
      \eta^+ &: \CFP\left({-}, \moabcs \right) \times \CFP\left( {-}, \moabcs \right) \nto \CFP\left( {-}, \moabcs \right)
    \end{align*}
    that are the respective natural interpretations.
    Take
    \begin{align*}
      \eta_{\mb{C}}^-(A, B) &= A \times B : \mb{C} \to \moabcs\\
      \eta_{\mb{C}}^+(A, B) &= (A \times B)_\bot : \mb{C} \to \moabcs.
    \end{align*}
    These are easily seen to be natural.

  \item[Case \rn{C$\Tlolly$}.]
    \[
      \infer[\rn{C$\Tlolly$}]{
        \jstype[-]{\Xi}{A \Tlolly B}
      }{
        \jstype[+]{\Xi}{A}
        &
        \jstype[-]{\Xi}{B}
      }
    \]
    This case is analogous to the case \rn{C$\Tot$}.

  \item[Case \rn{C$\Tand{}{}$}.]
    \[
      \infer[\rn{C$\Tand{}{}$}]{
        \jstype[+]{\Xi}{\Tand{\tau}{A}}
      }{
        \jisft{\tau}
        &
        \jstype[+]{\Xi}{A}
      }
    \]
    Recall \cref{eq:15104,eq:15106}:
    \begin{align*}
      \sembr{\jstype{\Xi}{\Tand{\tau}{A}}}^- &= \sembr{\Xi\vdash \jisst{A}}^-\\
      \sembr{\jstype{\Xi}{\Tand{\tau}{A}}}^+ &= \left(\sembr{\tau}\times\sembr{\Xi\vdash \jisst{A}}^+\right)_\bot
    \end{align*}
    We must show that there exist natural transformations
    \begin{align*}
      \eta^- &: \Cell{\moabcs}\left({-}, \moabcs \right) \nto \Cell{\moabcs} \left( {-}, \moabcs \right)\\
      \eta^+ &: \Cell{\moabcs}\left({-}, \moabcs \right) \nto \Cell{\moabcs} \left( {-}, \moabcs \right)
    \end{align*}
    that are the respective natural interpretations.
    Take
    \begin{align*}
      \eta_{\mb{C}}^-(F) &= F : \mb{C} \to \moabcs\\
      \eta_{\mb{C}}^+(F) &= (\sembr{\tau} \times F)_\bot : \mb{C} \to \moabcs\\
    \end{align*}
    These are easily seen to be natural.

  \item[Case \rn{C$\Timp{}{}$}.]
    \[
      \infer[\rn{C$\Timp{}{}$}]{
        \jstype[-]{\Xi}{\Timp{\tau}{A}}
      }{
        \jisft{\tau}
        &
        \jstype[-]{\Xi}{A}
      }
    \]
    This case is analogous to the case \rn{C$\Tand{}{}$}.\qedhere
  \end{description}
\end{proof}

\subsection{Semantic Weakening}
\label{sec:semantic-weakening-types}

We show that weakening is semantically well-behaved, \ie, that the semantic clauses are coherent~\cite[p.~218]{tennent_1995:_denot_seman}.

\begin{proposition}[Coherence]
  \label{prop:fscd:8}
  Let $\Theta,\Xi$ be a context of type variables.
  If $\jstype[p]{\Xi}{A}$, then the following diagram commutes for $q \in \{{-},{+}\}$:
  \begin{equation}
    \label[diagram]{eq:114}
    \begin{tikzcd}[column sep=7em]
      \sembr{\Theta,\Xi} \ar[dr, "{\sembr{\jstype[p]{\Theta,\Xi}{A}}^q}"] \ar[d, swap, "\pi^{\Theta,\Xi}_\Xi"] &\\
      \sembr{\Xi} \ar[r, swap, "{\sembr{\jstype[p]{\Xi}{{A}}}^q}"] & \moabcs
    \end{tikzcd}
  \end{equation}
\end{proposition}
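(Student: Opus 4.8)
The plan is to prove the statement by induction on the derivation of $\jstype[p]{\Xi}{A}$, simultaneously for both aspects $q \in \{{-},{+}\}$, following the same template as the coherence proof for terms and processes (\cref{prop:extended:9}). The key observation is that $\pi^{\Theta,\Xi}_\Xi : \sembr{\Theta,\Xi} \to \sembr{\Xi}$ is a morphism of $\CFP$, hence exactly a functor $\sigma$ of the kind appearing in the naturality square of \cref{prop:fscd:4}. Coherence therefore reduces to the naturality of each type former's interpretation together with the induction hypothesis, which supplies the corresponding equation for each premise. Inducting on both aspects at once ensures that the induction hypothesis is available at whichever aspect of a premise happens to occur in a given clause, even when polarities are flipped.

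Concretely, for a generic rule whose interpretation is given by \cref{eq:fscd:35} via a natural interpretation $\sembr{F}^q$ (abbreviating $\sembr{F_{\mathcal{J}_1,\dotsc,\mathcal{J}_m}}$ by $\sembr{F}$), I would instantiate the naturality of \cref{prop:fscd:4} at $\sigma = \pi^{\Theta,\Xi}_\Xi$ and compute
\begin{align*}
  \sembr{\jstype[p]{\Theta,\Xi}{F(A_1,\dotsc,A_n)}}^q
  &= \sembr{F}^q_{\sembr{\Theta,\Xi}}\bigl((\sembr{\jstype{\Theta,\Xi}{A_i}}^{q_i})_{i}\bigr)\\
  &= \sembr{F}^q_{\sembr{\Theta,\Xi}}\bigl((\sembr{\jstype{\Xi}{A_i}}^{q_i}\circ\pi^{\Theta,\Xi}_\Xi)_{i}\bigr)\\
  &= \sembr{F}^q_{\sembr{\Xi}}\bigl((\sembr{\jstype{\Xi}{A_i}}^{q_i})_{i}\bigr)\circ\pi^{\Theta,\Xi}_\Xi\\
  &= \sembr{\jstype[p]{\Xi}{F(A_1,\dotsc,A_n)}}^q\circ\pi^{\Theta,\Xi}_\Xi,
\end{align*}
where the $q_i$ are the (possibly flipped) aspects of the premises occurring in the clause, the second equality is the induction hypothesis applied to each premise, and the third is the naturality of $\sembr{F}^q$ from \cref{prop:fscd:4}. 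This single computation discharges every rule with at least one premise, namely \rn{C$\rho$}, \rn{C$\Tds{}$}, \rn{C$\Tus{}$}, \rn{C$\Tplus$}, \rn{C$\Tamp$}, \rn{C$\Tot$}, \rn{C$\Tlolly$}, \rn{C$\Tand{}{}$}, and \rn{C$\Timp{}{}$}. In the case \rn{C$\rho$} the naturality being invoked is precisely the parameter identity for the dagger $\sfix{(\cdot)}$, exactly as used in the \rn{C$\rho$} case of \cref{prop:fscd:4}.

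The two axiom cases need a direct check. For \rn{C$\Tu$} both aspects are constant (diagonal) functors by \cref{eq:1502,eq:1501}, and precomposing a diagonal functor with $\pi^{\Theta,\Xi}_\Xi$ returns the diagonal functor on $\sembr{\Theta,\Xi}$, so the equation is immediate. For \rn{CVar}, where $\Xi$ has the form $\Xi',\jisst[p]{\alpha}$, both aspects equal the projection $\pi^{\Xi',\alpha}_\alpha$ by \cref{eq:fscd:7}, and the claim follows from the projection identity $\pi^{\Theta,\Xi',\alpha}_\alpha = \pi^{\Xi',\alpha}_\alpha \circ \pi^{\Theta,\Xi',\alpha}_{\Xi',\alpha}$, mirroring the \rn{F-Var} case of \cref{prop:extended:9}.

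Since the substantive work—that each type former is natural in its environment—has already been carried out in \cref{prop:fscd:4}, I do not expect a genuine obstacle here. The only points requiring care are the bookkeeping of the mixed polarities $q_i$ in the clauses for \rn{C$\Tlolly$} and \rn{C$\Timp{}{}$}, so that the induction hypothesis is invoked at the correct aspect of each premise, and checking that $\pi^{\Theta,\Xi}_\Xi$ really satisfies the hypotheses of the naturality square, which it does as a morphism of $\CFP$.
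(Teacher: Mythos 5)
Your proposal matches the paper's proof: induction on the derivation, with a single generic computation that pushes $\pi^{\Theta,\Xi}_\Xi$ through the natural interpretation supplied by \cref{prop:fscd:4}, plus direct checks of the two axiom cases \rn{C$\Tu$} and \rn{CVar}. The only detail worth making explicit is that in the \rn{C$\rho$} case the induction hypothesis yields precomposition with $\pi^{\Theta,\Xi,\alpha}_{\Xi,\alpha}$, which must first be rewritten as $\pi^{\Theta,\Xi}_\Xi\times\ms{id}_{\sembr{\alpha}}$ before naturality (the parameter identity for $\sfix{(\cdot)}$) applies — the paper flags this as a ``tweak'' of the standard proof, and your mention of the parameter identity shows you have the right step in mind.
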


\begin{proof}
  By induction on the derivation of $\jstype[p]{\Xi}{A}$.

  Except where otherwise stated, each rule uses the same proof outline.
  We refer to it below as the ``standard proof''.
  Consider a type-forming rule
  \[
    \infer{
      \jstype{\Xi}{F_{\mathcal{J}_1,\dotsc,\mathcal{J}_m}(A_1,\dotsc,A_n)}
    }{
      \jstype{\Xi}{A_1}
      &
      \cdots
      &
      \jstype{\Xi}{A_n}
      &
      \mathcal{J}_1
      &
      \cdots
      &
      \mathcal{J}_m
    }
  \]
  Assume its interpretation is given by
  \[
    \begin{aligned}
      &\sembr{\jstype{\Xi}{F_{\mathcal{J}_1,\dotsc,\mathcal{J}_m}(A_1,\dotsc,A_n)}}^p\\
      &= \sembr{F_{\mathcal{J}_1,\dotsc,\mathcal{J}_m}}_{\sembr{\Xi}}\left(\sembr{\jstype{\Xi}{A_1}}^p, \dotsc, \sembr{\jstype{\Xi}{A_n}}^p\right).
    \end{aligned}
  \]
  where $\sembr{F_{\mathcal{J}_1,\dotsc,\mathcal{J}_m}}$ is a natural interpretation
  \[
    \sembr{F_{\mathcal{J}_1,\dotsc,\mathcal{J}_m}}_{\sembr{\Xi}} : \left( \prod_{i = 1}^n \CFP\left( \sembr{\Xi}, \moabcs \right) \right) \to \CFP\left( \sembr{\Xi} , \moabcs \right).
  \]
  Given any other context of type variables $\Theta$ disjoint from $\Xi$, we would like to show that
  \begin{align*}
    &\sembr{\jstype{\Theta,\Xi}{F_{\mathcal{J}_1,\dotsc,\mathcal{J}_m}(A_1,\dotsc,A_n)}}^p\\
    &= \sembr{\jstype{\Xi}{F_{\mathcal{J}_1,\dotsc,\mathcal{J}_m}(A_1,\dotsc,A_n)}}^p \circ \pi^{\Theta,\Xi}_\Xi.
  \end{align*}
  By the induction hypothesis, we have for all $1 \leq i \leq n$,
  \[
    \sembr{\jstype{\Theta,\Xi}{A_i}}^p = \sembr{\jstype{\Xi}{A_i}}^p \circ \pi^{\Theta,\Xi}_\Xi.
  \]
  Using these facts we get:
  \begin{align*}
    &\sembr{\jstype{\Theta,\Xi}{F_{\mathcal{J}_1,\dotsc,\mathcal{J}_m}(A_1,\dotsc,A_n)}}^p\\
    &= \sembr{F_{\mathcal{J}_1,\dotsc,\mathcal{J}_m}}_{\sembr{\Theta,\Xi}}\left(\sembr{\jstype{\Theta,\Xi}{A_1}}^p, \dotsc, \sembr{\jstype{\Theta,\Xi}{A_n}}^p\right)\\
    \shortintertext{which by the induction hypothesis,}
    &= \sembr{F_{\mathcal{J}_1,\dotsc,\mathcal{J}_m}}_{\sembr{\Theta,\Xi}}\left(
      \sembr{\jstype{\Xi}{A_1}}^p \circ \pi^{\Theta,\Xi}_\Xi,
      \dotsc,
      \sembr{\jstype{\Xi}{A_n}}^p \circ \pi^{\Theta,\Xi}_\Xi\right)\\
    \shortintertext{which by naturality of $\sembr{F_{\mathcal{J}_1,\dotsc,\mathcal{J}_m}}$,}
    &= \sembr{F_{\mathcal{J}_1,\dotsc,\mathcal{J}_m}}_{\sembr{\Xi}}\left(
      \sembr{\jstype{\Xi}{A_1}}^p,
      \dotsc,
      \sembr{\jstype{\Xi}{A_n}}^p\right) \circ \pi^{\Theta,\Xi}_\Xi\\
    &=\sembr{\jstype{\Xi}{F_{\mathcal{J}_1,\dotsc,\mathcal{J}_m}(A_1,\dotsc,A_n)}}^p  \circ \pi^{\Theta,\Xi}_\Xi.
  \end{align*}
  This is what we wanted to show.

  \begin{description}[listparindent=\parindent]
  \item[Case \rn{C$\Tu$}.]
    By the standard proof and \cref{prop:fscd:4}.

  \item[Case \rn{CVar}.]
    \[
      \infer[\rn{CVar}]{\Xi,\jisst[p]{\alpha}\vdash\jisst[p]{\alpha}}{}
    \]
    Recall \cref{eq:fscd:7}:
    \begin{align*}
      \sembr{\Xi,\jisst[p]{\alpha}\vdash\jisst[p]{\alpha}}^q &= \pi^{\Xi,\alpha}_\alpha \quad (q \in \{{-},{+}\})
    \end{align*}
    Then
    \[
      \sembr{\jstype{\Theta,\Xi,\jisst{\alpha}}{\alpha}}^q = \pi^{\Theta,\Xi,\alpha}_\alpha = \pi^{\Xi,\alpha}_\alpha\pi^{\Theta,\Xi,\alpha}_{\Xi,\alpha} = \sembr{\jstype{\Xi,\jisst{\alpha}}{\alpha}}\pi^{\Theta,\Xi,\alpha}_{\Xi,\alpha}.
    \]
    This is what we wanted to show.

  \item[Case \rn{C$\rho$}.]
    \[
      \infer[\rn{C$\rho$}]{
        \Xi \vdash \jisst[p]{\Trec{\alpha}{A}}
      }{
        \Xi, \jisst[p]{\alpha} \vdash \jisst[p]{A}
      }
    \]
    Recall \cref{eq:20051}:
    \begin{align*}
      \sembr{\Xi\vdash\jisst{\Trec{\alpha}{A}}}^p &= \sfix{\left(\sembr{\Xi,\jisst{\alpha}\vdash \jisst{A}}^p\right)}\quad (p \in \{{-},{+}\})
    \end{align*}
    The result follows from a tweak of the standard proof and \cref{prop:fscd:4}.
    Let $\eta^p : \CFP\left( {-} \times \sembr{\alpha}, \moabcs \right) \nto \CFP\left( {-}, \moabcs \right)$ be the natural interpretation given by \cref{prop:fscd:4}.
    Then
    \begin{align*}
      &\sembr{\jstype{\Theta,\Xi}{\Trec{\alpha}{A}}}^p\\
      &= \eta^p_{\sembr{\Theta,\Xi}}\left(\sembr{\jstype{\Theta,\Xi,\alpha}{A}}^p\right)\\
      \shortintertext{which by the induction hypothesis:}
      &= \eta^p_{\sembr{\Theta,\Xi}}\left(\sembr{\jstype{\Xi,\alpha}{A}}^p \pi^{\Theta,\Xi,\alpha}_{\Xi,\alpha}\right)\\
      \shortintertext{which by a property of products and projections:}
      &= \eta^p_{\sembr{\Theta,\Xi}}\left(\sembr{\jstype{\Xi,\alpha}{A}}^p \left(\pi^{\Theta,\Xi}_{\Xi} \times \sembr{\alpha}\right)\right)\\
      \shortintertext{which by naturality:}
      &= \eta^p_{\sembr{\Xi}}\left(\sembr{\jstype{\Xi,\alpha}{A}}^p\right)\pi^{\Theta,\Xi}_{\Xi}\\
      &= \sembr{\jstype{\Xi}{\Trec{\alpha}{A}}}^p\pi^{\Theta,\Xi}_{\Xi}.
    \end{align*}

  \item[Case \rn{C$\Tds{}$}.]
    By the standard proof and \cref{prop:fscd:4}.

  \item[Case \rn{C$\Tus{}$}.]
    By the standard proof and \cref{prop:fscd:4}.

  \item[Case \rn{C$\Tplus$}.]
    By the standard proof and \cref{prop:fscd:4}.

  \item[Case \rn{C$\Tamp$}.]
    By the standard proof and \cref{prop:fscd:4}.

  \item[Case \rn{C$\Tot$}.]
    By the standard proof and \cref{prop:fscd:4}.

  \item[Case \rn{C$\Tlolly$}.]
    By the standard proof and \cref{prop:fscd:4}.

  \item[Case \rn{C$\Tand{}{}$}.]
    By the standard proof and \cref{prop:fscd:4}.

  \item[Case \rn{C$\Timp{}{}$}.]
    By the standard proof and \cref{prop:fscd:4}.\qedhere
  \end{description}
\end{proof}

\subsection{Semantic Substitution}
\label{sec:semant-subst-types}

Our goal in this section is to show that substitution is given by composition.
\defin{Context morphisms} give us a semantic account of substitutions.
To this end, we introduce explicit typing rules for type substitutions.
We use the judgment $\jcms{\sigma}{\Xi}{\Theta}$ to mean that the substitution $\sigma$ is a morphism of type variable contexts from $\Xi$ to $\Theta$.
It is inductively defined by
\[
  \infer[\rn{S-S-Empty}]{
    \jcms{\cdot}{\Theta}{\cdot}
  }{\mathstrut}
  \quad
  \infer[\rn{S-S-T$^p$}]{
    \jcms{\sigma,A}{\Theta}{\Xi,\jisst[p]{\alpha}}
  }{
    \jcms{\sigma}{\Theta}{\Xi}
    &
    \jstype[p]{\Theta}{A}
  }
\]
These rules ensure that substitutions of types for type variables respects polarities, \ie, that we only substitute positive session types for a positive type variable and negative session types for negative type variables.
Given a morphism $\sigma = A_1,\dots,A_n$ satisfying $\jcms{\sigma}{\Xi}{\alpha_1,\dotsc,\alpha_n}$ with $n \geq 0$, and a session type $\alpha_1,\dotsc,\alpha_n \vdash B$, we write $\sigma B$ for the result of the simultaneous substitution $[A_1,\dotsc,A_n/\alpha_1,\dotsc,\alpha_n]B$.

\begin{proposition}[Syntactic Substitution of Session Types]
  \label{prop:fscd:5}
  Let $\jcms{\sigma}{\Theta}{\Xi}$ be arbitrary.
  If $\jstype[p]{\Xi}{A}$, then $\jstype[p]{\Theta}{\sigma A}$.
\end{proposition}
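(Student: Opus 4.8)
The plan is to proceed by structural induction on the derivation of $\jstype[p]{\Xi}{A}$, with the substitution $\jcms{\sigma}{\Theta}{\Xi}$ fixed throughout. Before treating the cases, I would record one preliminary observation about context morphisms, obtained by a routine inversion (sub-)induction on the derivation of $\jcms{\sigma}{\Theta}{\Xi}$: if $\Xi = \jisst[p_1]{\alpha_1},\dotsc,\jisst[p_n]{\alpha_n}$ and $\sigma = A_1,\dotsc,A_n$, then $\jstype[p_i]{\Theta}{A_i}$ holds for each $i$. This is exactly what the polarity-respecting rule \rn{S-S-T$^p$} enforces, and it is precisely the fact needed to discharge the variable case.

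With this in hand, the base cases are immediate. For \rn{CVar}, the type is a variable $\alpha_i$ with $\sigma\alpha_i = A_i$, and the preliminary observation gives $\jstype[p_i]{\Theta}{A_i}$ directly. For \rn{C$\Tu$}, the type $\Tu$ is closed, so $\sigma\Tu = \Tu$ and we reapply \rn{C$\Tu$}. The remaining non-binding connectives—\rn{C$\Tds{}$}, \rn{C$\Tus{}$}, \rn{C$\Tplus$}, \rn{C$\Tamp$}, \rn{C$\Tot$}, \rn{C$\Tlolly$}, \rn{C$\Tand{}{}$}, and \rn{C$\Timp{}{}$}—are all congruence cases: substitution commutes with the constructor, so we apply the induction hypothesis to each well-formedness premise of the correct polarity and reassemble with the same rule. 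In the \rn{C$\Tand{}{}$} and \rn{C$\Timp{}{}$} cases the functional-type premise $\jisft{\tau}$ is unaffected, since $\sigma$ substitutes only for session-type variables and the session types occurring inside $\tau$ (in quoted-process types) are closed; hence $\sigma\tau = \tau$ and the premise carries over verbatim. I would present only these positive formers explicitly and remark that the argument is uniform across all rules of \cref{sec:rules-type-formation}.

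The one case requiring care—and the step I expect to be the main obstacle—is \rn{C$\rho$}, because $\Trec{\alpha}{A}$ binds the type variable $\alpha$. Here I would first $\alpha$-rename so that the bound variable $\alpha$ is fresh for $\Theta$ and for every type in $\sigma$, ensuring capture-avoidance so that $\sigma(\Trec{\alpha}{A}) = \Trec{\alpha}{(\sigma' A)}$, where $\sigma'$ extends $\sigma$ by mapping $\alpha$ to itself. To apply the induction hypothesis to the premise $\jstype[p]{\Xi,\jisst[p]{\alpha}}{A}$, I must exhibit $\jcms{\sigma'}{\Theta,\jisst[p]{\alpha}}{\Xi,\jisst[p]{\alpha}}$: this follows by weakening $\jcms{\sigma}{\Theta}{\Xi}$ to $\jcms{\sigma}{\Theta,\jisst[p]{\alpha}}{\Xi}$ (the type-level analogue of the weakening property used in \cref{lemma:fscd:13}, relying on syntactic weakening of well-formedness for session types) and then applying \rn{S-S-T$^p$} with the derivation $\jstype[p]{\Theta,\jisst[p]{\alpha}}{\alpha}$ from \rn{CVar}. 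The induction hypothesis then yields $\jstype[p]{\Theta,\jisst[p]{\alpha}}{\sigma' A}$, and a final application of \rn{C$\rho$} gives $\jstype[p]{\Theta}{\Trec{\alpha}{(\sigma' A)}} = \jstype[p]{\Theta}{\sigma(\Trec{\alpha}{A})}$, completing the case and the induction.
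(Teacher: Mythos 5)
Your proof is correct and is exactly the routine structural induction that the paper leaves implicit (it states \cref{prop:fscd:5} without proof, just as it proves the term-level analogue \cref{prop:fscd:3} with a one-line appeal to induction). Your handling of the one delicate case, \rn{C$\rho$}, via weakening of the context morphism and then \rn{S-S-T$^p$} with \rn{CVar}, is precisely the move the paper makes in the corresponding case of the semantic substitution result (\cref{prop:fscd:7}, using \cref{lemma:fscd:14}), so the approaches coincide.
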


Context morphisms are subject to the same classes of interpretation as session types.
A context morphism $\jcms{\sigma}{\Theta}{\Xi}$ gives rise to polarized interpretations, which are locally continuous functors $\sembr{\Theta} \to \sembr{\Xi}$.
The polarized interpretations of \rn{S-S-Empty} and \rn{S-S-T$^p$} are respectively
\begin{align*}
  \sembr{\jcms{\cdot}{\Theta}{\cdot}}^q &= \top_{\sembr{\Theta}}\\
  \sembr{\jcms{\sigma,A}{\Theta}{\Xi,\jisst[p]{\alpha}}}^q &= \langle \sembr{\jcms{\sigma}{\Theta}{\Xi}}^q, \alpha : \sembr{\jstype[p]{\Theta}{A}}^q \rangle
\end{align*}
where $\top_{\sembr{\Theta}}$ is the constant functor from $\sembr{\Theta}$ to the terminal category $\top$ and $q \in \{{-},{+}\}$.

\begin{lemma}[Weakening of Context Morphisms]
  \label{lemma:fscd:14}
  Let $\jcms{\sigma}{\Theta}{\Xi}$ be arbitrary and $\Theta,\Omega$ a context.
  Then $\jcms{\sigma}{\Omega,\Theta}{\Xi}$ and where $p$ ranges over $\{{-},{+}\}$,
  \[
    \sembr{\jcms{\sigma}{\Omega,\Theta}{\Xi}}^p = \sembr{\jcms{\sigma}{\Theta}{\Xi}}^p\pi^{\Omega,\Theta}_\Theta.
  \]
\end{lemma}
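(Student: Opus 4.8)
The plan is to prove both assertions—that $\jcms{\sigma}{\Omega,\Theta}{\Xi}$ is derivable and that the displayed semantic identity holds—simultaneously by induction on the derivation of $\jcms{\sigma}{\Theta}{\Xi}$. The argument is the exact analogue of the proof of \cref{lemma:fscd:13}, with the coherence of session types (\cref{prop:fscd:8}) taking over the role that the coherence of terms (\cref{prop:extended:9}) played there, and with syntactic weakening of session types (\cref{prop:fscd:5}) supplying the well-typedness side conditions. To avoid clashing with the polarity $p$ appearing in the rule \rn{S-S-T$^p$}, I write $q \in \{{-},{+}\}$ for the aspect at which the identity is proved; the statement with superscript $p$ is recovered by renaming.

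First I would dispatch the base case \rn{S-S-Empty}, where $\sigma = {\cdot}$ and $\Xi = {\cdot}$. Then $\jcms{\cdot}{\Omega,\Theta}{\cdot}$ holds by \rn{S-S-Empty}, and since its interpretation is the constant functor onto the terminal category, terminality forces
\[
  \sembr{\jcms{\cdot}{\Omega,\Theta}{\cdot}}^q = \top_{\sembr{\Omega,\Theta}} = \top_{\sembr{\Theta}}\,\pi^{\Omega,\Theta}_\Theta = \sembr{\jcms{\cdot}{\Theta}{\cdot}}^q\,\pi^{\Omega,\Theta}_\Theta,
\]
because any two functors out of $\sembr{\Omega,\Theta}$ into $\top$ coincide.

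Next I would treat the inductive step \rn{S-S-T$^p$}, where $\jcms{\sigma,A}{\Theta}{\Xi,\jisst[p]{\alpha}}$ is obtained from premises $\jcms{\sigma}{\Theta}{\Xi}$ and $\jstype[p]{\Theta}{A}$. The induction hypothesis gives $\jcms{\sigma}{\Omega,\Theta}{\Xi}$ with $\sembr{\jcms{\sigma}{\Omega,\Theta}{\Xi}}^q = \sembr{\jcms{\sigma}{\Theta}{\Xi}}^q\,\pi^{\Omega,\Theta}_\Theta$, while \cref{prop:fscd:8} gives $\jstype[p]{\Omega,\Theta}{A}$ with $\sembr{\jstype[p]{\Omega,\Theta}{A}}^q = \sembr{\jstype[p]{\Theta}{A}}^q\,\pi^{\Omega,\Theta}_\Theta$. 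Re-applying \rn{S-S-T$^p$} yields $\jcms{\sigma,A}{\Omega,\Theta}{\Xi,\jisst[p]{\alpha}}$, and the semantic clause for \rn{S-S-T$^p$}, combined with the fact that pairing into the product category $\sembr{\Xi,\jisst[p]{\alpha}} = \sembr{\Xi} \times \sembr{\alpha}$ commutes with precomposition, finishes the case:
\begin{align*}
  \sembr{\jcms{\sigma,A}{\Omega,\Theta}{\Xi,\jisst[p]{\alpha}}}^q
  &= \langle \sembr{\jcms{\sigma}{\Omega,\Theta}{\Xi}}^q, \alpha : \sembr{\jstype[p]{\Omega,\Theta}{A}}^q \rangle\\
  &= \langle \sembr{\jcms{\sigma}{\Theta}{\Xi}}^q\,\pi^{\Omega,\Theta}_\Theta, \alpha : \sembr{\jstype[p]{\Theta}{A}}^q\,\pi^{\Omega,\Theta}_\Theta \rangle\\
  &= \langle \sembr{\jcms{\sigma}{\Theta}{\Xi}}^q, \alpha : \sembr{\jstype[p]{\Theta}{A}}^q \rangle\,\pi^{\Omega,\Theta}_\Theta\\
  &= \sembr{\jcms{\sigma,A}{\Theta}{\Xi,\jisst[p]{\alpha}}}^q\,\pi^{\Omega,\Theta}_\Theta.
\end{align*}

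I do not expect a genuine obstacle here: the proof is routine and structurally identical to \cref{lemma:fscd:13}. The only points demanding care are notational—keeping the aspect $q$ separate from the rule-level polarity $p$, and checking that the pairing/projection identity is the one for (locally continuous) functors into a product of $\mb{O}$-categories rather than the one for domain maps used in \cref{lemma:fscd:13}.
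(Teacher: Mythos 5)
Your proof is correct and follows essentially the same route as the paper's: induction on the derivation of $\jcms{\sigma}{\Theta}{\Xi}$, with terminality handling \rn{S-S-Empty} and the induction hypothesis plus \cref{prop:fscd:8} plus the pairing/projection identity handling \rn{S-S-T$^p$}. The only quibble is that \cref{prop:fscd:5} is syntactic \emph{substitution}, not weakening; the well-typedness of $\jstype[p]{\Omega,\Theta}{A}$ comes from ordinary syntactic weakening, which the paper invokes without a numbered reference.
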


\begin{proof}
  By induction on the derivation of $\jcms{\sigma}{\Theta}{\Xi}$.

  \begin{description}[listparindent=\parindent]
  \item[Case \rn{S-S-Empty}.]
    Then $\jcms{\sigma}{\Omega,\Theta}{\cdot}$ by \rn{S-S-Empty}.
    Moreover,
    \[
      \sembr{\jcms{\sigma}{\Omega,\Theta}{\cdot}}^p = \top_{\CFP} = \top_{\Theta}\pi^{\Omega,\Theta}_\Theta = \sembr{\jcms{\sigma}{\Theta}{\cdot}}^p\pi^{\Omega,\Theta}_\Theta.
    \]
  \item[Case \rn{S-S-T${}^q$}.]
    \[
      \infer[\rn{S-S-T$^q$}]{
        \jcms{\sigma,A}{\Theta}{\Xi,\jisst[q]{\alpha}}
      }{
        \jcms{\sigma}{\Theta}{\Xi}
        &
        \jstype[q]{\Theta}{A}
      }
    \]
    Then by the induction hypothesis, $\jcms{\sigma}{\Omega,\Theta}{\Xi}$ and
    \[
      \sembr{\jcms{\sigma}{\Omega,\Theta}{\Xi}}^p = \sembr{\jcms{\sigma}{\Theta}{\Xi}}^p \pi^{\Omega,\Theta}_\Theta.
    \]
    By weakening, $\jstype[q]{\Omega,\Theta}{A}$, and by \cref{prop:fscd:8}
    \[
      \sembr{\jstype[q]{\Omega,\Theta}{A}}^p = \sembr{\jstype[q]{\Theta}{A}}^p \pi^{\Omega,\Theta}_\Theta.
    \]
    By \rn{S-S-T$^q$}, $\jcms{\sigma,A}{\Omega,\Theta}{\Xi,\jisst[q]{\alpha}}$.
    By the interpretation of \rn{S-S-T$^q$},
    \begin{align*}
      &\sembr{\jcms{\sigma,A}{\Omega,\Theta}{\Xi,\jisst[q]{\alpha}}}^p\\
      &= \langle \sembr{\jcms{\sigma}{\Omega,\Theta}{\Xi}}^p , \alpha : \sembr{\jstype[q]{\Omega,\Theta}{A}}^p \rangle\\
      &= \langle \sembr{\jcms{\sigma}{\Theta}{\Xi}}^p \pi^{\Omega,\Theta}_\Theta, \alpha : \sembr{\jstype[q]{\Theta}{A}}^p \pi^{\Omega,\Theta}_\Theta \rangle\\
      &= \langle \sembr{\jcms{\sigma}{\Theta}{\Xi}}^p, \alpha : \sembr{\jstype[q]{\Theta}{A}}^p\rangle \pi^{\Omega,\Theta}_\Theta\\
      &= \sembr{\jcms{\sigma,A}{\Theta}{\Xi,\jisst[q]{\alpha}}}^p\pi^{\Omega,\Theta}_\Theta .
    \end{align*}
  \end{description}
  We conclude the result by induction.
\end{proof}

\begin{proposition}[Semantic Substitution of Session Types]\label{prop:fscd:7}
  Let $\jcms{\sigma}{\Theta}{\Xi}$ be arbitrary and let $p$ range over $\{{-},{+}\}$.
  If $\Xi \vdash \jisst[q]{A}$, then
  \[
    \sembr{\Theta \vdash \jisst[q]{\sigma A}}^p = \sembr{\Xi \vdash \jisst[q]{A}}^p \circ \sembr{\jcms{\sigma}{\Theta}{\Xi}}^p.
  \]
\end{proposition}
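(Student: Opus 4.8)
The plan is to proceed by induction on the derivation of $\Xi \vdash \jisst[q]{A}$, in exact parallel with the proof of Semantic Substitution of Functional Terms (\cref{prop:extended:10}). As there, I would isolate a ``standard proof'' that handles every type-former whose interpretation is given compositionally through a natural interpretation $\sembr{F_{\mathcal{J}_1,\dotsc,\mathcal{J}_m}}$ in the sense of \cref{eq:fscd:35,eq:fscd:36}. Concretely, for a rule concluding $\jstype[q]{\Xi}{F_{\mathcal{J}_1,\dotsc,\mathcal{J}_m}(A_1,\dotsc,A_n)}$, syntactic substitution commutes with the former, $\sigma\bigl(F_{\mathcal{J}_1,\dotsc,\mathcal{J}_m}(A_1,\dotsc,A_n)\bigr) = F_{\mathcal{J}_1,\dotsc,\mathcal{J}_m}(\sigma A_1,\dotsc,\sigma A_n)$, so I can unfold the interpretation on $\Theta$, apply the induction hypothesis $\sembr{\jstype[q]{\Theta}{\sigma A_i}}^p = \sembr{\jstype[q]{\Xi}{A_i}}^p \circ \sembr{\jcms{\sigma}{\Theta}{\Xi}}^p$ to each premise, and then pull $\sembr{\jcms{\sigma}{\Theta}{\Xi}}^p$ out past $\sembr{F_{\mathcal{J}_1,\dotsc,\mathcal{J}_m}}$ using its naturality in the environment, which is furnished by \cref{prop:fscd:4}.

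This standard argument dispatches \rn{C$\Tu$}, \rn{C$\Tds{}$}, \rn{C$\Tus{}$}, \rn{C$\Tplus$}, \rn{C$\Tamp$}, \rn{C$\Tot$}, \rn{C$\Tlolly$}, \rn{C$\Tand{}{}$}, and \rn{C$\Timp{}{}$} uniformly, since for each of these the relevant natural interpretation was exhibited in the corresponding case of \cref{prop:fscd:4} and its naturality square is precisely the identity needed to commute the substitution functor $\sembr{\jcms{\sigma}{\Theta}{\Xi}}^p$ past the type-former.

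The two cases requiring care are \rn{CVar} and \rn{C$\rho$}. For \rn{CVar}, deriving $\Xi,\jisst[p']{\alpha}\vdash\jisst[p']{\alpha}$, a substitution has the form $\jcms{\sigma,A}{\Theta}{\Xi,\jisst[p']{\alpha}}$, and I would compute directly: $\sembr{\jstype[q]{\Theta}{(\sigma,A)\alpha}}^p = \sembr{\jstype[q]{\Theta}{A}}^p$, while by \cref{eq:fscd:7} the right-hand side is $\pi^{\Xi,\alpha}_\alpha \circ \langle \sembr{\jcms{\sigma}{\Theta}{\Xi}}^p, \alpha : \sembr{\jstype[q]{\Theta}{A}}^p \rangle = \sembr{\jstype[q]{\Theta}{A}}^p$, using the definition of $\sembr{\jcms{\sigma,A}{\Theta}{\Xi,\jisst[p']{\alpha}}}^p$ and the projection identity; the two sides agree. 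The expected main obstacle is \rn{C$\rho$}, which binds the type variable $\alpha$ and is the analogue of the \rn{F-Fix} case. Here I would first weaken the substitution to $\jcms{\sigma,\alpha}{\Theta,\jisst{\alpha}}{\Xi,\jisst{\alpha}}$ via \cref{lemma:fscd:14}, apply the induction hypothesis to the premise $\jstype[p']{\Xi,\jisst{\alpha}}{A}$, rewrite the weakened morphism's interpretation as $\sembr{\jcms{\sigma}{\Theta}{\Xi}}^p \times \sembr{\alpha}$ using \cref{lemma:fscd:14} together with the projection identities, and then invoke the \emph{parameter identity} satisfied by the dagger $\sfix{(\cdot)}$ (the very property that established naturality of the \rn{C$\rho$} natural interpretation in \cref{prop:fscd:4}) to slide $\sembr{\jcms{\sigma}{\Theta}{\Xi}}^p$ out of the fixed point, yielding $\sembr{\jstype[q]{\Xi}{\Trec{\alpha}{A}}}^p \circ \sembr{\jcms{\sigma}{\Theta}{\Xi}}^p$. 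Getting the bookkeeping between the weakening lemma and the parameter identity to line up exactly is the delicate part, but it follows the template already set by the \rn{F-Fix} case of \cref{prop:extended:10}.
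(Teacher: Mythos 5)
Your proposal is correct and follows essentially the same route as the paper's own proof: the same ``standard case'' via the natural interpretations of \cref{prop:fscd:4}, the same direct computation for \rn{CVar}, and the same treatment of \rn{C$\rho$} by weakening the context morphism with \cref{lemma:fscd:14} and then invoking the parameter identity of the dagger. No gaps.
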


\begin{proof}
  By induction on the derivation of $\jstype{\Xi}{A}$.
  As in \cref{prop:extended:10}, we describe the standard case.
  Consider a type-forming rule
  \[
    \infer{
      \jstype{\Xi}{F_{\mathcal{J}_1,\dotsc,\mathcal{J}_m}(A_1,\dotsc,A_n)}
    }{
      \jstype{\Xi}{A_1}
      &
      \cdots
      &
      \jstype{\Xi}{A_n}
      &
      \mathcal{J}_1
      &
      \cdots
      &
      \mathcal{J}_m
    }
  \]
  Assume $\sembr{\jstype{\Xi}{F_{\mathcal{J}_1,\dotsc,\mathcal{J}_m}(A_1,\dotsc,A_n)}}^p$ is given by a natural interpretation
  \[
    \sembr{F_{\mathcal{J}_1,\dotsc,\mathcal{J}_m}}^p_{\sembr{\Xi}} : \left( \prod_{i = 1}^n \CFP\left( \sembr{\Xi} , \moabcs \right) \right) \to \CFP\left( \sembr{\Xi}, \moabcs \right).
  \]
  We need to show that
  \begin{align*}
    &\sembr{\jstype{\Theta}{\sigma\left(F_{\mathcal{J}_1,\dotsc,\mathcal{J}_m}(A_1,\dotsc,A_n)\right)}}^p\\
    &= \sembr{\jstype{\Xi}{F_{\mathcal{J}_1,\dotsc,\mathcal{J}_m}(A_1,\dotsc,A_n)}}^p \circ \sembr{\jcms{\sigma}{\Theta}{\Xi}}^p.
  \end{align*}
  By the definition of syntactic substitution, we know that
  \[
    \sigma\left(F_{\mathcal{J}_1,\dotsc,\mathcal{J}_m}(A_1,\dotsc,A_n)\right) = F_{\mathcal{J}_1,\dotsc,\mathcal{J}_m}(\sigma A_1,\dotsc,\sigma A_n).
  \]
  By the induction hypothesis, we know for $1 \leq i \leq n$ that
  \[
    \sembr{\jstype{\Theta}{\sigma A_i}}^p = \sembr{\jstype{\Xi}{A_i}}^p \circ \sembr{\jcms{\sigma}{\Theta}{\Xi}}^p.
  \]
  Using these facts, we get:
  \begin{align*}
    &\sembr{\jstype{\Theta}{\sigma\left(F_{\mathcal{J}_1,\dotsc,\mathcal{J}_m}(A_1,\dotsc,A_n)\right)}}^p\\
    &= \sembr{\jstype{\Theta}{F_{\mathcal{J}_1,\dotsc,\mathcal{J}_m}(\sigma A_1,\dotsc,\sigma A_n)}}^p\\
    &= \sembr{F_{\mathcal{J}_1,\dotsc,\mathcal{J}_m}}^p_{\sembr{\Theta}}\left(\left(\sembr{\jstype{\Theta}{\sigma A_i}}^p\right)_{1 \leq i \leq n}\right)\\
    \shortintertext{which by the induction hypothesis,}
    &= \sembr{F_{\mathcal{J}_1,\dotsc,\mathcal{J}_m}}^p_{\sembr{\Theta}}\left(\left( \sembr{\jstype{\Xi}{A_i}}^p \circ \sembr{\jcms{\sigma}{\Theta}{\Xi}}^p\right)_{1 \leq i \leq n}\right)\\
    \shortintertext{which by naturality of $\sembr{F_{\mathcal{J}_1,\dotsc,\mathcal{J}_m}}^p$,}
    &= \sembr{F_{\mathcal{J}_1,\dotsc,\mathcal{J}_m}}^p_{\sembr{\Xi}}\left(\left( \sembr{\jstype{\Xi}{A_i}}^p\right)_{1 \leq i \leq n}\right)  \circ \sembr{\jcms{\sigma}{\Theta}{\Xi}}^p\\
    &= \sembr{\jstype{\Xi}{F_{\mathcal{J}_1,\dotsc,\mathcal{J}_m}(A_1,\dotsc,A_n)}}^p \circ \sembr{\jcms{\sigma}{\Theta}{\Xi}}^p.
  \end{align*}

  \begin{description}
  \item[Case \rn{C$\Tu$}.]
    \[
      \infer[\rn{C$\Tu$}]{\jstype[+]{\Xi}{\Tu}}{}
    \]
    By the standard proof and \cref{prop:fscd:4}.

  \item[Case \rn{CVar}.]
    \[
      \infer[\rn{CVar}]{\Xi,\jisst[p]{\alpha}\vdash\jisst[p]{\alpha}}{}
    \]
    Recall \cref{eq:fscd:7}:
    \begin{align*}
      \sembr{\Xi,\jisst[p]{\alpha}\vdash\jisst[p]{\alpha}}^q &= \pi^{\Xi,\alpha}_\alpha \quad (q \in \{{-},{+}\})
    \end{align*}
    Let $\jcms{\sigma,A}{\Theta}{\Xi,\alpha}$ be arbitrary.
    \begin{align*}
      &\sembr{\jstype{\Theta}{(\sigma,A)\alpha}}^q\\
      &= \sembr{\jstype{\Theta}{A}}^q\\
      &= \ms{id}_{\pi^{\Xi,\alpha}_\alpha} \circ \langle \sembr{\jcms{\sigma}{\Theta}{\Xi}}^q, \alpha : \sembr{\jstype{\Theta}{A}}^q \rangle\\
      &= \sembr{\Xi,\jisst[p]{\alpha} \vdash \jisst[p]{\alpha}}^q \circ \sembr{\jcms{\sigma,A}{\Theta}{\Xi,\alpha}}^q.
    \end{align*}

  \item[Case \rn{C$\rho$}.]
    \[
      \infer[\rn{C$\rho$}]{
        \Xi \vdash \jisst[p]{\Trec{\alpha}{A}}
      }{
        \Xi, \jisst[p]{\alpha} \vdash \jisst[p]{A}
      }
    \]
    Recall \cref{eq:20051}:
    \begin{align*}
      \sembr{\Xi\vdash\jisst{\Trec{\alpha}{A}}}^p &= \sfix{\left(\sembr{\Xi,\jisst{\alpha}\vdash \jisst{A}}^p\right)}\quad (p \in \{{-},{+}\})
    \end{align*}
    The result follows from a tweak of the standard proof and \cref{prop:fscd:4}.
    Let $\jcms{\sigma}{\Theta}{\Xi}$ be arbitrary.
    By \cref{lemma:fscd:14}, we can weaken it to $\jcms{\sigma,\alpha}{\Theta,\alpha}{\Xi,\alpha}$.
    Let $\eta^p : \CFP\left( {-} \times \sembr{\alpha}, \moabcs \right) \nto \CFP\left( {-}, \moabcs \right)$ be the natural interpretation given by \cref{prop:fscd:4}.
    Then
    \begin{align*}
      &\sembr{\jstype{\Theta}{\sigma\left(\Trec{\alpha}{A}\right)}}^p\\
      &= \sembr{\jstype{\Theta}{\Trec{\alpha}{(\sigma,\alpha)A}}}^p\\
      &= \eta^p_{\sembr{\Theta}}\left(\sembr{\jstype{\Theta,\alpha}{(\sigma,\alpha)A}}^p\right)\\
      \shortintertext{which by the induction hypothesis:}
      &= \eta^p_{\sembr{\Theta}}\left(\sembr{\jstype{\Xi,\alpha}{A}}^p \circ \sembr{\jcms{\sigma,\alpha}{\Theta,\alpha}{\Xi,\alpha}}^p\right)\\
      \shortintertext{which by the interpretation of \rn{S-S-T$^q$}:}
      &= \eta^p_{\sembr{\Theta}}\left(\sembr{\jstype{\Xi,\alpha}{A}}^p \circ \langle \sembr{\jcms{\sigma}{\Theta,\alpha}{\Xi}}^p, \alpha : \sembr{\jstype{\Theta,\alpha}{\alpha}}^p \rangle \right)\\
      \shortintertext{which by \cref{eq:20051}:}
      &= \eta^p_{\sembr{\Theta}}\left(\sembr{\jstype{\Xi,\alpha}{A}}^p \circ \langle \sembr{\jcms{\sigma}{\Theta,\alpha}{\Xi}}^p, \alpha : \ms{id}_{\pi^{\Theta,\alpha}_\alpha} \rangle \right)\\
      \shortintertext{which by \cref{lemma:fscd:14}:}
      &= \eta^p_{\sembr{\Theta}}\left(\sembr{\jstype{\Xi,\alpha}{A}}^p \circ \langle \sembr{\jcms{\sigma}{\Theta}{\Xi}}^p\pi^{\Theta,\alpha}_\Theta, \alpha : \ms{id}_{\pi^{\Theta,\alpha}_\alpha} \rangle \right)\\
      \shortintertext{which by a property of products and projections:}
      &= \eta^p_{\sembr{\Theta}}\left(\sembr{\jstype{\Xi,\alpha}{A}}^p \circ \left( \sembr{\jcms{\sigma}{\Theta}{\Xi}}^p \times \sembr{\alpha} \right) \right)\\
      \shortintertext{which by naturality:}
      &= \eta^p_{\sembr{\Xi}}\left(\sembr{\jstype{\Xi,\alpha}{A}}^p\right) \circ \sembr{\jcms{\sigma}{\Theta}{\Xi}}^p\\
      &= \sembr{\jstype{\Xi}{\Trec{\alpha}{A}}}^p \circ \sembr{\jcms{\sigma}{\Theta}{\Xi}}^p.
    \end{align*}

  \item[Case \rn{C$\Tds{}$}.]
    \[
      \infer[\rn{C$\Tds{}$}]{
        \jstype[+]{\Xi}{\Tds A}
      }{
        \jstype[-]{\Xi}{A}
      }
    \]
    By the standard proof and \cref{prop:fscd:4}.

  \item[Case \rn{C$\Tus{}$}.]
    \[
      \infer[\rn{C$\Tus{}$}]{
        \jstype[-]{\Xi}{\Tus A}
      }{
        \jstype[+]{\Xi}{A}
      }
    \]
    By the standard proof and \cref{prop:fscd:4}.

  \item[Case \rn{C$\Tplus$}.]
    \[
      \infer[\rn{C$\Tplus$}]{
        \jstype[+]{\Xi}{{\Tplus\{l : A_l\}}_{l \in L}}
      }{
        \jstype[+]{\Xi}{A_l}\quad(\forall l \in L)
      }
    \]
    By the standard proof and \cref{prop:fscd:4}.

  \item[Case \rn{C$\Tamp$}.]
    \[
      \infer[\rn{C$\Tamp$}]{
        \jstype[-]{\Xi}{{\Tamp\{l :A_l \}}_{l \in L}}
      }{
        \jstype[-]{\Xi}{A_l}\quad(\forall l \in L)
      }
    \]
    By the standard proof and \cref{prop:fscd:4}.

  \item[Case \rn{C$\Tot$}.]
    \[
      \infer[\rn{C$\Tot$}]{
        \jstype[+]{\Xi}{A \Tot B}
      }{
        \jstype[+]{\Xi}{A}
        &
        \jstype[+]{\Xi}{B}
      }
    \]
    By the standard proof and \cref{prop:fscd:4}.

  \item[Case \rn{C$\Tlolly$}.]
    \[
      \infer[\rn{C$\Tlolly$}]{
        \jstype[-]{\Xi}{A \Tlolly B}
      }{
        \jstype[+]{\Xi}{A}
        &
        \jstype[-]{\Xi}{B}
      }
    \]
    By the standard proof and \cref{prop:fscd:4}.

  \item[Case \rn{C$\Tand{}{}$}.]
    \[
      \infer[\rn{C$\Tand{}{}$}]{
        \jstype[+]{\Xi}{\Tand{\tau}{A}}
      }{
        \jisft{\tau}
        &
        \jstype[+]{\Xi}{A}
      }
    \]
    By the standard proof and \cref{prop:fscd:4}.

  \item[Case \rn{C$\Timp{}{}$}.]
    \[
      \infer[\rn{C$\Timp{}{}$}]{
        \jstype[-]{\Xi}{\Timp{\tau}{A}}
      }{
        \jisft{\tau}
        &
        \jstype[-]{\Xi}{A}
      }
    \]
    By the standard proof and \cref{prop:fscd:4}.\qedhere
  \end{description}
\end{proof}

\subsection{Interpretations are Well-Defined}
\label{sec:interpr-are-well}

In this section we show that our interpretations are all well-defined, \ie, that the polarized aspects are locally continuous functors.

\begin{proposition}[Functorial Interpretations are Well-Defined]
  \label{prop:fscd:9}
  If $\jstype{\Xi}{A}$, then the interpretations $\sembr{\jstype{\Xi}{A}}^-$ and $\sembr{\jstype{\Xi}{A}}^+$ are locally continuous functors from $\sembr{\Xi}$ to $\moabcs$.
\end{proposition}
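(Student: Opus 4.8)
The plan is to proceed by induction on the derivation of $\jstype{\Xi}{A}$, reusing the exact case structure of \cref{prop:fscd:4}. The guiding observation is that every semantic clause for a type former builds the aspects $\sembr{\jstype{\Xi}{A}}^-$ and $\sembr{\jstype{\Xi}{A}}^+$ out of a small stock of functorial operations: constant functors, product projections, finite and $L$-indexed products $\prod$, coalesced sums $\bigoplus$, the lifting functor $({-})_\bot$, and the parametrized fixed-point operator $\sfix{(\cdot)}$. Each of these is a locally continuous functor-former, and local continuity is preserved under composition; moreover indexed products and coproducts of locally continuous functors are locally continuous, since they are computed pointwise on objects and on hom-dcpos. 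So it suffices to check that each basic operation preserves local continuity and then to feed in the induction hypotheses for the premises. That the resulting functors land in $\moabcs$ rather than merely in $\moabc$ is automatic, because lifting produces strict maps and coalesced sum is the coproduct in $\moabcs$.

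First I would clear the leaf cases. For \rn{C$\Tu$} the two aspects are the constant functors onto $\{\bot\}$ and $\{\ast\}_\bot$, which are trivially locally continuous. For \rn{CVar} the aspects are the projections $\pi^{\Xi,\alpha}_\alpha : \sembr{\Xi} \times \moabcs \to \moabcs$, locally continuous because the enrichment on $\sembr{\Xi} = \prod_{\alpha \in \Xi}\moabcs$ is the pointwise one. The non-recursive inductive cases \rn{C$\Tds{}$}, \rn{C$\Tus{}$}, \rn{C$\Tplus$}, \rn{C$\Tamp$}, \rn{C$\Tot$}, \rn{C$\Tlolly$}, \rn{C$\Tand{}{}$}, and \rn{C$\Timp{}{}$} all share one shape: each aspect is obtained from the aspects of the premises—locally continuous by the induction hypothesis—by applying finitely many of $\times$, $\bigoplus$, and $({-})_\bot$. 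Invoking local continuity of lifting together with closure of local continuity under products, coalesced sums, and composition dispatches every one of these cases uniformly.

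The one case requiring genuine care, and which I expect to be the main obstacle, is \rn{C$\rho$}, where $\sembr{\jstype{\Xi}{\Trec{\alpha}{A}}}^p = \sfix{\left(\sembr{\Xi,\jisst{\alpha}\vdash\jisst{A}}^p\right)}$. The induction hypothesis gives that $\sembr{\Xi,\jisst{\alpha}\vdash\jisst{A}}^p$ is a locally continuous functor $\sembr{\Xi} \times \moabcs \to \moabcs$, that is, locally continuous in the environment variables and the recursion variable $\alpha$ jointly. The conclusion that its parametrized fixed point is again a locally continuous functor $\sembr{\Xi} \to \moabcs$ is then precisely the cited \cite[Proposition~5.2.7]{abramsky_jung_1995:_domain_theor}, so the case reduces to checking that the hypotheses of that proposition are exactly what the induction hypothesis supplies. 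The remaining effort throughout is bookkeeping: confirming that each clause really does factor through the stated locally continuous operations, which is routine once the uniform closure lemmas are in place.
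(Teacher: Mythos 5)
Your proposal is correct and follows essentially the same route as the paper: induction on the derivation, dispatching the leaf cases via constant and projection functors, the non-recursive cases via closure of local continuity under lifting, products, coalesced sums, and composition, and the \rn{C$\rho$} case by appeal to \cite[Proposition~5.2.7]{abramsky_jung_1995:_domain_theor} applied to the locally continuous functor supplied by the induction hypothesis. No substantive differences.
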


\begin{proof}
  By induction on the derivation of $\jstype{\Xi}{A}$.
  We must show that each of the functor interpretations is locally continuous.

  \begin{description}
  \item[Case \rn{C$\Tu$}.]
    \[
      \infer[\rn{C$\Tu$}]{\jstype[+]{\Xi}{\Tu}}{}
    \]
    Recall \cref{eq:1502,eq:1501}:
    \begin{align*}
      \sembr{\jstype{\Xi}{\Tu}}^- &= \lambda \xi.\top_{\moabcs}\\
      \sembr{\jstype{\Xi}{\Tu}}^+ &= \lambda \xi.\{\ast\}_\bot
    \end{align*}
    Constant functors are easily seen to be locally continuous.

  \item[Case \rn{CVar}.]
    \[
      \infer[\rn{CVar}]{\Xi,\jisst[p]{\alpha}\vdash\jisst[p]{\alpha}}{}
    \]
    Recall \cref{eq:fscd:7}:
    \begin{align*}
      \sembr{\Xi,\jisst[p]{\alpha}\vdash\jisst[p]{\alpha}}^q &= \pi^{\Xi,\alpha}_\alpha \quad (q \in \{{-},{+}\})
    \end{align*}
    Projection functors are easily seen to be locally continuous.

  \item[Case \rn{C$\rho$}.]
    \[
      \infer[\rn{C$\rho$}]{
        \Xi \vdash \jisst[p]{\Trec{\alpha}{A}}
      }{
        \Xi, \jisst[p]{\alpha} \vdash \jisst[p]{A}
      }
    \]
    Recall \cref{eq:20051}:
    \begin{align*}
      \sembr{\jstype{\Xi}{\Trec{\alpha}{A}}}^p &= \sfix{\left(\sembr{\Xi,\jisst{\alpha}\vdash \jisst{A}}^p\right)}\quad (p \in \{{-},{+}\})
    \end{align*}
    By the induction hypothesis, $\sembr{\Xi,\jisst{\alpha}\vdash \jisst{A}}^p$ is locally continuous.
    By \cite[Proposition~5.2.7]{abramsky_jung_1995:_domain_theor}, we know that whenever $F : \sembr{\Xi} \times \moabcs \to \moabcs$ is locally continuous, then so is $\sfix{F} : \sembr{\Xi} \to \moabcs$.
    It follows that the interpretations are locally continuous.

  \item[Case \rn{C$\Tds{}$}.]
    \[
      \infer[\rn{C$\Tds{}$}]{
        \jstype[+]{\Xi}{\Tds A}
      }{
        \jstype[-]{\Xi}{A}
      }
    \]
    Recall \cref{eq:1506,eq:1507}:
    \begin{align*}
      \sembr{\jstype{\Xi}{\Tds A}}^- &= \sembr{\jstype{\Xi}{A}}^-\\
      \sembr{\jstype{\Xi}{\Tds A}}^+ &= \sembr{\jstype{\Xi}{A}}^+_\bot
    \end{align*}
    By the induction hypothesis, the interpretations for $\jstype{\Xi}{A}$ are locally continuous.
    The lifting functor is locally continuous.
    Locally continuous functors are closed under composition.
    This gives the result.

  \item[Case \rn{C$\Tus{}$}.]
    \[
      \infer[\rn{C$\Tus{}$}]{
        \jstype[-]{\Xi}{\Tus A}
      }{
        \jstype[+]{\Xi}{A}
      }
    \]
    This case is analogous to the case \rn{C$\Tds{}$}.

  \item[Case \rn{C$\Tplus$}.]
    \[
      \infer[\rn{C$\Tplus$}]{
        \jstype[+]{\Xi}{{\Tplus\{l : A_l\}}_{l \in L}}
      }{
        \jstype[+]{\Xi}{A_l}\quad(\forall l \in L)
      }
    \]
    Recall \cref{eq:202020,eq:222222}:
    \begin{align*}
      \sembr{\jstype{\Xi}{\Tplus\{l:A_l\}_{l \in L}}}^- &= \prod_{l \in L} \sembr{\jstype{\Xi}{A_l}}^-\\
      \sembr{\jstype{\Xi}{\Tplus\{l:A_l\}_{l \in L}}}^+ &= \bigoplus_{l \in L} \sembr{\jstype{\Xi}{A_l}}^+_\bot
    \end{align*}
    By the induction hypothesis, the interpretations for $\jstype{\Xi}{A}$ are locally continuous.
    The lifting, coalesced-sum, and product functors are locally continuous.
    Locally continuous functors are closed under composition.
    This gives the result.

  \item[Case \rn{C$\Tamp$}.]
    \[
      \infer[\rn{C$\Tamp$}]{
        \jstype[-]{\Xi}{{\Tamp\{l :A_l \}}_{l \in L}}
      }{
        \jstype[-]{\Xi}{A_l}\quad(\forall l \in L)
      }
    \]
    This case is analogous to the case \rn{C$\Tplus$}.

  \item[Case \rn{C$\Tot$}.]
    \[
      \infer[\rn{C$\Tot$}]{
        \jstype[+]{\Xi}{A \Tot B}
      }{
        \jstype[+]{\Xi}{A}
        &
        \jstype[+]{\Xi}{B}
      }
    \]
    Recall \cref{eq:13,eq:14}:
    \begin{align*}
      \sembr{\jstype{\Xi}{A \Tot B}}^- &= \sembr{\jstype{\Xi}{A}}^- \times \sembr{\jstype{\Xi}{B}}^-\\
      \sembr{\jstype{\Xi}{A \Tot B}}^+ &= \left(\sembr{\jstype{\Xi}{A}}^+ \times \sembr{\jstype{\Xi}{B}}^+\right)_\bot
    \end{align*}
    By the induction hypothesis, the interpretations for $\jstype{\Xi}{A}$ and $\jstype{\Xi}{B}$ are locally continuous.
    The lifting and product functors are locally continuous.
    Locally continuous functors are closed under composition.
    This gives the result.

  \item[Case \rn{C$\Tlolly$}.]
    \[
      \infer[\rn{C$\Tlolly$}]{
        \jstype[-]{\Xi}{A \Tlolly B}
      }{
        \jstype[+]{\Xi}{A}
        &
        \jstype[-]{\Xi}{B}
      }
    \]
    This case is analogous to the case \rn{C$\Tot$}.

  \item[Case \rn{C$\Tand{}{}$}.]
    \[
      \infer[\rn{C$\Tand{}{}$}]{
        \jstype[+]{\Xi}{\Tand{\tau}{A}}
      }{
        \jisft{\tau}
        &
        \jstype[+]{\Xi}{A}
      }
    \]
    Recall \cref{eq:15104,eq:15106}:
    \begin{align*}
      \sembr{\jstype{\Xi}{\Tand{\tau}{A}}}^- &= \sembr{\Xi\vdash \jisst{A}}^-\\
      \sembr{\jstype{\Xi}{\Tand{\tau}{A}}}^+ &= \left(\sembr{\tau}\times\sembr{\Xi\vdash \jisst{A}}^+\right)_\bot
    \end{align*}
    By the induction hypothesis, the interpretations for $\jstype{\Xi}{A}$ are locally continuous.
    The constant functor $\sembr{\tau}$, the lifting functor, and the product functors are locally continuous.
    Locally continuous functors are closed under composition.
    This gives the result.

  \item[Case \rn{C$\Timp{}{}$}.]
    \[
      \infer[\rn{C$\Timp{}{}$}]{
        \jstype[-]{\Xi}{\Timp{\tau}{A}}
      }{
        \jisft{\tau}
        &
        \jstype[-]{\Xi}{A}
      }
    \]
    This case is analogous to the case \rn{C$\Tand{}{}$}.\qedhere
  \end{description}
\end{proof}


\section{Illustrative Example: Flipping Bit Streams}
\label{sec:case-study-bit-str}

We give the full details for the case study in \cref{sec:bit-streams}.
We will show that flipping the bits in a stream twice is equivalent to forwarding it.
The bit stream protocol is specified by the session type $\mt{bits} = \Trec{\beta}{\Tplus \{ \mt{0} : \beta, \mt{1} : \beta \}}$.
It denotes the domains $\sembr{\mt{bits}}^+ = \FIX\left(X \mapsto \left(\left(\mt 0 : X_\bot\right) \oplus \left(\mt 1 : X_\bot \right)\right)\right)$ and $\sembr{\mt{bits}}^- = \{\bot\}$.
Its unfolding is $\mt{BITS} = \Tplus\{\mt{0} : \mt{bits}, \mt{1} : \mt{bits}\}$.
There are canonical isomorphisms
\begin{align*}
  &\ms{Unfold}^+ : \sembr{\mt{bits}}^+ \to \left(\left(\mt{0}:\sembr{\mt{bits}}^+_\bot\right) \oplus \left(\mt{1}:\sembr{\mt{bits}}^+_\bot\right)\right)\\
  &\ms{Unfold}^- : \{\bot\} \to \{(\mt{0} : \bot, \mt{1} : \bot)\}
\end{align*}
with respective inverses $\ms{Fold}^+$ and $\ms{Fold}^-$.
Write $\cons{0}{\alpha}$ and $\cons{1}{\alpha}$ for $\Fold^+((\mt{0},\upim{\alpha}))$ and $\Fold^+((\mt{1},\upim{\alpha}))$, respectively.

We desugar the bit-flipping process $\mt{flip}$.
Write $\tilde l$ for the complement of $l \in \{\mt{0},\mt{1}\}$.
Desugaring the quoted process gives a term $\jtypef{\cdot}{\mt{flip}}{\Tproc{f:\mt{bits}}{b:\mt{bits}}}$ where $\mt{flip}$ is:
\[
  \tFix{F}{
    \tProc{f}{\tSendU{f}{\tRecvU{b}{\tCase{b}{\{l \Rightarrow \tSendL{f}{\tilde l}{\tProc{f}{F}{b}}\}}_{l \in \{\mt{0},\mt{1}\}}}}}{b}
  }.
\]

We compute its denotation.
Let $\phi = \Tproc{f : \mt{bits}}{b : \mt{bits}}$.
The branch $B_{\mt 0} = \verb!f.1; f <- F <- b!$ of $\mt{flip}$'s case statement has the derivation:
\[
  \infer[\rn{$\Tplus R_{\mt 1}$}]{
    \jtypem{F : \phi}{b : \mt{bits}}{
      \tSendL{f}{\mt 1}{\tProc{f}{F}{b}}
    }{f}{\mt{BITS}}
  }{
    \infer[\rn{E-\{\}}]{
      \jtypem{F : \phi}{b : \mt{bits}}{
        \tProc{f}{F}{b}
      }{f}{\mt{bits}}
    }{
      \infer[\rn{F-Var}]{\jtypef{F : \phi}{F}{\phi}}{}
    }
  }
\]
We calculate that its denotation is:
\begin{align*}
  &\sembr{\jtypem{F : \phi}{b : \mt{bits}}{\tSendL{f}{\mt 1}{\tProc{t}{F}{b}}}{f}{\mt{BITS}}}(F : g)\left(b^+, \bot\right) = \left(\bot, \upim{\left(\mt 1, a^+\right)}\right)\\
  &\text{where }(\bot, a^+) = \down\left(g\right)(b^+, \bot).
\end{align*}
The branch $B_{\mt 1} = \verb!f.0; f <- F <- b!$ is symmetric.

The functional term $\mt{flip}$ has the derivation:
\[
  \adjustbox{width=0.99\linewidth,keepaspectratio}\bgroup
  \infer[\rn{F-Fix}]{
    \jtypef{\cdot}{
      \tFix{F}{
        \tProc{f}{\tSendU{f}{\tRecvU{b}{\tCase{b}{\{l \Rightarrow B_l\}}_{l \in \{\mt{0},\mt{1}\}}}}}{b}
      }
    }{\phi}
  }{
    \infer[\rn{I-\{\}}]{
      \jtypef{F : \phi}{
        \tProc{f}{\tSendU{f}{\tRecvU{b}{\tCase{b}{\{l \Rightarrow B_l\}}_{l \in \{\mt{0},\mt{1}\}}}}}{b}
      }{\phi}
    }{
      \infer[\rn{$\rho^+R$}]{
        \jtypem{F:\phi}{b : \mt{bits}}{
          \tSendU{f}{\tRecvU{b}{\tCase{b}{\{l \Rightarrow B_l\}}_{l \in \{\mt{0},\mt{1}\}}}}
        }{f}{\mt{bits}}
      }{
        \infer[\rn{$\rho^+L$}]{
          \jtypem{F:\phi}{b : \mt{bits}}{
            \tRecvU{b}{\tCase{b}{\{l \Rightarrow B_l\}}_{l \in \{\mt{0},\mt{1}\}}}
          }{f}{\mt{BITS}}
        }{
          \infer[\rn{$\Tplus$L}]{
            \jtypem{F:\phi}{b : \mt{BITS}}{
              \tCase{b}{\{l \Rightarrow B_l\}}_{l \in \{\mt{0},\mt{1}\}}
            }{f}{\mt{BITS}}
          }{
            \jtypem{F : \phi}{b : \mt{bits}}{
              B_l
            }{f}{\mt{BITS}}
            &
            (\forall l \in \{\mt{0},\mt{1}\})
          }
        }
      }
    }
  }
  \egroup
\]
We calculate its denotation as follows.
First, we find the denotation of the process:
\begin{align*}
  &\sembr{\jtypem{F:\phi}{b : \mt{bits}}{
    \tSendU{f}{\tRecvU{b}{\tCase{b}{\{l \Rightarrow B_l\}}_{l \in \{\mt{0},\mt{1}\}}}}
    }{f}{\mt{bits}}}(F : g)\\
  &= \strictfn_{b^+}\left(
    \begin{aligned}
      \lambda (b^+,\bot).&
      \begin{cases}
        \left(\bot, \cons{1}{a_{\mt 0}^+}\right) & \text{if } b^+ = \cons{0}{b_{\mt 0}^+}\\
        \left(\bot,  \cons{0}{a_{\mt 1}^+}\right) & \text{if } b^+ = \cons{1}{b_{\mt 1}^+}
      \end{cases}\\
      &\quad\text{where }\down\left(g\right)(b_l^+, \bot) = (\bot, a_l^+) \text{ for } l \in L
    \end{aligned}\right)
\end{align*}
Let $\Phi : \sembr{b : \mt{bits} \vdash a : \mt{bits}} \to \sembr{b : \mt{bits} \vdash a : \mt{bits}}$ be the function
\[
  \Phi(r) = \strictfn_{b^+} \left(\begin{aligned}
      \lambda (b^+,\bot).&
      \begin{cases}
        \left(\bot, \cons{1}{a_{\mt 0}^+}\right) & \text{if } b^+ = \cons{0}{b_{\mt 0}^+}\\
        \left(\bot,  \cons{0}{a_{\mt 1}^+}\right) & \text{if } b^+ = \cons{1}{b_{\mt 1}^+}
      \end{cases}\\
      &\quad\text{where }r(b_l^+, \bot) = (\bot, a_l^+) \text{ for } l \in L
    \end{aligned}
  \right)
\]
Quoting the process gives:
\begin{align*}
  &\sembr{\jtypef{F : \phi}{\tProc{f}{\tSendU{f}{\ldots}}{b}}{\phi}}\\
  &= \up \circ \sembr{\jtypem{F:\phi}{b : \mt{bits}}{\tSendU{f}{\ldots}}{f}{\mt{bits}}}\\
  &= \up \circ \Phi \circ \down
\end{align*}
By \cref{eq:28}, fixing the functional variable $F$ gives:
\begin{align*}
  &\sembr{\jtypef{\cdot}{\tFix{F}{\tProc{f}{\tSendU{f}{\ldots}}{b}}}{\phi}}\bot\\
  &= \sfix{\sembr{\jtypef{F : \phi}{\tProc{f}{\tSendU{f}{\ldots}}{b}}{\phi}}}\bot\\
  &= \sfix{(\up \circ \Phi \circ \down)}\bot\\
  &= \fix\left(\up \circ \Phi \circ \down\right)\\
  \shortintertext{which by \cref{prop:fscd:15}:}
  &= \up(\fix(\Phi)).
\end{align*}

Consider the composition $\jtypem{\cdot}{a : \mt{bits}}{\tProc{t}{\mt{flip}}{a}; \tProc{b}{\mt{flip}}{t}}{b}{\mt{bits}}$ of two bit-flipping processes.
We want to show that it is equivalent to forwarding, \ie, that
\[
  \sembr{\jtypem{\cdot}{a : \mt{bits}}{\tProc{t}{\mt{flip}}{a}; \tProc{b}{\mt{flip}}{t}}{b}{\mt{bits}}} = \sembr{\jtypem{\cdot}{a : \mt{bits}}{\tFwd{b}{a}}{b}{\mt{bits}}}.
\]
This means that we must show for all $(a^+,\bot) \in \sembr{a : \mt{bits}}^+ \times \sembr{b : \mt{bits}}^-$:
\begin{equation}
  \sembr{\jtypem{\cdot}{a : \mt{bits}}{\tProc{t}{\mt{flip}}{a}; \tProc{b}{\mt{flip}}{t}}{b}{\mt{bits}}}\bot(a^+,\bot) = (\bot, a^+).\label{eq:fscd:99}
\end{equation}

We first compute the denotation of the composition.
By \cref{eq:115},
\begin{align*}
  &\sembr{\jtypem{\cdot}{a : \mt{bits}}{\tProc{t}{\mt{flip}}{a}; \tProc{b}{\mt{flip}}{t}}{b}{\mt{bits}}}\bot(a^+, \bot)\\
  &= \Trop^{t^- \times t^+}\left(
    \down\left(\sembr{\jtypef{\cdot}{\mt{flip}}{\Tproc{t : \mt{bits}}{a : \mt{bits}}}}\bot\right)
    \times {}\right.\\
  &\qquad\qquad\qquad\left. {} \times
    \down\left(\sembr{\jtypef{\cdot}{\mt{flip}}{\Tproc{b : \mt{bits}}{t : \mt{bits}}}}\bot\right)
    \right)(a^+, \bot)\\
  &= \Trop^{t^- \times t^+}\left(\down(\up(\fix(\Phi))) \times \down(\up(\fix(\Phi)))\right)(a^+,\bot)\\
  &= \Trop^{t^- \times t^+}\left(\fix(\Phi) \times \fix(\Phi)\right)(a^+,\bot)\\
    \shortintertext{which by \cref{lemma:fscd:4} with $\sigma : \sembr{a : \mt{bits}}^- \times \sembr{t : \mt{bits}}^+ \to \sembr{t : \mt{bits}}^+ \times \sembr{b : \mt{bits}}^-$ the obvious relabelling isomorphism:}
  &= \left(\fix(\Phi) \circ \sigma \circ \fix(\Phi)\right)(a^+, \bot)
\end{align*}
Alternatively, we could compute the denotation as the least upper bound of an ascending chain:
\begin{align*}
  &\sembr{\jtypem{\cdot}{a : \mt{bits}}{\tProc{t}{\mt{flip}}{a}; \tProc{b}{\mt{flip}}{t}}{b}{\mt{bits}}}\bot(a^+, \bot)\\
  &= \Trop^{t^- \times t^+}\left(\fix(\Phi) \times \fix(\Phi)\right)(a^+,\bot)\\
  &= \Trop^{t^- \times t^+}\left(
    \down\left(\sembr{\jtypef{\cdot}{\mt{flip}}{\Tproc{t : \mt{bits}}{a : \mt{bits}}}}\bot\right)
    \times {}\right.\\
  &\qquad\qquad\qquad\left. {} \times
    \down\left(\sembr{\jtypef{\cdot}{\mt{flip}}{\Tproc{b : \mt{bits}}{t : \mt{bits}}}}\bot\right)
    \right)(a^+, \bot)\\
  &= \dirsup_{n \in \N}\; (b_n^-, a_n^+)
\end{align*}
where we inductively define
\begin{align*}
  (a_0^-, t_0^+) &= \sembr{\jtypef{\cdot}{\mt{flip}}{\Tproc{t : \mt{bits}}{a : \mt{bits}}}}\bot(a^+, \bot) = \fix(\Phi)(a^+,\bot)\\
  (t_0^-, b_0^+) &= \sembr{\jtypef{\cdot}{\mt{flip}}{\Tproc{b : \mt{bits}}{t : \mt{bits}}}}\bot(\bot, \bot) = \fix(\Phi)(\bot,\bot)\\
  (a_{n+1}^-, t_{n+1}^+) &= \sembr{\jtypef{\cdot}{\mt{flip}}{\Tproc{t : \mt{bits}}{a : \mt{bits}}}}\bot(a^+, t_n^-) =  \fix(\Phi)(a^+,t_n^-)\\
  (t_{n+1}^-, b_{n+1}^+) &= \sembr{\jtypef{\cdot}{\mt{flip}}{\Tproc{b : \mt{bits}}{t : \mt{bits}}}}\bot(t_n^+, \bot) = \fix(\Phi(t_n^+,\bot).
\end{align*}
We observe that  $a_n^- = t_n^- = \bot$  and $t_n^+ = t_{n + 1}^+$ for all $n \in \N$, so this chain stabilizes and stops evolving at $n = 2$.
We conclude that
\[
  \dirsup_{n \in \N}\; (b_n^-, a_n^+) = (b_2^-, a_2^+) = \left(\fix(\Phi) \circ \sigma \circ \fix(\Phi)\right)(a^+, \bot).
\]

Given an $a \in \sembr{\mt{bits}}^+$, let $\fd a$ be given by $\left(\fix(\Phi) \circ \sigma \circ \fix(\Phi)\right)(a, \bot) = (\bot, \fd a)$.
To show \cref{eq:fscd:99}, we must show that $a = \fd a$ for all $a \in \sembr{\mt{bits}}^+$.
We begin by characterizing $\fd a$.
We claim that:
\begin{enumerate}
\item if $a = \bot$, then $\fd a = \bot$;\label{item:fscd:1}
\item if $a = \cons{0}{\alpha}$, then $\fd a = \cons{0}{\fd \alpha}$;\label{item:fscd:2}
\item if $a = \cons{1}{\alpha}$, then $\fd a = \cons{1}{\fd \alpha}$.\label{item:fscd:3}
\end{enumerate}
Claim \ref{item:fscd:1} is clear by strictness.
By the fixed-point identity $\fix(\Phi) = \Phi(\fix(\Phi))$, we have for all $\alpha\in \sembr{\mt{bits}}^+$ that:
\begin{align*}
  &\fix(\Phi)(\cons{0}{\alpha}, \bot) = \Phi(\fix(\Phi))(\cons{0}{\alpha}, \bot) = (\bot, \cons{1}{\beta})\text{ where }(\bot, \beta) = \fix(\Phi)(\alpha, \bot)\\
  &\fix(\Phi)(\cons{1}{\alpha}, \bot) = \Phi(\fix(\Phi))(\cons{1}{\alpha}, \bot) = (\bot, \cons{0}{\beta})\text{ where }(\bot, \beta) = \fix(\Phi)(\alpha, \bot)
\end{align*}
These imply:
\begin{align}
  \left(\fix(\Phi) \circ \sigma \circ \fix(\Phi)\right)(\cons{0}{\alpha}, \bot) &= \fix(\Phi)(\cons{1}{\beta}, \bot) = (\cons{0}{\fd \alpha}, \bot)\label{eq:fossacs:8}\\
  \left(\fix(\Phi) \circ \sigma \circ \fix(\Phi)\right)(\cons{1}{\alpha}, \bot) &= \fix(\Phi)(\cons{0}{\beta}, \bot) = (\cons{1}{\fd \alpha}, \bot)\label{eq:fossacs:9}.
\end{align}
\Cref{eq:fossacs:8,eq:fossacs:9} respectively establish claims \ref{item:fscd:2} and \ref{item:fscd:3}.

To show that $a = \fd a$, we use a coinduction principle by Pitts~\cite{pitts_1994:_co_induc_princ}.
Given a relation $\R$ on $\sembr{\mt{bits}}^+_\bot$, let $\Rh$ be the relation on $\sembr{\mt{BITS}}^+_\bot$ where $a \Rh b$ if and only if for all $\alpha \in \sembr{\mt{bits}}^+$:
\begin{itemize}
\item if $a = \upim{(\mt{0}, \upim{\alpha})}$, then $b = \upim{(\mt{0}, \upim{\beta})}$ for some $\beta \in \sembr{\mt{bits}}^+$ with $\upim{\alpha} \R \upim{\beta}$;
\item if $a = \upim{(\mt{1}, \upim{\alpha})}$, then $b = \upim{(\mt{1}, \upim{\beta})}$ for some $\beta \in \sembr{\mt{bits}}^+$ with $\upim{\alpha} \R \upim{\beta}$.
\end{itemize}
The relation $\R$ is a \defin{simulation} if $a \R b$ implies $\ms{Unfold}^+_\bot(a) \Rh \ms{Unfold}^+_\bot(b)$.
By \cite[Theorem~2.5]{pitts_1994:_co_induc_princ}, $a \sqsubseteq b$ in $\sembr{\mt{bits}}^+_\bot$ if and only if there exists a simulation $\R$ such that $a \R b$.
Let ${\R} \subseteq \sembr{\mt{bits}}^+_\bot \times \sembr{\mt{bits}}^+_\bot$ be the relation:
\[
  {\R} = \left\{ (\upim{\bot},\upim{\bot}), (\upim{\cons{0}{\alpha}}, \upim{\cons{0}{\fd \alpha}}), (\upim{\cons{1}{\alpha}}, \upim{\cons{1}{\fd \alpha}}) \mid \alpha \in \sembr{\mt{bits}}^+ \right\}.
\]

Assume $\R$ is a simulation.
To show that $a \sqsubseteq \fd a$, it is sufficient to show that $\upim{a} \R \upim{\fd a}$.
\begin{enumerate}
\item If $a = \bot$, then $\fd a = \bot$ by strictness.
\item If $a = \cons{0}{\alpha}$, then $\fd a = \cons{0}{\fd \alpha}$ by \cref{eq:fossacs:8}.
\item If $a = \cons{1}{\alpha}$, then $\fd a = \cons{1}{\fd \alpha}$ by \cref{eq:fossacs:9}.
\end{enumerate}
So $\upim{a} \R \upim{\fd a}$ by definition of $\R$.
It follows that $a \sqsubseteq \fd a$.

The relation $\R$ is clearly a simulation if and only if $\op\R$ is a simulation.
So $\fd a \mathrel{\op\R} a$, which gives $\fd a \sqsubseteq a$.
We conclude $a = \fd a$ as desired.

We now show that $\R$ is a simulation.
Let $(\upim{a}, \upim{b}) \in {\R}$ be arbitrary.
We must show that $\Unfold^+_\bot(\upim{a}) \Rh \Unfold^+_\bot(\upim{b})$.
Assume first that $a = \cons{0}{\alpha}$.
By definition of $\R$, $b = \cons{0}{\fd \alpha}$.
We have $\Unfold^+_\bot(a) = \upim{(\mt{0}, \upim{\alpha})}$ and $\Unfold^+_\bot(b) = \upim{(\mt{0}, \upim{\fd \alpha})}$.
We must show that $\upim{\alpha} \R \upim{\fd \alpha}$.
We proceed by case analysis on $\alpha$.
\begin{enumerate}
\item If $\alpha = \bot$, then $\fd \alpha = \bot$ by strictness, and $\upim{\bot} \R \upim{\bot}$ by definition of $\R$.
\item If $\alpha = \cons{0}{\gamma}$, then $\fd \alpha = \cons{0}{\fd \gamma}$ by \cref{eq:fossacs:8}, and $\upim{\cons{0}{\gamma}} \R \upim{\cons{0}{\fd \gamma}}$ by definition of $\R$.
\item If $\alpha = \cons{1}{\gamma}$, then $\fd \alpha = \cons{1}{\fd \gamma}$ by \cref{eq:fossacs:9}, and $\upim{\cons{1}{\gamma}} \R \upim{\cons{1}{\fd \gamma}}$ by definition of $\R$.
\end{enumerate}
This completes the case of $a = \cons{0}{\alpha}$.
The case of $a = \cons{1}{\alpha}$ is symmetric.
We conclude that $\R$ is a simulation.

It follows that flipping a stream twice is equivalent to forwarding (the identity process).


\section{Proofs for Results in the Paper}
\label{sec:proofs-results-paper}

We restate the statements of the results in the main part of the paper and give their proofs.

\recsubst*

\begin{proof}
  Let $\Psi = \psi_1 : \tau_1, \dotsc, \psi_n : \tau_n$.
  By \cref{eq:28},
  \[
    \sembr{\jtypef{\Psi}{\tFix{x}{M}}{\tau}} = \sfix{\sembr{\jtypef{\Psi,x : \tau}{M}{\tau}}}.
  \]
  By the parametrized fixed-point property (\cref{def:fscd:2}),
  \[
    \sfix{\sembr{\jtypef{\Psi,x : \tau}{M}{\tau}}} = \sembr{\jtypef{\Psi,x : \tau}{M}{\tau}} \circ \langle \ms{id}_{\sembr{\Psi}}, \sembr{\jtypef{\Psi}{\tFix{x}{M}}{\tau}} \rangle.
  \]
  By \cref{eq:39} and properties of products, this equals
  \[
    \sembr{\jtypef{\Psi,x : \tau}{M}{\tau}} \circ \langle \sembr{\jtypef{\Psi}{\psi_1}{\tau_1}}, \dotsc, \sembr{\jtypef{\Psi}{\psi_n}{\tau_n}}, \sembr{\jtypef{\Psi}{\tFix{x}{M}}{\tau}} \rangle,
  \]
  which by \cref{prop:11} is $\sembr{\jtypef{\Psi}{[\tFix{x}{M}/x]M}{\tau}}$.
\end{proof}

\quoteunquote*

\begin{proof}
  Let $u \in \sembr{\Psi}$ be arbitrary.
  Using the fact that $(\up,\down)$ is a section-retraction pair, we get
  \begin{align*}
    &\sembr{\jtypem{\Psi}{\overline{a_i:A_i}}{\tProc{a}{\tProc{a}{P}{\overline a_i}}{\overline a_i}}{a}{A}}u\\
    &= \down\left(\sembr{\jtypef{\Psi}{\tProc{a}{P}{\overline a_i}}{\Tproc{a:A}{\overline{a_i:A_i}}}}u\right)\\
    &= \down\left(\left(\up \circ \sembr{\jtypem{\Psi}{\overline{a_i:A_i}}{P}{a}{A}}\right)u\right)\\
    &= (\down \circ \up)\left(\sembr{\jtypem{\Psi}{\overline{a_i:A_i}}{P}{a}{A}}u\right)\\
    &= \sembr{\jtypem{\Psi}{\overline{a_i:A_i}}{P}{a}{A}}u.\qedhere
  \end{align*}
\end{proof}

The proofs of the $\eta$-like equivalences make extensive use of the Knaster-Tarski formulation \eqref{eq:808} of the trace operator.
Though the proofs are detailed, they are not conceptually difficult.

\etaquote*

\begin{proof}
  Given $u \in \sembr{\Psi}$, write $\hat u$ for $\upd{u}{x \mapsto \sembr{\jtypef{\Psi}{M}{\tau}}u} \in \sembr{\Psi,x : \tau}$.
  Set
  \begin{align*}
    &l : \sembr{\Psi} \times \sembr{\Delta_1, \Delta_2, a:A}^+ \times \sembr{c:C}^- \to \sembr{\Delta_1, \Delta_2, a:A}^- \times \sembr{c:C}^+\\
    &l = \Lambda^{-1}\left(\lambda u \in \sembr{\Psi} . \sembr{\jtypem{\Psi}{\Delta_1}{P}{a}{A}}u \times \sembr{\jtypem{\Psi,x : \tau}{\Delta_2, a:A}{Q}{c}{C}}\hat u \right)\\
    &r : \sembr{\Psi} \times \sembr{\Delta_1, \Delta_2, a:\Tand{\tau}{A}}^+ \times \sembr{c:C}^- \to \sembr{\Delta_1, \Delta_2, a:\Tand{\tau}{A}}^- \times \sembr{c:C}^+\\
    &r = \Lambda^{-1}\left(\lambda u \in \sembr{\Psi} . \sembr{\jtypem{\Psi}{\Delta_1}{\tSendV{a}{M}{P}}{a}{\Tand{\tau}{A}}}u \times {} \right.\\
    &\qquad\qquad \left. {} \times \sembr{\jtypem{\Psi}{\Delta_2, a : \Tand{\tau}{A}}{\tRecvV{x}{a}{Q}}{c}{C}}u\right)
  \end{align*}
  Let $x_i : \tau_i$ be such that $\Psi = {x_1 : \tau_1},\dotsc, {x_n : \tau_n}$.
  By substitution (\cref{prop:extended:10}), we have
  \begin{align*}
    &\sembr{\jtypem{\Psi}{\Delta_1,\Delta_2}{\tCut{a}{P}{[M/x]Q}}{c}{C}}u\\
    &= \left(\sembr{\jtypem{\Psi, x : \tau}{\Delta_1,\Delta_2}{\tCut{a}{P}{Q}}{c}{C}} \circ {} \right.\\
    &\qquad\qquad\left. {} \circ \langle \sembr{\jtypef{\Psi}{x_1}{\tau_1}}, \dotsc, \sembr{\jtypef{\Psi}{x_n}{\tau_n}}, \sembr{\jtypef{\Psi}{M}{\tau}} \rangle\right)u\\
    &= \left(\sembr{\jtypem{\Psi, x : \tau}{\Delta_1,\Delta_2}{\tCut{a}{P}{Q}}{c}{C}} \circ \langle \ms{id}_{\sembr{\Psi}}, \sembr{\jtypef{\Psi}{M}{\tau}}\rangle\right)u\\
    &= \sembr{\jtypem{\Psi, x : \tau}{\Delta_1,\Delta_2}{\tCut{a}{P}{Q}}{c}{C}}\upd{u}{x \mapsto \sembr{\jtypef{\Psi}{M}{\tau}}u}\\
    &= \sembr{\jtypem{\Psi, x : \tau}{\Delta_1,\Delta_2}{\tCut{a}{P}{Q}}{c}{C}}\hat u\\
    \shortintertext{which by \cref{eq:11}:}
    &= \Trop^{a^- \times a^+}_{\Delta_1^+\times\Delta_2^+\times c^-,\Delta_1^-\times\Delta_2^-\times c^+}\left(\sembr{\jtypem{\Psi, x : \tau}{\Delta_1}{P}{a}{A}} \hat u \times {} \right.\\
    &\qquad\quad \left. {} \times \sembr{\jtypem{\Psi,x : \tau}{\Delta_2, a:A}{Q}{c}{C}}\hat u \right)\\
    \shortintertext{which by coherence (\cref{prop:extended:9}):}
    &= \Trop^{a^- \times a^+}_{\Delta_1^+\times\Delta_2^+\times c^-,\Delta_1^-\times\Delta_2^-\times c^+}\left(\sembr{\jtypem{\Psi}{\Delta_1}{P}{a}{A}}u \times {} \right.\\
    &\qquad\qquad \left. {} \times \sembr{\jtypem{\Psi,x : \tau}{\Delta_2, a:A}{Q}{c}{C}}\hat u \right)\\
    \shortintertext{which by \cref{prop:fscd:16}:}
    &=  \lambda (\delta_1^+, \delta_2^+, c^-) \in \sembr{\Delta_1, \Delta_2}^+ \times \sembr{c:C}^- . \Trop^{a^- \times a^+}_{\Psi\times\Delta_1^+\times\Delta_2^+\times c^-,\Delta_1^-\times\Delta_2^-\times c^+}\left(l\right)(u, \delta_1^+, \delta_2^+, c^-) .
  \end{align*}

  Similarly, we have
  \begin{align*}
    &\sembr{\jtypem{\Psi}{\Delta_1,\Delta_2}{\tCut{a}{(\tSendV{a}{M}{P})}{(\tRecvV{x}{a}{Q})}}{c}{C}}u\\
    &= \Trop^{a^- \times a^+}\left(\sembr{\jtypem{\Psi}{\Delta_1}{\tSendV{a}{M}{P}}{a}{\Tand{\tau}{A}}}u \times {} \right.\\
    &\qquad\qquad \left. {} \times \sembr{\jtypem{\Psi}{\Delta_2, a : \Tand{\tau}{A}}{\tRecvV{x}{a}{Q}}{c}{C}}u\right)\\
    \shortintertext{which by \cref{prop:fscd:16}:}
    &=  \lambda (\delta_1^+, \delta_2^+, c^-) \in \sembr{\Delta_1, \Delta_2}^+ \times \sembr{c:C}^- . \Trop^{a^- \times a^+}_{\Psi\times\Delta_1^+\times\Delta_2^+\times c^-,\Delta_1^-\times\Delta_2^-\times c^+}\left(r\right)(u, \delta_1^+, \delta_2^+, c^-) .
  \end{align*}

  We abbreviate judgments by their processes.
  We use the Knaster-Tarski formulation \eqref{eq:808} of the trace operator.
  Let $u \in \sembr{\Psi}$ and $(\delta_1^+,\delta_2^+,c^-) \in \sembr{\Delta_1,\Delta_2}^+ \times \sembr{c : C}^-$ be arbitrary.
  Using the above chains of equations, we calculate that
  \begin{align*}
    &\sembr{\jtypem{\Psi}{\Delta_1,\Delta_2}{\tCut{a}{P}{[M/x]Q}}{c}{C}}u(\delta_1^+,\delta_2^+,c^-)\\
    &= \pi_{\Delta_1^-,\Delta_2^-,c^+}\left(\bigsqcap L\right),\\
    &\sembr{\jtypem{\Psi}{\Delta_1,\Delta_2}{\tCut{a}{(\tSendV{a}{M}{P})}{(\tRecvV{x}{a}{Q})}}{c}{C}}u(\delta_1^+,\delta_2^+,c^-)\\
    &= \pi_{\Delta_1^-,\Delta_2^-,c^+}\left(\bigsqcap R\right)
  \end{align*}
  where
  \begin{align*}
    &L = \left\{ \left(\delta_1^-,\delta_2^-,a^-,a^+,c^+\right) \in \sembr{\Delta_1,\Delta_2,a:A}^-\times\sembr{a:A,c:C}^+ \mid {} \right.\\
    &\qquad\qquad\left.{} \mid l(u,\delta_1^+,\delta_2^+,a^+,a^-,c^-) \sqsubseteq (\delta_1^-,\delta_2^-,a^-,a^+,c^+) \right\}\\
    &R = \left\{ \left(\delta_1^-,\delta_2^-,a^-,a^+,c^+\right) \in \sembr{\Delta_1,\Delta_2,a:\Tand{\tau}{A}}^-\times\sembr{a:\Tand{\tau}{A},c:C}^+ \mid {} \right.\\
    &\qquad\qquad\left.{} \mid r(u,\delta_1^+,\delta_2^+,a^+,a^-,c^-) \sqsubseteq (\delta_1^-,\delta_2^-,a^-,a^+,c^+) \right\}.
  \end{align*}

  Infima of products are computed component-wise, so to show the semantic equivalence, it is sufficient to show that $\pi_{\Delta_1^-,\Delta_2^-,c^+} L = \pi_{\Delta_1^-,\Delta_2^-,c^+} R$.
  Set $v = \sembr{\jtypef{\Psi}{M}{\tau}}u$.

  To show $\pi_{\Delta_1^-,\Delta_2^-,c^+} L \subseteq \pi_{\Delta_1^-,\Delta_2^-,c^+} R$, let $(\delta_1^-,\delta_2^-,a^-,a^+,c^+) \in L$ be arbitrary.
  We show that there exists $(\alpha^-,\alpha^+) \in \sembr{a : \Tand{\tau}{A}}^- \times \sembr{a : \Tand{\tau}{A}}^+$ such that $(\delta_1^-,\delta_2^-,\alpha^-,\alpha^+,c^+) \in R$.
  Then where $\sembr{P}{u}(\delta_1^+,a^-) = (\delta_P^-,a_P^+)$ and $\sembr{Q}{\hat u}(\delta_2^+,a^+,c^-) = (\delta_Q^-,a_Q^-,c_Q^+)$,
  \[
    l(u,\delta_1^+,\delta_2^+,a^+,a^-,c^-)
    =
    (\delta_1^- : \delta_P^-,
    \delta_2^- : \delta_Q^-,
    a^- : a_Q^-,
    a^+ : a_P^+,
    c^+ : c_Q^+).
  \]
  By definition of $L$,
  \begin{equation}
    \label{eq:fscd:63}
    (\delta_P^-, \delta_Q^-, a_Q^-, a_P^+, c_Q^+) \sqsubseteq (\delta_1^-,\delta_2^-,a^-,a^+,c^+).
  \end{equation}
  Take $\alpha^- = a^-$ and $\alpha^+ = \upim{(v,a^+)}$.
  We show that $(\delta_1^-,\delta_2^-,a^-,\upim{(v,a^+)},c^+) \in R$.
  By assumption, $\sembr{\jtypef{\Psi}{M}{\tau}}u = v \neq \bot$.
  By definition of $r$,
  \begin{align*}
    &r(u,\delta_1^+,\delta_2^+,\upim{(v,a^+)},a^-,c^-)\\
    &= \left(\sembr{\tSendV{a}{M}{P}}u \times \sembr{\tRecvV{x}{a}{Q}}u\right)(\delta_1^+,\delta_2^+,\upim{(v,a^+)},a^-,c^-)\\
    &= (\sembr{\tSendV{a}{M}{P}}u(\delta_1^+,a^-), \sembr{\tRecvV{x}{a}{Q}}u(\delta_2^+,\upim{(v,a^+)},c^-))\\
    &= (\sembr{\tSendV{a}{M}{P}}u(\delta_1^+,a^-), \sembr{Q}\hat u(\delta_2^+,a^+,c^-))\\
    &= (\delta_1^- : \delta_P^-,
      \delta_2^- : \delta_Q^-,
      a^- : a_Q^-,
      a^+ : \upim{(v,a_P^+)},
      c^+ : c_Q^+).
  \end{align*}
  We must show that
  \[
    (\delta_P^-, \delta_Q^-, a_Q^-, \upim{(v,a_P^+)}, c_Q^+) \sqsubseteq (\delta_1^-,\delta_2^-,a^-,\upim{(v,a^+)},c^+).
  \]
  But this inequality follows readily from \cref{eq:fscd:63}.

  To show $\pi_{\Delta_1^-,\Delta_2^-,c^+} R \subseteq \pi_{\Delta_1^-,\Delta_2^-,c^+} L$, let $(\delta_1^-,\delta_2^-,a^-,a^+,c^+) \in R$ be arbitrary.
  We show that there exists $(\alpha^-,\alpha^+) \in \sembr{a : A}^- \times \sembr{a : A}^+$ such that $(\delta_1^-,\delta_2^-,\alpha^-,\alpha^+,c^+) \in L$.
  Where $\sembr{P}{u}(\delta_1^+,a^-) = (\delta_P^-,a_P^+)$,
  \[
    (\pi_{a^+} \circ r)(u,\delta_1^+,\delta_2^+,a^+,a^-,c^-) = \upim{(v,a_P^+)}.
  \]
  The definition of $R$ implies $\upim{(v,a_P^+)} \sqsubseteq a^+$.
  This implies that $a^+ = \upim{(v',a')}$ for some $v' \in \sembr{\tau}$ and $a' \in \sembr{A}^+$.
  So where $\sembr{Q}{\hat u}(\delta_2^+,a^+,c^-) = (\delta_Q^-,a_Q^-,c_Q^+)$,
  \[
    r(u,\delta_1^+,\delta_2^+,a^+,a^-,c^-)
    =
    (\delta_1^- : \delta_P^-,
    \delta_2^- : \delta_Q^-,
    a^- : a_Q^-,
    a^+ : \upim{(v,a_P^+)},
    c^+ : c_Q^+).
  \]
  By definition of $R$,
  \begin{equation}
    \label{eq:fscd:64}
    (\delta_P^-, \delta_Q^-, a_Q^-, \upim{(v,a_P^+)}, c_Q^+) \sqsubseteq (\delta_1^-,\delta_2^-,a^-,\upim{(v',a')},c^+).
  \end{equation}
  Take $\alpha^- = a^-$ and $\alpha^+ = a'$.
  Then by \cref{eq:fscd:64},
  \[
    (\delta_P^-, \delta_Q^-, a_Q^-, a_P^+, c_Q^+) \sqsubseteq (\delta_1^-,\delta_2^-,\alpha^-,\alpha^+,c^+).
  \]
  But
  \[
    l(u,\delta_1^+,\delta_2^+,\alpha^+,\alpha^-,c^-)
    =
    (\delta_1^- : \delta_P^-,
    \delta_2^- : \delta_Q^-,
    a^- : a_Q^-,
    a^+ : a_P^+,
    c^+ : c_Q^+).
  \]
  By transitivity,
  \[
    l(u,\delta_1^+,\delta_2^+,\alpha^+,\alpha^-,c^-) \sqsubseteq (\delta_1^-,\delta_2^-,\alpha^-,\alpha^+,c^+).
  \]
  We conclude that $(\delta_1^-,\delta_2^-,\alpha^-,\alpha^+,c^+) \in L$ and we are done.
\end{proof}

\assoccut*

\begin{proof}
  This is an immediate corollary of \cref{prop:fscd:13}.
\end{proof}

\etaunit*

\begin{proof}
  We use the Knaster-Tarski formulation \eqref{eq:808} of the trace operator.
  Let $u \in \sembr{\Psi}$ and $(\delta^+,c^-) \in \sembr{\Delta}^+ \times \sembr{c : C}^-$ be arbitrary.
  We abbreviate judgments by their processes and calculate that
  $\sembr{\tCut{a}{\tClose a}{(\tWait{a}{P})}}u(\delta^+,c^-) = \pi_{\Delta^- \times c^+}\left(\bigsqcap R\right)$
  where
  \begin{align*}
    r &= \sembr{\jtypem{\Psi}{\cdot}{\tClose a}{a}{\Tu}}u \times \sembr{\jtypem{\Psi}{\Delta,a : \Tu}{\tWait{a}{P}}{c}{C}}u\\
    R &= \left\{(\delta^+,a^-,a^+,c^-) \in \sembr{\Delta,a:A}^+ \times \sembr{a:A,c:C}^- \mid {}\right.\\
      &\qquad\qquad\left.{} \mid r(a^-,\delta^+,a^+,c^-) \sqsubseteq (a^+, \delta^-, a^-, c^+) \right\}.
  \end{align*}

  To show that
  \[
    \sembr{P}u(\delta^+,c^-) = \sembr{\tCut{a}{\tClose a}{(\tWait{a}{P})}}u(\delta^+,c^-)
  \]
  it is sufficient to show that
  \[
    \sembr{P}u(\delta^+,c^-) =  \pi_{\Delta^- \times c^+}\left(\bigsqcap R\right).
  \]
  Observe that for all $a^+, c^+, \delta^+$, we have $\pi_{a^+}(f(\bot, \delta^+, a^+, c^-)) = \ast$.
  Recall that $\ast$ is the top element in $\sembr{\Tu}^+$.
  By monotonicity of $f$, this implies that every element in $R$ is of the form $(\ast, c^+, \delta^-, a^-)$.
  By definition of $R$ and $f$, it then follows that every element in $R$ is of the form $(a^+ : \ast, c^+ : v, \delta^- : w, a^- : \bot)$ where $(\delta^- : w, c^+ : v) = \sembr{\jtypem{\Psi}{\Delta}{P}{c}{C}}u(\delta^+,c^-)$.
  This gives the result.
\end{proof}

\etatensor*

\begin{proof}
  We use the Knaster-Tarski formulation \eqref{eq:808} of the trace operator.
  Let $u \in \sembr{\Psi}$ and $(\delta_1^+,\delta_2^+,a^+,c^-) \in \sembr{\Delta_1,\Delta_2,a:A}^+ \times \sembr{c : C}^-$ be arbitrary.
  We abbreviate judgments by their processes and calculate that
  $\sembr{\tCut{b}{\left(\tSendC{b}{a}{P}\right)}{\left(\tRecvC{a}{b}{Q}\right)}}u(\delta_1^+,\delta_2^+,a^+,c^-) = \pi_{\Delta_1^-,\Delta_2^-,a^-,c^+}\left( \bigsqcap L \right)$
  and
  $\sembr{\tCut{b}{P}{Q}}u(\delta_1^+,\delta_2^+,a^+,c^-) = \pi_{\Delta_1^-,\Delta_2^-,a^-,c^+}\left( \bigsqcap R \right)$
  where
  \begin{align*}
    l &= \sembr{\jtypem{\Psi}{\Delta_1, a : A}{\tSendC{b}{a}{P}}{b}{A \Tot B}}u \times {}\\
      &\qquad\qquad {} \times \sembr{\jtypem{\Psi}{\Delta_2, b : A \Tot B}{\tRecvC{a}{b}{Q}}{c}{C}}u,\\
    L &= \left\{\left(\delta_1^-,\delta_2^-,a^-,b^-,b^+,c^+\right) \in \sembr{\Delta_1,\Delta_2,a : A,b : A \Tot B}^- \times \sembr{b : A \Tot B, c : C}^+ \mid {}\right.\\
      &\qquad\qquad\left.{} \mid l(\delta_1^+,\delta_2^+,a^+,b^+,b^-,c^-) \sqsubseteq (\delta_1^-,\delta_2^-,a^-,b^-,b^+,c^+) \right\},\displaybreak[0]\\
    r &= \sembr{\jtypem{\Psi}{\Delta_1}{P}{b}{B}}u \times \sembr{\jtypem{\Psi}{\Delta_2, a : A, b : B}{Q}{c}{C}}u,\\
    R &= \left\{\left(\delta_1^-,\delta_2^-,a^-,b^-,b^+,c^+\right) \in \sembr{\Delta_1,\Delta_2,a : A,b : B}^- \times \sembr{b : B, c : C}^+ \mid {}\right.\\
      &\qquad\qquad\left.{} \mid r(\delta_1^+,\delta_2^+,a^+,b^+,b^-,c^-) \sqsubseteq (\delta_1^-,\delta_2^-,a^-,b^-,b^+,c^+) \right\}.
  \end{align*}

  Infima of products are computed component-wise, so to show
  \[
    \sembr{\tCut{b}{\left(\tSendC{b}{a}{P}\right)}{\left(\tRecvC{a}{b}{Q}\right)}}u(\delta_1^+,\delta_2^+,a^+,c^-) = \sembr{\tCut{b}{P}{Q}}u(\delta_1^+,\delta_2^+,a^+,c^-)
  \]
  it is sufficient to show that $\pi_{\Delta_1^-,\Delta_2^-,a^-,c^+} L = \pi_{\Delta_1^-,\Delta_2^-,a^-,c^+} R$.

  To show $\pi_{\Delta_1^-,\Delta_2^-,a^-,c^+} L \subseteq \pi_{\Delta_1^-,\Delta_2^-,a^-,c^+} R$, let $\left(\delta_1^-,\delta_2^-,a^-,b^-,b^+,c^+\right) \in L$ be arbitrary, then $b^- = (b_a^-, b_b^-)$.
  We show that there exist $(\beta^-,\beta^+) \in \sembr{b : B}^- \times \sembr{b : B}^+$ such that $\left(\delta_1^-,\delta_2^-,a^-,\beta^-,\beta^+,c^+\right) \in R$.
  When $b^+ = (b_a^+, b_b^+)_\bot$, we take $\beta^+ = b_b^+$ and $\beta^- = b_b^-$.
  Then where $\sembr{P}u(\delta_1^+,b_b^-) = (\delta_P^-, b_P^+)$ and $\sembr{Q}u(\delta_2^+,b_a^+,b_b^+,c^-) = (\delta_Q^-,a_Q^-,b_Q^-,c_Q^+)$,
  \begin{align*}
    &l(\delta_1^+,\delta_2^+,a^+,(b_a^+,b_b^+)_\bot,(b_a^-,b_b^-),c^-)\\
    &=
      (\delta_1^- : \delta_P^-,
      \delta_2^- : \delta_Q^-,
      a^- : b_a^-,
      b^- : (a_Q^-, b_Q^-),
      b^+ : (a^+, b_P^+)_\bot,
      c^+ : c_Q^+).
  \end{align*}
  By definition of $L$,
  \begin{equation}
    \label{eq:sc:18}
    (\delta_P^-,\delta_Q^-,b_a^-,(a_Q^-, b_Q^-), (a^+, b_P^+)_\bot, c_Q^+)
    \sqsubseteq
    (\delta_1^-,\delta_2^-,a^-,(b_a^-,b_b^-),(b_a^+,b_b^+),c^+).
  \end{equation}
  We show $\left(\delta_1^-,\delta_2^-,a^-,\beta^-,\beta^+,c^+\right) \in R$.
  \[
    r\left(\delta_1^+,\delta_2^+,a^+,b_b^+,b_b^-,c^-\right) =
    (\delta_1^- : \delta_P^-,
    \delta_2^- : \delta_Q^-,
    a^- : a_Q^-,
    b^- : b_Q^-,
    b^+ : b_P^+,
    c^+ : c_Q^+).
  \]
  We must show
  \[
    (\delta_P^-, \delta_Q^-, a_Q^-, b_Q^-, b_P^+, c_Q^+)
    \sqsubseteq
    (\delta_1^-,\delta_2^-,a^-,b_b^-,b_b^+,c^+).
  \]
  All cases except $a_Q^- \sqsubseteq a^-$ are immediate from \cref{eq:sc:18}.
  That $a_Q^- \sqsubseteq a^-$ follows by transitivity from $a_Q^- \sqsubseteq b_a^-$ and $b_a^- \sqsubseteq a^-$, both by \cref{eq:sc:18}.
  This establishes the case when $b^+ = (b_a^+, b_b^+)_\bot$.
  The case $b^+ = \bot$ is impossible.
  When $b^+ = \bot$, $\sembr{\tRecvC{a}{b}{Q}}u(\delta_2^+,\bot,c^-) = (\bot,\bot,\bot)$.
  Then by definition of $l$ and $L$,
  \begin{align*}
    &l(\delta_1^+,\delta_2^+,a^+,(b_a^+,b_b^+)_\bot,(b_a^-,b_b^-),c^-)\\
    &= (\delta_1^- : \delta_P^-, \delta_2^- : \bot, a^- : b_b^-, b^- : \bot, b^+ : (a^+, b_P^+)_\bot, c^+ : \bot)\\
    &\sqsubseteq (\delta_1^-,\delta_2^-,a^-,(b_a^-,b_b^-),\bot,c^+).
  \end{align*}
  Looking at the $b^+$ component, we have $(a^+, b_P^+)_\bot \sqsubseteq \bot$, a contradiction.
  This establishes $\pi_{\Delta_1^-,\Delta_2^-,a^-,c^+} L \subseteq \pi_{\Delta_1^-,\Delta_2^-,a^-,c^+} R$.

  To show $\pi_{\Delta_1^-,\Delta_2^-,a^-,c^+} R \subseteq \pi_{\Delta_1^-,\Delta_2^-,a^-,c^+} L$, let $\left(\delta_1^-,\delta_2^-,a^-,b^-,b^+,c^+\right) \in R$ be arbitrary.
  We claim $\left(\delta_1^-,\delta_2^-,a^-,(a^-,b^-),(a^+,b^+)_\bot,c^+\right) \in L$.
  Where $\sembr{P}u(\delta_1^+,b^-) = (\delta_P^-, b_P^+)$ and $\sembr{Q}u(\delta_2^+,a^+,b^+,c^-) = (\delta_Q^-,a_Q^-,b_Q^-,c_Q^+)$,
  \[
    r(\delta_1^+,\delta_2^+,a^+,b^+,b^-,c^-) =
    (\delta_1^- : \delta_P^-,
    \delta_2^- : \delta_Q^-,
    a^- : a_Q^-,
    b^- : b_Q^-,
    b^+ : b_P^+,
    c^+ : c_Q^+).
  \]
  By definition of $R$,
  \begin{equation}
    \label{eq:sc:19}
    (\delta_P^-, \delta_Q^-, a_Q^-, b_Q^-, b_P^+, c_Q^+)
    \sqsubseteq
    (\delta_1^-,\delta_2^-,a^-,b^-,b^+,c^+).
  \end{equation}
  We compute
  \begin{align*}
    &l(\delta_1^+,\delta_2^+,a^+,(a^+,b^+)_\bot,(a^-,b^-),c^-)\\
    &=
      (\delta_1^- : \delta_P^-,
      \delta_2^- : \delta_Q^-,
      a^- : \bot,
      b^- : (a_Q^-, b_Q^-),
      b^+ : (a^+, b_P^+)_\bot,
      c^+ : c_Q^+).
  \end{align*}
  We must show
  \[
    (\delta_P^-, \delta_Q^-, \bot, (a_Q^-, b_Q^-), (a^+, b_P^+)_\bot, c_Q^+)
    \sqsubseteq
    (\delta_1^-,\delta_2^-,a^-,(a^-,b^-),(a^+,b^+)_\bot,c^+).
  \]
  This follows immediately from \cref{eq:sc:19}, so $\pi_{\Delta_1^-,\Delta_2^-,a^-,c^+} R \subseteq \pi_{\Delta_1^-,\Delta_2^-,a^-,c^+} L$.

  Having established $\pi_{\Delta_1^-,\Delta_2^-,a^-,c^+} R = \pi_{\Delta_1^-,\Delta_2^-,a^-,c^+} L$, it follows that
  \begin{align*}
    &\sembr{\jtypem{\Psi}{\Delta_1,\Delta_2,a:A}{\tCut{b}{\left(\tSendC{b}{a}{P}\right)}{\left(\tRecvC{a}{b}{Q}\right)}}{c}{C}}u(\delta_1^+,\delta_2^+,a^+,c^-)\\
    &= \sembr{\jtypem{\Psi}{\Delta_1,\Delta_2,a:A}{\tCut{b}{P}{Q}}{c}{C}}u(\delta_1^+,\delta_2^+,a^+,c^-).
  \end{align*}
  Because $(\delta_1^+,\delta_2^+,a^+,c^-)$ was an arbitrary element in their domain, we conclude the functions are equal.
\end{proof}

\etadshift*

\begin{proof}
  We abbreviate judgments by their processes.
  The function $\pi_{a^+} \circ \sembr{\tSendS{a}{P}}u$ is not strict because of the post-composition with $\up$ in the $a^+$ component.
  By \cref{prop:4503}, this implies we can drop the $\strictfn_{a^+}$ when computing the denotation of the cut processes:
  \begin{align*}
    &\sembr{\jtypem{\Psi}{\Delta_1,\Delta_2}{\tCut{a}{\left(\tSendS{a}{P}\right)}{\left(\tRecvS{a}{Q}\right)}}{c}{C}}u\\
    &= \Tr{\left(\sembr{\tSendS{a}{P}}u \times \sembr{\tRecvS{a}{Q}}u \right)}{a^- \times a^+}\\
    &= \Tr{\left(\left(\left(\ms{id} \times \left(a^+ : \up\right)\right) \circ \sembr{P}u\right) \times \strictfn_{a^+}\left(\sembr{Q}u \circ \left(\ms{id} \times \left(a^+ : \down\right)\right)\right)\right)}{a^- \times a^+}\\
    &= \Tr{\left(\left(\left(\ms{id} \times \left(a^+ : \up\right)\right) \circ \sembr{P}u\right) \times \left(\sembr{Q}u \circ \left(\ms{id} \times \left(a^+ : \down\right)\right)\right)\right)}{a^- \times a^+}\\
    &= \Tr{\left(\left(\ms{id} \times \left(a^+ : \up\right)\right) \circ \left(\sembr{P}u \times \sembr{Q}u\right) \circ \left(\ms{id} \times \left(a^+ : \down\right)\right)\right)}{a^- \times a^+}\\
    \shortintertext{Because $(\up,\down)$ is a section-retraction pair, by \cref{prop:fscd:12}:}
    &= \Tr{\left(\sembr{P}u\times\sembr{Q}u\right)}{a^-\times a^+}\\
    &= \sembr{\jtypem{\Psi}{\Delta_1,\Delta_2}{\tCut{a}{P}{Q}}{c}{C}}u.\qedhere
  \end{align*}
\end{proof}

\etaplus*

\begin{proof}
  We use the Knaster-Tarski formulation \eqref{eq:808} of the trace operator.
  Let $u \in \sembr{\Psi}$ and $(\delta_1^+,\delta_2^+,c^-) \in \sembr{\Delta_1,\Delta_2}^+ \times \sembr{c : C}^-$ be arbitrary.
  We abbreviate judgments by their processes and calculate that
  \begin{align*}
    \sembr{\tCut{a}{P}{Q_k}}u(\delta_1^+,\delta_2^+,c^-) &= \Tr{(l)}{b^- \times b^+}(\delta_1^+,\delta_2^+,c^-)\\
                                                         &= \pi_{\Delta_1^-,\Delta_2^-,c^+}\left(\bigsqcap L \right),\\
    \sembr{\tCut{a}{\left(\tSendL{a}{k}{P}\right)}{\left(\tCase{a}{\left\{l \Rightarrow Q_l\right\}_{l \in L}}\right)}}u(\delta_1^+,\delta_2^+,c^-) &= \Tr{(r)}{b^- \times b^+}(\delta_1^+,\delta_2^+,c^-)\\
                                                         &= \pi_{\Delta_1^-,\Delta_2^-,c^+}\left(\bigsqcap R \right)
  \end{align*}
  where
  \begin{align*}
    l &: \sembr{\Delta_1,\Delta_2,a : A_k}^+ \times \sembr{a : A_k, c : C}^- \to \sembr{\Delta_1,\Delta_2,a : A_k}^- \times \sembr{a : A_k, c : C}^+\\
    l &= \sembr{\jtypem{\Psi}{\Delta_1}{P}{a}{A_k}}u \times \sembr{\jtypem{\Psi}{\Delta_2,a:A_k}{Q_k}{c}{C}}u,\\
    L &= \left\{ \left(\delta_1^-,\delta_2^-,a^-,a^+,c^+\right) \in \sembr{\Delta_1,\Delta_2,a : A_k}^- \times \sembr{a : A_k, c : C}^+ \mid {} \right.\\
      &\qquad\qquad \left. {} \mid l\left(\delta_1^+,\delta_2^+,a^+,a^-,c^-\right) \sqsubseteq \left(\delta_1^-,\delta_2^-,a^-,a^+,c^+\right) \right\},\\
    r &: \sembr{\Delta_1,\Delta_2,a : \Tplus \{ l : A_l \}_{l \in L}}^+ \times \sembr{a : \Tplus \{ l : A_l \}_{l \in L}, c : C}^- \to {}\\
      &\qquad\qquad {} \to \sembr{\Delta_1,\Delta_2,a : \Tplus \{ l : A_l \}_{l \in L}}^- \times \sembr{a : \Tplus \{ l : A_l \}_{l \in L}, c : C}^+\\
    r &= \sembr{\jtypem{\Psi}{\Delta_1}{\tSendL{a}{k}{P}}{a}{\Tplus \{ l : A_l \}_{l \in L}}}u \times {}\\
      &\qquad\qquad {} \times \sembr{\jtypem{\Psi}{\Delta_2,a:\Tplus \{ l : A_l \}_{l \in L}}{\tCase{a}{\left\{l \Rightarrow Q_l\right\}_{l \in L}}}{c}{C}}u,\displaybreak[3]\\
    R &= \left\{ \left(\delta_1^-,\delta_2^-,a^-,a^+,c^+\right) \in \sembr{\Delta_1,\Delta_2,a : \Tplus \{ l : A_l \}_{l \in L}}^- \times \sembr{a : \Tplus \{ l : A_l \}_{l \in L}, c : C}^+ \mid {} \right.\\
      &\qquad\qquad \left. {} \mid r\left(\delta_1^+,\delta_2^+,a^+,a^-,c^-\right) \sqsubseteq \left(\delta_1^-,\delta_2^-,a^-,a^+,c^+\right) \right\}.
  \end{align*}

  Infima of products are computed component-wise, so to show
  \[
    \sembr{\tCut{a}{P}{Q_k}}u(\delta_1^+,\delta_2^+,c^-) = \sembr{\tCut{a}{\left(\tSendL{a}{k}{P}\right)}{\left(\tCase{a}{\left\{l \Rightarrow Q_l\right\}_{l \in L}}\right)}}u(\delta_1^+,\delta_2^+,c^-)
  \]
  it is sufficient to show that $\pi_{\Delta_1^-,\Delta_2^-,c^+} L = \pi_{\Delta_1^-,\Delta_2^-,c^+} R$.

  We first show $\pi_{\Delta_1^-,\Delta_2^-,c^+} L \supseteq \pi_{\Delta_1^-,\Delta_2^-,c^+} R$.
  Let $(\delta_1^-,\delta_2^-,(a_l^-)_{l \in L},a^+,c^+) \in R$ be arbitrary and let $\sembr{P}u(\delta_1^+,a_k^-) = (\delta_P^-,a_P^+)$.
  The case $a^+ = \bot$ is impossible.
  Indeed, when $a^+ = \bot$, strictness gives $\sembr{\tCase{a}{\left\{l \Rightarrow Q_l\right\}_{l \in L}}}u(\delta_2^+,\bot,c^-) = \bot$.
  This implies
  \[
    r(\delta_1^-,\delta_2^-,a^+ : \bot, (a_l^-)_{l \in L}, c^+) =
    (\Delta_1^- : \delta_P^-,
    \Delta_2^- : \bot,
    a^- : \bot,
    a^+ : (k, \upim{a_P^+}),
    c^+ : \bot),
  \]
  which by definition of $R$ implies the contradiction $(k, \upim{a_P^+}) \sqsubseteq \bot$.
  So $a^+ = (l, \upim{a_l^+})$ for some $l \in L$ and $a_l^+ \in \sembr{A_l}^+$.
  The definitions of $r$ and $R$ imply $(k, \upim{a_P^+}) \sqsubseteq (l, \upim{a_l^+})$, so we have $l = k$.
  We show that $(\delta_1^-,\delta_2^-,a_k^-,a_k^+,c^+) \in L$.
  Let $\sembr{Q_k}u(\delta_2^+,a_k^+,c^-) = (\delta_Q^-,a_Q^-,c_Q^+)$.
  We calculate
  \begin{align}
    &l(\delta_1^+,\delta_2^+,a_k^+,a_k^-,c^-)
      = (\Delta_1^- : \delta_P^-,
      \Delta_2^- : \delta_Q^-,
      a^- : a_Q^-,
      a^+ : a_P^+,
      c^+ : c_Q^+),\label{eq:4753}\\
    \begin{split}
      &r(\delta_1^-,\delta_2^-,(k,\upim{a_k^+}), (a_l^-)_{l \in L}, c^+)\\
      &= (\Delta_1^- : \delta_P^-,
      \Delta_2^- : \delta_Q^-,
      a^- : (k : a_Q^-, l \neq k : \bot),
      a^+ : (k, \upim{a_P^+}),
      c^+ : c_Q^+).\label{eq:5357}
    \end{split}
  \end{align}
  By definition of $R$, \cref{eq:5357} implies
  \begin{equation}
    \label{eq:66666}
    (\delta_P^-, \delta_Q^-, (k : a_Q^-, l \neq k : \bot), (k, \upim{a_P^+}), c_Q^+)
    \sqsubseteq
    (\delta_1^-,\delta_2^-,(a_l^-)_{l \in L},(k, \upim{a_k^+}),c^+).
  \end{equation}
  It immediately follows from \eqref{eq:66666} that
  \[
    (\delta_P^-, \delta_Q^-, a_Q^-, a_P^+, c_Q^+)
    \sqsubseteq
    (\delta_1^-,\delta_2^-,a_k^-, a_k^+,c^+),
  \]
  \ie, that $(\delta_1^-,\delta_2^-,a_k^-,a_k^+,c^+) \in L$.
  We deduce $\pi_{\Delta_1^-,\Delta_2^-,c^+} L \supseteq \pi_{\Delta_1^-,\Delta_2^-,c^+} R$.

  To show $\pi_{\Delta_1^-,\Delta_2^-,c^+} L \subseteq \pi_{\Delta_1^-,\Delta_2^-,c^+} R$, let $(\delta_1^-,\delta_2^-,a_k^-,a_k^+,c^+) \in L$ be arbitrary.
  Let $\sembr{P}u(\delta_1^+,a_k^-) = (\delta_P^-,a_P^+)$ and $\sembr{Q_k}u(\delta_2^+,a_k^+,c^-) = (\delta_Q^-,a_Q^-,c_Q^+)$.
  We calculate
  \begin{align}
    &l(\delta_1^+,\delta_2^+,a_k^+,a_k^-,c^-) =
      (\Delta_1^- : \delta_P^-,
      \Delta_2^- : \delta_Q^-,
      a^- : a_Q^-,
      a^+ : a_P^+,
      c^+ : c_Q^+),\label{eq:2752}\\
    \begin{split}
      &r(\delta_1^-,\delta_2^-,(k, \upim{a_k^+}),(k : a^-_k, l \neq k : \bot),c^+)\\
      &= (\Delta_1^- : \delta_P^-,
      \Delta_2^- : \delta_Q^-,
      a^- : (k : a_Q^-, l \neq k : \bot),
      a^+ : (k, \upim{a_P^+})
      c^+ : c_Q^+).\label{eq:37537}
    \end{split}
  \end{align}
  By definition of $L$, \cref{eq:2752} implies
  \begin{equation}
    \label{eq:252721}
    (\delta_P^-,\delta_Q^-,a_Q^-,a_P^+,c_Q^+) \sqsubseteq (\delta_1^-,\delta_2^-,a_k^-,a_k^+,c^+).
  \end{equation}
  It immediately follows from \eqref{eq:252721} that
  \[
    (\delta_P^-, \delta_Q^-, (k : a_Q^-, l \neq k : \bot), (k, \upim{a_P^+}), c_Q^+)
    \sqsubseteq
    (\delta_1^-,\delta_2^-,(k : a^-_k, l \neq k : \bot),(k, \upim{a_k^+}),c^+),
  \]
  \ie, that $(\delta_1^-,\delta_2^-,(k : a^-_k, l \neq k : \bot),(k,\upim{a_k^+}),c^+) \in R$.
\end{proof}

\etafunfold*

\begin{proof}
  We abbreviate judgments by their processes.
  \begin{align*}
    &\sembr{\tCut{a}{(\tSendU{a}{P})}{(\tRecvU{a}{Q})}}u\\
    &= \Trop^{a^- \times a^+}\left(\sembr{\tSendU{a}{P}}u \times \sembr{\tRecvU{a}{Q}}u\right)\\
    &= \Trop^{a^- \times a^+}\left(
      \left(
      \left(\ms{id} \times \left(a^+ : \ms{Fold}\right)\right)
      \circ \sembr{P}u
      \circ \left(\ms{id} \times \left(a^- : \ms{Unfold}\right)\right)
      \right)
      \times {} \right.\\
    &\qquad\qquad\qquad\left.{} \times \left(
      \left(\ms{id} \times \left(a^- : \ms{Fold}\right)\right)
      \circ \sembr{Q}u
      \circ \left(\ms{id} \times \left(a^+ : \ms{Unfold}\right)\right)
      \right)
      \right)\\
    &= \Trop^{a^- \times a^+}\left(
      \left(\ms{id} \times \left(a^+ : \ms{Fold}\right)  \times \left(a^- : \ms{Fold}\right) \right)
      \circ \left( \sembr{P}u \times \sembr{Q}u \right) \circ {} \right.\\
    &\qquad\qquad\qquad \left. {}
      \circ \left(\ms{id} \times \left(a^- : \ms{Unfold}\right) \times \left(a^+ : \ms{Unfold}\right)\right)
      \right)\\
    \shortintertext{which by \cref{prop:fscd:12}:}
    &= \Trop^{a^- \times a^+}\left(\sembr{P}u \times \sembr{Q}u\right)\\
    &= \sembr{\tCut{a}{P}{Q}}u.\qedhere
  \end{align*}
\end{proof}

\cohsestyp*

\begin{proof}
  This is exactly \cref{prop:fscd:8}.
\end{proof}

\cohterpro*

\begin{proof}
  This is exactly \cref{prop:extended:9}.
\end{proof}

\ssst*

\begin{proof}
  This is exactly \cref{prop:fscd:7}.
\end{proof}

\ssft*

\begin{proof}
  This is exactly \cref{prop:extended:10}.
\end{proof}


\end{document}